\documentclass[runningheads]{llncs}



\usepackage{amsthm}
\usepackage{amsmath}
\usepackage{amssymb}

\usepackage{graphicx}
\usepackage{wrapfig}

\usepackage{color}
\usepackage{xargs}
\usepackage{xspace}
\usepackage{paralist}
\usepackage{complexity}

\usepackage{subcaption}
\usepackage{hyperref}
\usepackage[capitalise]{cleveref}


\usepackage{fullpage}
\usepackage{todonotes} 
\usepackage{mdframed}  
\usepackage{lineno}


\graphicspath{{fig/}{./fig/}}

\captionsetup[subfigure]{position=below,justification=centering}

\hypersetup{colorlinks=true,citecolor={purple},linkcolor={blue},urlcolor={blue}}

\spnewtheorem{observation}[theorem]{Observation}{\bfseries}{\itshape}


\newtheorem{claimx}{Claim}


\Crefname{observation}{Observation}{Observations}
\Crefname{proposition}{Proposition}{Propositions}
\Crefname{claimx}{Claim}{Claims}
\Crefname{property}{Property}{Properties}
\Crefname{enumi}{Property}{Properties}



\definecolor{realblue}{rgb}{0,0,1}
\definecolor{darkerblue}{rgb}{0.094,0.455,0.804}
\definecolor{darkblue}{rgb}{0.063,0.306,0.545}
\definecolor{red}{rgb}{0,0,0}
\definecolor{green}{rgb}{0,0.588,0.509}
\definecolor{orange}{rgb}{0.903,0.739,0.382}
\definecolor{realred}{rgb}{1,0,0}
\definecolor{darkmagenta}{rgb}{0.545,0,0.545}


\renewcommand{\emph}[1]{{\em \textcolor{realblue}{#1}}\xspace}
\newcommand{\remove}[1]{\xspace}

\newcommand{\FEPR}{\textsc{Fixed Edge-Length Planar Realization}\xspace}
\newcommand{\FEPRshort}{\textsc{FEPR}\xspace}




\begin{document}
	\title
	{
		Testing the Planar Straight-line Realizability of 2-Trees \\with Prescribed Edge Lengths\thanks{This research was partially supported by MIUR Project ``AHeAD'' under PRIN 20174LF3T8, by H2020-MSCA-RISE project 734922 -- ``CONNECT'', and by Roma Tre University Azione 4 Project ``GeoView''. A preliminary version of this paper was presented at the 29\textsuperscript{th} International Symposium on Graph Drawing and Network Visualization (GD2021).}
	}
	
	\author{Carlos Alegr\'{i}a \and Manuel Borrazzo \and Giordano {Da Lozzo} \and \\ Giuseppe {Di Battista} \and Fabrizio Frati \and Maurizio Patrignani}
	
	
	\institute
	{
		Roma Tre University, Rome, Italy\\ \href{mailto:carlos.alegria@uniroma3.it,manuel.borrazzo@uniroma3.it,giordano.dalozzo@uniroma3.it,giuseppe.dibattista@uniroma3.it,fabrizio.frati@uniroma3.it,maurizio.patrignani@uniroma3.it}{name.lastname@uniroma3.it}
	}

	\maketitle
	
	\begin{abstract}
		We study a classic problem introduced thirty years ago by Eades and Wormald.
		Let $G=(V,E,\lambda)$ be a weighted planar graph, where $\lambda: E \rightarrow \mathbb{R}^+$ is a {\em length function}. 
		The {\sc \FEPR} problem (\FEPRshort for short) asks whether there exists a {\em planar straight-line realization} of $G$, i.e., a planar straight-line drawing of $G$ where the Euclidean length of each edge $e \in E$ is $\lambda(e)$.
		
		Cabello, Demaine, and Rote showed that the \FEPRshort problem is \NP-hard, even when $\lambda$ assigns the same value to all the edges and the graph is triconnected.
		Since the existence of large triconnected minors is crucial to the known \NP-hardness proofs, in this paper we investigate the computational complexity of the \FEPRshort problem for weighted $2$-trees, which are $K_4$-minor free.
		We show the \NP-hardness of the problem, even when $\lambda$ assigns to the edges only up to four distinct lengths.
		Conversely, we show that the \FEPRshort problem is linear-time solvable when $\lambda$ assigns to the edges up to two distinct lengths, or when the input has a prescribed embedding.
		Furthermore, we consider the \FEPRshort problem for weighted maximal outerplanar graphs and prove it to be linear-time solvable if their dual tree is a path, and cubic-time solvable if their dual tree is a caterpillar.
		Finally, we prove that the \FEPRshort problem for weighted $2$-trees is slice-wise polynomial in the length of the longest~path.
		
		\keywords
		{
			Graph realizations \and weighted 2-trees \and straight-line drawings \and planarity
		}
	\end{abstract}
	
	
	\section{Introduction}
	\label{se:intro}
	
	The problem of producing drawings of graphs with geometric constraints is a core subject in Graph Drawing (see, e.g., \cite{af-dog,DBLP:conf/gd/AngeliniLBBHPR14,DBLP:journals/talg/AngeliniLBDKRR18,DBLP:conf/gd/BrandesS06,DBLP:conf/gd/ChaplickKLT0ZZ19,dtt-cvrg-92,DBLP:conf/gd/GiacomoDL06,ggm-sle-08,DBLP:conf/gd/LubiwMM18,rt-rpl-86,DBLP:conf/gd/SilveiraSV17,t-cgda-98}).
	In this context, a classic question is the one of testing whether a graph can be drawn planarly so that its edges are straight-line segments of prescribed lengths.
	The study of such a question has connections with several topics in computational geometry~\cite{DBLP:journals/ijcga/CoullardL92,saxe1980embeddability,DBLP:conf/focs/Yemini79}, rigidity theory~\cite{DBLP:conf/dimacs/Connelly90,DBLP:journals/siamcomp/Hendrickson92,DBLP:journals/jct/JacksonJ05}, structural analysis of molecules~\cite{DBLP:journals/jacm/BergerKL99,DBLP:journals/siamjo/Hendrickson95}, and sensor networks~\cite{DBLP:journals/cluster/CapkunHH02,DBLP:conf/mobicom/PriyanthaCB00,940391}.
	Formally, given a weighted planar graph $G=(V,E,\lambda)$, i.e., a planar graph with vertex set $V$ and edge set $E$ equipped with a \emph{length function} $\lambda: E \rightarrow \mathbb{R}^+$, the {\sc \FEPR} problem (\FEPRshort, for short) asks whether there exists a \emph{planar straight-line realization} of $G$, i.e., a planar straight-line drawing of $G$ in the plane where the Euclidean length of each edge $e \in E$ is $\lambda(e)$.
	The \FEPRshort problem was first studied by Eades and Wormald~\cite{DBLP:journals/dam/EadesW90}, who showed its \NP-hardness even for triconnected planar graphs and for biconnected planar graphs with unit lengths. Cabello, Demaine, and Rote strengthened this result by proving \NP-hardness even for triconnected planar graphs with unit lengths~\cite{JGAA-145}. 
	The graphs that admit a planar straight-line realization with unit lengths are also called \emph{matchstick graphs}.
	Abel et al.~\cite{DBLP:conf/compgeom/AbelDDELS16} showed that recognizing matchstick graphs is strongly $\exists\mathbb{R}$-complete, solving an open problem stated by Schaefer~\cite{Schaefer2013}.
	
	
	Since the existence of large triconnected minors is essential in the known hardness proofs of the \FEPRshort problem, we study its computational complexity for weighted $2$-trees, which are the maximal graphs with no K$_4$-minor.  
	A \emph{$2$-tree} is a graph composed of $3$-cycles glued together along edges in a tree-like fashion.
	As an example, \cref{fig:planar} shows a planar and a non-planar straight-line realization of the same weighted $2$-tree.
	%
	The class of $2$-trees has been deeply studied in Graph Drawing (e.g., in~\cite{DBLP:conf/isaac/LozzoDEJ17,DBLP:conf/walcom/GiacomoHL21,DBLP:conf/cccg/GoodrichJ18,DBLP:conf/gd/LenhartLMN13,DBLP:conf/cocoon/RengarajanM95}); such a graph class coincides with the one of the maximal series-parallel graphs. In particular, the edge lengths of planar straight-line drawings of $2$-trees have been studied in~\cite{DBLP:conf/gd/Blazej0L20,DBLP:journals/jocg/BorrazzoF20}.   
	
	First, we show that, in the fixed embedding setting, the \FEPRshort problem can be solved in linear time (\cref{sec:prescribed_embedding}).
	We remark that the \FEPRshort problem is \NP-hard for general weighted planar graphs with fixed embedding~\cite{JGAA-145,DBLP:journals/dam/EadesW90}.
	Second, we show that, in the variable embedding setting, the \FEPRshort problem is \NP-hard when the number of distinct lengths is at least four (\cref{sec:np-hardness}), whereas it is linear-time solvable when the number of distinct lengths is $1$ or $2$ (\cref{sec:few_lengths}).
	Note that, for general weighted planar graphs, the problem is \NP-hard even when all the edges are required to have the same length~\cite{DBLP:journals/dam/EadesW90}. 
	Third, we move our attention to maximal outerplanar graphs (\cref{sec:outerplanar}), which form a notable subclass of $2$-trees.
	We show that the \FEPRshort problem can be solved in linear time for maximal outerpaths, i.e., the maximal outerplanar graphs whose dual tree is a path, and in cubic time for maximal outerpillars, i.e., the maximal outerplanar graphs whose dual tree is a caterpillar. 
	Finally, we present a slice-wise polynomial algorithm for weighted $2$-trees, parameterized by the length of the longest path (\cref{sec:longest_path}).
	\cref{sec:preliminaries} contains preliminaries and \cref{sec:conclusions} provides conclusions and open problems.
	
	Similarly to~\cite{JGAA-145}, in our algorithms, we adopt the real RAM model of computation, which is customary in computational geometry and allows us to perform standard arithmetic operations in constant time. Furthermore, we show NP-hardness in the Turing machine model by exploiting lengths whose encoding has constant size.
	
	\begin{figure}[tb!]
		\centering
		\includegraphics[width=0.48\textwidth,page=1]{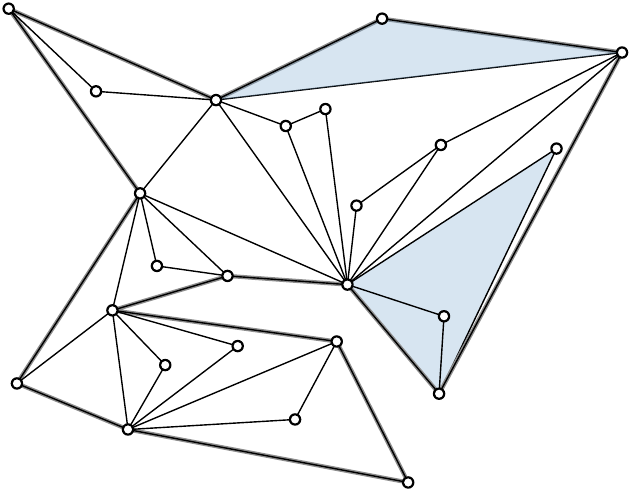}
		\hfil
		\includegraphics[width=0.48\textwidth,page=2]{2-tree_realizations}
		\caption{A planar and a non-planar straight-line realization of the same weighted $2$-tree.}
		\label{fig:planar}
	\end{figure}
	
	
	\section{Preliminaries}
	\label{sec:preliminaries}
	
	Throughout the paper, we assume that graphs are connected and simple (i.e., with no loops or parallel edges).
	Moreover, whenever we refer to the length of a segment or to the distance of two geometric objects, we always assume that these are measured according to the Euclidean metric.
	
	\paragraph{Drawings and embeddings.}
	
	A \emph{drawing} $\Gamma$ of a graph is a mapping of each vertex $v$ to a distinct point $\Gamma(v)$ in the plane and of each edge $(u,v)$ to a Jordan arc $\Gamma(u,v)$ with endpoints $\Gamma(u)$ and~$\Gamma(v)$.
	We often use the same notation for a vertex $v$ and the point $\Gamma(v)$. 
	A drawing is \emph{planar} if it contains no two crossing edges and it is \emph{straight-line} if each curve representing an edge is a straight-line segment.
	A planar drawing partitions the plane into connected regions, called \emph{faces}.
	The bounded faces are the \emph{internal faces}, while the unbounded face is the \emph{outer face}.
	The \emph{boundary} of a face is the circular list of vertices and edges encountered when traversing the geometric border of the face.
	A graph $G$ is \emph{planar} if it admits a planar drawing.
	A planar drawing of $G$ defines a clockwise order of the edges incident to each vertex of $G$; the set of such orders for all the vertices is a \emph{rotation system} for $G$.
	Two planar drawings of $G$ are \emph{equivalent} if (i) they define the same rotation system for $G$ and (ii) their outer faces have the same boundaries. An equivalence class of planar drawings is a \emph{plane embedding} (or simply an \emph{embedding}).
	When referring to a planar drawing $\Gamma$ of a graph that has a prescribed embedding $\mathcal E$, we always imply that $\Gamma$ is in the equivalence class $\mathcal E$; when we want to stress this fact, we say that $\Gamma$ \emph{respects} $\mathcal E$.
	All the planar drawings respecting the same embedding have the same set of face boundaries.
	Hence, we can talk about the face boundaries of a graph with a prescribed embedding.
	With a little overload of terminology, we sometimes write \emph{face} of a graph with a prescribed embedding, while referring to the boundary of the face. 
	A graph is \emph{biconnected} if the removal of any vertex leaves the graph connected.
	In any planar drawing of a biconnected graph every face is bounded by a~simple~cycle.
	Let $\Gamma$ be a drawing of a graph $G$ and let $G'$ be a subgraph of $G$; the \emph{restriction} of $\Gamma$ to $G'$ is the drawing of $G'$ obtained by removing from $\Gamma$ the vertices and edges of $G$ that are not in $G'$. 
	
	\paragraph{2-trees and maximal outerplanar graphs.}
	
	A \emph{$2$-tree} is recursively defined as follows.
	A cycle formed by $k$ edges is a \emph{$k$-cycle}.
	A $3$-cycle is a $2$-tree. Given a $2$-tree $G$ containing the edge $(u,w)$, the graph obtained by adding to $G$ a vertex $v$ and two edges $(v,u)$ and $(v,w)$ is a $2$-tree.
	We observe that any $2$-tree $G$ satisfies the following properties:
	\begin{enumerate}[(P1)]
		\item \label{pr:biconnected} $G$ is biconnected;
		\item\label{pr:degree-2-adjacent} the two neighbors of any degree-$2$ vertex of $G$ are adjacent; and
		\item \label{pr:non-adjacent-degree-2} if $|V(G)|\geq 4$, then $G$ contains two non-adjacent degree-$2$ vertices.
	\end{enumerate}
	
	A $2$-tree $G$ is essentially composed of $3$-cycles glued together along edges in a tree-like fashion.
	We encode this tree-like structure in a tree we call the \emph{decomposition tree} of $G$.
	This tree has a node for each $3$-cycle of $G$, and an edge between two nodes if the corresponding $3$-cycles have a common edge; see \cref{fig:decomposition_tree}.
	Let $T$ be the decomposition tree of $G$, and $n$ be the number of nodes of $G$.
	We observe that $T$ is unique and contains $n-2$ nodes.
	Typically, $T$ is \emph{rooted at} a  $3$-cycle $c$ of $G$.
	We can compute $T$ rooted at $c$ by a recursive construction in which, together with $T$, we produce an auxiliary labeling of the edges of $G$.
	An edge $e$ is labeled with the highest node in $T$ whose corresponding $3$-cycle contains~$e$.
	The construction is as follows.
	If~$G$ coincides with $c$, then $T$ is a unique root node representing~$c$, and each edge of~$c$ is labeled with~$c$ itself.
	Otherwise, by Property~(P3), there is at least one degree-two vertex not belonging to~$c$.
	Let $v$ be any such a vertex, let $u$ and $w$ be the vertices adjacent to $v$, and let $c^{*}$ be the $3$-cycle with vertices $u$, $v$, and $w$.
	Let $T^{*}$ denote the decomposition tree of $G - v$ rooted at $c$.
	We obtain $T$ by adding to $T^{*}$ a leaf representing $c^{*}$, whose parent is the node corresponding to the label of the edge $(u, w)$.
	Further, we set to $c^{*}$ the labels of the edges $(v,u)$ and $(v,w)$.
	It is not hard to see that this procedure constructs $T$ in $O(n)$ time.
	
	\begin{figure}[tb!]
		\centering
		
		\includegraphics[page=1,scale=1]{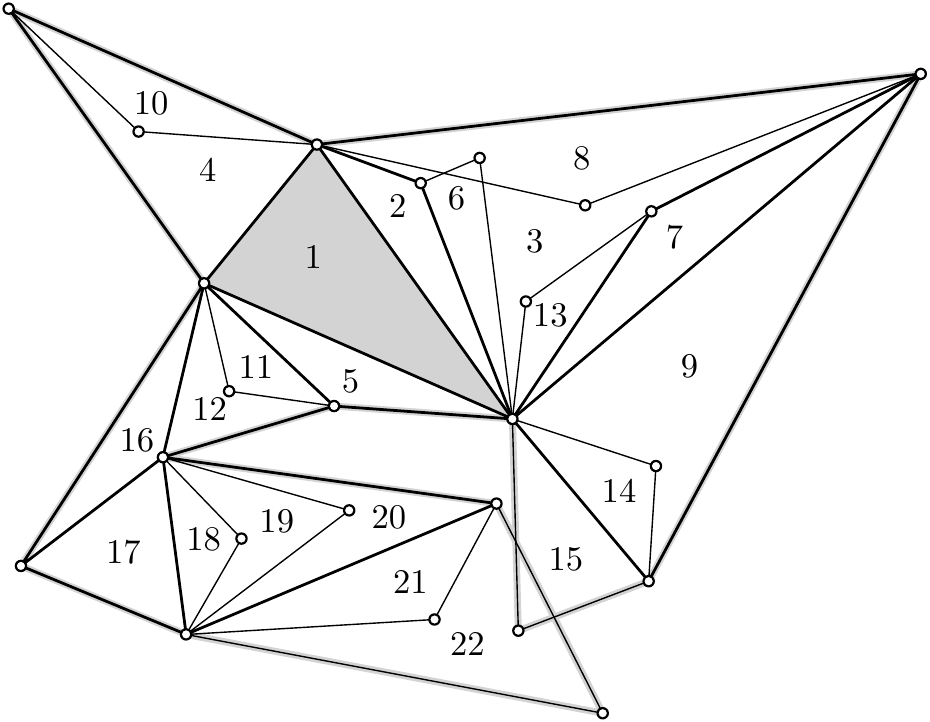}
		\qquad%
		\includegraphics[page=2,scale=1]{2-tree_and_decomposition_tree}
		
		\caption
		{
			On the left, a $2$-tree $G$.
			On the right, the decomposition tree of $G$ rooted at $\triangle_1$.
			Internal nodes are drawn with thick lines and leaf nodes of are drawn with thin lines.
		}
		\label{fig:decomposition_tree}
	\end{figure}
	
	%
	%
	
	An \emph{outerplanar drawing} is a planar drawing in which all the vertices are incident to the outer face.
	A graph is \emph{outerplanar} if it admits an outerplanar drawing.
	An \emph{outerplane embedding} is an equivalence class of outerplanar drawings of a graph; note that a biconnected outerplanar graph has a unique outerplane embedding~\cite{mw-opbe-90,s-cog-79}.
	An outerplanar graph is \emph{maximal} if no edge can be added to it without losing outerplanarity.
	In the unique outerplane embedding of a maximal outerplanar graph, every internal face is delimited by a $3$-cycle.
	The \emph{dual tree} $T$ of a biconnected outerplanar graph $G$ is
	defined as follows.
	Consider the (unique) outerplane embedding $\mathcal O$ of $G$.
	Then $T$ has a node for each internal face of $\mathcal O$ and has an edge between two nodes if the corresponding faces of $\mathcal O$ are incident to the same edge of $G$.
	An \emph{outerpath} is a biconnected outerplanar graph whose dual tree is
	a path.
	A \emph{caterpillar} is a tree that becomes a path if its leaves are removed.
	Such path is called the \emph{spine} of the caterpillar.
	An \emph{outerpillar} is a biconnected outerplanar graph whose dual tree is a caterpillar.
	
	\paragraph{Straight-line realizations.}
	
	A \emph{weighted graph} $G=(V,E,\lambda)$ is a graph equipped with a length function $\lambda$ that assigns a positive real number to each edge.
	A \emph{straight-line realization} of $G$ is a straight-line drawing of $G$ in which each edge $e$ is drawn as a line segment of length $\lambda(e)$.
	If the drawing is planar, we say that the realization is planar.
	The {\sc \FEPR (\FEPRshort)} problem receives as input a weighted planar graph $G$, and asks if there exists a planar straight-line realization of $G$.
	
	In a straight-line realization of $G$, each $3$-cycle is realized as a triangle.
	Hence, if the lengths of the edges of at least one $3$-cycle do not respect the triangle inequality, then $G$ does not admit any (even non-planar) straight-line realization.
	We can trivially test in $O(n)$ time whether the length function of an $n$-vertex weighted $2$-tree is such that every $3$-cycle satisfies the triangle inequality.
	Hereafter we assume that every weighted $2$-tree satisfies this necessary condition.
	
	In our proofs and algorithms, when clear from the context, we refer interchangeably to the $3$-cycles of a $2$-tree $G$, the nodes of its decomposition tree, and the triangles in a straight-line realization of $G$. 
	
	\paragraph{Triangles in a planar straight-line realization.}
	
	Let $G=(V,E,\lambda)$ be a weighted $2$-tree and $\Gamma$ be a planar straight-line realization of $G$.
	We denote with $\triangle_i$ a triangle \emph{drawn} in $\Gamma$, that is, the straight-line realization of a $3$-cycle of $G$.
	Consider two triangles $\triangle_i$ and $\triangle_j$.
	We say that $\triangle_i$ is \emph{drawn inside} $\triangle_j$ if all the points of $\triangle_i$ are points of $\triangle_j$ and at least one vertex of $\triangle_i$ is an interior point of $\triangle_j$.
	In the planar straight-line realization shown in \cref{fig:decomposition_tree}, the triangles $\triangle_6$, $\triangle_7$, and $\triangle_8$ are drawn inside the triangle $\triangle_3$.
	Observe that, if $\triangle_i$ is drawn inside $\triangle_j$, then $\triangle_i$ and $\triangle_j$ have at most two common vertices.
	On the other hand, in general, $\triangle_i$ can be drawn inside $\triangle_j$ regardless of their distance in the decomposition tree of $G$. We have the following observation, which we use implicitly (and sometimes explicitly) in the paper.
	
	
	\begin{observation}\label{obs:containments}
		Let $\ell_i$ and $\ell_j$ be the lengths of the longest sides of two triangles $\triangle_i$ and $\triangle_j$, respectively.
		If $\ell_i > \ell_j$, then $\triangle_i$ is not drawn in $\triangle_j$.
		If $\ell_i=\ell_j$ and $\triangle_i$ is drawn inside $\triangle_j$, then $\triangle_i$ and $\triangle_j$ share a side of length $\ell_i=\ell_j$.
		%
		%
		%
		%
		%
		%
	\end{observation}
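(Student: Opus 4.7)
The plan is to reduce both parts of the observation to a single elementary fact about convex geometry: for any closed triangular region $T \subseteq \mathbb{R}^2$ with longest side of length $\ell$, the diameter of $T$ (the supremum of pairwise distances between points of $T$) equals $\ell$, and this supremum is attained only by pairs of points that are vertices of $T$ joined by a side of length $\ell$. To prove this lemma, I would fix any $q \in T$ and observe that the function $p \mapsto \|p - q\|$ is convex, so its maximum over the convex polygon $T$ is attained at an extreme point, i.e., at a vertex of $T$. Swapping the roles of $p$ and $q$, both extremizers of the pairwise distance must be vertices of $T$, and since the pairwise distances among vertices of $T$ are exactly its three side lengths, the diameter equals $\ell$ and is attained precisely at endpoint pairs of longest sides.

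Given this, part (1) of the observation is immediate: if $\triangle_i$ were drawn inside $\triangle_j$, then the two endpoints of a longest side of $\triangle_i$ would be vertices of $\triangle_i$, and hence points of the closed region bounded by $\triangle_j$; their distance $\ell_i$ would therefore be at most the diameter of that region, namely $\ell_j$, contradicting $\ell_i > \ell_j$.

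For part (2), assume $\ell_i = \ell_j$ and $\triangle_i$ is drawn inside $\triangle_j$, and let $u, v$ be the endpoints of a longest side of $\triangle_i$. Then $u, v$ lie in $\triangle_j$ with $\|u - v\| = \ell_i = \ell_j$ equal to the diameter of $\triangle_j$; by the characterization above, $u$ and $v$ must be two vertices of $\triangle_j$ joined by a longest side of $\triangle_j$. Hence the longest side $uv$ of $\triangle_i$ coincides with a longest side of $\triangle_j$, giving a common side of length $\ell_i = \ell_j$, as desired. The definition of ``drawn inside'' is still respected because the third vertex of $\triangle_i$, which cannot be a vertex of $\triangle_j$ (else the two triangles would coincide), must lie strictly inside $\triangle_j$.

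There is no serious obstacle here: once the convex-maximum argument for the diameter of a triangle is in place, both parts of the observation follow as one-line corollaries. The only point requiring minor care is ensuring that the extremizers of the distance function are genuinely vertices rather than interior points of an edge, which is handled by the convexity of $p \mapsto \|p-q\|$.
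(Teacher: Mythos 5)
The paper states this as an unproved observation, so there is no in-paper argument to compare against; your proposal supplies a proof and it is correct. Reducing both claims to the single fact that the diameter of a closed triangular region equals the length of its longest side, attained only by vertex pairs realizing such a side, is exactly the right move, and both parts then follow immediately. Two small points of care. First, plain convexity of $p \mapsto \|p-q\|$ only guarantees that the maximum over the triangle is attained \emph{at} an extreme point, not \emph{only} at extreme points; for the ``only'' direction, which part (2) genuinely needs, observe that $\|p-q\|^2$ is a strictly convex quadratic along every nondegenerate segment, so any relative-interior point of a chord or edge is strictly dominated by one of its endpoints. You flag this in your last sentence, but it deserves to be stated as the actual mechanism rather than attributed to convexity alone. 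Second, to get the conclusion in the form the paper uses it (the two $3$-cycles share an \emph{edge}), note that a drawing maps distinct vertices to distinct points, so the two endpoints of the longest side of $\triangle_i$, having been shown to coincide with two vertices of $\triangle_j$, must be those very graph vertices, and any two vertices of a $3$-cycle are joined by an edge of it.
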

	
	Of special interest for us is the case in which $\triangle_i$ and $\triangle_j$ share an edge.
	If such case occurs, then $\triangle_i$ and $\triangle_j$ are either adjacent or siblings in the decomposition tree of $G$.
	We have the following.
	
	\begin{observation}\label{obs:internal_angles}
		Let $\triangle_i$ and $\triangle_j$ be two triangles sharing an edge with end-vertices $a$ and $b$.
		Let $\alpha_i$ and $\beta_i$ ($\alpha_j$ and $\beta_j$) denote the interior angles of $\triangle_i$ (resp., $\triangle_j$) at $a$ and $b$, respectively.
		If $\triangle_i$ is drawn inside $\triangle_j$, then $\alpha_i < \alpha_j$ and $\beta_i < \beta_j$.
	\end{observation}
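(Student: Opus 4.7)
The plan is a direct geometric argument based on the definition of ``drawn inside'' and on the local structure of the interior of a triangle at one of its vertices. Let $c_i$ (respectively $c_j$) denote the third vertex of $\triangle_i$ (resp.\ $\triangle_j$), i.e., the vertex distinct from $a$ and $b$. I would start by observing that, since $\triangle_i$ is drawn inside $\triangle_j$, all points of $\triangle_i$ belong to $\triangle_j$, and at least one vertex of $\triangle_i$ must lie in the interior of $\triangle_j$. Because $a$ and $b$ belong to the shared edge, they lie on the boundary of $\triangle_j$, so it is $c_i$ that is an interior point of $\triangle_j$. In particular $c_i$ and $c_j$ lie on the same (open) half-plane bounded by the line through $a$ and $b$, so the triangles $\triangle_i$ and $\triangle_j$ are on the same side of $ab$ at both vertices $a$ and $b$.

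Next, I would analyze the situation at vertex $a$. The interior angle $\alpha_j$ at $a$ is the angle formed by the rays $\overrightarrow{ab}$ and $\overrightarrow{ac_j}$, and a neighborhood of $a$ inside $\triangle_j$ coincides with the open angular sector strictly between these two rays. Since $c_i$ lies in the interior of $\triangle_j$ and the segment from $a$ to $c_i$ is contained in $\triangle_i \subseteq \triangle_j$, this segment (minus the endpoint $a$) lies in the interior of $\triangle_j$. Hence the ray $\overrightarrow{ac_i}$ lies strictly inside the open angular sector defined by $\overrightarrow{ab}$ and $\overrightarrow{ac_j}$. Consequently the angle $\alpha_i$, which is the angle between $\overrightarrow{ab}$ and $\overrightarrow{ac_i}$, satisfies $\alpha_i < \alpha_j$.

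The same argument applied at vertex $b$, swapping the roles of $a$ and $b$, yields $\beta_i < \beta_j$. The only subtle point to be careful about is justifying that the vertex of $\triangle_i$ which is interior to $\triangle_j$ is indeed $c_i$ (and not $a$ or $b$), but this follows immediately from the fact that $a$ and $b$ lie on the shared edge, hence on the boundary of $\triangle_j$. Beyond this, the argument is an elementary consequence of the convexity of $\triangle_j$ at the vertex $a$ (respectively $b$), so no real obstacle is expected.
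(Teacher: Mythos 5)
Your proof is correct: the paper states this as an observation without providing a proof, and your argument (identifying $c_i$ as the necessarily interior vertex, then using convexity of $\triangle_j$ at $a$ and at $b$ to place the ray $\overrightarrow{ac_i}$ strictly inside the sector spanned by $\overrightarrow{ab}$ and $\overrightarrow{ac_j}$) is exactly the elementary reasoning the paper leaves implicit. The strictness of the inequalities is properly justified by the fact that $c_i$ is an \emph{interior} point of $\triangle_j$, so no gap remains.
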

	
	We derive further useful properties by considering specific types of triangles.
	Consider an isosceles triangle with base of length $b$ and two sides of length $l$.
	We say the triangle is \emph{tall isosceles} if $l > b$, and \emph{flat isosceles} if instead $l < b$.
	
	%
	%
	
	\begin{lemma}\label{obs:containment_sharing_edges}
		Let $\triangle_i$ and $\triangle_j$ be two triangles sharing an edge.
		\begin{enumerate}[a)]
			\item \label{obs:containment_sharing_edges:tall_in_flat}
			
			If $\triangle_i$ is tall isosceles and $\triangle_j$ is flat isosceles, then $\triangle_i$ is not drawn inside $\triangle_j$.
			
			\item \label{obs:containment_sharing_edges:tall_in_equilateral}
			
			If $\triangle_i$ is tall isosceles and $\triangle_j$ is equilateral, then $\triangle_i$ is not drawn inside $\triangle_j$.
			
			\item \label{obs:containment_sharing_edges:equilateral_in_flat}
			
			If $\triangle_i$ is equilateral and $\triangle_j$ is flat isosceles, then $\triangle_i$ is not drawn inside $\triangle_j$.
			
			
		\end{enumerate}  
	\end{lemma}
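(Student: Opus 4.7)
The proof of each part proceeds by case analysis on the role of the shared edge $(a,b)$ of length $s$ in each of the two triangles. In an isosceles triangle, $s$ is either the base or one of the two equal sides; in an equilateral triangle, all sides are interchangeable. This yields at most four cases for part~(a) and at most two each for parts~(b) and (c). In each case I read off the interior angles of $\triangle_i$ and $\triangle_j$ at $a$ and $b$ from three elementary facts: in a tall isosceles triangle ($l > b$) both base angles exceed $60^\circ$ and the apex angle is smaller than $60^\circ$; in a flat isosceles triangle ($l < b$) both base angles are smaller than $60^\circ$ and the apex angle exceeds $60^\circ$; and in an equilateral triangle every interior angle equals $60^\circ$. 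I then apply Observation~\ref{obs:internal_angles}, which forces $\alpha_i < \alpha_j$ and $\beta_i < \beta_j$ whenever $\triangle_i$ is drawn inside $\triangle_j$, and aim to exhibit a vertex of $\{a,b\}$ at which this strict inequality fails.

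Parts (b) and (c) follow almost immediately from this template. In part (b), $\triangle_j$ is equilateral, so both its angles at $a$ and $b$ equal $60^\circ$; but $\triangle_i$ is tall isosceles and has a unique apex, so at least one of $a,b$ is a base vertex of $\triangle_i$, whose angle strictly exceeds $60^\circ$, contradicting Observation~\ref{obs:internal_angles}. Part (c) is symmetric: $\triangle_i$ equilateral has both angles at $a,b$ equal to $60^\circ$, while $\triangle_j$ flat has at least one base vertex among $\{a,b\}$ with angle strictly below $60^\circ$. In both parts, Observation~\ref{obs:containments} additionally rules out upfront those length configurations in which the longest side of $\triangle_i$ would exceed that of $\triangle_j$, so only a handful of angle comparisons actually need to be carried out.

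Part (a) is the most delicate and is where I expect the main obstacle to lie. When the shared edge is the base of at least one of the two triangles, the angle comparison at a base vertex is immediate, since the base angle of $\triangle_i$ (tall) exceeds $60^\circ$ while the base angle of $\triangle_j$ (flat) is below. The subtle case is when $s$ is an equal side of both $\triangle_i$ and $\triangle_j$: here I further split according to whether the apices of the two triangles lie at the same endpoint or at opposite endpoints of $(a,b)$. In the ``same endpoint'' subcase, the angle comparison at the other vertex---where $\triangle_i$ has a base angle greater than $60^\circ$ and $\triangle_j$ has a base angle smaller than $60^\circ$---closes the argument. In the remaining ``opposite endpoint'' subcase, I combine the angle bounds with the inequality $l_i \leq b_j$ that Observation~\ref{obs:containments} forces on the longest sides, together with a direct geometric comparison of the positions of the two apex vertices relative to the segment $(a,b)$, to obtain the required contradiction.
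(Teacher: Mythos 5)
Your handling of parts (b) and (c), and of all but one subcase of part (a), is correct and is in substance the paper's own argument: the paper's proof consists precisely of invoking \cref{obs:internal_angles} together with the facts that a tall isosceles triangle has two interior angles greater than $60^{\circ}$ and one smaller, while a flat isosceles triangle has two smaller and one greater. Your case analysis merely makes explicit which endpoint of the shared edge witnesses the failure of the strict angle inequalities, which is a useful elaboration.

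The gap is exactly where you anticipated it: the ``opposite endpoint'' subcase of part (a), in which the shared edge is a leg of both triangles and the apices sit at different endpoints. You defer this to an unexecuted ``direct geometric comparison'' combined with \cref{obs:containments}; no such argument can succeed, because in this subcase the containment can actually occur. Take $a=(0,0)$, $b=(1,0)$; let $\triangle_i$ have legs of length $1$ and apex angle $40^{\circ}$ at $a$ (base $2\sin 20^{\circ}\simeq 0.68<1$, hence tall isosceles) and let $\triangle_j$ have legs of length $1$ and apex angle $80^{\circ}$ at $b$ (base $2\sin 40^{\circ}\simeq 1.29>1$, hence flat isosceles). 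The third vertices are $c_i\simeq(0.766,0.643)$ and $c_j\simeq(0.826,0.985)$, and a direct half-plane check shows that $c_i$ lies strictly inside $\triangle_j$, so $\triangle_i$ is drawn inside $\triangle_j$. Both inequalities of \cref{obs:internal_angles} are satisfied ($40^{\circ}<50^{\circ}$ at $a$ and $70^{\circ}<80^{\circ}$ at $b$), and \cref{obs:containments} gives nothing since the longest side of $\triangle_i$ has length $1$ while that of $\triangle_j$ has length about $1.29$. So statement (a), read literally, is false in this subcase; the paper's one-line proof does not address it either. The statement is only ever applied to conjugate pairs (the legs of the tall triangle have the length of the base of the flat one and vice versa, as in the two-length setting), where a shared edge is necessarily the base of at least one of the two triangles and your ``immediate'' cases suffice. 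A repairable proof must therefore either add that hypothesis or restrict part (a) to shared edges that are the base of $\triangle_i$ or of $\triangle_j$; it cannot close the leg-of-both, opposite-apex subcase as you propose.
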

	
	\begin{proof}
		All the statements follow from \cref{obs:internal_angles} and the facts that (i) a tall isosceles triangle has one interior angle smaller than $60^{\circ}$ and two interior angles greater than $60^{\circ}$, and (ii) a flat isosceles triangle has one interior angle greater than $60^{\circ}$ and two interior angles smaller than $60^{\circ}$.
		Refer to \cref{fig:isosceles_triangles}.
	\end{proof}
	
	\begin{figure}[tb!]
		\centering
		
		\subcaptionbox{\label{fig:isosceles_triangles:1}}
		{\includegraphics[page=1,scale=1]{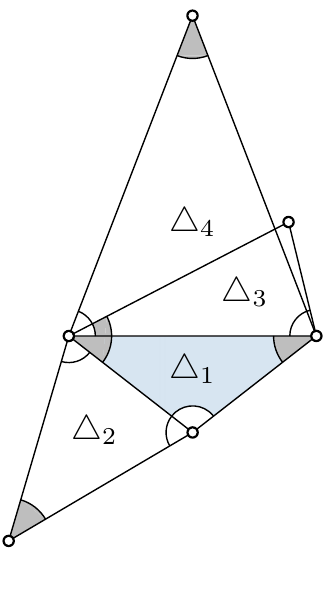}}  
		\qquad%
		\subcaptionbox{\label{fig:isosceles_triangles:2}}
		{\includegraphics[page=2,scale=1]{isosceles_triangles}}  
		\qquad%
		\subcaptionbox{\label{fig:isosceles_triangles:3}}
		{\includegraphics[page=3,scale=1]{isosceles_triangles}}
		
		\caption
		{
			Illustration for \cref{obs:containment_sharing_edges}.
			The angles in gray are smaller than $60^{\circ}$ and the angles in white are greater than $60^{\circ}$.
			(\subref{fig:isosceles_triangles:1}) A tall isosceles triangle ($\triangle_2$, $\triangle_3$, $\triangle_4$, and $\triangle_4$) is not drawn inside a flat isosceles triangle ($\triangle_1$), (\subref{fig:isosceles_triangles:2}) A tall isosceles triangle ($\triangle_2$ and $\triangle_3$) is not drawn inside an equilateral triangle ($\triangle_1$), and (\subref{fig:isosceles_triangles:3}) An equilateral triangle ($\triangle_2$ and $\triangle_3$) is not drawn inside a flat isosceles triangle ($\triangle_1$).
		}
		\label{fig:isosceles_triangles}
	\end{figure}

	Consider now three triangles $\triangle_i$, $\triangle_j$, $\triangle_k$ such that $\triangle_j$ and $\triangle_k$ share a side with $\triangle_i$.
	If the three triangles have a common side, then a necessary condition for $\triangle_i$ and $\triangle_j$ to be drawn inside $\triangle_i$ is given by \cref{obs:internal_angles}.
	If instead there is no common side between $\triangle_j$ and $\triangle_k$, then we have the following lemmas.

	\begin{lemma}\label{lem:flat_in_equilateral}
		Let $\ell_1$ and $\ell_2$ be two lengths such that $2\ell_2 > \ell_1 > \ell_2$, and $\triangle_i$, $\triangle_j$, $\triangle_k$ be three triangles such that:
		\begin{itemize}
			\item
			$\triangle_i$ is an equilateral triangle with sides of length $\ell_1$,
			\item
			$\triangle_j$ and $\triangle_k$ are congruent flat isosceles triangles with base of length $\ell_1$ and two sides of length $\ell_2$,
			\item
			$\triangle_i$ shares one side with $\triangle_j$ and another side with $\triangle_k$, and
			\item $\triangle_j$ and $\triangle_k$ are drawn inside $\triangle_i$ and do not overlap with each other.
		\end{itemize}
		Then $\frac{\ell_1}{\ell_2}>\sqrt 3 \simeq 1.73$.
	\end{lemma}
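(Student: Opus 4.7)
The plan is to reduce the geometric configuration to a simple angle-sum condition at a single vertex and then use the flat-isosceles geometry to extract the bound. Let the vertices of the equilateral triangle $\triangle_i$ be $A$, $B$, $C$ and choose $A$ so that the sides of $\triangle_i$ shared with $\triangle_j$ and $\triangle_k$ are $AB$ and $AC$, respectively; let $D_j$ and $D_k$ denote the apex vertices of $\triangle_j$ and $\triangle_k$. Since $\triangle_j$ and $\triangle_k$ are congruent flat isosceles triangles with base $\ell_1 = |AB| = |AC|$ and equal sides of length $\ell_2$, the base angle $\alpha$ formed at $A$ by each of them satisfies $\cos \alpha = \ell_1 / (2\ell_2)$; the hypothesis $2\ell_2 > \ell_1 > \ell_2$ places $\alpha$ strictly between $0^\circ$ and $60^\circ$.

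Next, I would exploit the containment of $\triangle_j$ and $\triangle_k$ inside $\triangle_i$ together with their non-overlap. The angular wedge $\angle B A D_j$ occupied by $\triangle_j$ at $A$ and the wedge $\angle C A D_k$ occupied by $\triangle_k$ at $A$ both lie inside the $60^\circ$ wedge $\angle B A C$, and they must have disjoint interiors, since otherwise a small neighborhood of $A$ would belong to both $\triangle_j$ and $\triangle_k$. This yields the angle-sum inequality $2\alpha \leq 60^\circ$, i.e., $\alpha \leq 30^\circ$. The main obstacle is ruling out the boundary case $\alpha = 30^\circ$: in that case both $D_j$ and $D_k$ lie on the angle bisector of $\angle B A C$ at distance exactly $\ell_2$ from $A$, and hence coincide as points of the plane. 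However, in a planar straight-line realization of a $2$-tree, $D_j$ and $D_k$ must be distinct vertices of the graph, for if they coincided as graph vertices then $\triangle_i$, $\triangle_j$, $\triangle_k$ would contribute all six edges on $\{A, B, C, D_j\}$, giving a $K_4$ subgraph and contradicting the $K_4$-minor-freeness of $2$-trees; and distinct graph vertices must be drawn at distinct points of the plane.

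Therefore $\alpha < 30^\circ$ strictly, which translates into $\cos \alpha > \sqrt{3}/2$, i.e., $\ell_1/(2\ell_2) > \sqrt{3}/2$, yielding $\ell_1/\ell_2 > \sqrt{3}$, as claimed.
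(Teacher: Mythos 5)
Your proof is correct and follows essentially the same route as the paper's: identify the common vertex of the three triangles, observe that the two base angles $\alpha$ of $\triangle_j$ and $\triangle_k$ must fit disjointly inside the $60^{\circ}$ angle of $\triangle_i$, deduce $\alpha\leq 30^{\circ}$, and conclude via $\ell_1=2\ell_2\cos\alpha$. The one place you diverge is the boundary case $\alpha=30^{\circ}$: the paper excludes it by reading ``do not overlap'' as ``intersect only at the common vertex'' (the two apexes would otherwise share the whole segment from $v$ to the coincident apex), whereas you exclude it by appealing to the $2$-tree structure ($K_4$-freeness plus injectivity of the drawing), which is sound in every application of the lemma in the paper but imports hypotheses not present in the purely geometric statement; it is worth noting that you at least flag this degenerate case explicitly, which the paper's proof passes over when it jumps from ``$\alpha$ is at most $30^{\circ}$'' to a strict inequality.
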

	
	\begin{proof}  
		Refer to \cref{fig:tall_and_equilateral}. Let $v$ be the common vertex between $\triangle_i$, $\triangle_j$, and $\triangle_k$.
		Since $\triangle_i$ is equilateral the interior angle of $\triangle_i$ at $v$ is $60^{\circ}$.
		Hence, since $\triangle_j$ and $\triangle_k$ are congruent and intersect only at $v$ (they do not overlap), the internal angle $\alpha$ of both $\triangle_j$ and $\triangle_k$ at $v$ is at most $30^{\circ}$.
		We thus have that $\ell_1 = 2\ell_2 \cos(\alpha) > 2\ell_2\cos(30^{\circ})=\sqrt 3{\ell_2}$.
	\end{proof}
	
	\begin{figure}[ht]
		\centering%
		\includegraphics[page=1]{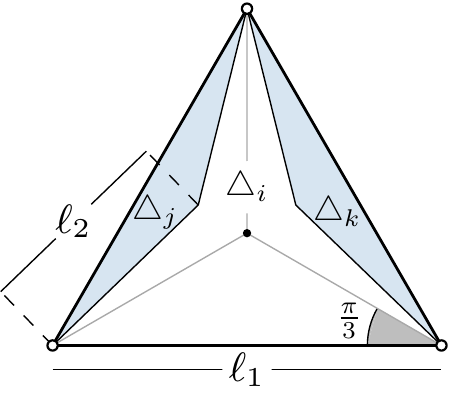}
		\caption
		{Illustration for \cref{lem:flat_in_equilateral}: If two flat isosceles triangles ($\triangle_j$ and $\triangle_k$) are drawn inside an equilateral triangle ($\triangle_i$), then $\frac{\ell_1}{\ell_2}>\sqrt{3}$. The black point is the circumcenter of $\triangle_i$.}
		\label{fig:tall_and_equilateral}
	\end{figure}
	
	
	\remove{
		\begin{figure}[ht]
			\centering%
			\subcaptionbox{\label{fig:tall_and_equilateral:1}}
			{\includegraphics[page=1]{tall_and_equilateral}}
			\qquad%
			\subcaptionbox{\label{fig:tall_and_equilateral:2}}
			{
				\includegraphics[page=2]{tall_and_equilateral}
				\quad
				\includegraphics[page=3]{tall_and_equilateral}
			}
			\caption
			{
				Illustration for \cref{lem:flat_in_equilateral,lem:flat-isosceles_in_tall-isosceles}.
				The black point is the circumcenter of $\triangle_i$.
				(\subref{fig:tall_and_equilateral:1}) If two flat isosceles triangles ($\triangle_j$ and $\triangle_k$) are drawn inside an equilateral triangle ($\triangle_i$), then $\frac{\ell_1}{\ell_2}>\sqrt{3}$.
				(\subref{fig:tall_and_equilateral:2})
				If two flat isosceles triangles ($\triangle_j$ and $\triangle_k$) are drawn inside a tall isosceles triangle ($\triangle_i$), then $\frac{\ell_1}{\ell_3} > 2 \cos(15^\circ)$.
			}
			\label{fig:tall_and_equilateral}
		\end{figure}
		\todo[inline]{This seems wrong to me. How about if $\ell_1 = \ell_2+\epsilon$ and then the angles on top are close to $30^{\circ}$ not $15^\circ$.}
		\begin{lemma}\label{lem:flat-isosceles_in_tall-isosceles}
			Let $\ell_1$, $\ell_2$, $\ell_3$ be three lengths such that $2\ell_2 > \ell_1 > \ell_2$ and $2\ell_3 > \ell_2 > \ell_3$, and $\triangle_i$, $\triangle_j$, $\triangle_k$ be three triangles such that:
			\begin{itemize}
				\item
				$\triangle_i$ is a tall isosceles triangle with base of length $\ell_2$ and two sides of length $\ell_1$,
				\item
				$\triangle_j$ is a flat isosceles triangle with base of length $\ell_1$ and two sides of length $\ell_3$,
				\item
				$\triangle_k$ is a flat isosceles triangle either congruent to $\triangle_j$, or with base of length $\ell_2$ and two sides of length~$\ell_3$,
				\item
				$\triangle_i$ shares one side with $\triangle_j$ and another side
				with $\triangle_k$, and
				\item
				$\triangle_j$ and $\triangle_k$ are drawn inside $\triangle_i$  and do not overlap with each other.
			\end{itemize}
			Then $\frac{\ell_1}{\ell_3} > 2 \cos(15^\circ) \simeq 1.93$.
		\end{lemma}
		
		\begin{proof}
			Let $v$ be the common vertex between $\triangle_i$, $\triangle_j$, and $\triangle_k$, and $\alpha$ be the interior angle of $\triangle_j$ at $v$.
			Note that $\triangle_j \cap \triangle_k = v$ for $\alpha < 15^\circ$, while $\triangle_j$ and $\triangle_k$ overlap if instead $\alpha \geq 15^\circ$.
			Since $\triangle_j$ and $\triangle_k$ intersect only at $v$, we have that $\ell_1 = 2\ell_3\cos(\alpha) > 2\ell_3\cos(15^\circ)$.
			Refer to \cref{fig:tall_and_equilateral:2}.
		\end{proof}
	}

	\section{Prescribed Embedding}
	\label{sec:prescribed_embedding}
	
	In this section, we describe an $O(n)$-time algorithm to solve the
	\FEPRshort problem for an $n$-vertex $2$-tree whose embedding or
	whose rotation system is prescribed.
	
	We start by showing how to check in linear time whether a straight-line
	realization of a $2$-tree with prescribed embedding is a planar
	straight-line realization respecting the embedding. This basic tool will be exploited throughout the paper.
	
	\begin{theorem}\label{thm:straight-line_realization_planarity}
		Let $G=(V,E,\lambda)$ be an $n$-vertex weighted $2$-tree,
		$\mathcal{E}$ be a plane embedding of $G$, and $\Gamma$ be a
		straight-line realization of $G$. There exists an $O(n)$-time
		algorithm that tests whether $\Gamma$ is a planar straight-line
		realization respecting $\mathcal{E}$.
	\end{theorem}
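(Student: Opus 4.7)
The plan is to exploit the decomposition-tree structure of a $2$-tree to reduce the test to $O(n)$ many constant-time local checks, whose conjunction certifies both the planarity of $\Gamma$ and the fact that it respects $\mathcal{E}$.

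First, I would compute the decomposition tree $T$ of $G$ in $O(n)$ time, following the construction recalled in \cref{sec:preliminaries}, and use $T$ together with $\mathcal{E}$ to enumerate the $3$-cycles, their adjacencies in $T$, and the cyclic order of edges at each vertex. Then the algorithm would perform three families of checks. \emph{(i) Non-degeneracy:} for every $3$-cycle $(a,b,c)$ of $G$, an $O(1)$ cross-product test verifies that $\Gamma(a), \Gamma(b), \Gamma(c)$ are not collinear. \emph{(ii) Rotation consistency:} for every vertex $v$, I would walk through the edges incident to $v$ in the cyclic order prescribed by $\mathcal{E}$ and verify, via $O(1)$ sign tests on consecutive triples, that they appear around $\Gamma(v)$ in the same rotational sense; this takes $O(\deg_G(v))$ time without sorting. \emph{(iii) Side consistency:} for every pair of $3$-cycles $(c, c')$ that are adjacent in $T$ and share an edge $(u,w)$, an $O(1)$ signed-area test verifies that the apex vertices of $c$ and $c'$ lie on the sides of the line through $\Gamma(u)\Gamma(w)$ specified by $\mathcal{E}$. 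Summing over all $3$-cycles, vertices, and $T$-adjacencies yields $O(n)$ total time, and the algorithm returns ``yes'' exactly when all three families of checks succeed.

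Correctness would be argued by induction along $T$. The base case (where $G$ is a single $3$-cycle) is immediate from non-degeneracy. For the inductive step, let $c$ be a leaf of $T$ with apex vertex $v$ and parent edge $(u,w)$, and set $G' := G - v$. By the induction hypothesis, the restriction of $\Gamma$ to $G'$ is a planar straight-line realization respecting the restriction $\mathcal{E}'$ of $\mathcal{E}$ to $G'$. It then remains to show that, under the three families of checks, the segments $(u,v)$ and $(v,w)$ can be added without introducing any crossing: non-degeneracy makes the triangle $c$ genuine, side consistency forces $v$ to the side of $(u,w)$ prescribed by $\mathcal{E}$, and rotation consistency at $u$ and $w$ constrains $(u,v)$ and $(v,w)$ to emanate into the correct angular slots at $\Gamma(u)$ and $\Gamma(w)$.

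The main obstacle is the geometric lemma underlying the inductive step: one must show that the planar region cut out by the correct side of the line through $\Gamma(u)\Gamma(w)$ together with the two angular slots at $u$ and $w$ is contained in the specific face of $\Gamma|_{G'}$ where $v$ should lie according to $\mathcal{E}'$. This is where the $2$-tree structure is crucial: any face of $\mathcal{E}'$ adjacent to $(u,w)$ is delimited by a short cycle of $G'$ whose defining vertices appear in the rotations at $u$ and $w$, so the containment reduces to the already-verified angular and side constraints. Once this lemma is established, the induction goes through and the theorem follows.
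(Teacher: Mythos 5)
There is a genuine gap: your three families of local checks are necessary but not sufficient for planarity, and the ``geometric lemma'' on which your inductive step rests is false. The non-degeneracy, rotation-consistency, and side-consistency tests constrain the drawing only in an arbitrarily small neighborhood of each vertex and of each shared edge; they say nothing about a crossing between two edges that share no endpoint and whose triangles are far apart in the decomposition tree. A concrete counterexample is a long maximal outerpath whose triangles, drawn with the prescribed lengths and in the prescribed outerplane embedding, curl around like a spiral: every triangle is non-degenerate, every vertex sees its few incident edges in the correct cyclic order, and every pair of $T$-adjacent triangles lies on the prescribed sides of its shared edge, yet after the chain has turned through more than $360^{\circ}$ it overlaps its beginning and two distant triangles cross (this is precisely the phenomenon of the non-planar realization in \cref{fig:planar}). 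In your inductive step, knowing that the apex $v$ lies on the correct side of the line through $\Gamma(u)\Gamma(w)$ and that $(u,v)$ and $(v,w)$ enter the correct angular slots at $u$ and $w$ does not confine $\Gamma(v)$ to the geometric region of the face of $\Gamma|_{G'}$ in which $\mathcal{E}$ places it: that region is bounded, while the intersection of a half-plane with two wedges is not, so $v$ can escape it and the segment $(u,v)$ can then cross an edge disjoint from $\{u,v,w\}$ --- a crossing invisible to all three checks. (Also, faces of a plane embedding of a $2$-tree need not be short cycles; the outer face in particular can be arbitrarily long.)

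The paper's proof needs a genuinely global step for exactly this reason. After performing essentially your check (ii) for the rotation system and a check on the outer face, it triangulates every internal face and the outer face of the purported embedding in overall linear time using Chazelle's polygon-triangulation algorithm, obtaining a straight-line drawing $\Gamma'$ of a maximal planar graph, and then runs the $O(n)$-time planarity test for triangulations/convex subdivisions on $\Gamma'$. That final global test is what detects crossings between faraway parts of the drawing, and your proposal has no counterpart to it. To repair your argument you would have to replace the local containment lemma by such a global test; proving that local consistency implies global planarity for $2$-trees is not an option, since the spiral outerpath refutes it.
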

	
	\begin{proof}
		First, we check whether $\Gamma$ respects $\mathcal{E}$. This can be done in $O(n)$ time by checking: (i) whether the clockwise order of the edges incident to each vertex in $\Gamma$ is the same as in $\mathcal{E}$; and (ii) whether the cycle bounding the outer face of $\Gamma$ (if $\Gamma$ is planar) is the same as in $\mathcal{E}$. More in detail, this check is as follows.
		
		\begin{itemize}
			\item Check (i) only requires to scan, for each vertex $v$ of $G$, the list representing the clockwise order of the edges incident to $v$ in $\mathcal{E}$, and to confirm that the slopes of such edges in $\Gamma$ form a decreasing sequence (where the slope of one of such edges is the one of the half-ray from $v$ through the other end-vertex of the edge and has a value between the slope $s$ of the first considered edge and $s-360^{\circ}$). This can be done in time proportional to the degree of $v$ and hence in $O(n)$ time over all the vertices of $G$.
			\item In order to perform check (ii), we find in $O(n)$ time the vertices $v_{\leftarrow}$ and $v_{\rightarrow}$ with smallest and largest $x$-coordinates, respectively, among the vertices of the cycle $\mathcal C$ bounding the outer face of $\mathcal{E}$. Then we check whether the edge that leaves $v_{\leftarrow}$ in clockwise direction along $\mathcal C$ lies above the edge that enters $v_{\leftarrow}$ in clockwise direction along $\mathcal C$ in $\Gamma$. Together with check (i), this guarantees that all the edges incident to vertices of $\mathcal C$ leave the vertices in $\mathcal C$ towards the region inside $\mathcal C$ in $\Gamma$, if $\Gamma$ is planar. It might still be the case, however, that $\mathcal C$ self-intersects or that other edges of $G$ cross $\mathcal C$, hence at this point we can guarantee that $\mathcal C$ bounds the outer face of $\Gamma$ only by assuming its planarity.
		\end{itemize} 
		
		%
		In order to check whether $\Gamma$ is planar, we augment $\Gamma$ to a straight-line drawing $\Gamma'$ of a maximal planar graph so that, if $\Gamma$ is planar, then $\Gamma'$ is planar, as well. This is done as follows. If $\Gamma$ is a planar straight-line realization respecting $\mathcal{E}$, every internal face $f$ of $\mathcal{E}$ is bounded in $\Gamma$ by a polygon $\mathcal P_f$ whose interior is empty. We  triangulate the interior of $\mathcal P_f$ by the addition of dummy edges. This can be done in linear time over all the internal faces of $\mathcal{E}$ by means of the algorithm by Chazelle~\cite{c-tsp-91}. We also triangulate the outer face of $\Gamma$. This is done similarly as described by Cabello et al.~\cite{JGAA-145}. Namely, we enclose $\Gamma$ into a large equilateral triangle $T$ with a horizontal side $\ell$. This partitions the outer face of $\Gamma$ into two regions, namely a bounded region $\mathcal R$ and an unbounded region $\mathcal R'$. We then add edges from $v_{\leftarrow}$ and $v_{\rightarrow}$ to the left and right end-vertices of $\ell$, respectively. This subdivides $\mathcal R$ into two regions delimited by simple polygons. We triangulate the interior of such polygons, again using Chazelle's algorithm~\cite{c-tsp-91}. Denote by $\Gamma'$ the resulting straight-line drawing.
		
		If $\Gamma$ is a planar straight-line realization respecting $\mathcal{E}$, then (as described in~\cite{JGAA-145}) every execution of Chazelle's algorithm terminates correctly. Hence, if at least one of the executions of Chazelle's algorithm terminates in error, we conclude that $\Gamma$ is not a planar straight-line realization respecting $\mathcal{E}$. It might also be the case that every execution of Chazelle's algorithm terminates correctly and yet $\Gamma$ contained crossings. Hence, we also need to check whether $\Gamma'$ is planar. This can be done in $O(n)$ time, as if $\Gamma'$ is planar, it is a triangulation (i.e., a planar straight-line drawing such that every face is delimited by a triangle) and the planarity of convex subdivisions can be tested in $O(n)$ time~\cite{dlpt-ccpps-98}. \end{proof}
	
	\begin{theorem}\label{thm:straight-line_realization_planarity-rotation}
		Let $G=(V,E,\lambda)$ be an $n$-vertex weighted $2$-tree, let $\mathcal R$ be a rotation system for $G$, and let $\Gamma$ be a
		straight-line realization of $G$. There exists an $O(n)$-time
		algorithm that tests whether $\Gamma$ is a planar straight-line
		realization whose rotation system is $\mathcal{R}$.
	\end{theorem}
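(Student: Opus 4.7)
The plan is to reduce to \cref{thm:straight-line_realization_planarity}, which requires a full plane embedding rather than just a rotation system. A rotation system for a biconnected planar graph determines the circular boundary of every face via standard face-tracing (the directed edge following $(u,v)$ along the face to its left is the one immediately clockwise after $(v,u)$ in the rotation at $v$), but not which face is outer. I would first trace all face boundaries of $\mathcal{R}$ in $O(n)$ total time, and then recover the would-be outer face from $\Gamma$ in order to invoke \cref{thm:straight-line_realization_planarity} on the resulting embedding.

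To locate the candidate outer face $f^*$, I would compute the vertex $v_{\leftarrow}$ of $\Gamma$ of smallest $x$-coordinate (breaking ties by smallest $y$) and inspect its incident edges. In any planar straight-line drawing, $v_{\leftarrow}$ lies on the outer face, and among the edges incident to $v_{\leftarrow}$ (viewed as half-rays leaving it) the two that bound the outer face are the one $e^{\uparrow}$ of largest slope and the one $e^{\downarrow}$ of smallest slope, since every other edge incident to $v_{\leftarrow}$ lies strictly inside the wedge they delimit on the right half-plane of $v_{\leftarrow}$. I would compute $e^{\uparrow}$ and $e^{\downarrow}$ in $O(\deg(v_{\leftarrow}))$ time and check that they are consecutive in the cyclic order of $v_{\leftarrow}$ specified by $\mathcal{R}$. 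If they are not, I reject, since no planar drawing whose rotation system is $\mathcal{R}$ can exist. Otherwise, $f^*$ is uniquely identified as the face of $\mathcal{R}$ whose boundary traverses these two edges on the outer side of the wedge.

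Once $f^*$ is known, $\mathcal{E} = (\mathcal{R}, f^*)$ is a plane embedding of $G$, and I would invoke the algorithm of \cref{thm:straight-line_realization_planarity} on $(G,\mathcal{E},\Gamma)$ and return its answer. Correctness follows from uniqueness: if $\Gamma$ is any planar straight-line realization whose rotation system is $\mathcal{R}$, then by the leftmost-vertex argument its outer face must be $f^*$, so $\Gamma$ respects $\mathcal{E}$; conversely, if $\Gamma$ respects $\mathcal{E}$, then by \cref{thm:straight-line_realization_planarity} it is planar, and its rotation system equals $\mathcal{R}$ by construction of $\mathcal{E}$. The main delicacy is the geometric argument used to single out $f^*$; the remaining operations are standard $O(n)$-time primitives (face tracing, finding the leftmost vertex, and two extreme-slope scans at a single vertex), so the overall running time is $O(n)$.
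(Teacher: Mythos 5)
Your proposal is correct and follows essentially the same route as the paper: both identify the candidate outer face from the leftmost vertex of $\Gamma$ together with the extreme-slope edge(s) at that vertex and the rotation system, and then reduce to \cref{thm:straight-line_realization_planarity}. The only cosmetic difference is that the paper traces just the single candidate outer walk starting from the largest-slope edge (rejecting if the walk closes up at the wrong vertex), whereas you trace all faces of $\mathcal{R}$ and then select the one lying in the gap between the two extreme edges; both are $O(n)$ and equivalent in substance.
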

	
	\begin{proof}
		We are going to recover the cycle $\mathcal C$ that bounds the outer face of $\Gamma$ (if $\Gamma$ is planar) by exploiting the knowledge of $\mathcal{R}$ and of $\Gamma$ itself.
		Then it suffices to invoke~\cref{thm:straight-line_realization_planarity} in order to test whether $\Gamma$ is a planar straight-line realization respecting $\mathcal E$, where $\mathcal E$ is the plane embedding that has $\mathcal{R}$ as rotation system and $\mathcal C$ as cycle bounding the outer face.
		
		In order to recover the cycle $\mathcal C$ that bounds the outer face of $\Gamma$, we first find in $O(n)$ time the leftmost vertex $v_0$ in $\Gamma$ and we then find in $O(n)$ time the edge $(v_0,v_1)$ incident to $v_0$ with largest slope; note that, if $\Gamma$ is planar, then $v_0$ and $v_1$ are encountered consecutively when traversing $\mathcal C$ in clockwise direction. Now suppose that, for some $i\geq 1$, a path $(v_0,v_1,\dots,v_i)$ has been found such that, if $\Gamma$ is planar and has $\mathcal R$ as rotation system, then the vertices $v_0,v_1,\dots,v_i$ are encountered consecutively when traversing $\mathcal C$ in clockwise direction. Then, if $\Gamma$ is planar and has $\mathcal R$ as rotation system, the vertex $v_{i+1}$ that is encountered after $v_i$ when traversing $\mathcal C$ in clockwise direction is the vertex that follows $(v_i,v_{i-1})$ in the clockwise order of the edges incident to $v_i$ in $\mathcal R$; such a vertex can be found in $O(1)$ time. When this process encounters a vertex $v_k$ for the second time, if such a vertex is not $v_0$, then we conclude that $\Gamma$ is not a planar straight-line realization whose rotation system is $\mathcal{R}$, otherwise we have found the desired cycle $\mathcal C$. 
	\end{proof}
	
	We note that theorems as the previous two cannot be stated for general planar graphs. Indeed, an easy reduction from {\sc Element Uniqueness}~\cite{b-lbact-83} shows that even testing whether a straight-line drawing of an $n$-vertex graph with no edges is planar requires $\Omega(n \log n)$ time.
	
	In order to prove the main results of this section, we need the following lemma, which might be of independent interest and which is applicable to planar graphs that are not necessarily $2$-trees.
	
	\begin{lemma} \label{le:ds-order}
		Let $G$ be an $n$-vertex planar graph and let $\mathcal R$ be a rotation system for $G$. A data structure can be set up in $O(n)$ time that allows one to answer in $O(1)$ time the following type of queries: Given three edges $e_i$, $e_j$, and $e_k$ incident to a vertex $v$ of $G$, determine whether they appear in the order $e_i$, $e_j$, and $e_k$ or in the order $e_i$, $e_k$, and $e_j$ in the clockwise order $\mathcal R(v)$ of the edges incident to $v$ in $\mathcal R$.
	\end{lemma}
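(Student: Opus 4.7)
The plan is to build a simple labeling scheme that assigns to each ``dart'' (i.e., each incidence of an edge with one of its endpoints) an integer position in its local clockwise order, and then to reduce the ternary cyclic-order query to a pair of comparisons of integers modulo the degree of the pivot vertex.

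First I would preprocess as follows. Since $G$ is planar, the total number of edges is $O(n)$ and hence the sum of vertex degrees is $O(n)$. For each vertex $v$, I traverse the list $\mathcal R(v)$ representing the clockwise order of the edges incident to $v$, and I assign to the $t$-th edge encountered the integer label $t$, where $t$ ranges over $\{0,1,\dots,\deg(v)-1\}$. I store, for every pair (edge $e$, endpoint $v$ of $e$), the label $\mathrm{pos}_v(e)$ thus computed, together with $\deg(v)$; since a $2$-tree (or any planar graph) has $O(n)$ incidences, this data structure occupies $O(n)$ space and is built in $O(n)$ total time.

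The key observation is the following purely arithmetic fact: if $a,b,c$ are three distinct elements of $\mathbb Z/d\mathbb Z$, then the clockwise cyclic order around $\mathbb Z/d\mathbb Z$ in which they appear is determined by the single comparison between $(b-a)\bmod d$ and $(c-a)\bmod d$. Concretely, $a,b,c$ appear in the cyclic order $(a,b,c)$ if and only if $(b-a)\bmod d<(c-a)\bmod d$, and otherwise they appear in the cyclic order $(a,c,b)$.

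Given a query on three edges $e_i,e_j,e_k$ incident to a vertex $v$, I look up $p_i=\mathrm{pos}_v(e_i)$, $p_j=\mathrm{pos}_v(e_j)$, $p_k=\mathrm{pos}_v(e_k)$ and $d=\deg(v)$ in $O(1)$ time, and I compute $\delta_j=(p_j-p_i)\bmod d$ and $\delta_k=(p_k-p_i)\bmod d$; by the observation above, the clockwise order in $\mathcal R(v)$ is $(e_i,e_j,e_k)$ if $\delta_j<\delta_k$ and $(e_i,e_k,e_j)$ otherwise. All arithmetic operations take constant time in the real RAM (or word RAM) model, so each query is answered in $O(1)$ time.

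There is no substantive obstacle here: the only thing to be careful about is to store a separate position label per (edge, endpoint) pair rather than per edge, so that the query for a pivot vertex $v$ uses only the labels relative to $v$. Once that is done, the cyclic-order test reduces to the constant-time modular comparison above.
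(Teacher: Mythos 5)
Your proposal is correct and follows essentially the same approach as the paper: both precompute, in $O(n)$ total time, a position label for each (edge, endpoint) incidence by scanning each rotation list once, and then answer a query by comparing the three positions in $O(1)$ time. The only cosmetic difference is that you phrase the comparison via modular arithmetic ($(p_j-p_i)\bmod d$ versus $(p_k-p_i)\bmod d$), whereas the paper does an equivalent case analysis on which of the three indices is smallest.
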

	
	\begin{proof}
		The proof's main idea is the same as the one of Aichholzer et al.~\cite{akm-oarp-17} for the $O(n^2)$-time construction of a data structure that allows to answer in $O(1)$ time queries about the orientation of triples of points in the plane.
		
		Consider any vertex $v$ of $G$. Let $\mathcal R(v)=[e_1,\dots,e_\ell]$; note that $\mathcal R(v)$ is a circular list, which is here linearized by picking any edge as $e_1$. Label each edge $e_i$ with the value $(v,i)$. This labeling labels each edge of $G$ twice, hence it is performed in $O(n)$ time. The data structure claimed in the statement simply consists of $G$ equipped with these edge labels.
		
		Now suppose that a query as in the statement has to be answered for three edges $e_i$, $e_j$, and $e_k$ incident to a vertex $v$ of $G$. We recover in $O(1)$ time the labels $(v,i)$, $(v,j)$, and $(v,k)$ respectively associated to $e_i$, $e_j$, and $e_k$. Assume that $i$ is smaller than $j$ and $k$; then $e_i$, $e_j$, and $e_k$ appear in the order $e_i$, $e_j$, and $e_k$ in $\mathcal R(v)$ if and only if $j<k$, which can be checked in $O(1)$ time. Analogously, if $j<i$ and $j<k$, we have that $e_i$, $e_j$, and $e_k$ appear in the order $e_i$, $e_j$, and $e_k$ in $\mathcal R(v)$ if and only if $k<i$, while if $k<i$ and $k<j$, we have that $e_i$, $e_j$, and $e_k$ appear in the order $e_i$, $e_j$, and $e_k$ in $\mathcal R(v)$ if and only if $i<j$.
	\end{proof}
	
	We now present the main results of this section.
	
	\begin{theorem}\label{th:fixed-rotation}
		Let $G$ be an $n$-vertex weighted $2$-tree and $\mathcal{R}$ be a rotation system for $G$. There exists an $O(n)$-time algorithm that tests whether $G$ admits a planar straight-line realization whose rotation system is $\mathcal{R}$ and, in the positive case, constructs such a realization.
	\end{theorem}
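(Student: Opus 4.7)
The plan is to use the decomposition tree $T$ of $G$ to build a candidate straight-line realization $\Gamma$ respecting $\mathcal{R}$, and then invoke \cref{thm:straight-line_realization_planarity-rotation} to verify it in $O(n)$ time.

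\textbf{Preprocessing.} First, compute $T$, rooted at an arbitrary $3$-cycle $c_0$, in $O(n)$ time via the recursive construction described in \cref{sec:preliminaries}. Then set up the $O(1)$-query data structure of \cref{le:ds-order} for $\mathcal{R}$ in $O(n)$ additional time.

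\textbf{Construction of $\Gamma$.} Place $c_0$ in the plane with a canonical orientation (say, one of its edges along the positive $x$-axis and the third vertex in the upper half-plane). Traverse $T$ top-down; for each non-root $3$-cycle $c$ whose parent $c'$ shares edge $(u,w)$ with $c$ and has third vertex $v_p$, the lengths $\lambda(u,v)$ and $\lambda(v,w)$ fix the new apex $v$ of $c$ up to reflection across the line through $u$ and $w$. To pick the correct side, query the cyclic order of $(u,w)$, $(u,v_p)$, and $(u,v)$ in $\mathcal{R}(u)$, and analogously at $w$, using \cref{le:ds-order}. Each such query identifies the angular sector of $\mathcal{R}(u)$ in which $(u,v)$ must lie and, together with the already-placed anchor edges $(u,w)$ and $(u,v_p)$, pins down the correct half-plane for $v$. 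If both candidate positions are compatible with the queried cyclic orders, either one is chosen by default, since both then extend to a realization whose rotation system is $\mathcal{R}$. Placing each new apex costs $O(1)$ time, so the whole phase runs in $O(n)$ total time.

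\textbf{Verification.} Apply \cref{thm:straight-line_realization_planarity-rotation} to test in $O(n)$ time whether $\Gamma$ is a planar straight-line realization of $G$ whose rotation system is $\mathcal{R}$. Because the initial placement of $c_0$ was chosen up to a global reflection, a failure of this test is retried once after reflecting $\Gamma$ (equivalently, after flipping the canonical orientation of $c_0$); if both attempts fail, we output that no such realization exists.

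\textbf{Main obstacle.} The hard part is justifying the local placement rule: one must show that whenever $G$ admits a planar straight-line realization respecting $\mathcal{R}$, the $O(1)$ rotation-order queries at the two endpoints of the shared edge identify at least one side for $v$ that extends to a globally valid realization. The subtlety is that a single rotation system for a 2-tree may be compatible with geometrically distinct realizations (differing in the choice of outer face of the underlying combinatorial embedding); in such cases the cyclic-order queries at $u$ and $w$ may leave the side of $v$ ambiguous, but each consistent choice still yields a valid realization respecting $\mathcal{R}$, so the final verification step is sufficient to close the argument.
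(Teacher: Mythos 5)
Your overall architecture (root the decomposition tree, place apices one at a time using $O(1)$ rotation-order queries via \cref{le:ds-order}, verify at the end with \cref{thm:straight-line_realization_planarity-rotation}) matches the paper's. But there is a genuine gap at exactly the point you flag as the ``main obstacle'': the case where \emph{both} candidate positions for the new apex are compatible with the cyclic-order queries. You claim that in this case either choice extends to a realization respecting $\mathcal{R}$, so an arbitrary pick suffices. This is false, and the paper proves the opposite: given a fixed placement of the root triangle, the planar realization with rotation system $\mathcal{R}$ (if it exists) is \emph{unique}, so in the ambiguous case at most one of the two placements is correct. Geometrically, the two placements differ in whether the reference edge at $u$ ends up inside or outside the new triangle; the inside choice forces the rest of the graph (in particular the root cycle) into the new triangle, which may only manifest as a crossing or rotation violation far away. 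Picking the wrong side therefore produces a non-planar candidate even when a planar one exists, and your final verification then returns a false negative. The paper closes this hole with two ingredients you omit: it roots $T$ at a $3$-cycle $c$ of \emph{maximum perimeter}, and it maintains for each vertex the first edge of a shortest path to $c$, using that edge (rather than the parent's apex edge) as the reference. \cref{pr:first-edge} then guarantees that this reference edge must lie \emph{outside} the new triangle in any planar realization, which disambiguates Case 3 deterministically. Your reference edge $(u,v_p)$ to the parent's apex would not even support such an argument, since a parent triangle can legitimately be nested inside a larger child triangle.

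A secondary issue: handling the global reflection by ``reflecting $\Gamma$'' at the end does not work, because reflecting the final drawing reverses its rotation system, so it will never pass the test against $\mathcal{R}$. Reflecting the seed triangle and rerunning the entire construction is \emph{not} equivalent to reflecting the output, since all subsequent placement decisions depend on the geometry of the partial drawing; the paper therefore runs the whole incremental procedure twice, once for each orientation of the root triangle.
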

	
	\begin{proof}
		The algorithm is as follows. First, among all the $3$-cycles of $G$, we pick a $3$-cycle $c$ with largest sum of the edge lengths, which can clearly be done in $O(n)$ time. Second, we compute in $O(n)$ time the decomposition tree $T$ of $G$ rooted at $c$. Third, by means of~\cref{le:ds-order}, we set up in $O(n)$ time a data structure that allows us to decide in $O(1)$ time whether any three edges $e_i$, $e_j$, and $e_k$ incident to a vertex $v$ of $G$ appear in the order $e_i$, $e_j$, and $e_k$ or in the order $e_i$, $e_k$, and $e_j$ in the clockwise order of the edges incident to $v$ in $\mathcal R$.
		
		Let $\gamma$ be any straight-line realization of $c$ and let $\gamma'$ be the reflection of $\gamma$. Note that $\gamma$ is unique, up to rigid transformations. Hence, if there exists a planar straight-line realization $\Delta$ of $G$ whose rotation system is $\mathcal{R}$ and whose restriction to $c$ is some triangle $\delta$, then there also exists a planar straight-line realization $\Gamma$ of $G$ whose rotation system is $\mathcal{R}$ and whose restriction to $c$ is either $\gamma$ or $\gamma'$. Indeed, $\Gamma$ can be obtained from $\Delta$ by applying the rotation and translation that turn $\delta$ into either $\gamma$ or $\gamma'$; then the rotation system for $\Gamma$ is the same as the one for $\Delta$, hence it is $\mathcal R$.
		
		We show how to test in $O(n)$ time whether there exists a planar straight-line realization of $G$ whose rotation system is $\mathcal{R}$ and whose restriction to $c$ is $\gamma$ (in the positive case, such a realization is constructed within the same time bound). An analogous test can be performed with the constraint that $c$ is represented by $\gamma'$ rather than $\gamma$. Then the algorithm concludes that a planar straight-line realization of $G$ whose rotation system is $\mathcal{R}$ exists if and only if (at least) one of the two tests is successful.
		
		We are going to use the following key property.
		
		\begin{property} \label{pr:first-edge}
			Let $(u,v,w)$ be a cycle of $G$ such that $u$ is not a vertex of $c$. Let $P$ be a shortest path from $u$ to (any vertex of) $c$ and suppose that the edge of $P$ incident to $u$, say $(u,z)$, is such that $z$ is different from $v$ and $w$. Then, in any planar straight-line realization of $G$, the edge $(u,z)$ lies outside $(u,v,w)$.
		\end{property}
		
		\begin{proof}
			Consider any planar straight-line realization $\Gamma$ of $G$. Since $z$ is different from $v$ and $w$ and since $P$ is a shortest path, we have that $P$ contains neither $v$ nor $w$. Hence, if $z$ lies inside $(u,v,w)$, then so does $c$, by the planarity of $\Gamma$. However, this is not possible, since $c$ is a $3$-cycle whose sum of the edge lengths is maximum, by definition. 
		\end{proof}
		
		
		
		Our algorithm visits $T$ in pre-order starting at $c$; let $c_1,\dots,c_{n-2}$ be the order in which the nodes of $T$ are visited. For $i=1,\dots,n-2$, let $G_i$ be the subgraph of $G$ composed of the $3$-cycles $c_1,\dots,c_i$; note that $G_1=c=c_1$ and $G_{n-2}=G$. Also, let $\mathcal R_i$ be the restriction of $\mathcal R$ to $G_i$.
		
		Clearly, if $\Gamma$ is a planar straight-line realization of $G$ whose rotation system is $\mathcal{R}$ and whose restriction to $c$ is $\gamma$, then, for any $i=1,\dots,n-2$, the restriction of $\Gamma$ to $G_i$ is a planar straight-line realization $\Gamma_i$ of $G_i$ whose rotation system is $\mathcal{R}_i$ and whose restriction to $c$ is $\gamma$. We prove that, if such a realization $\Gamma_i$ of $G_i$ exists, then it is unique and can be computed in $O(1)$ time from $\Gamma_{i-1}$ (from $\gamma$, if $i=1$). During the visit of $T$ we maintain, for each vertex $u$ of $G_i$ that does not belong to $c$, the first edge $(u,z)$ of a shortest path $P_u$ from $u$ to (any vertex of) $c$ in $G_i$. Further, we maintain, for each vertex $u$ of $G_i$, a value $\ell(u)$; this is equal to the number of edges of $P_u$, if $u$ does not belong to $c$, and to $0$, otherwise. 
		
		When visiting $c_1$, we set $\Gamma_1=\gamma$ in $O(1)$ time. Obviously, $\Gamma_1$ is the unique planar straight-line realization of $G_1$ in which the straight-line realization of $c_1$ is  $\gamma$ (the rotation system $\mathcal R_1$ does not impose any constraint). We initialize $\ell(u)=0$ for each vertex $u$~of~$c_1$.
		
		Assume that, for some $i\in \{1,\dots,n-3\}$, a straight-line realization $\Gamma_i$ of $G_i$ has been constructed such that, if a planar straight-line realization whose rotation system is $\mathcal R_i$ and whose restriction to $c$ is $\gamma$ exists, then $\Gamma_i$ is the unique such a realization. Let $c_{i+1}=(u,v,w)$, where $e_{uv}:=(u,v)$ is the edge that $c_{i+1}$ shares with the $3$-cycle represented by the parent of $c_{i+1}$ in $T$, while $w$ is the unique vertex of $G_{i+1}$ not in $G_i$. Let $e_{uw}$ and $e_{vw}$ denote the edges $(u,w)$ and $(v,w)$, respectively. Assume, w.l.o.g., that $\ell(u)\leq \ell(v)$. Then we compute in $O(1)$ time the value $\ell(w)$ as $\ell(w)=1+\ell(u)$ (the $\ell(\cdot)$ value of every other vertex of $G_{i+1}$ coincides with the one in $G_i$). Also, again in $O(1)$ time, we set $e_{uw}$ as the first edge of a shortest path from $w$ to $c$ (for every other vertex of $G_{i+1}$, the first edge of a shortest path to $c$ coincides with the one in $G_i$).
		
		Note that there are two placements for $w$ that result in the edges $e_{uw}$ and $e_{vw}$ having the prescribed edge lengths, thus turning $\Gamma_i$ into a straight-line realization of $G_{i+1}$. We show that \cref{pr:first-edge} can be used in order to define a unique placement for $w$ in $\Gamma_i$. 
		
		If $\ell(u)>0$, let $e_{uz}:=(u,z)$ be the first edge of the shortest path from $u$ to $c$ in $G_i$ (recall that we maintain this information). If $\ell(u)=0$, then $u$ is a vertex of $c$; then we define $e_{uz}:=(u,z)$ as any edge of $c$ different from $e_{uv}$ (note that $e_{uv}$ might or might not be an edge of $c$). In both cases, we use~\cref{le:ds-order} to determine in $O(1)$ time whether the clockwise order of the edges $e_{uv}$, $e_{uw}$, and $e_{uz}$ in $\mathcal R$ is $e_{uv}$, $e_{uw}$, and $e_{uz}$ or $e_{uv}$, $e_{uz}$, and $e_{uw}$. Suppose, without loss of generality, it is the former. We now proceed according to the following case distinction.

		%
		
		

		\begin{figure}[htb]
			\centering
			\subfloat[]{
				\includegraphics[page=1,height=.13\textwidth]{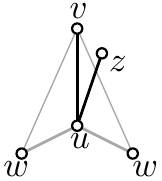}
				\label{fig:fixed-rotation-case1}
			}\hfill
			\subfloat[]{
				\includegraphics[page=2,height=.13\textwidth]{fixed-rotation}
				\label{fig:fixed-rotation-case2}
			}\hfill
			\subfloat[]{
				\includegraphics[page=3,height=.13\textwidth]{fixed-rotation}
				\label{fig:fixed-rotation-case3}
			}
			\caption{Illustrations of Cases 1, 2, and 3 for the insertion of $w$ and its incident edges in $\Gamma_i$. Both possible placements of $w$ are shown (a chosen placement is shown in black).}
			\label{fig:fixed-rotation}
		\end{figure}
		
		\begin{itemize}
			\item[Case 1:] If neither of the two possible placements of $w$ guarantees that the clockwise order of the edges $e_{uv}$, $e_{uw}$, and $e_{uz}$ in $\Gamma_{i+1}$ is $e_{uv}$, $e_{uw}$, and $e_{uz}$, as in~\cref{fig:fixed-rotation-case1}, then we conclude that there exists no planar straight-line realization of $G_{i+1}$ whose rotation system is $\mathcal R_{i+1}$ and whose restriction to $c$ is $\gamma$. 
			\item[Case 2:] If exactly one of the two possible placements of $w$ guarantees that the clockwise order of the edges $e_{uv}$, $e_{uw}$, and $e_{uz}$ in $\Gamma_{i+1}$ is $e_{uv}$, $e_{uw}$, and $e_{uz}$, as in~\cref{fig:fixed-rotation-case2}, then we choose that placement for $w$. Indeed, choosing the other placement for $w$ would imply that, even if the resulting straight-line realization of $G_{i+1}$ was planar, it would not have $\mathcal R_{i+1}$ as its rotation system. 
			\item[Case 3:] If both the possible placements of $w$ guarantee that the clockwise order of the edges $e_{uv}$, $e_{uw}$, and $e_{uz}$ in $\Gamma_{i+1}$ is $e_{uv}$, $e_{uw}$, and $e_{uz}$, as in~\cref{fig:fixed-rotation-case3}, then we observe that one of the two placements of $w$ is such that (at least part of) the edge $e_{uz}$ is internal to the triangle $(u,v,w)$, while the other placement is such that the edge $e_{uz}$ is external to the triangle $(u,v,w)$. We choose the latter placement for $w$.

			If $\ell(u)>0$, this choice is motivated by~\cref{pr:first-edge}. Indeed, $u$ is not a vertex of $c$, given that $\ell(u)>0$; further, $z$ is different from $w$ (as $z$ belongs to $G_i$, while $w$ does not) and from $v$ (by the assumption $\ell(u)\leq \ell(v)$ and $\ell(z)=\ell(u)-1$). Hence,~\cref{pr:first-edge} applies and in any planar straight-line realization of $G_{i+1}$, the edge $e_{uz}$ is external to $(u,v,w)$.
			
			If $\ell(u)=0$, then $e_{uz}$ is an edge of $c$, hence if $e_{uz}$ were internal to $(u,v,w)$, then $c$ would be internal to $(u,v,w)$, as well (the two cycles share the vertex $u$ and, possibly, the vertex $v$ and the edge $(u,v)$). However, this is not possible in any planar straight-line realization of $G_{i+1}$, as the sum of the edge lengths of $c$ is greater than or equal to the sum of the edge lengths of $(u,v,w)$, by the definition of $c$.
		\end{itemize}
		
		The clockwise order of the edges $e_{uv}$, $e_{uw}$, and $e_{uz}$ according to a placement for $w$ can be computed in $O(1)$ time, as well as whether the edge $e_{uz}$ is external to the triangle $(u,v,w)$ or not. Hence, $\Gamma_{i+1}$ can be constructed in $O(1)$ time from $\Gamma_i$. 
		
		When $i=n-2$, we have that, if there exists a planar straight-line realization of $G=G_{n-2}$ whose rotation system is $\mathcal{R}=\mathcal{R}_{n-2}$ and whose restriction to $c$ is $\gamma$, then the constructed representation $\Gamma:=\Gamma_{n-2}$ is the unique such a realization. Hence, it only remains to test whether $\Gamma$ is a planar straight-line realization of $G$ whose rotation system is $\mathcal{R}$. This can be done in $O(n)$ time by means of~\cref{thm:straight-line_realization_planarity-rotation}. 
	\end{proof}
	
	
	We can similarly deal with the case in which the cycle bounding the outer face of the planar straight-line realization is prescribed, as in the following theorem.
	
	\begin{theorem} \label{thm:fixed-embedding}
		Let $G$ be an $n$-vertex weighted $2$-tree and $\mathcal{E}$ be a plane embedding of $G$. There exists an $O(n)$-time algorithm that tests whether $G$ admits a planar straight-line realization that respects $\mathcal{E}$ and, in the positive case, constructs such a realization. 
	\end{theorem}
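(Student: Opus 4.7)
The plan is to reduce the problem to \cref{th:fixed-rotation} by exploiting the fact that a plane embedding is specified by a rotation system together with the choice of a face as outer face. First, I would read off from $\mathcal{E}$, in $O(n)$ time, its rotation system $\mathcal{R}$ and the cycle $\mathcal{C}$ bounding its outer face; any planar straight-line realization respecting $\mathcal{E}$ must have rotation system $\mathcal{R}$ and unbounded face bounded by $\mathcal{C}$.

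Next, I would run the incremental construction from the proof of \cref{th:fixed-rotation} with input $(G,\mathcal{R})$. That construction picks the $3$-cycle $c$ of $G$ with the largest sum of edge lengths and performs two incremental placements of the remaining vertices, one starting from a fixed realization $\gamma$ of $c$ and the other starting from its reflection $\gamma'$; when a construction does not abort, it produces a candidate realization of $G$ in $O(n)$ time. Denote by $\Gamma_\gamma$ and $\Gamma_{\gamma'}$ the (up to) two candidates so obtained. By the uniqueness claim established in the proof of \cref{th:fixed-rotation}, if $G$ admits a planar straight-line realization with rotation system $\mathcal{R}$ whose restriction to $c$ is $\gamma$ (respectively, $\gamma'$), then this realization is precisely $\Gamma_\gamma$ (respectively, $\Gamma_{\gamma'}$).

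Finally, for each candidate $\Gamma \in \{\Gamma_\gamma, \Gamma_{\gamma'}\}$, I would invoke \cref{thm:straight-line_realization_planarity} on the pair $(\Gamma,\mathcal{E})$ to test in $O(n)$ time whether $\Gamma$ is a planar straight-line realization respecting $\mathcal{E}$; this check verifies planarity, the rotation system, and the outer-face boundary all at once. The algorithm returns any candidate that passes, and reports infeasibility otherwise. For correctness, suppose a planar straight-line realization $\Gamma^*$ of $G$ respecting $\mathcal{E}$ exists. Then $\Gamma^*$ has rotation system $\mathcal{R}$, and since $c$ admits a unique realization in the plane up to rigid motions and reflection, after applying a suitable rigid motion to $\Gamma^*$ its restriction to $c$ equals either $\gamma$ or $\gamma'$; by the uniqueness above, the rigidly-transformed $\Gamma^*$ equals $\Gamma_\gamma$ or $\Gamma_{\gamma'}$. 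Since rigid motions preserve planarity, rotation systems, and outer-face boundaries, the verification step succeeds on that candidate.

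The main subtlety I anticipate is convincing the reader that the two candidates are exhaustive: two distinct planar straight-line realizations with the same rotation system $\mathcal{R}$ can realize different plane embeddings (differing in the choice of outer face), but they are then distinguished by whether they use $\gamma$ or $\gamma'$ as the realization of $c$, so both are captured by the two runs of the construction. Since the extraction of $\mathcal{R}$ and $\mathcal{C}$, the two construction attempts, and the two verifications each run in $O(n)$ time, the overall algorithm runs in $O(n)$ time.
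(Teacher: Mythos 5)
Your proposal is correct and takes essentially the same route as the paper: extract the rotation system of $\mathcal{E}$, build the (at most) two candidate realizations via the incremental construction of \cref{th:fixed-rotation}, and then verify each candidate against the full embedding $\mathcal{E}$ using \cref{thm:straight-line_realization_planarity} instead of \cref{thm:straight-line_realization_planarity-rotation}. The paper's proof is simply a terser statement of this exact argument, including the exhaustiveness of the two candidates via the uniqueness claim and orientation-preserving rigid motions.
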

	
	\begin{proof}
		Let $\mathcal R$ be the rotation system associated with $\mathcal{E}$.
		By exploiting $\mathcal R$ and ignoring the knowledge of the cycle delimiting the outer face of $\mathcal{E}$, at most two straight-line realizations of $G$ can be constructed in $O(n)$ time, exactly as in the proof of~\cref{th:fixed-rotation}, such that $G$ admits a planar straight-line realization whose rotation system is $\mathcal R$ if and only if one of these two realizations is planar and has  $\mathcal R$ as its rotation system.   Differently from the proof of~\cref{th:fixed-rotation}, we use \cref{thm:straight-line_realization_planarity} rather than~\cref{thm:straight-line_realization_planarity-rotation} to test in $O(n)$ time whether any of the two realizations is planar and respects $\mathcal{E}$.
	\end{proof}

	\section{NP-hardness for 2-trees with four edge lengths}
	\label{sec:np-hardness}
	
	In this section, we present a polynomial-time reduction from the {\sc Planar Monotone 3-SAT} problem to the \FEPRshort problem with {\em four} edge lengths.
	As the former is known to be \NP-complete~\cite{DBLP:journals/ijcga/BergK12}, the reduction shows \mbox{\NP-hardness of the latter.}
	We thus have the following.
	
	\begin{theorem}\label{th:np-hard}
		The \FEPR problem is NP-hard for weighted $2$-trees, even for instances whose number of distinct edge lengths is $4$.
	\end{theorem}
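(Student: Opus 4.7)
The plan is to reduce from \textsc{Planar Monotone 3-SAT}, which is known to be \NP-complete. Given an instance $\phi$ together with its rectilinear layout (variables on a horizontal line, positive clauses above, negative clauses below, connected by vertical segments with no crossings), I would construct in polynomial time a weighted $2$-tree $G_\phi$ using only four distinct edge lengths such that $G_\phi$ admits a planar straight-line realization if and only if $\phi$ is satisfiable. The $2$-tree $G_\phi$ is obtained by gluing three kinds of gadgets: a \emph{variable gadget} with two combinatorially distinct planar realizations corresponding to true/false, a \emph{wire gadget} that propagates the truth value along a chain of triangles, and a \emph{clause gadget} that is realizable iff at least one of its three incoming wires arrives in the satisfying orientation.

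For the variable and wire gadgets, I would exploit the fact that in any $2$-tree every triangle attached along an edge has exactly two possible sides on which to be placed; the surrounding geometry must be designed so that these binary choices are forced to be consistent throughout the gadget. The four available lengths would be chosen so that the gadgets mix equilateral, tall isosceles, and flat isosceles triangles in a specific way, invoking \cref{obs:containments,obs:internal_angles,obs:containment_sharing_edges,lem:flat_in_equilateral} to rule out all undesired nestings. Concretely, the two realizations of a variable gadget would differ by a global ``flip'' of a long chain, and a wire gadget would be a chain of triangles whose only planar realizations come in exactly two compatible flipping patterns, thus transmitting one bit.

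The main obstacle will be the clause gadget, since disjunction is inherently non-local while $2$-trees are $K_4$-minor-free and therefore very constrained: no three wires can genuinely ``meet'' at a vertex in a $3$-way symmetric manner. The intended approach is to route the three wires so that they meet a small central subgraph along three distinct edges; the lengths of this central subgraph would be set so that, by \cref{obs:containment_sharing_edges} and \cref{lem:flat_in_equilateral}, at least one of the three wires must arrive in the ``true'' orientation in order to give the central triangles enough room to be drawn planarly without a forbidden containment. Verifying that the all-false configuration is geometrically infeasible while every other configuration admits a realization is where most of the case analysis will be concentrated.

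Finally, I would glue the gadgets together following the rectilinear layout of $\phi$: variable gadgets are arranged left-to-right, positive (resp.\ negative) clause gadgets are placed above (resp.\ below), and wires run vertically between them. The absence of crossings in the layout, together with \cref{thm:fixed-embedding}, guarantees that the gadgets can be instantiated without geometric interference, so a planar straight-line realization of $G_\phi$ exists iff every gadget individually admits one consistent with its neighbors, which in turn holds iff $\phi$ is satisfiable. The construction produces $G_\phi$ of size polynomial in $|\phi|$ with length function taking only four distinct values, all encoded in constant size, which gives \NP-hardness in the Turing machine model.
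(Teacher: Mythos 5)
Your high-level plan matches the paper's: a reduction from \textsc{Planar Monotone 3-SAT} using variable, wire (the paper's $k$-transmission), and clause gadgets built from equilateral and isosceles triangles, with \cref{obs:containments,obs:internal_angles,obs:containment_sharing_edges,lem:flat_in_equilateral} ruling out unwanted nestings, and with the clause gadget correctly identified as the place where the disjunction must be encoded geometrically. However, there is a genuine gap in the core bit-transmission mechanism. You propose that the two truth values of a variable ``differ by a global flip of a long chain'' and that a wire is ``a chain of triangles whose only planar realizations come in exactly two compatible flipping patterns.'' If the chain itself flips, the geometry of everything downstream of the flip changes with the truth assignment, so the endpoints of the wires would land in truth-dependent positions and you could not connect them to clause gadgets placed according to the fixed rectilinear layout, nor argue planarity uniformly over all $2^n$ assignments. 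The paper avoids this entirely: the union of the \emph{frames} of all gadgets (joined by ladder gadgets) is a maximal outerplanar graph, hence rigid --- it has a \emph{unique} planar realization up to rigid transformations --- and the truth value is carried only by which of two adjacent rigid frame triangles each small nested \emph{transmission} triangle (a degree-2 leaf of the decomposition tree) is drawn inside. The cascading constraint of \cref{lem:flat_in_equilateral} (two flat isosceles triangles cannot both sit inside one equilateral frame triangle when $w_1/w_2<\sqrt3$) then propagates the bit along the rigid skeleton without moving any skeleton vertex.

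A second, related gap is your appeal to \cref{thm:fixed-embedding} to ``guarantee that the gadgets can be instantiated without geometric interference.'' That theorem is a testing algorithm for the prescribed-embedding case; it provides no composition principle in the variable-embedding setting that the hardness proof lives in. What actually prevents interference in the paper is again the rigidity of the outerplanar frame (so all positions are forced) together with the shear-mapped triangular-grid layout $\Gamma^*_\phi$ (Properties D1--D7), which aligns the wire directions with slope $60^\circ$ so that chains of unit equilateral frame triangles can follow the layout exactly. Without an explicit rigidity argument of this kind, your construction would not let you conclude that a satisfying assignment yields a planar realization, nor that a planar realization forces consistent truth values at the clauses.
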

	
	Let $\phi$ be a Boolean formula in conjunctive normal form with at most three literals on each clause.
	We denote by $G_\phi$ the \emph{incidence graph} of $\phi$, that is, the graph that has a vertex for each clause of $\phi$, a vertex for each variable of $\phi$, and an edge $(c_i,v_j)$ for each clause $c_i$ that contains the literal
	$v_j$ or the literal $\overline{v}_j$.
	The formula~$\phi$ is an instance of {\sc Planar Monotone 3-SAT} if $G_\phi$ is planar and each clause of $\phi$ is either \emph{positive} or \emph{negative}.
	A positive clause contains only \emph{positive literals} (i.e., literals $v_j$ for some variable $v_j$), while a negative clause only contains \emph{negated literals} (i.e., literals $\overline{v}_j$ for some variable $v_j$).
	
	
	%
	Let $\phi$ be an instance of {\sc Planar Monotone 3-SAT}.
	Hereafter we assume that each clause of $\phi$ contains {\em exactly} three literals.
	This is not a loss of generality, since a clause with less than three literals can be modified by duplicating one of the literals in the clause, which does not alter the satisfiability of $\phi$.
	Note this modification might turn $G_\phi$ into a planar multi-graph.
	Nevertheless, since this is not relevant for our reduction, it will be ignored hereforth.
	
	\begin{figure}[b!]
		\centering%
		\subcaptionbox{\label{fig:planar-3-sat-original} $\Gamma_\phi$}
		{\includegraphics[page=5,height=.3\textwidth]{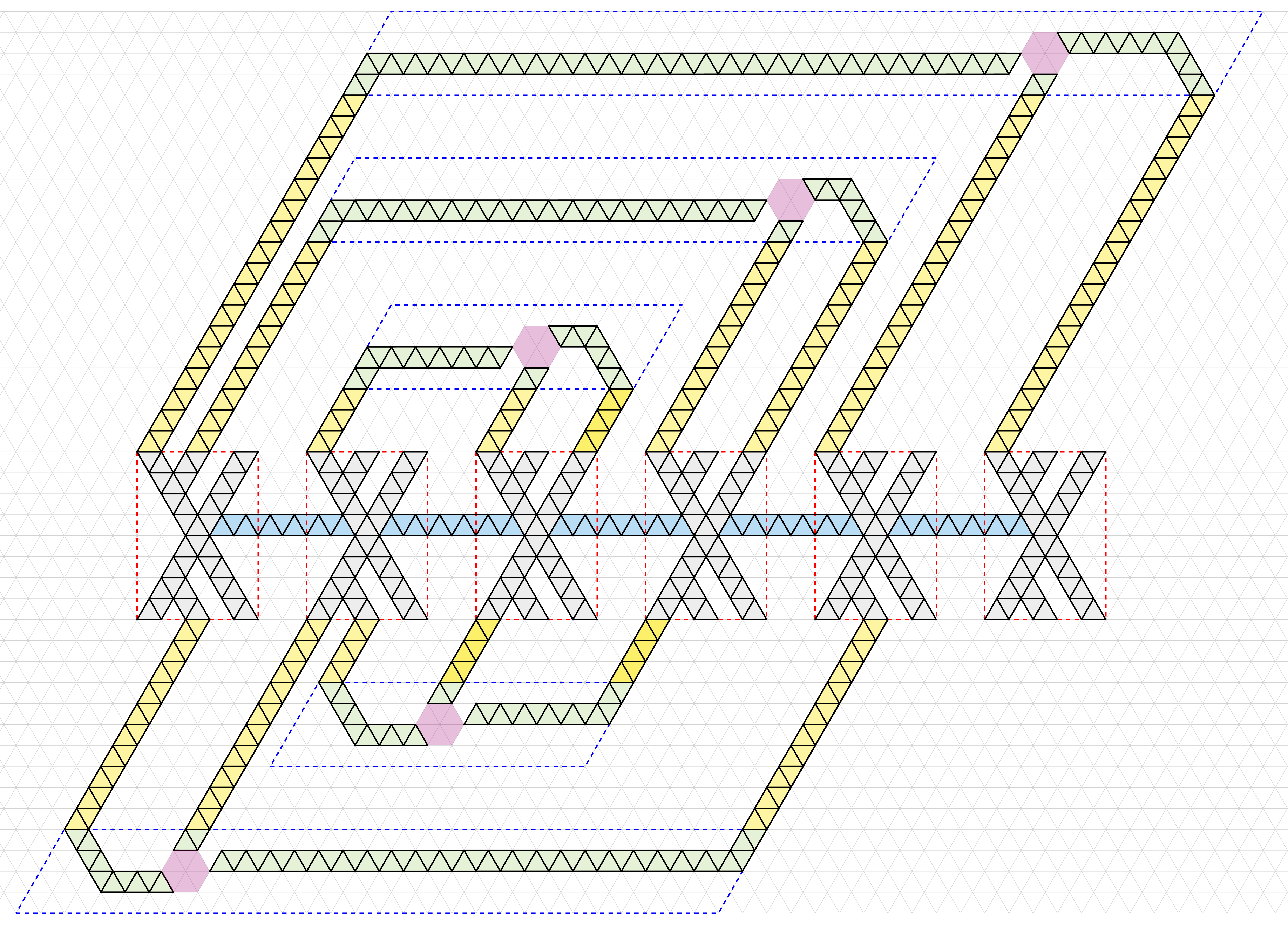}}%
		\hfil
		\subcaptionbox{\label{fig:planar-3-sat-modified}$\Gamma^*_\phi$}
		{\includegraphics[page=3,height=.3\textwidth]{variable-clause-graph}}
		
		\caption
		{
			Monotone rectilinear representations of $G_\phi$.
			Boxes representing variables are red, whereas boxes and trapezoids representing clauses are blue.
			(\subref{fig:planar-3-sat-original}) The original monotone rectilinear representation $\Gamma_\phi$.
			(\subref{fig:planar-3-sat-modified}) The modified monotone rectilinear representation $\Gamma^*_\phi$ that satisfies properties D\ref{prop:tiling}--D\ref{prop:clause-distances}.
		}
		\label{fig:planar-3-sat}
	\end{figure}
	
	A \emph{monotone rectilinear representation} of $G_\phi$ is a drawing that satisfies the following properties (refer to \cref{fig:planar-3-sat-original}):
	\begin{enumerate}[\bf {P}1:]
		\item
		Variables and clauses are represented by axis-aligned boxes with the same height.
		\item
		The bottom sides of all boxes representing variables lie on the same horizontal line.
		\item
		The boxes representing positive (resp.\ negative) clauses lie above (resp.\ below) the boxes representing variables.
		\item
		The edges connecting variables and clauses are vertical segments.
		\item
		The drawing is crossing-free.
	\end{enumerate}
	The {\sc Planar Monotone 3-SAT} problem has been shown to be \NP-complete, even when the incidence graph is provided along with a monotone rectilinear representation~\cite{DBLP:journals/ijcga/BergK12}.
	Given an instance $\phi$ of the {\sc Planar Monotone 3-SAT} problem and a rectilinear representation $\Gamma_\phi$ of $\phi$, we construct a weighted $2$-tree $H_\phi$ that admits a planar straight-line realization {\em if and only if} $\phi$ is satisfiable.
	The general strategy of the reduction is as follows.
	In \cref{sec:auxiliary_representation} we describe how to modify $\Gamma_\phi$ into an auxiliary representation $\Gamma^*_\phi$ that satisfies a set of convenient properties (see \cref{fig:planar-3-sat-modified}).
	We then exploit the geometric information of $\Gamma^*_\phi$ to construct $H_\phi$, so that the edges of $H_\phi$ are assigned four distinct lengths. Namely, in \cref{sec:np-hardness_gadgets}, we describe gadgets for the variables, for the clauses, and for the edges of $G_\phi$.
	Finally, in \cref{sec:np-hardness_combine_gadgets}, we show how to combine these gadgets to form $H_\phi$.
	
	\subsection{The auxiliary monotone rectilinear representation}
	\label{sec:auxiliary_representation}
	
	We start by describing how to transform $\Gamma_\phi$ into a new representation $\Gamma^*_\phi$ that satisfies the following properties (refer to \cref{fig:planar-3-sat-modified}):
	
	\begin{enumerate}[\bf {D}1:]
		\item \label{prop:tiling}
		
		The corners of the polygons and the end-points of the segments forming $\Gamma^*_\phi$ lie on the points of the grid formed by equidistant lines with slope $0^\circ$, $60^\circ$, and $120^\circ$, in which each grid cell is an equilateral triangle with sides of unit length.
		
		\item \label{prop:bounding-box}
		
		The height and width of the bounding box of $\Gamma^*_\phi$ are polynomially bounded in the size of $\phi$.
		
		\item \label{prop:variables-alignment}
		
		The variables are represented by axis-aligned boxes.
		Let $\delta_\phi$ denote the maximum degree of $G_\phi$.
		The boxes have width $2\delta_\phi-1$, height $2\sqrt 3 (\delta_\phi-1)$, and have their bottom sides on a common horizontal grid~line.
		
		\item \label{prop:clause-representation}
		
		The clauses are represented by trapezoids.
		Each trapezoid has height $2\sqrt 3$, lateral sides with slope $60^\circ$, and horizontal sides whose length is larger than $8$.
		
		\item \label{prop:edge-slopes}
		
		The edges connecting variables and clauses are line segments with slope $60^{\circ}$.
		
		\item \label{prop:variables-attachements}
		
		Consider the box $\mathcal{B}$ representing a variable.  
		The edges incident to $\mathcal{B}$ have horizontal distance from the left side of $\mathcal{B}$ that is a multiple of $2$.
		
		\item \label{prop:clause-distances}
		
		Consider the trapezoid $\mathcal{T}$ representing a positive (resp.\ negative) clause $c$.
		Let $p_1$, $p_2$, and $p_3$ be the intersection points between the segments representing the edges of $G_\phi$ incident to $c$ and the bottom (resp.\ top)
		horizontal side of $\mathcal{T}$.
		The point $p_1$ lies on the bottom-left (resp.\ top-left) corner of $\mathcal{T}$, the horizontal distance between $p_1$ and $p_2$ is at least three, the horizontal distance between $p_2$ and $p_3$ is at least four, and the horizontal distance between $p_3$ and the bottom-right (resp.\ top-right) corner~of~$\mathcal{T}$ is equal to one.
		
	\end{enumerate}
	
	The transformation is described in the following lemma.
	
	\begin{lemma}\label{lem:auxiliary-representation}
		The drawing $\Gamma^*_\phi$ can be constructed in polynomial time starting from $\Gamma_\phi$.
	\end{lemma}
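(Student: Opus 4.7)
The plan is to derive $\Gamma^*_\phi$ from $\Gamma_\phi$ by a sequence of geometric transformations that successively enforce the properties D\ref{prop:tiling}--D\ref{prop:clause-distances}, while preserving the planarity of the representation of $G_\phi$. First, I would rescale $\Gamma_\phi$ horizontally by a sufficiently large polynomial factor so that subsequent snapping and shifting operations have enough room, and rescale it vertically so that the three rows (variables, positive clauses, negative clauses) occupy heights that are multiples of $2\sqrt{3}$, thereby fitting on horizontal lines of the $60^\circ$ grid of D\ref{prop:tiling}.

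Next, I would replace each variable and each clause by its target shape: variables become axis-aligned boxes of width $2\delta_\phi-1$ and height $2\sqrt{3}(\delta_\phi-1)$ aligned along a common horizontal grid line (D\ref{prop:variables-alignment}), while clauses become trapezoids of height $2\sqrt{3}$ with lateral sides of slope $60^\circ$ and horizontal sides of length greater than $8$ (D\ref{prop:clause-representation}). The variables, having width linear in $\delta_\phi$, can be placed in the same left-to-right order as in $\Gamma_\phi$ with enough horizontal slack coming from the rescaling; similarly, each clause is placed so that its trapezoid replaces the original axis-aligned box, widened if necessary to accommodate its three incident edges.

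The third step is to reroute every edge of $G_\phi$ as a segment of slope $60^\circ$ (D\ref{prop:edge-slopes}). Since all positive clauses sit on the same horizontal grid line above the variables, and all negative clauses sit on the same line below, replacing a vertical segment from a variable to a clause by a $60^\circ$ segment translates the clause horizontally by a fixed amount depending only on whether the clause is positive or negative; hence the left-to-right order of clauses on each horizontal row is preserved and the resulting representation remains planar. The attachment points can then be adjusted on each variable box to lie at horizontal distances from its left side that are multiples of $2$ (D\ref{prop:variables-attachements}), which is feasible because the box has width $2\delta_\phi-1$ and at most $\delta_\phi$ edges attach to it. Finally, for each clause trapezoid, I would permute the three attachment points along its bottom or top horizontal side so as to satisfy D\ref{prop:clause-distances}: place the first attachment at the left corner, leave horizontal gaps of at least $3$ and $4$ to the second and third attachments respectively, and leave a gap of exactly $1$ to the right corner. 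This only requires locally rerouting the three segments incident to the clause by sliding their variable-side endpoints along the variable boxes, without altering the combinatorial order of the edges.

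The main obstacle I anticipate is the bookkeeping: making sure that all of these shifts and local reroutings---turning vertical edges into $60^\circ$ edges, enlarging variables to width $2\delta_\phi-1$, reshaping clauses into trapezoids with long horizontal sides, and enforcing the spacing required by D\ref{prop:clause-distances}---are simultaneously compatible and that every corner and endpoint lands on a grid vertex of the $60^\circ$ grid, all within a bounding box of size polynomial in $|\phi|$ (D\ref{prop:bounding-box}). This can be handled by a single left-to-right sweep that computes the horizontal displacement of each object as a function of the objects already placed to its left, and by choosing the initial rescaling factor in the first step large enough (but still polynomial in $|\phi|$ and $\delta_\phi$) to guarantee that no two objects ever overlap and that all coordinates can be snapped to integer multiples of the unit grid spacing.
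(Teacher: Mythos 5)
There is a genuine gap in your third step. Your argument that rerouting the vertical edges into segments of slope $60^\circ$ preserves planarity rests on the premise that ``all positive clauses sit on the same horizontal grid line above the variables, and all negative clauses sit on the same line below.'' That premise is false for monotone rectilinear representations in general: because the incidence graph is planar and the clause--variable adjacencies are nested, the clause boxes necessarily occur at many different heights above (resp.\ below) the variable row (see the representation $\Gamma_\phi$ in the paper's figure). Consequently, replacing a vertical connector by a $60^\circ$ connector displaces the clause-side endpoint horizontally by an amount proportional to the \emph{height} of that clause, not ``by a fixed amount depending only on whether the clause is positive or negative.'' Translating every positive clause by the same fixed offset, as you propose, would leave the edges of higher clauses with slopes other than $60^\circ$ (violating D\ref{prop:edge-slopes}); translating each clause by its own height-dependent offset, without a global argument, leaves you with no justification that boxes and edges at different heights do not collide. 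Your left-to-right sweep does not repair this, since the problematic interactions are between objects at different heights, not between horizontally consecutive objects.

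The paper closes exactly this gap with a global \emph{shear mapping}: after first producing an intermediate rectilinear drawing satisfying the box versions of all properties, it applies $(x,y)\mapsto(x+\tfrac{\sqrt3}{2}y,y)$ to the half-plane above the top line of the variable boxes and the symmetric shear below the bottom line. Being an affine bijection of each half-plane that fixes the variable strip, the shear preserves non-crossingness for free, turns \emph{every} vertical segment outside the strip into a $60^\circ$ segment regardless of its height, turns clause boxes into the required trapezoids, and preserves horizontal distances (so D\ref{prop:variables-attachements} and D\ref{prop:clause-distances}, arranged beforehand in the rectilinear intermediate drawing, survive). If you replace your per-object translation with this single shear of each half-plane --- i.e., shift every point at signed height $y$ by $\tfrac{\sqrt3}{2}|y|$ --- your outline becomes essentially the paper's proof; as written, the planarity and slope claims of your third step do not hold.
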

	
	\begin{proof}
		We first transform $\Gamma_\phi$ into an intermediate rectilinear representation $\Gamma^\prime_\phi$ that satisfies properties D\ref{prop:tiling}--D\ref{prop:variables-alignment}, property D\ref{prop:variables-attachements}, and two properties that we denote by {\bf D4'} and {\bf D7'}, obtained from properties D\ref{prop:clause-representation} and D\ref{prop:clause-distances}, respectively, by substituting trapezoids with boxes.
		Since every segment of a geometric object forming $\Gamma_\phi$ is either horizontal or vertical, we can transform $\Gamma_\phi$ into $\Gamma'_\phi$ by suitable scaling the boxes representing variables and clauses, and by a suitable deformation of the plane.
		
		We transform $\Gamma'_\phi$ into $\Gamma^*_\phi$ as follows.
		Observe that, since $\Gamma'_\phi$ satisfies property D\ref{prop:variables-alignment}, the top (resp. bottom) sides of the boxes representing variables lie on a horizontal line $\ell_t$ (resp. $\ell_b$).
		The transformation consists of two horizontal shear mappings that transform vertical segments into segments with slope $60^\circ$ and are applied, respectively, to the points of the plane above $\ell_t$ and below $\ell_b$. More precisely, these shear mappings are as follows.
		Translate the Cartesian reference system so that the $x$-axis coincides with $\ell_t$ (and the origin is anywhere).
		The first shear mapping then maps every point $(x,y)$ with $y\geq 0$ to the point $(x+\frac{\sqrt 3}{2}y,y)$.
		Again translate the Cartesian reference system so that the $x$-axis coincides with $\ell_b$ (and the origin is anywhere).
		The second shear mapping then maps every point $(x,y)$ with $y\leq 0$ to the point $(x-\frac{\sqrt 3}{2}y,y)$.
		
		
		

		We show next that the resulting representation $\Gamma^*_\phi$ satisfies properties D\ref{prop:tiling}--D\ref{prop:clause-distances}:
		\begin{itemize}
			\item[D\ref{prop:tiling}:]
			
			This property is satisfied since $\Gamma'_\phi$ satisfies property D\ref{prop:tiling}. Note that the distance between two consecutive grid lines (with slope either $0^{\circ}$, or $60^{\circ}$, or $120^{\circ}$) is $\frac{\sqrt{3}}{2}$.
			
			\item[D\ref{prop:bounding-box}:]
			
			Let $\mathcal{B}^\prime$ and $\mathcal{B}^*$ denote, respectively, the bounding boxes of $\Gamma^\prime_\phi$ and $\Gamma^*_\phi$.
			Property D\ref{prop:bounding-box} is satisfied since $\Gamma^\prime_\phi$ satisfies property D\ref{prop:bounding-box}, the height of $\mathcal{B}^*$ is equal to the height of $\mathcal{B}^\prime$, and the width of $\mathcal{B}^*$ is at most the width of $\mathcal{B}^\prime$ plus $\sqrt 3$ times the height of $\mathcal{B}^\prime$.
			
			\item[D\ref{prop:variables-alignment} and D\ref{prop:variables-attachements}:]
			
			These properties are satisfied since $\Gamma'_\phi$ satisfies properties D\ref{prop:variables-alignment} and D\ref{prop:variables-attachements}, and the shear mappings do not move the points contained in the horizontal strip bounded by $\ell_t$ and $\ell_b$.
			
			\item[D\ref{prop:clause-representation}, D\ref{prop:edge-slopes}, and D\ref{prop:clause-distances}:]
			
			These properties are satisfied since $\Gamma'_\phi$ satisfies properties D4' and D7', since the shear mappings transform vertical segments outside the horizontal strip delimited by $\ell_t$ and $\ell_b$ into segments with slope $60^\circ$, and since the shear mappings do not alter the distance between any two points on the same horizontal line.
			
		\end{itemize}
		
		We complete the proof by observing that all the transformations described above can be clearly computed in polynomial time.
	\end{proof}
	
	
	\subsection{Overview of the reduction}
	\label{sec:np-hardness_overview}
	
	For the sake of clarity, before describing in detail the reduction from $\Gamma^*_\phi$ to $H_\phi$, we provide next a high-level description of the gadgets we employ to obtain $H_\phi$.
	In \cref{sec:np-hardness_gadgets} we present complete constructions, as well as lemmas to guarantee that the gadgets behave as required for the correctness of the reduction.
	
	
	
	\begin{figure}[tb!]
		\centering
		\includegraphics[page=1,width=.8\textwidth]{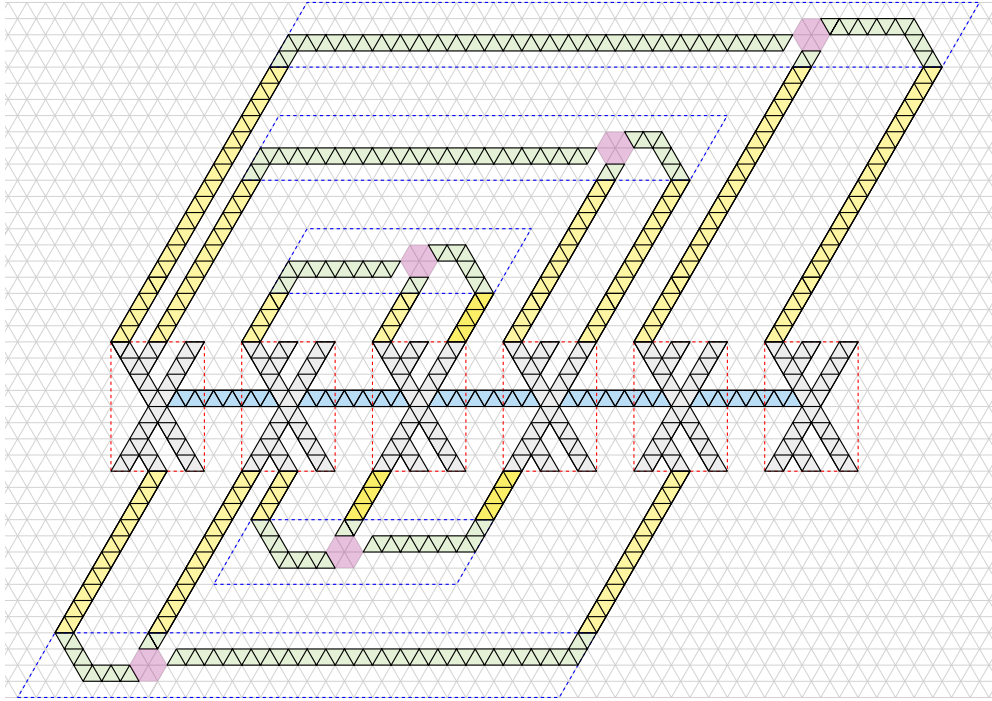}
		\caption{Overview of the reduction showing only the frame triangles of the involved gadgets.}
		\label{fig:overview-reduction}
	\end{figure}
	
	We define three main types of gadgets.
	A variable $v_i \in \phi$ is modeled by means of a gadget we call \emph{variable gadget}, a clause $c_j \in \phi$ by means of a gadget we call a \emph{clause gadget}, and an edge $(v_i,c_j) \in G_\phi$ by means of a gadget we call a \emph{$k$-transmission gadget}.
	We also define two auxiliary gadgets we call \emph{ladder gadgets} and  \emph{flag gadgets}.
	Ladder gadgets are used to provide structural rigidity to the reduction, and flag gadgets are used as an auxiliary tool in the construction of clause gadgets.
	To construct the gadgets we use the lengths $w_1:=1$, $w_2:=0.9$, $w_3:=0.2$, and $w_4:=1.61$.
	In the figures, we represent edges of length $w_1, w_2, w_3$, and $w_4$ by means of black, red, blue, and green segments, respectively.
	%
	The lengths $w_1$ and $w_2$ play a special role in the reduction.
	We use them to define two main types of triangles: Equilateral triangles with sides of length $w_1$, which we call \emph{frame triangles}, and flat isosceles triangles with base of length $w_1$ and two shorter sides of length $w_2$, which we call \emph{transmission triangles}.
	We refer to the subgraph of a gadget formed by its frame triangles as the \emph{frame} of the gadget.
	With the exception of clause gadgets, the frame of every gadget is a maximal outerplanar graph with a unique planar straight-line realization (up to rigid transformations); whereas the frame of a clause gadget consists of three maximal outerplanar graphs.
	%
	The variable, clause, and $k$-transmission gadgets are constructed in such a way that every frame triangle shares two of its sides each with a different transmission triangle.
	Since $\frac{w_1}{w_2}=\frac{1}{0.9}<\sqrt 3$, by \cref{lem:flat_in_equilateral} we have that, in any planar straight-line realization of $H_{\phi}$, two transmission triangles that share a side with a frame triangle cannot both be drawn inside such a frame triangle. 
	This property is exploited in order to propagate along the frame of the gadgets the truth values of each literal; this value is initially determined by the embedding of a transmission triangle inside one or the other of its two incident frame triangles of a variable gadget.
	A final common property of all our gadgets is a set of special edges we call \emph{attachment edges}, which are edges of frame triangles that are also edges of transmission triangles.
	%
	In the figures, attachment edges are represented as thick black segments.
	We use attachment edges to combine gadgets together so that the encoded truth values can be propagated along the resulting $2$-tree.
	We exploit properties D\ref{prop:tiling}-D\ref{prop:clause-distances} of $\Gamma^*_\phi$ to guarantee that, after combining all the gadgets, $H_\phi$ has a planar straight-line realization if and only if $\phi$ is satisfiable.
	
	An example of a planar straight-line realization of $H_\phi$ is shown in \cref{fig:overview-reduction}.
	This example illustrates a high-level sketch of the reduction from $\Gamma^*_{\phi}$ to $H_\phi$.
	The grid formed by lines with slope $0^\circ$, $60^\circ$, and $120^\circ$ is shown in light gray.
	For the sake of clarity, we only show the frames of the gadgets forming $H_\phi$.
	To illustrate the correspondence with the boxes of \cref{fig:planar-3-sat-modified} that represent the variables and clauses of $\phi$, we also show the bounding boxes of the variable gadgets (in dashed red) and the trapezoids enclosing the clause gadgets (in dashed blue).
	We next provide a high-level description of each gadget.
	%
	
	%
	
	\paragraph{The variable gadget.}
	
	Remember that $\delta_\phi$ denotes the maximum degree of $G_\phi$.
	The variable gadget provides $2\delta_\phi$ attachment edges. Further, its frame is a maximal outerplanar graph, hence it has a unique planar straight-line realization, up to rigid transformations. Let $\Gamma$ be a planar straight-line realization of the variable gadget, $\Gamma_f \subset \Gamma$ be the realization of the frame of the gadget in $\Gamma$, and $B_f$ be the bounding box of $\Gamma_f$.
	In \cref{fig:overview-reduction}, the realization $\Gamma_f$ is shaded gray and $B_f$ is shown with red dashed lines.
	Up to a rotation of~$\Gamma$, we have that $\delta_\phi$ attachment edges lie along the top side of $B_f$, say $s_t$, while the remaining $\delta_\phi$ attachment edges lie along the bottom side of $B_f$, say $s_b$.
	%
	%
	The crucial property of the variable gadget is that, in $\Gamma$, either {\em all} the transmission triangles incident to attachment edges along $s_t$ are drawn outside $\Gamma_f$, or {\em all} the transmission triangles incident to attachment edges along $s_b$ are drawn outside $\Gamma_f$.
	We exploit this property to encode the truth assignment of the variable modeled by the gadget.
	In particular, the purpose of the attachment edges on $s_t$ is to propagate the truth value to incident positive clauses, while the purpose of the attachment edges on $s_b$ is to propagate the negated truth value to incident negative clauses. 
	%
	
	
	\paragraph{The ladder gadget.}
	
	By Property D\ref{prop:variables-alignment}, the bottom sides of the boxes representing variables lie on the same horizontal grid line in $\Gamma^*_\phi$.
	Consider the order of appearance of the boxes while traversing such line from left to right.
	The ladder gadget consists of a sequence of frame triangles that form a maximal outerpath and that connect the frames of two consecutive variable gadgets; see the triangles shaded blue in \cref{fig:overview-reduction}.
	%
	The purpose of the ladder gadgets is to make the entire graph $H_{\phi}$ biconnected and to make the union of the frames of all the gadgets a maximal outerplanar graph, which has a unique planar straight-line realization, up to rigid transformations. This realization can then be navigated by the truth values of the variables. 
	
	%
	
	\paragraph{The $k$-transmission gadget.}
	
	Let $v_i$ and $c_j$ be a variable and a clause of $\phi$, respectively, such that $(v_i,c_j)$ is an edge of $G_\phi$.
	Let $\mathcal V_i$ denote the variable gadget modeling $v_i$ and $\mathcal C_j$ denote the clause gadget modeling $c_j$.
	The purpose of the transmission gadget is to ``transmit'' to $\mathcal C_j$ the truth value of $v_i$ corresponding to the realization of $\mathcal V_i$.
	The frame of the $k$-transmission gadget is a sequence of $k$ frame triangles that form a maximal outerpath; see the triangles shaded yellow in \cref{fig:overview-reduction}.
	The parameter $k$ coincides with twice the length of the segment representing the edge $(v_i,c_j)$ in $\Gamma^*_\phi$.
	The $k$-transmission gadget provides two attachment edges, each incident to one of the faces of the frame that correspond to the end-vertices of its dual path.
	To connect $\mathcal V_i$ to $\mathcal C_j$, one of these attachment edges is identified with an attachment edge of $\mathcal V_i$ and the second one is identified with an attachment edge of $\mathcal C_j$.
	%
	%
	
	\paragraph{The clause gadget.}
	
	Unlike the aforementioned gadgets, the clause gadget consists of three distinct connected components.
	Two of these components are formed by frame and transmission triangles.
	The third component is not only formed by frame and transmission triangles, but also by a special subgraph we call a \emph{flag gadget}.
	The frame of each component forms a maximal outerpath; see the triangles shaded light green in \cref{fig:overview-reduction}.
	The clause gadget contains six attachment edges.
	Three of them, that we call \emph{input attachment edges}, are used to combine the clause gadget with the three $k$-transmission gadgets modeling the edges of $G_\phi$ incident to the clause.
	The remaining three, that we call \emph{output attachment edges}, are used to model the logic of the clause; in particular, one of them connects the frame of a component to the flag gadget. 
	
	
	Consider a positive clause $c_j$ that contains the literal $v_i$.
	The clause $c_j$ is modeled in $H_\phi$ by a clause gadget $\mathcal C_j$ and is represented in $\Gamma^*_\phi$ by a trapezoid $t_j$.
	The edge $(v_i, c_j)$ of $G_\phi$ is modeled in $H_\phi$ by a $k$-transmission gadget $\mathcal T_{i,j}$ and is represented in $\Gamma^*_\phi$ by a line segment $s_{i,j}$.
	The gadgets $\mathcal C_j$ and $\mathcal T_{i,j}$ are combined together by means of an attachment edge $e$ of $\mathcal T_{i,j}$ that is identified with an input attachment edge of $\mathcal C_j$.
	The combination of $\mathcal C_j$ with $\mathcal T_{i,j}$ exploits the following geometric property.
	Consider the unique planar straight-line realization of the subgraph of $H_{\phi}$ that is the union of all the ladder gadgets and of the frames of all the variable and transmission gadgets.
	By Property D\ref{prop:edge-slopes} of $\Gamma^*_\phi$ and the fact that $k$ coincides with twice the length of $s_{i,j}$, we have that, in such a realization, the attachment edge $e$ is represented by a line segment lying along the bottom side of $t_j$.
	If $c_j$ is instead negative, then $e$ lies along the top side of $t_j$.
	
	By the geometric property described above, the input attachment edges of the components of $\mathcal C_j$ are represented by line segments that lie either all on the bottom or all on the top side of $t_j$.
	The frames of the components of $\mathcal C_j$ can thus be defined so that, in any planar straight-line realization of $\mathcal C_j$, we have that (i) all the frames are entirely contained in $t_j$ and (ii) the output attachment edges are represented by line segments that are ``close'' to each other.
	In \cref{fig:overview-reduction}, we emphasize this closeness by depicting a pink-shaded hexagon whose vertices coincide with some of the endpoints of these line segments.
	
	Consider the transmission triangles incident to the output attachment edges of the components of $\mathcal C_j$.
	The clause gadget admits a planar straight-line realization {\em if and only if} at least one of these transmission triangles is drawn inside its incident frame triangle.
	Let $C$ be a component of $\mathcal C_j$.
	Assume again that the input attachment edge of $C$ is shared with a $k$-transmission gadget $\mathcal T_{i,j}$ that connects $\mathcal C_j$ to the variable gadget representing a variable $v_i$.
	Our gadgets guarantee that, in any planar straight-line realization of $H_\phi$, the transmission triangle incident to the output attachment edge of $C$ can be drawn inside its adjacent frame triangle only if the gadget modeling $v_i$ represents the value $\texttt{False}$ (resp.\ $\texttt{True}$) and $c_j$ is positive (resp.\ negative). Thus, if all three variables appearing in $c_j$ represent the value $\texttt{False}$ (resp.\ $\texttt{True}$) and $c_j$ is positive (resp.\ negative), then $H_\phi$ admits no planar straight-line realization. Conversely, if at least one of the variables appearing in $c_j$ represents the value $\texttt{True}$ (resp.\ $\texttt{False}$) and $c_i$ is positive (resp.\ negative), then the transmission triangle of the corresponding component of $\mathcal C_j$ can be drawn inside its incident frame triangle so that $\mathcal C_j$ admits a planar straight-line realization.

	\remove{
		We will exploit the following properties of the gadgets for variables and the gadgets for the edges. We refer to the latter as {\em transmission gadgets}.
		
		\begin{property}\label{prop:frame-outerplanar}
			The transmission and the variable gadgets satisfy the following properties:
			\begin{enumerate}
				\item their frame induce a maximal outerplanar graph,
				\item the transmission triangle incident to one attachment edge of the gadget is drawn inside its incident frame triangle {\em if and only if} the the transmission triangle incident to the other attachment edge of the gadget is draw outside its incident frame triangle.
			\end{enumerate}
		\end{property}
		
		By \cref{prop:frame-outerplanar} and since no frame triangle can drawn inside any other frame triangle, we have the following.
		
		\begin{remark}\label{remark:frame-outerplanar-drawing}
			In any planar straight-line realization of a transmission, split, and variable gadget $G$, the drawing of the subgraph of $G$ induced by the frame triangles is outerplanar.
		\end{remark}
	}
	
	\subsection{Description of the gadgets}
	\label{sec:np-hardness_gadgets}
	
	We now present complete constructions for the following gadgets: The \emph{transmission gadget}, the \emph{split gadget}, the \emph{variable gadget}, the \emph{flag gadget}, and the \emph{clause gadget}.
	We will later show how to precisely combine these gadgets to obtain $H_\phi$.
	
	In the following, whenever we combine two gadgets $G_1$ and $G_2$, we proceed as follows.
	First, we identify one attachment edge of $G_1$ with one attachment edge of $G_2$ to obtain a single edge $e$ that belongs to both $G_1$ and $G_2$.
	Second, we remove the degree-$2$ vertex of any of the two transmission triangles sharing the edge $e$, together with the two edges incident to such vertex.
	%
	This operation always results in a $2$-tree, as a consequence of the following lemma.
	
	\begin{lemma}\label{lem:two-2-trees-merge}
		Let $G_1$ and $G_2$ be two $2$-trees, and let $e_1$ be an edge of $G_1$ and $e_2$ be an edge of $G_2$.
		The graph~$G$ obtained from $G_1$ and $G_2$ by identifying $e_1$ and $e_2$ is a $2$-tree.
	\end{lemma}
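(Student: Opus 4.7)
The plan is to prove the lemma by induction on $|V(G_2)|$, exploiting the recursive definition of a $2$-tree together with Properties (P\ref{pr:biconnected})--(P\ref{pr:non-adjacent-degree-2}). The roles of $G_1$ and $G_2$ are symmetric, so I am free to induct on the size of either one.

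For the base case, suppose $|V(G_2)|=3$, so that $G_2$ is a single $3$-cycle with vertices $u,v,w$, where $e_2=(u,w)$ (up to relabeling). Identifying $e_1=(u',w')\in E(G_1)$ with $e_2$ amounts to adding to $G_1$ a single new vertex (the image of $v$) together with two new edges connecting it to the endpoints $u'$ and $w'$ of $e_1$. By the recursive definition of a $2$-tree, this produces a $2$-tree.

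For the inductive step, assume $|V(G_2)|\geq 4$. By Property (P\ref{pr:non-adjacent-degree-2}), $G_2$ contains two non-adjacent degree-$2$ vertices; since the two endpoints of $e_2$ are adjacent, at least one such degree-$2$ vertex, call it $v$, is not an endpoint of $e_2$. By Property (P\ref{pr:degree-2-adjacent}), the two neighbors $x,y$ of $v$ in $G_2$ are adjacent, and $G_2-v$ is again a $2$-tree that still contains $e_2$. By the inductive hypothesis, the graph $G'$ obtained from $G_1$ and $G_2-v$ by identifying $e_1$ and $e_2$ is a $2$-tree that contains the edge $(x,y)$. Adding back $v$ together with the edges $(v,x)$ and $(v,y)$ then yields $G$; by the recursive definition of a $2$-tree, $G$ is a $2$-tree, completing the induction.

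The only thing that requires any care is confirming that the degree-$2$ vertex chosen from $G_2$ is distinct from the endpoints of $e_2$, which is exactly where Property (P\ref{pr:non-adjacent-degree-2}) is used; without it one could not guarantee that removing $v$ leaves $e_2$ intact and thus available to be identified with $e_1$. Everything else is a direct application of the recursive definition, so I do not expect any real obstacle.
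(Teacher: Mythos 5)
Your proof is correct and follows essentially the same route as the paper's: induction on $|V(G_2)|$, with the base case handled by the recursive definition and the inductive step removing a degree-$2$ vertex of $G_2$ not incident to $e_2$ (guaranteed by Property~(P3)) and re-attaching it after applying the inductive hypothesis. Your extra remark spelling out why such a vertex avoids the endpoints of $e_2$ (the two non-adjacent degree-$2$ vertices cannot both be endpoints of the edge $e_2$) is a slightly more explicit justification of the step the paper attributes directly to Property~(P3), but the argument is the same.
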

	
	\begin{proof}
		We prove the statement by induction on $|V(G_2)|$.
		
		In the base case, $|V(G_2)|=3$, that is, $G_2$ is a $3$-cycle.
		Then $G$ is obtained by adding to $G_1$ a vertex $v$ and two edges $(v,u)$ and $(v,w)$, where $u$ and $w$ are the end-vertices of $e_1$.
		Hence $G$ is a $2$-tree.
		
		In the inductive case, let $v$ be a degree-$2$ vertex of $G_2$ that is different from both the end-vertices of $e_2$.
		This vertex exists by Property~(P\ref{pr:non-adjacent-degree-2}) of a $2$-tree.
		Let $u$ and $w$ be the neighbors of $v$ in $G_2$.
		By Property~(P\ref{pr:degree-2-adjacent}) of a $2$-tree, we have that $(u,w)$ is an edge of $G_2$.
		Let $G'_2$ be the graph obtained from $G_2$ by removing $v$ and its incident edges $(v,u)$ and $(v,w)$; note that $e_2$ is an edge of $G'_2$.
		By induction, the graph $G'$ obtained from $G_1$ and $G'_2$ by identifying $e_1$ and $e_2$ is a $2$-tree.
		Then $G$ is obtained by adding to $G'$ the vertex $v$ and the edges $(v,u)$ and $(v,w)$; since $G'$ contains the edge $(u,w)$, it follows that $G$ is a $2$-tree.
	\end{proof}
	
	\paragraph{The $k$-transmission gadget.}
	
	\begin{figure}[bt!]
		\centering
		
		\subfloat[\label{fig:transmission_gadget-a}]{
			\includegraphics[page=1,height=.35\textwidth]{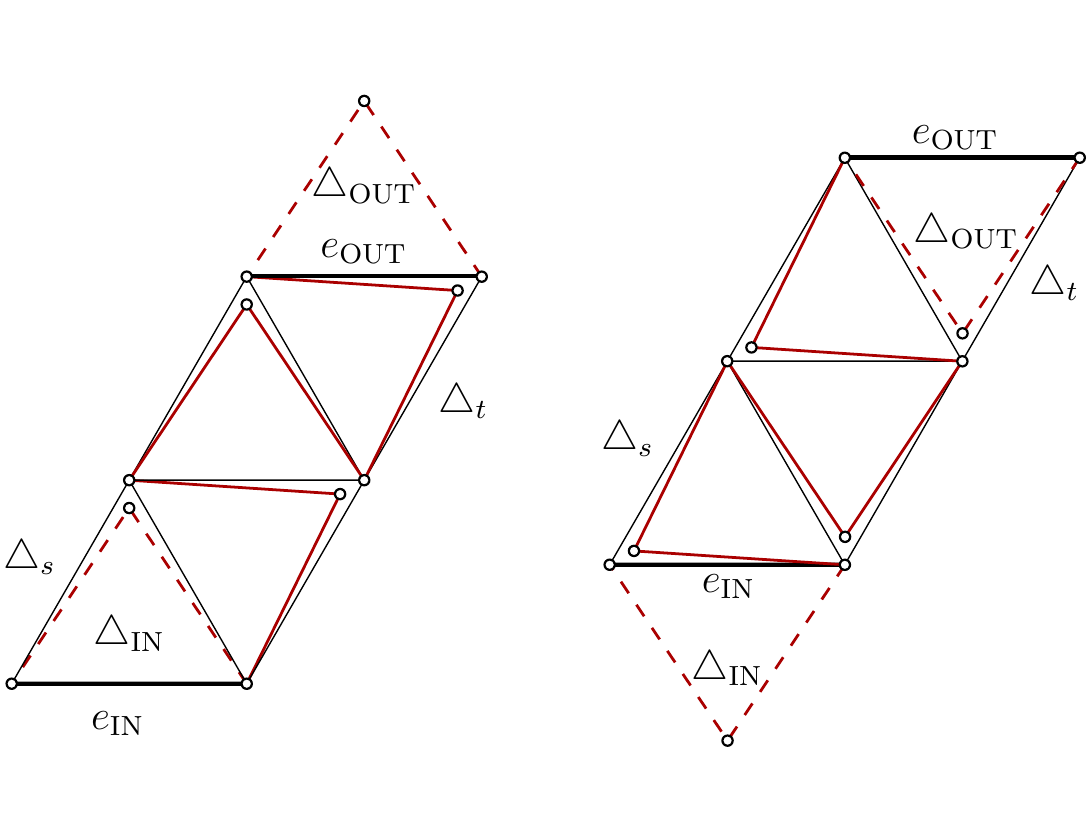}
		}
		\hfil
		\subfloat[\label{fig:transmission_gadget-b}]{
			\includegraphics[page=2,height=.35\textwidth]{np-hardness__transmission_gadget}
		}
		
		\caption
		{
			Illustration of two $k$-transmission gadgets (in (a) and in (b)), differing from one another for the choice of the attachment edge belonging to $\triangle_t$.
			For each of the two $k$-transmission gadgets, both a planar straight-line realization in which $\triangle_{\textrm{IN}}$ is drawn inside $\triangle_s$ and a planar straight-line realization in which $\triangle_{\textrm{OUT}}$ is draw inside $\triangle_t$ are shown.
			In these illustrations, we have $k=4$.
		}
		\label{fig:transmission_gadget}
	\end{figure}
	For any positive even integer $k$, a \emph{$k$-transmission gadget} consists of a sequence $\triangle_s=\triangle_1,\dots,\triangle_k=\triangle_t$ of $k$ frame triangles, in which each triangle shares exactly one edge with its successor; refer to \cref{fig:transmission_gadget}.
	For each pair of consecutive frame triangles, we insert a transmission triangle (solid red triangles in \cref{fig:transmission_gadget}) whose base edge we identify with the common edge between the frame triangles.
	When the value $k$ is not relevant, we refer to a $k$-transmission gadget simply as a ``transmission gadget''.
	A $k$-transmission gadget provides two attachment edges.
	The first attachment edge, which we denote $e_\textrm{IN}$, can be selected as any of the two edges of $\triangle_{s}$ that is not shared with its successor frame triangle.
	Similarly, the second attachment edge, which we denote $e_\textrm{OUT}$, can be selected as any of the two edges of $\triangle_{t}$ that is not shared with its predecessor frame triangle.
	This yields four possible $k$-transmission gadgets.
	We complete the construction of a transmission gadget by inserting a transmission triangle $\triangle_{\textrm{IN}}$ whose base we identify with $e_\textrm{IN}$, and a transmission triangle $\triangle_\textrm{OUT}$ whose base we identify with $e_\textrm{OUT}$; see the dashed triangles in~\cref{fig:transmission_gadget}.
	
	Clearly, a transmission gadget is a $2$-tree and its frame is a maximal outerpath.
	In any planar straight-line realization of a transmission gadget, its attachment edges are either parallel or not.
	In the first case, we say that the transmission gadget is \emph{straight}, whereas in the second case we say that it is \emph{turning}; the transmission gadgets in \cref{fig:transmission_gadget-a} and  \cref{fig:transmission_gadget-b} are straight and turning, respectively.
	We have the following lemma.
	
	
	\begin{lemma}\label{lem:transmission-gadget-behaviour}
		In any planar straight-line realization of a transmission gadget, if $\triangle_\textrm{IN}$ is drawn inside $\triangle_s$, then $\triangle_\textrm{OUT}$ is drawn outside $\triangle_t$.
		Further, if $\triangle_\textrm{OUT}$ is drawn inside~$\triangle_t$, then
		$\triangle_\textrm{IN}$ is drawn outside~$\triangle_s$.
	\end{lemma}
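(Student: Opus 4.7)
The plan is to prove the first implication by induction along the chain $\triangle_1, \dots, \triangle_k$, and then invoke symmetry for the second implication. Denote by $\tau_i$ the transmission triangle whose base is the edge shared between $\triangle_i$ and $\triangle_{i+1}$, for $1 \le i \le k-1$.

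The central tool is \cref{lem:flat_in_equilateral} applied with $\ell_1 = w_1 = 1$ and $\ell_2 = w_2 = 0.9$: since $w_1/w_2 = 10/9 < \sqrt{3}$, at any frame triangle $\triangle_j$ it is impossible for two flat isosceles transmission triangles sharing distinct sides with $\triangle_j$ to be both drawn inside $\triangle_j$ (planarity rules out overlaps, so the non-overlap hypothesis of \cref{lem:flat_in_equilateral} is automatic). A secondary ingredient is the geometric fact that, because a transmission triangle has base equal to a side of an equilateral frame triangle and two shorter sides, its apex lies strictly inside whichever of the two frame triangles adjacent to that base it is placed towards; hence a transmission triangle sharing its base with the edge common to $\triangle_i$ and $\triangle_{i+1}$ is necessarily drawn inside either $\triangle_i$ or $\triangle_{i+1}$.

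With these ingredients in hand, I would argue by induction on $i$ that $\tau_i$ is drawn inside $\triangle_{i+1}$. For the base case $i=1$, the assumption that $\triangle_{\textrm{IN}}$ lies inside $\triangle_s = \triangle_1$ together with \cref{lem:flat_in_equilateral} (applied to $\triangle_1$, which shares one side with $\triangle_{\textrm{IN}}$ and a different side with $\tau_1$, because $e_{\textrm{IN}}$ is by construction distinct from the edge $\triangle_1$ shares with $\triangle_2$) forces $\tau_1$ not to be inside $\triangle_1$; by the geometric fact above, $\tau_1$ must then be inside $\triangle_2$. For the inductive step, $\tau_{i-1}$ and $\tau_i$ share two distinct sides of the equilateral $\triangle_i$, so the same lemma forbids both being inside $\triangle_i$; since $\tau_{i-1}$ already is, $\tau_i$ must lie inside $\triangle_{i+1}$. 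Iterating, $\tau_{k-1}$ is inside $\triangle_k = \triangle_t$; a final application of \cref{lem:flat_in_equilateral} to $\triangle_t$ with the pair $\tau_{k-1}$ and $\triangle_{\textrm{OUT}}$ (which again share distinct sides of $\triangle_t$, since $e_{\textrm{OUT}}$ is distinct from the edge shared with $\triangle_{k-1}$) rules out $\triangle_{\textrm{OUT}}$ being drawn inside $\triangle_t$; thus $\triangle_{\textrm{OUT}}$ is drawn outside $\triangle_t$, as required.

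The second statement follows from the first by reversing the roles of the two endpoints of the chain: the construction of the $k$-transmission gadget is symmetric with respect to the two extremal frame triangles $\triangle_s$ and $\triangle_t$, so the inductive argument can be started from the assumption that $\triangle_{\textrm{OUT}}$ is drawn inside $\triangle_t$ and propagated backwards along $\triangle_k, \triangle_{k-1}, \dots, \triangle_1$ to conclude that $\triangle_{\textrm{IN}}$ is drawn outside $\triangle_s$. The main subtlety I expect is precisely the bookkeeping that the two transmission triangles appearing at each step of the induction indeed share \emph{distinct} sides with the frame triangle in question, which must be checked separately at the two ends of the chain (where one of the two triangles is $\triangle_{\textrm{IN}}$ or $\triangle_{\textrm{OUT}}$ rather than an internal $\tau_i$); this follows directly from the definition of the attachment edges $e_{\textrm{IN}}$ and $e_{\textrm{OUT}}$ as edges of $\triangle_s$ and $\triangle_t$ not shared with their successor/predecessor in the chain.
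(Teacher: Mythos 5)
Your proposal is correct and follows essentially the same route as the paper's proof: a chain induction along $\triangle_1,\dots,\triangle_k$ using the contrapositive of \cref{lem:flat_in_equilateral} (since $w_1/w_2=1/0.9<\sqrt{3}$, no two transmission triangles sharing distinct sides of a frame triangle can both lie inside it) to push each $\tau_i$ into $\triangle_{i+1}$, with the second implication obtained by symmetry. The extra bookkeeping you flag (that the relevant pairs share \emph{distinct} sides, and that each $\tau_i$ must lie inside one of its two adjacent frame triangles) is left implicit in the paper but is a correct and harmless elaboration.
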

	
	\begin{proof}
		Consider the sequence $\triangle_1,\dots,\triangle_k$ of frame triangles of a transmission gadget, where $\triangle_s = \triangle_1$ and $\triangle_t = \triangle_k$.
		For $i=1,\dots,k-1$, the triangles $\triangle_i$ and $\triangle_{i+1}$ share a single edge, which is the base of a transmission triangle we denote with $\triangle^T_i$.
		
		We only prove the first part of the statement, since the second part is symmetric.
		Consider any planar straight-line realization of a transmission gadget in which $\triangle_\textrm{IN}$ is drawn inside $\triangle_s$.
		Since $\frac{w_1}{w_2}=\frac{1}{0.9}<\sqrt 3$, by \cref{lem:flat_in_equilateral} we have that $\triangle_\textrm{IN}$ and $\triangle^T_1$ are not both drawn inside $\triangle_1 = \triangle_s$; therefore $\triangle^T_1$ is drawn inside $\triangle_2$.
		Similarly, since $\triangle^T_i$ is drawn inside $\triangle_{i+1}$, then $\triangle^T_{i+1}$ is drawn inside $\triangle_{i+2}$ for $i=1,\dots,k-2$.
		Finally, since $\triangle^T_{k-1}$ is drawn inside $\triangle_k = \triangle_t$, we have that $\triangle_\textrm{OUT}$ is drawn outside $\triangle_t$.
	\end{proof}
	
	
	\paragraph{The split gadget.}
	
	\begin{figure}[h!]
		\centering
		\subfloat[\label{fig:split_gadget-a}]{
			\includegraphics[page=3,height=.3\textwidth]{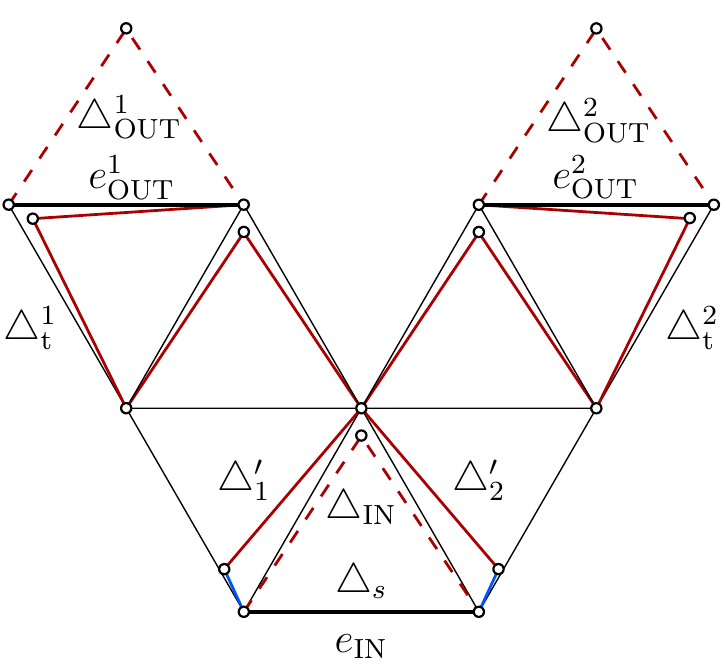}
		}
		\hfil
		\subfloat[\label{fig:split_gadget-b}]{
			\includegraphics[page=4,height=.3\textwidth]{np-hardness__split_gadget}
		}
		\\
		\subfloat[\label{fig:split_gadget-c}]{
			\includegraphics[page=1,height=.3\textwidth]{np-hardness__split_gadget}
		}
		\hfil
		\subfloat[\label{fig:split_gadget-d}]{
			\includegraphics[page=2,height=.3\textwidth]{np-hardness__split_gadget}
		}
		\caption
		{
			Split gadget construction.
			(a) The $4$-transmission gadgets $T^1$ and $T^2$.
			(b) The gadgets $T^1$ and $T^2$ after replacing $\triangle'_1$ and $\triangle'_2$ with scalene triangles.
			(c) The split gadget with $\triangle_{\textrm{IN}}$ drawn inside $\triangle_s$.
			(d) The split gadget with $\triangle^i_{\textrm{OUT}}$ drawn inside $\triangle^i_t$, for $i=1,2$.
		}
		\label{fig:split_gadget}
	\end{figure}
	
	Let $T^1$ and $T^2$ be two straight $4$-transmission gadgets.
	The split gadget can be obtained from $T^1$ and $T^2$ as follows.
	For $i = 1,2$, let us denote the triangles $\triangle_\textrm{IN}$, $\triangle_\textrm{OUT}$, $\triangle_s$, and $\triangle_t$ of $T^i$ as $\triangle^i_\textrm{IN}$, $\triangle^i_\textrm{OUT}$, $\triangle^i_s$, and $\triangle^i_t$, respectively, and the attachment edges $e_\textrm{IN}$ and $e_\textrm{OUT}$ of $T^i$ as $e^i_\textrm{IN}$ and  $e^i_\textrm{OUT}$, respectively; see \cref{fig:split_gadget-a}.
	For $i=1,2$, let $\triangle'_i$ be the transmission triangle of $T^i$ that is incident to the frame triangle $\triangle^i_{s}$ and different from $\triangle^i_\textrm{IN}$.
	First, we replace $\triangle'_i$ with a scalene triangle, which we still refer to as $\triangle'_i$, whose longest side is shared with $\triangle^i_{s}$ and whose two shorter sides have length $w_2$ and $w_3$; refer to \cref{fig:split_gadget-b}. 
	After this replacement, we still denote the ``modified'' transmission gadgets as $T^1$ and $T^2$. 
	%
	Finally, we identify the frame triangles $\triangle^1_{s}$ and $\triangle^2_{s}$ to obtain a single frame triangle we denote with $\triangle_s$, and we identify the transmission triangles $\triangle^1_\textrm{IN}$ and $\triangle^2_\textrm{IN}$ to obtain a single transmission triangle we denote with $\triangle_\textrm{IN}$; refer to \cref{fig:split_gadget-c,fig:split_gadget-d}.
	By \cref{lem:two-2-trees-merge} and the fact that the transmission gadgets are $2$-trees, we have that the split gadget is a $2$-tree and its frame is a maximal outerpath.
	
	The split gadget provides three attachment edges. Namely, 
	the attachment edge $e_\textrm{IN}$ incident to $\triangle_\textrm{IN}$ (obtained by identifying $e^1_\textrm{IN}$ and $e^2_\textrm{IN}$), the attachment edge $e^1_\textrm{OUT}$ incident to $\triangle^1_\textrm{OUT}$, and the attachment edge $e^2_\textrm{OUT}$ incident to $\triangle^2_\textrm{OUT}$. 
	%
	We have the following property.
	
	\begin{property}\label{prop:scalene-conflict}
		The scalene triangles $\triangle'_1$ and $\triangle'_2$ satisfy the following properties.
		
		\begin{enumerate}[(a)]
			\item
			The triangles $\triangle'_1$ and $\triangle'_2$ can be drawn together inside $\triangle_{s}$ without intersecting each other.
			\item
			Neither $\triangle'_1$ nor $\triangle'_2$ can be drawn together with $\triangle_\textrm{IN}$ inside  $\triangle_{s}$ without intersecting each other.
		\end{enumerate}
	\end{property}
	
	\begin{proof}
		Statement (a) follows from the fact that the angles of $\triangle'_1$ and $\triangle'_2$ incident to the vertex they share sum up to less than $60^\circ$.
		Indeed, by the law of the cosines, such angles are equal to $\arccos\left({\frac{w_1^2+w_2^2-w_3^2}{2w_1w_2}}\right)<\arccos(0.983)<11^\circ$.
		
		Statement (b) follows from the fact that, for $i=1,2$, the angles of $\triangle'_i$ and $\triangle_{\textrm{IN}}$ incident to the vertex they share sum up to more than $60^\circ$.
		Indeed, by the law of the cosines, such angles are equal to $\arccos\left({\frac{w_1^2+w_3^2-w_2^2}{2w_1w_3}}\right)>\arccos(0.57)>54^\circ$ and $\arccos\left({\frac{w_1}{2w_2}}\right)>\arccos(0.56)>55^\circ$, respectively.
	\end{proof}
	
	Using \cref{prop:scalene-conflict}, the proof of \cref{lem:transmission-gadget-behaviour} can be easily adapted to obtain the following.
	
	\begin{lemma}\label{lem:split-gadget-behaviour}
		In any planar straight-line realization of the split gadget, if
		$\triangle_\textrm{IN}$ is drawn inside $\triangle_s$, then
		$\triangle^1_\textrm{OUT}$ is drawn outside $\triangle^1_t$ and $\triangle^2_\textrm{OUT}$ is drawn outside $\triangle^2_t$. Further, if 
		$\triangle^1_\textrm{OUT}$ is drawn inside~$\triangle^1_t$ or $\triangle^2_\textrm{OUT}$ is drawn inside~$\triangle^2_t$, then
		$\triangle_\textrm{IN}$ is drawn outside~$\triangle_s$.
	\end{lemma}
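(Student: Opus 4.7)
The plan is to mimic the proof of \cref{lem:transmission-gadget-behaviour}, replacing a single application of \cref{lem:flat_in_equilateral} by \cref{prop:scalene-conflict}(b) at the initial frame triangle $\triangle_s$. Note first that the second statement of the lemma is the logical contrapositive of the first, so it suffices to prove the first statement.

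Fix $i\in\{1,2\}$ and label the four frame triangles of the chain $T^i$ as $\triangle^i_1=\triangle_s, \triangle^i_2, \triangle^i_3, \triangle^i_4=\triangle^i_t$. In $T^i$, the transmission triangle adjacent to $\triangle_s$ has been replaced by the scalene triangle $\triangle'_i$, while the transmission triangles between $\triangle^i_2,\triangle^i_3$ and between $\triangle^i_3,\triangle^i_t$, which I will call $\triangle^{T,i}_2$ and $\triangle^{T,i}_3$, are still flat isosceles with base $w_1$ and legs $w_2$. Assuming $\triangle_\textrm{IN}$ is drawn inside $\triangle_s$, \cref{prop:scalene-conflict}(b) rules out $\triangle'_i$ from also being inside $\triangle_s$; hence $\triangle'_i$ is drawn on the other side of its base, i.e., inside $\triangle^i_2$.

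The only step where the argument departs from the one in \cref{lem:transmission-gadget-behaviour} is in showing that $\triangle'_i$ inside $\triangle^i_2$ forces the flat isosceles triangle $\triangle^{T,i}_2$ outside $\triangle^i_2$, because \cref{lem:flat_in_equilateral} does not directly apply when one of the two interior triangles is scalene. I would verify this by a direct angle computation in the spirit of \cref{prop:scalene-conflict}: as computed there, the two base angles of $\triangle'_i$ are less than $11^{\circ}$ and greater than $54^{\circ}$, respectively, while each base angle of $\triangle^{T,i}_2$ exceeds $55^{\circ}$. Hence, at the vertex of the equilateral triangle $\triangle^i_2$ shared by $\triangle'_i$ and $\triangle^{T,i}_2$, the two angles sum to strictly more than $60^{\circ}$ regardless of which of the two base vertices of $\triangle'_i$ is the shared one, so a simultaneous placement of $\triangle'_i$ and $\triangle^{T,i}_2$ inside $\triangle^i_2$ is impossible and $\triangle^{T,i}_2$ must lie inside $\triangle^i_3$.

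From there the chain argument of \cref{lem:transmission-gadget-behaviour} runs verbatim: a single application of \cref{lem:flat_in_equilateral} at $\triangle^i_3$ places $\triangle^{T,i}_3$ inside $\triangle^i_t$, and a final application at $\triangle^i_t$ forces $\triangle^i_\textrm{OUT}$ outside $\triangle^i_t$. Doing this for both $i=1$ and $i=2$ completes the proof of the first statement, and thereby of the lemma. The main (and modest) obstacle is the angle computation in the non-standard step above; everything else is a faithful rerun of the proof of \cref{lem:transmission-gadget-behaviour} along two parallel chains, and the comfortable slack $11^{\circ}+55^{\circ}>60^{\circ}$ ensures the argument is robust to the choice of labeling of the scalene triangle.
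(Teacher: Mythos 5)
Your proof is correct and is precisely the adaptation the paper has in mind: the paper offers no written proof of this lemma beyond the remark that the argument for \cref{lem:transmission-gadget-behaviour} carries over via \cref{prop:scalene-conflict}, and your two-chain argument --- \cref{prop:scalene-conflict}(b) at $\triangle_s$, a direct angle check at $\triangle^i_2$ where one of the two interior triangles is scalene, and \cref{lem:flat_in_equilateral} thereafter --- is exactly that adaptation spelled out, including the correct observation that the second claim is the contrapositive of the first. One small wording fix: in the case where the shared vertex carries the small base angle of $\triangle'_i$, what you need is a \emph{lower} bound on that angle, namely $\arccos\bigl(\tfrac{w_1^2+w_2^2-w_3^2}{2w_1w_2}\bigr)=\arccos(0.983\ldots)>10^{\circ}$, so that $10^{\circ}+55^{\circ}>60^{\circ}$; the upper bound ``less than $11^{\circ}$'' that you quote does not by itself give the conflict.
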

	
	\paragraph{The variable gadget.} 
	%
	
	%
	%
	Remember that $\delta_\phi$ denotes the maximum degree of $G_\phi$.
	Let $S_1,S_2,\dots,S_{2\delta_\phi-2}$ be $2\delta_\phi-2$ split gadgets, and $T_2,T_3,\dots,T_{2\delta_\phi-3}$ be $2\delta_\phi-4$ straight transmission gadgets such that, for $i=2,\dots,\delta_{\phi}-1$, the gadgets $T_i$ and $T_{2\delta_{\phi}-i-1}$ are $4(i-1)$-transmission gadgets.
	The variable gadget can be obtained by combining these gadgets as follows; refer to \cref{fig:variable_gadget-schema}.
	
	\begin{figure}
		\centering
		{\includegraphics[page=1,width=0.37\textwidth]{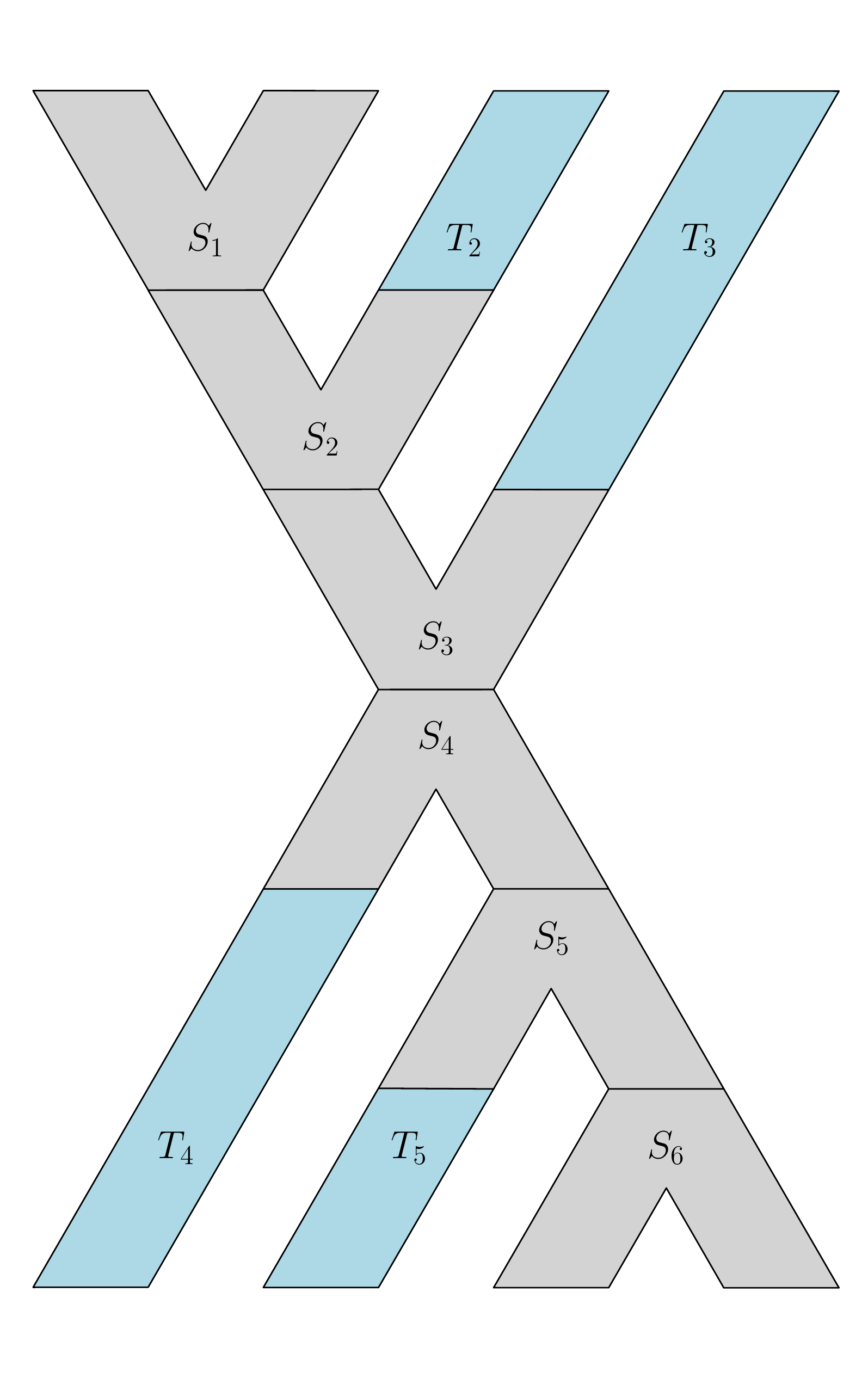}}
		\caption{Schema of the variable gadget when the maximum degree $\delta_\phi$ of~$G_\phi$~is~$4$.}
		\label{fig:variable_gadget-schema}
	\end{figure}
	
	\begin{itemize}
		\item
		
		For $i=1,\dots,\delta_\phi-2$, we combine $S_i$ and $S_{i+1}$ by identifying the attachment edge $e_\textrm{IN}$ of $S_i$ with the attachment edge $e^1_\textrm{OUT}$ of $S_{i+1}$;
		
		\item
		
		we combine $S_{\delta_\phi -1}$ and $S_{\delta_\phi}$ by identifying the attachment edge $e_\textrm{IN}$ of $S_{\delta_\phi -1}$ with the attachment edge $e_\textrm{IN}$ of $S_{\delta_\phi}$;
		
		\item
		
		for $i=\delta_\phi,\dots,2\delta_\phi-3$, we combine $S_i$ and $S_{i+1}$ by identifying the attachment edge $e^1_\textrm{OUT}$ of $S_i$ with the attachment edge $e_\textrm{IN}$ of $S_{i+1}$; and
		
		\item
		
		for $i=2,\dots,2\delta_\phi-3$, we combine $S_i$ and $T_{i}$ by identifying the attachment edge $e^2_\textrm{OUT}$ of $S_i$ with the attachment edge $e_\textrm{IN}$ of $T_{i}$.
		
	\end{itemize}
	
	By \cref{lem:two-2-trees-merge} and the fact that transmission and split gadgets are $2$-trees, we have that the variable gadget is a $2$-tree and its frame is a maximal outerplanar graph.
	
	
	The variable gadget provides $2\delta_\phi$ attachment edges; refer to \cref{fig:variable_gadget}.
	Namely:
	\begin{itemize}
		\item
		
		the attachment edges $e^1_{\textrm{OUT}}$ and $e^2_{\textrm{OUT}}$ of $S_1$,
		
		\item
		
		for $i=2,\dots,2\delta_\phi-3$, the attachment edge $e_\textrm{OUT}$ of $T_i$, which we denote as $e^{i+1}_\textrm{OUT}$, and
		
		\item
		
		the attachment edges $e^1_\textrm{OUT}$ and $e^2_\textrm{OUT}$ of $S_{2\delta_\phi-2}$, which we denote as $e^{2\delta_\phi-1}_\textrm{OUT}$ and $e^{2\delta_\phi}_\textrm{OUT}$, respectively.
		
	\end{itemize}  
	For $i=1,\dots,2\delta_\phi$, we also denote by $\triangle^i_\textrm{OUT}$ and by $\triangle^i_t$ the transmission triangle and the frame triangle incident to $e^i_\textrm{OUT}$, respectively.
	Finally, let $\triangle_\textrm{IN}$ be the transmission triangle that is incident to the attachment edge shared by the split gadgets $S_{\delta_\phi-1}$ and $S_{\delta_\phi}$.
	We call $\triangle_\textrm{IN}$ the \emph{truth-assignment triangle}.
	Intuitively, the truth value associated with a planar straight-line realization of the variable gadget, will depend on whether $\triangle_\textrm{IN}$ lies inside the frame triangle $\triangle_s$ of $S_{\delta_\phi-1}$ or inside the frame triangle $\triangle_s$ of $S_{\delta_\phi}$ in such a realization.
	We show the following simple and yet useful geometric property of the variable gadget.
	
	\vspace{10mm}
	\begin{property}
		In any planar straight-line realization of the variable gadget, up to a rotation, the frame lies in an axis-aligned box $B$ of width $w_1(2\delta_\phi-1)$ and height $2 \sqrt{3} w_1(\delta_\phi-1)$.
		Furthermore, the attachment edges $e^1_\textrm{OUT},\dots,e^{\delta_\phi}_\textrm{OUT}$ lie along the top side of $B$ at distance $w_1$ from each other, and the attachment edges
		$e^{\delta_\phi+1}_\textrm{OUT},\dots,e^{2\delta_\phi}_\textrm{OUT}$ lie along the bottom side of $B$ at distance $w_1$ from each other.
	\end{property}

	We have the following lemma. Refer to \cref{fig:variable_gadget}.
	
	\begin{figure}[tb!]
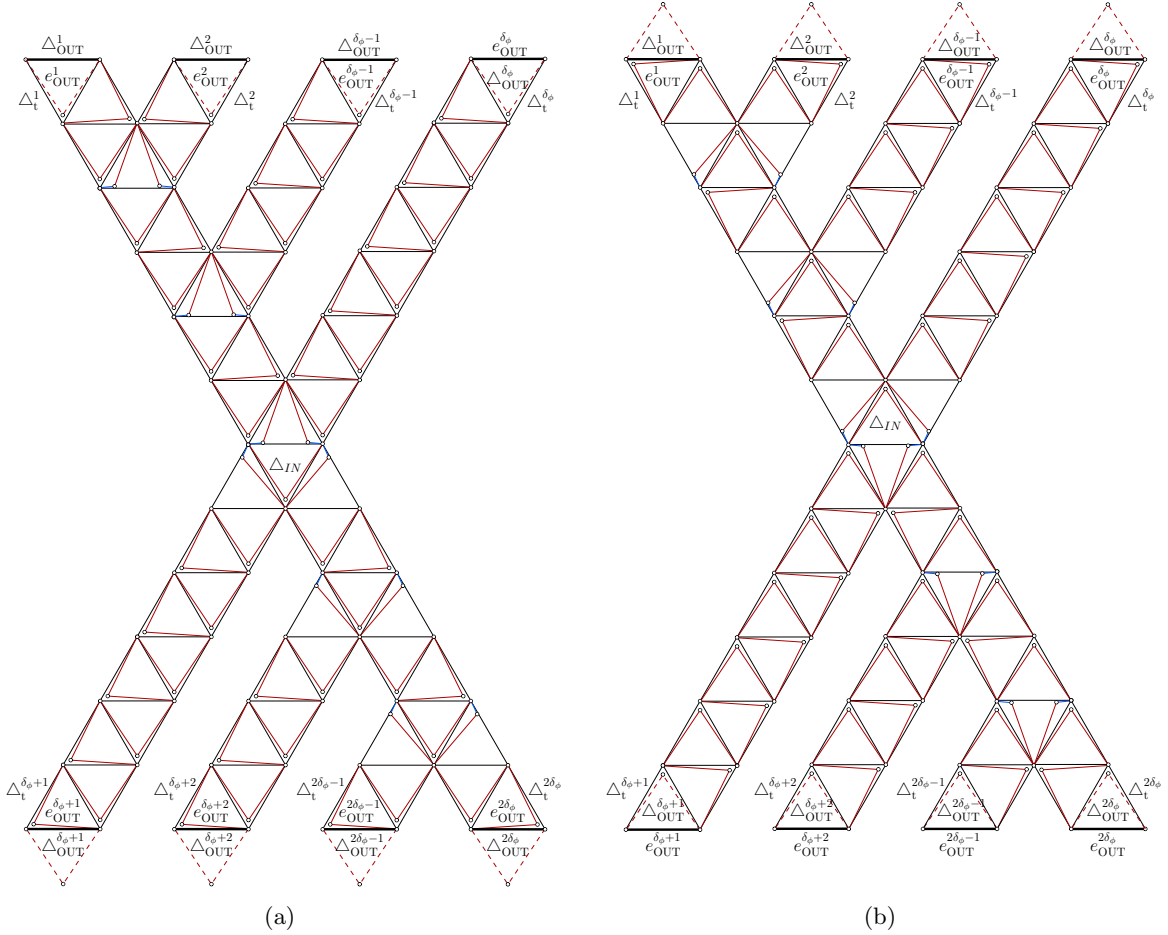

		\centering
		\subfloat[\label{fig:variable-b}]
		{\includegraphics[page=8,width=.45\textwidth]{np-hardness__variable_gadget}}
		\hfil
		\subfloat[\label{fig:variable-c}]
		{\includegraphics[page=3,width=.45\textwidth]{np-hardness__variable_gadget}}
		\caption{The variable gadget when the maximum degree $\delta_\phi$
			of $G_\phi$ is $4$. In the logic of the reduction, the realization on the left corresponds to an assignment of \texttt{True} to the variable represented by the variable gadget, while the realization on the right corresponds to an assignment of \texttt{False} to the variable represented by the variable gadget.}
		\label{fig:variable_gadget}
	\end{figure}
	
	\begin{lemma}\label{lem:variable-gadget-behaviour}
		In any planar straight-line realization of the variable gadget, if the truth-assignment triangle $\triangle_\textrm{IN}$ is drawn inside the frame triangle $\triangle_s$ of $S_{\delta_\phi-1}$ (resp. of $S_{\delta_\phi}$), then, for $i=1,\dots,\delta_\phi$ (resp.\ for $i=\delta_{\phi}+1,\dots,2\delta_\phi$), the transmission triangle $\triangle^i_{\textrm{OUT}}$ is drawn outside the frame triangle $\triangle^i_{t}$.
		Furthermore, if there is some $i\in \{1,\dots,\delta_\phi\}$ (resp.\ some $i\in \{\delta_\phi+1,\dots,2\delta_\phi\}$) for which the transmission triangle $\triangle^i_{\textrm{OUT}}$ is drawn inside the frame triangle $\triangle^i_{t}$, then $\triangle_\textrm{IN}$ is drawn inside the frame triangle $\triangle_s$ of $S_{\delta_\phi}$ (resp. of $S_{\delta_\phi-1}$).
	\end{lemma}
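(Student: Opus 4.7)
The plan is to propagate, via the behaviour lemmas already established for the split and transmission gadgets, the positional information of $\triangle_\textrm{IN}$ along the chain of split gadgets that constitutes the ``backbone'' of the variable gadget, and then out through the transmission gadgets hanging off it.

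I would first make the following preliminary observation. The edge shared by $S_{\delta_\phi-1}$ and $S_{\delta_\phi}$ is the common base of $\triangle_\textrm{IN}$ and of the two equilateral frame triangles $\triangle_s$ of $S_{\delta_\phi-1}$ and $\triangle_s$ of $S_{\delta_\phi}$. Two distinct equilateral triangles that share a side lie on opposite sides of it, so $\triangle_s$ of $S_{\delta_\phi-1}$ and $\triangle_s$ of $S_{\delta_\phi}$ are on the two opposite sides of the shared edge. Since $\triangle_\textrm{IN}$ is flat isosceles with base $w_1$ and legs $w_2<w_1$, its height is strictly smaller than the height of an equilateral triangle with side $w_1$; hence $\triangle_\textrm{IN}$ is drawn inside exactly one of these two frame triangles. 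This reduces the statement to proving that the assignment of $\triangle_\textrm{IN}$ to $\triangle_s$ of $S_{\delta_\phi-1}$ (resp.\ of $S_{\delta_\phi}$) forces each of $\triangle^1_\textrm{OUT},\dots,\triangle^{\delta_\phi}_\textrm{OUT}$ (resp.\ $\triangle^{\delta_\phi+1}_\textrm{OUT},\dots,\triangle^{2\delta_\phi}_\textrm{OUT}$) to be drawn outside its incident frame triangle.

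For the forward direction, I would argue by induction along the chain $S_{\delta_\phi-1},S_{\delta_\phi-2},\dots,S_1$. Assume $\triangle_\textrm{IN}$ is drawn inside $\triangle_s$ of $S_{\delta_\phi-1}$. By \cref{lem:split-gadget-behaviour}, both $\triangle^1_\textrm{OUT}$ and $\triangle^2_\textrm{OUT}$ of $S_{\delta_\phi-1}$ are drawn outside their incident frame triangles. The triangle $\triangle^2_\textrm{OUT}$ of $S_{\delta_\phi-1}$ coincides, via edge identification, with $\triangle_\textrm{IN}$ of $T_{\delta_\phi-1}$; being outside $\triangle^2_t$ of $S_{\delta_\phi-1}$ places it on the opposite side of the shared edge, i.e., inside $\triangle_s$ of $T_{\delta_\phi-1}$. \cref{lem:transmission-gadget-behaviour} then yields that $\triangle_\textrm{OUT}$ of $T_{\delta_\phi-1}$, that is $\triangle^{\delta_\phi}_\textrm{OUT}$ of the variable gadget, is drawn outside its incident frame triangle. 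Analogously, $\triangle^1_\textrm{OUT}$ of $S_{\delta_\phi-1}$ coincides with $\triangle_\textrm{IN}$ of $S_{\delta_\phi-2}$, and the ``outside'' position translates to ``inside $\triangle_s$ of $S_{\delta_\phi-2}$''. Applying the same argument inductively to $S_{\delta_\phi-2},\dots,S_2$ yields that, for every $i\in\{3,\dots,\delta_\phi\}$, $\triangle^i_\textrm{OUT}$ is outside its incident frame triangle. Finally, by the same reasoning, $\triangle_\textrm{IN}$ of $S_1$ is drawn inside $\triangle_s$ of $S_1$; a last application of \cref{lem:split-gadget-behaviour} to $S_1$ yields that $\triangle^1_\textrm{OUT}$ and $\triangle^2_\textrm{OUT}$ of $S_1$, which are $\triangle^1_\textrm{OUT}$ and $\triangle^2_\textrm{OUT}$ of the variable gadget, are outside their incident frame triangles. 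The symmetric case, with $\triangle_\textrm{IN}$ inside $\triangle_s$ of $S_{\delta_\phi}$, is handled verbatim by propagating through $S_{\delta_\phi},S_{\delta_\phi+1},\dots,S_{2\delta_\phi-2}$ using the symmetric rule that identifies $e^1_\textrm{OUT}$ of $S_i$ with $e_\textrm{IN}$ of $S_{i+1}$ for $i=\delta_\phi,\dots,2\delta_\phi-3$.

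For the second (converse) direction, I would use the contrapositive of the first direction together with the preliminary dichotomy. Suppose some $\triangle^i_\textrm{OUT}$ with $i\in\{1,\dots,\delta_\phi\}$ is drawn inside $\triangle^i_t$. Then the first direction, applied to the hypothetical case ``$\triangle_\textrm{IN}$ inside $\triangle_s$ of $S_{\delta_\phi}$'', would force this very $\triangle^i_\textrm{OUT}$ to be drawn outside $\triangle^i_t$, a contradiction. Hence $\triangle_\textrm{IN}$ cannot be inside $\triangle_s$ of $S_{\delta_\phi}$, and by the dichotomy it must be inside $\triangle_s$ of $S_{\delta_\phi-1}$. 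The symmetric statement for $i\in\{\delta_\phi+1,\dots,2\delta_\phi\}$ is analogous. The only real bookkeeping difficulty, which I expect to be the main place where care is needed rather than ingenuity, is keeping track of the multiple edge identifications so that each ``outside'' at one split or transmission gadget is correctly reinterpreted as an ``inside'' at the next gadget along the chain; once this is spelled out cleanly, the lemma follows purely by the inductive cascade described above.
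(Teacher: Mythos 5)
Your forward direction is essentially the paper's own argument: the paper proves it via two propagation claims (one passing the ``outside'' position along the chain of split gadgets, one handing it off to each transmission gadget), which is exactly your inductive cascade, and your explicit dichotomy (``$\triangle_\textrm{IN}$ lies inside exactly one of the two central frame triangles'') is used implicitly there too. One small caveat: ``smaller height'' alone does not give containment of the transmission triangle in the equilateral frame triangle on the same side of the shared base; you also need that its base angles ($\arccos(w_1/(2w_2))\simeq 56.25^\circ$) are below $60^\circ$, i.e.\ \cref{obs:internal_angles}.

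The converse direction, however, has the two cases swapped, and as written it proves the wrong statement. Assume $\triangle^i_{\textrm{OUT}}$ with $i\in\{1,\dots,\delta_\phi\}$ is drawn inside $\triangle^i_t$. The forward implication that this contradicts is the one with hypothesis ``$\triangle_\textrm{IN}$ inside $\triangle_s$ of $S_{\delta_\phi-1}$'', since that is the case whose conclusion concerns the indices $1,\dots,\delta_\phi$; the case ``$\triangle_\textrm{IN}$ inside $\triangle_s$ of $S_{\delta_\phi}$'', which you invoke, only constrains $\triangle^j_{\textrm{OUT}}$ for $j\in\{\delta_\phi+1,\dots,2\delta_\phi\}$ and yields no contradiction at all. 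Consequently your final conclusion ``$\triangle_\textrm{IN}$ must be inside $\triangle_s$ of $S_{\delta_\phi-1}$'' is the opposite of what the lemma asserts (and of what the reduction needs): the correct deduction is that $\triangle_\textrm{IN}$ is \emph{not} inside $\triangle_s$ of $S_{\delta_\phi-1}$ and hence, by your dichotomy, is inside $\triangle_s$ of $S_{\delta_\phi}$. Once this swap is repaired, your contrapositive-plus-dichotomy route is valid and is in fact lighter than the paper's, which instead proves the converse directly by propagating the ``inside'' information back toward the centre with two symmetric inverse claims and then applying the second half of \cref{lem:split-gadget-behaviour}; the price of your shortcut is that the dichotomy must be established carefully, as noted above.
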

	
	\begin{proof}
		We start with the first part of the statement.
		Let $\Gamma$ be a planar straight-line realization of the variable gadget.
		Suppose the truth-assignment triangle $\triangle_\textrm{IN}$ is drawn inside the frame triangle $\triangle_s$ of $S_{\delta_\phi-1}$ in $\Gamma$.
		We show that, for $i=1,\dots,\delta_\phi$, the transmission triangle $\triangle^i_{\textrm{OUT}}$ is drawn outside the frame triangle $\triangle^i_{t}$ in $\Gamma$.
		By a symmetric argument we can proof that, if $\triangle_\textrm{IN}$ is drawn inside the frame triangle $\triangle_s$ of $S_{\delta_\phi}$ in $\Gamma$, then, for $i=\delta_\phi,\dots,2\delta_\phi$, the transmission triangle $\triangle^i_{\textrm{OUT}}$ is drawn outside the frame triangle $\triangle^i_{t}$ in $\Gamma$.
		We have the following claims.
		
		\begin{claimx}\label{claim:splits}
			Consider any two split gadgets $S_i$ and $S_{i-1}$, for $i=\delta_\phi-1,\dots, 2$.
			If the transmission triangle $\triangle^1_{\textrm{OUT}}$ of $S_i$ is drawn outside the frame triangle $\triangle^1_t$ of $S_i$, then the transmission triangle $\triangle^1_{\textrm{OUT}}$ of $S_{i-1}$ is drawn outside the frame triangle $\triangle^1_t$ of $S_{i-1}$.
		\end{claimx}
		
		\begin{proof}
			Observe that, if the transmission triangle $\triangle^1_{\textrm{OUT}}$ of $S_{i}$ is drawn outside the frame triangle $\triangle^1_t$ of $S_{i}$, then it lies inside the frame triangle $\triangle_s$ of $S_{i-1}$.
			Since the transmission triangle $\triangle^1_{\textrm{OUT}}$ is in fact a copy of the transmission triangle $\triangle_{IN}$ of $S_{i-1}$ (which has been removed when combining $S_i$ and $S_{i-1}$), by \cref{lem:split-gadget-behaviour} we have that the transmission triangle $\triangle^1_{\textrm{OUT}}$ of $S_{i-1}$ is drawn outside the frame triangle~$\triangle^1_t$ of~$S_{i-1}$.
		\end{proof}
		
		\begin{claimx}\label{claim:transmission}
			Consider any split gadget $S_i$ and any transmission gadget $T_i$, for $i=\delta_\phi-1,\dots, 2$.
			If the transmission triangle $\triangle^2_{\textrm{OUT}}$ of $S_{i}$ is drawn outside the frame triangle $\triangle^2_t$ of $S_{i}$, then the transmission triangle $\triangle_{\textrm{OUT}}$ of $T_{i}$ is drawn outside the frame triangle $\triangle_t$ of $T_{i}$.
		\end{claimx}
		
		\begin{proof}
			Observe that, if the transmission triangle $\triangle^2_{\textrm{OUT}}$ of $S_{i}$ is drawn outside the frame triangle $\triangle^2_t$ of $S_{i}$, then it lies inside the frame triangle $\triangle_s$ of $T_{i}$.
			Since $\triangle^2_{\textrm{OUT}}$ is in fact a copy of the transmission triangle $\triangle_{IN}$ of $T_{i}$ (which has been removed when combining $S_i$ and $T_{i}$), by \cref{lem:transmission-gadget-behaviour} we have that the transmission triangle $\triangle_{\textrm{OUT}}$ of $T_{i}$ is drawn outside the frame triangle $\triangle_t$ of~$T_{i}$.
		\end{proof}
		
		Since the truth-assignment triangle $\triangle_\textrm{IN}$ lies inside the frame triangle $\triangle_s$ of $S_{\delta_\phi-1}$, then by \cref{lem:split-gadget-behaviour}, the transmission triangles $\triangle^1_{\textrm{OUT}}$ and $\triangle^2_{\textrm{OUT}}$ of $S_{\delta_\phi-1}$ are drawn outside the frame triangles $\triangle^1_t$ and $\triangle^2_t$ of $S_{\delta_\phi-1}$, respectively.
		Hence, in the variable gadget, the transmission triangle $\triangle^i_{\textrm{OUT}}$ is drawn outside the frame triangle $\triangle^i_t$; this follows from \cref{claim:splits} for $i=1,2$, and from \cref{claim:transmission} for $i=3,\dots,\delta_\phi$.
		
		We now prove the second part of the statement.
		Let $\Gamma$ be a planar straight-line realization of the variable gadget.
		Suppose that, for some $j\in \{1,\ldots,\delta_\phi\}$, there is a transmission triangle $\triangle^j_{\textrm{OUT}}$ of the variable gadget drawn inside the frame triangle $\triangle^j_t$ of the variable gadget.
		We show that the truth-assignment triangle $\triangle_\textrm{IN}$ is drawn inside the frame triangle $\triangle_s$ of $S_{\delta_\phi}$ in $\Gamma$.
		By a symmetric argument we can proof that, if there is some $j\in \{\delta_\phi+1,\dots,2\delta_\phi\}$ for which a transmission triangle $\triangle^j_{\textrm{OUT}}$ of the variable gadget is drawn inside the frame triangle $\triangle^j_t$ of the variable gadget, then the truth-assignment triangle $\triangle_\textrm{IN}$ is drawn inside the frame triangle $\triangle_s$ of $S_{\delta_{\phi-1}}$ in $\Gamma$.
		We have the following two claims, whose proofs are symmetric to the proofs of \cref{claim:splits,claim:transmission}.
		
		\begin{claimx}\label{claim:splits_inverse_direction}
			Consider any two split gadgets $S_i$ and $S_{i-1}$, for $i=\delta_\phi-1,\dots, 2$.
			If the transmission triangle $\triangle^1_{\textrm{OUT}}$ of $S_{i-1}$ is drawn inside the frame triangle $\triangle^1_t$ of $S_{i-1}$, then the transmission triangle $\triangle^1_{\textrm{OUT}}$ of $S_i$ is drawn inside the frame triangle $\triangle^1_t$ of $S_i$.
		\end{claimx}
		
		\begin{claimx}\label{claim:transmission_inverse_direction}
			Consider any split gadget $S_i$ and any transmission gadget $T_i$, for $i=\delta_\phi-1,\dots, 2$.
			If the transmission triangle $\triangle_{\textrm{OUT}}$ of $T_i$ is drawn inside the frame triangle $\triangle_t$ of $T_i$, then the transmission triangle $\triangle^2_{\textrm{OUT}}$ of $S_i$ is drawn inside the frame triangle $\triangle^2_t$ of $S_i$.
		\end{claimx}
		
		If $1\leq j \leq 2$, then the transmission triangle $\triangle^j_{\textrm{OUT}}$ of the variable gadget is actually a transmission triangle of the split gadget $S_1$.
		Hence, by \cref{claim:splits_inverse_direction}, we have that the transmission triangle $\triangle^1_{\textrm{OUT}}$ of $S_{\delta_{\phi}-1}$ is drawn inside the frame triangle $\triangle^1_t$ of $S_{\delta_\phi-1}$.
		If instead $3 \leq j \leq \delta_\phi$, then the transmission triangle $\triangle^j_{\textrm{OUT}}$ of the variable gadget is actually a transmission triangle of a transmission gadget.
		Hence, by \cref{claim:transmission_inverse_direction}, we have that the transmission triangle $\triangle^2_{\textrm{OUT}}$ of $S_{j}$ is drawn inside the frame triangle $\triangle^2_t$ of $S_{j}$.
		This fact and again \cref{claim:splits_inverse_direction} imply that either the transmission triangle $\triangle^1_{\textrm{OUT}}$ of $S_{\delta_{\phi}-1}$ is drawn inside the frame triangle $\triangle^1_t$ of $S_{\delta_\phi-1}$ (which happens when $j < \delta_\phi$), or the transmission triangle $\triangle^2_{\textrm{OUT}}$ of $S_{\delta_{\phi}-1}$ is drawn inside the frame triangle $\triangle^2_t$ of $S_{{\delta_\phi}-1}$ (which happens when $j= \delta_\phi$).
		In all cases, by \cref{lem:split-gadget-behaviour}, the truth-assignment triangle $\triangle_{\textrm{IN}}$ is drawn inside the frame triangle $\triangle_s$ of~$S_{\delta_\phi}$~in~$\Gamma$.
	\end{proof}

	\begin{figure}[tb!]
		\centering
		\includegraphics[page=1,width=.33\textwidth]{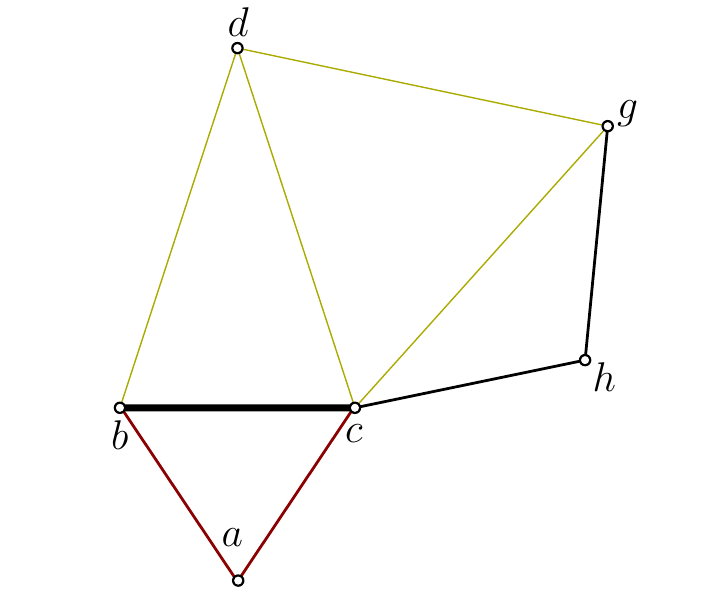}
		\caption{The triangles $(a,b,c)$, $(d,b,c)$, $(g,d,c)$, and $(g,h,c)$ of the flag gadget.}
		\label{fig:flag_gadget-main}
	\end{figure}
	
	\paragraph{The flag gadget.}
	The flag gadget consists of the following triangles (see \cref{fig:flag_gadget,fig:flag_gadget-main}):
	\begin{enumerate}[(i)]
		\item
		a transmission triangle $(a,b,c)$ with base $(b,c)$,
		\item
		a tall isosceles triangle $(d,b,c)$ with base $(b,c)$ of length $w_1$ and two longer sides of length~$w_4$,
		\item
		a flat isosceles triangle $(f,d,c)$ with base $(d,c)$ of length $w_4$ and two shorter sides of length $w_2$, 
		\item
		an equilateral triangle $(g,d,c)$ with sides of length $w_4$,
		\item
		a flat isosceles triangle $(h,g,c)$ with base $(g,c)$ of length $w_4$ and two shorter sides of length $w_1$,
		\item
		four tall isosceles triangles $(h,i,g)$, $(i,l,g)$, $(h,m,c)$, and $(m,n,c)$ with bases $(h,i)$, $(i,l)$, $(h,m)$, and $(m,n)$, respectively, of length $w_3$ and two longer sides of length~$w_1$.
	\end{enumerate}
	By \cref{lem:two-2-trees-merge} and the fact that any two of the listed triangles share at most one edge, we have that the flag gadget is a $2$-tree.
	The flag gadget provides a single attachment edge, which coincides with the base $(b,c)$ of the transmission triangle $(a,b,c)$. 
	\begin{figure}[tb!]
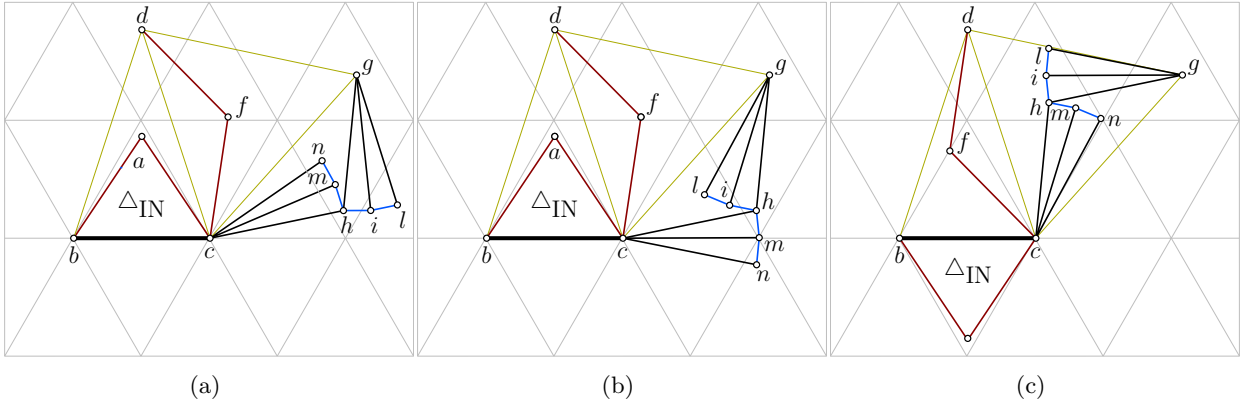

		\centering
		\subfloat[\label{fig:flag_gadget-a}]{\includegraphics[page=2,width=.33\textwidth]{np-hardness__flag}}
		\hfil
		\subfloat[\label{fig:flag_gadget-b}]{\includegraphics[page=3,width=.33\textwidth]{np-hardness__flag}}
		\hfil
		\subfloat[\label{fig:flag_gadget-c}]{\includegraphics[page=4,width=.33\textwidth]{np-hardness__flag}}
		\caption{The flag gadget and three planar straight-line realizations of it. In (a) and (b) the transmission triangle $(a,b,c)$ lies inside the triangle $(d,b,c)$, while in (c) it lies outside $(d,b,c)$. In (a) the isosceles triangles $(h,m,c)$ and $(m,n,c)$ lie inside $(h,g,c)$, in (b) the isosceles triangles $(h,i,g)$ and $(i,l,g)$ lie inside $(h,g,c)$, and in (c) all four the isosceles triangles $(h,m,c)$, $(m,n,c)$, $(h,m,c)$, and $(m,n,c)$ lie inside $(h,g,c)$.}
		\label{fig:flag_gadget}
	\end{figure}
	
	We will exploit the following geometric properties of the flag gadget.
	
	\begin{property}\label{pr:flag-realizations}
		The flag gadget admits, among others, planar straight-line realizations in which:
		\begin{enumerate}[(a)]
			\item
			The triangle $(a,b,c)$ lies inside the triangle $(d,b,c)$, and the triangles $(h,m,c)$ and $(m,n,c)$ lie inside the triangle $(h,g,c)$; see~\cref{fig:flag_gadget-a}.
			\item
			The triangle $(a,b,c)$ lies inside the triangle $(d,b,c)$, and the triangles $(h,i,g)$ and $(i,l,g)$ lie inside the triangle $(h,g,c)$; see~\cref{fig:flag_gadget-b}.
			\item
			The triangle $(a,b,c)$ lies outside the triangle $(d,b,c)$ and the triangles $(h,m,c)$, $(m,n,c)$, $(h,m,c)$, and $(m,n,c)$ lie inside the triangle $(h,g,c)$; see~\cref{fig:flag_gadget-c}.
		\end{enumerate}
	\end{property}
	
	\begin{property}\label{pr:flag-angles}
		The values of some relevant angles of the triangles forming the flag gadget are the following:
		\begin{itemize}
			\item
			$\widehat{bcd}:= \arccos{\frac{w_1}{2w_4}} = \arccos{\frac{1}{3.22}} \simeq 71.91^\circ$
			\item
			$\widehat{bca}:= \arccos{\frac{w_1}{2w_2}} = \arccos{\frac{1}{1.8}} \simeq 56.25^\circ$
			\item
			$\widehat{fcd}:= \arccos{\frac{w_4}{2w_2}} = \arccos{\frac{1.61}{1.8}} \simeq 26.56^\circ$
			\item
			$\widehat{hcg}:= \arccos{\frac{w_4}{2w_1}} = \arccos{\frac{1.61}{2}} \simeq 36.39^\circ$
			\item
			$\widehat{chg}:= 180^\circ - 2 \cdot \widehat{hcg} \simeq 107.22^\circ$
			\item
			$\widehat{chm}:=
			\arccos{\frac{w_3}{2w_1}} = \arccos{\frac{0.2}{2}} \simeq 84.26^\circ$
			\item
			Let $\lambda$ be the smallest internal angle of the tall isosceles triangles with two sides of length $w_1$ and one side of length $w_3$.
			We have $\lambda = 180^\circ -2 \cdot \widehat{chm} \simeq 180^\circ - 2 \cdot 84.26^\circ = 11.48^\circ$.
		\end{itemize}
	\end{property}
	
	We show the following.
	
	\begin{lemma}\label{lem:flag-behaviour}
		In any planar straight-line realization of the flag gadget, the following statements hold true:
		\begin{enumerate}[(a)]
			\item
			If the transmission triangle $(a,b,c)$ is drawn inside the triangle $(d,b,c)$, then the triangle $(h,g,c)$ is drawn outside the triangle $(g,d,c)$.
			\item
			Not both the vertices $l$ and $n$ lie inside the triangle $(h,g,c)$.
		\end{enumerate}
	\end{lemma}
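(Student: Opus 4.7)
The plan is to derive both statements from angle-sum contradictions at a common vertex, exploiting the values in \cref{pr:flag-angles}.

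For part (a), I would first track the placement of $(f,d,c)$, which shares the edge $(d,c)$ with both $(d,b,c)$ and $(g,d,c)$. Since $(d,b,c)$ and $(g,d,c)$ must lie on opposite sides of the line through $(d,c)$ in any planar realization, $(f,d,c)$ is drawn inside exactly one of them. Under the hypothesis that $(a,b,c)$ lies inside $(d,b,c)$, I rule out that $(f,d,c)$ also lies inside $(d,b,c)$: at vertex $c$, the triangles $(a,b,c)$ and $(f,d,c)$ would cut sub-wedges of the wedge of $(d,b,c)$ of openings $\widehat{bca}$ and $\widehat{fcd}$, adjacent to $(c,b)$ and $(c,d)$ respectively, whose sum $\widehat{bca}+\widehat{fcd}\simeq 82.81^\circ$ exceeds $\widehat{bcd}\simeq 71.91^\circ$; the sub-wedges therefore overlap and the two triangles cross. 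Hence $(f,d,c)$ is drawn inside $(g,d,c)$. Now assuming for contradiction that $(h,g,c)$ is also drawn inside $(g,d,c)$, the sub-wedges at $c$ of openings $\widehat{fcd}$ and $\widehat{hcg}$ sum to $\simeq 62.95^\circ$, which exceeds the $60^\circ$ angle of the equilateral $(g,d,c)$ at $c$; the same argument yields a contradiction.

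For part (b), I first reduce the hypothesis ``$l$ inside $(h,g,c)$'' to ``$(h,i,g)$ drawn inside $(h,g,c)$'' (and symmetrically for $n$). The key observation is that both $h$ and $l$ lie on the circle of radius $w_1$ centered at $g$ and on the circle of radius $w_3$ centered at $i$, so $l$ is the reflection of $h$ across the line $(g,i)$. If $(h,i,g)$ were drawn outside $(h,g,c)$, then viewed from $g$ the ray $(g,i)$ makes angle $\widehat{hgc}+\lambda$ with $(g,c)$ and reflecting $h$ across it places $l$ at angular direction $\widehat{hgc}+2\lambda$ from $(g,c)$, strictly outside the wedge of $(h,g,c)$ at $g$, hence outside $(h,g,c)$. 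Assuming then that both $l$ and $n$ are inside $(h,g,c)$, both $(h,i,g)$ and $(h,m,c)$ are drawn inside $(h,g,c)$. At vertex $h$ their base angles are each $\widehat{chm}\simeq 84.26^\circ$; since $2\widehat{chm}>\widehat{ghc}\simeq 107.22^\circ$, the rays $(h,i)$ and $(h,m)$, anchored at $(h,g)$ and $(h,c)$ respectively, interleave inside the wedge of $(h,g,c)$ at $h$. The edge $(h,m)$ therefore enters the interior of $(h,i,g)$ at $h$ and must cross the opposite side $(i,g)$, contradicting planarity.

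The main technical obstacle will be the reduction step in part (b): verifying that no placement of the two-triangle chain $(h,i,g),(i,l,g)$ with $(h,i,g)$ drawn outside $(h,g,c)$ can nevertheless place $l$ inside $(h,g,c)$. The rest of the argument is direct arithmetic using \cref{pr:flag-angles}.
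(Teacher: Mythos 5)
Your proof follows the paper's argument almost exactly. Part (a) is the same chain: force $(f,d,c)$ into $(g,d,c)$ via $\widehat{bca}+\widehat{fcd}>\widehat{bcd}$, then expel $(h,g,c)$ via $\widehat{fcd}+\widehat{hcg}>60^{\circ}$. For part (b) you carry out the same reduction to ``$(h,i,g)$ and $(h,m,c)$ cannot both be drawn inside $(h,g,c)$'' that the paper uses only implicitly, and then invoke the same inequality $2\,\widehat{chm}>\widehat{chg}$. Your reflection argument for the reduction step (using that $l$ is the reflection of $h$ across the line through $g$ and $i$) is actually more explicit than anything in the paper, which simply asserts that it suffices to consider $i$ and $m$; that part is an improvement in rigor.

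The one step that is wrong as stated is the last sentence of part (b): the segment $(h,m)$ does \emph{not} reach the side $(i,g)$. Placing $h$ at the origin and $g=(1,0)$, the ray from $h$ at angle $\widehat{ghc}-\widehat{chm}\simeq 22.96^{\circ}$ meets the line through $i\simeq(0.02,0.199)$ and $g$ at distance roughly $0.35$ from $h$, whereas $|hm|=w_3=0.2$; so $m$ lands strictly in the interior of the triangle $(h,i,g)$, and no crossing between $(h,m)$ and $(i,g)$ occurs. The contradiction survives, but via a different edge: since $c$ lies outside $(h,i,g)$ (every point of that triangle is within distance $w_1$ of $g$, while $|gc|=w_4>w_1$), the edge $(m,c)$ must cross the boundary of $(h,i,g)$. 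Equivalently, one can argue that the two triangles share only the vertex $h$, have overlapping interiors near $h$ because their wedges at $h$ interleave, and neither contains the other, so some pair of their edges must cross. Either repair is one line, but the specific crossing you name does not happen.
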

	
	\begin{proof}
		Consider a planar straight-line realization of the flag gadget.
		By Properties~\ref{obs:containment_sharing_edges:tall_in_equilateral}~and~\ref{obs:containment_sharing_edges:equilateral_in_flat} of \cref{obs:containment_sharing_edges}, in such a realization the triangles $(d,b,c)$ and $(g,d,c)$ are not drawn inside one another.
		Hence, the triangle $(f,d,c)$ lies inside $(d,b,c)$ if and only if it lies outside $(g,d,c)$.
		
		
		To prove statement (a), suppose the triangle $(a,b,c)$ is drawn inside the triangle $(d,b,c)$.
		By our discussion above and the fact that $\widehat{fcd} + \widehat{bca} > \widehat{bcd}$, then the triangle $(f,d,c)$ lies inside $(g,d,c)$.
		By this fact and the fact that $\widehat{fcd} + \widehat{hcg} > \widehat{dcg} = 60^\circ$, we have that the triangle $(h,g,c)$ is drawn outside $(g,d,c)$.
		This ends the proof of statement (a).
		
		To prove statement (b) it suffices to show that $m$ and $i$ cannot both lie inside the triangle $(h,g,c)$.
		This follows from the fact that $\widehat{chm} = \widehat{ghi}$ and $2 \cdot \widehat{chm} > \widehat{chg}$.
	\end{proof}
	
	
	\begin{figure}[tb!]
		\centering
		\includegraphics[page=1,width=\textwidth]{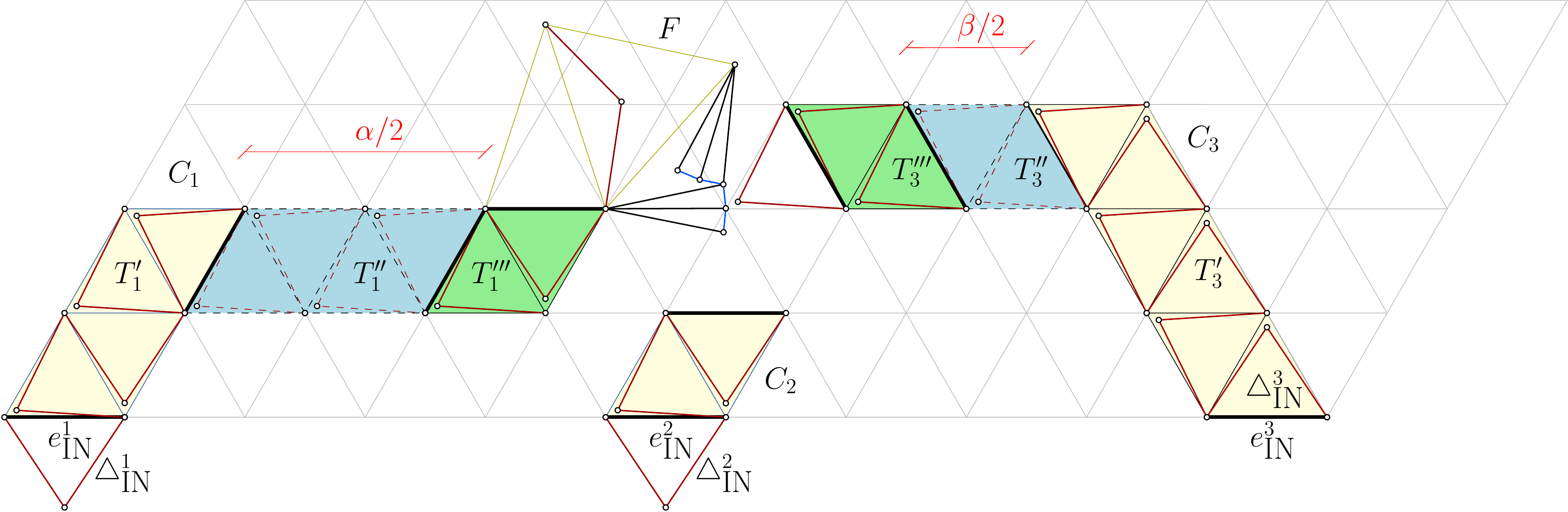}
		\caption{The $(\alpha,\beta)$-clause gadget with $\alpha=4$ and $\beta=2$.}
		\label{fig:clause_gadget}
	\end{figure}
	
	\paragraph{The clause gadget.}
	
	We now describe the construction of the clause gadget. Such a construction is parametric with respect to the distances between the attachment edges of the transmission gadgets coming from the variable gadgets whose literals appear in the clause. By Property~D\ref{prop:clause-distances} and since the attachment edges of the transmission gadgets have length one, the horizontal distance between the two leftmost attachment edges is $\frac{\alpha}{2}+2$, for some non-negative even integer value $\alpha$, and the horizontal distance between the two rightmost attachment edges is $\frac{\beta}{2}+3$, for some non-negative even integer value $\beta$. Based on the values $\alpha$ and $\beta$, we define the clause gadget, which we refer to as the \emph{$(\alpha,\beta)$-clause gadget}. The gadget provides three attachment edges $e^1_{\textrm{IN}}$, $e^2_{\textrm{IN}}$, and $e^3_{\textrm{IN}}$ incident to the transmission triangles $\triangle^1_\textrm{IN}$, $\triangle^2_\textrm{IN}$, and $\triangle^3_\textrm{IN}$, respectively. The attachment edges $e^1_{\textrm{IN}}$, $e^2_{\textrm{IN}}$, and $e^3_{\textrm{IN}}$ are identified with the attachment edges of the transmission gadgets coming from the variable gadgets whose literals appear in the clause. Also, recall that the transmission triangles $\triangle^1_\textrm{IN}$, $\triangle^2_\textrm{IN}$, and $\triangle^3_\textrm{IN}$ are identified with the ones of the transmission gadgets coming from the variable gadgets whose literals appear in the clause. In any planar straight-line realization of $H_{\phi}$, the edges $e^1_{\textrm{IN}}$, $e^2_{\textrm{IN}}$, $e^3_{\textrm{IN}}$ lie in this left-to-right order along the same horizontal line.
	
	We now describe the $(\alpha,\beta)$-clause gadget for a positive clause (the negative clause has a symmetric construction);  refer to \cref{fig:clause_gadget}.
	We exploit the placement of $e^1_{\textrm{IN}}$, $e^2_{\textrm{IN}}$, $e^3_{\textrm{IN}}$ (which is fixed, since such edges belong to the union of the frames of the variable, ladder, and transmission gadgets) as a reference for arranging the gadget components.
	The $(\alpha,\beta)$-clause gadget consists of three subgraphs $C_1$, $C_2$, and $C_3$ defined as follows.
	\begin{itemize}
		\item
		
		The subgraph $C_1$ is obtained by combining into a single connected component a turning $4$-transmission gadget $T'_1$, a straight $\alpha$-transmission gadget $T''_1$, a turning $2$-transmission gadget $T'''_1$, and a flag gadget $F$, as shown in \cref{fig:clause_gadget}.  
		The attachment edge $e_\textrm{IN}$ of $T'_1$ is the attachment edge $e^1_{\textrm{IN}}$ of the clause gadget.
		
		\item
		
		The subgraph $C_2$ is a straight $2$-transmission gadget, whose attachment edge $e_\textrm{IN}$ is the attachment edge $e^2_{\textrm{IN}}$ of the clause gadget.
		
		\item
		
		The subgraph $C_3$ is obtained by combining into a single connected component a turning $6$-transmission gadget $T'_3$, a straight $\beta$-transmission gadget $T''_3$, and a straight $2$-transmission gadget $T'''_3$, as depicted in \cref{fig:clause_gadget}.
		The attachment edge $e_\textrm{IN}$ of $T'_3$ is the attachment edge $e^3_{\textrm{IN}}$ of the clause gadget.
		
	\end{itemize}
	
	%
	
	By \cref{lem:two-2-trees-merge} and the fact that transmission and flag gadgets are $2$-trees, we have that each of $C_1$, $C_2$, and $C_3$ is a $2$-tree.
	We say that a straight-line realization of the $(\alpha,\beta)$-clause gadget is \emph{feasible}, if it satisfies the following properties:
	


	
	
	
	\begin{enumerate}[(A)]
		\item
		
		the attachment edges $e^1_\textrm{IN}$, $e^2_\textrm{IN}$, and $e^3_\textrm{IN}$ lie in this left-to-right order on a horizontal line $\ell$, with horizontal distance $\frac{\alpha}{2}+2$ between $e^1_\textrm{IN}$ and $e^2_\textrm{IN}$, and horizontal distance $\frac{\beta}{2}+3$ between $e^2_\textrm{IN}$ and $e^3_\textrm{IN}$; and
		
		\item
		
		the frame triangles of the clause gadget all lie above $\ell$.
		
	\end{enumerate}
	Consider any feasible planar straight-line realization of the $(\alpha,\beta)$-clause gadget.
	In such realization, consider the trapezoid $\mathcal{T}$ whose base is the shortest horizontal segment containing $e^1_\textrm{IN}$, $e^2_\textrm{IN}$, and $e^3_\textrm{IN}$, whose lateral sides have slope $60^\circ$, and whose height is $2\sqrt{3}$.
	Observe that the base of $\cal T$ has length $8+(\alpha+\beta)/2$.
	A key ingredient for the logic of the reduction is the following lemma.
	
	\begin{lemma}\label{lem:clause-behaviour}
		The $(\alpha,\beta)$-clause gadget admits a feasible planar straight-line realization if and only if at least one of the transmission triangles $\triangle^1_\textrm{IN}$, $\triangle^2_\textrm{IN}$, and $\triangle^3_\textrm{IN}$ lies outside $\cal T$.
	\end{lemma}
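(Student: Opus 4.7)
My plan is to prove the two directions of the equivalence separately.

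For the \emph{only if} direction, I will argue by contrapositive: assume all three transmission triangles $\triangle^1_\textrm{IN}$, $\triangle^2_\textrm{IN}$, $\triangle^3_\textrm{IN}$ lie inside $\mathcal T$, i.e., their apex vertices lie above $\ell$, and derive that no feasible realization exists. Starting from $\triangle^i_\textrm{IN}$ inside the first frame triangle of $T'_i$, I iterate \cref{lem:transmission-gadget-behaviour} along each transmission gadget that makes up $C_i$. Since the frame triangles must lie above $\ell$ (feasibility) and since the frame triangles are equilateral of unit side, which tile the grid described in property D\ref{prop:tiling}, this chain of deductions pins down the placement of every frame triangle of $C_1$, $C_2$, $C_3$ modulo a finite set of choices dictated by the turning gadgets. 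In particular, it determines the placement of the base $(b,c)$ of the transmission triangle $(a,b,c)$ of the flag gadget $F$ inside $C_1$, and forces $(a,b,c)$ to lie inside $(d,b,c)$. Using \cref{pr:flag-angles} and \cref{lem:flag-behaviour}(a), I will then show that the triangle $(h,g,c)$ is pushed against the ``wall'' created by $C_2$ so that there is no room for its four attached tall isosceles triangles except by placing both $l$ and $n$ inside $(h,g,c)$, contradicting \cref{lem:flag-behaviour}(b).

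For the \emph{if} direction, assume without loss of generality that $\triangle^j_\textrm{IN}$ lies outside $\mathcal T$ for some $j\in\{1,2,3\}$. I will construct an explicit feasible realization by placing the frames of $C_1$, $C_2$, $C_3$ along the triangular grid of property D\ref{prop:tiling} inside $\mathcal T$, with the turning $4$-, $6$-, and $2$-transmission gadgets realizing the bends and the straight $\alpha$- and $\beta$-transmission gadgets realizing the horizontal stretches dictated by the integer parameters $\alpha,\beta$. For each $i$ such that $\triangle^i_\textrm{IN}$ is outside $\mathcal T$, I will place its apex below $\ell$, and by \cref{lem:transmission-gadget-behaviour} the internal transmission triangles of $C_i$ will alternate consistently, keeping the frame above $\ell$ and inside $\mathcal T$; for the remaining components I will place the apex above $\ell$ and carry out the symmetric propagation. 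The flag gadget $F$ is then realized via \cref{pr:flag-realizations}: options (a) or (b) when $\triangle^1_\textrm{IN}$ is inside $\mathcal T$, and option (c) when $\triangle^1_\textrm{IN}$ is outside $\mathcal T$. In each case, $F$ fits inside $\mathcal T$ without colliding with $C_2$ or $C_3$ because the hypothesis ``$\triangle^j_\textrm{IN}$ outside $\mathcal T$ for some $j$'' leaves just enough free room along the top of $\mathcal T$ to accommodate either the protruding triangle $(a,b,c)$ of $F$ (case $j=1$) or the downward-pointing transmission triangle removed from the other two components (cases $j=2,3$).

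The main obstacle will be the contradiction step in the \emph{only if} direction, which requires tracking, through the composition of turning and straight transmission gadgets forming the arm of $C_1$, the exact grid cell and orientation in which the base $(b,c)$ of the transmission triangle of $F$ is forced, and then using \cref{pr:flag-angles} to verify that the four tall isosceles triangles $(h,i,g)$, $(i,l,g)$, $(h,m,c)$, $(m,n,c)$ have no room to lie outside $(h,g,c)$ without overlapping $C_2$. The length choices $w_1=1$, $w_2=0.9$, $w_3=0.2$, $w_4=1.61$, together with the evenness of $\alpha$ and $\beta$, have been calibrated precisely so that this collision argument goes through uniformly for all valid pairs $(\alpha,\beta)$.
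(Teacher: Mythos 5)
Your overall architecture matches the paper's proof: the \emph{if} direction is handled by a case analysis on which $\triangle^j_\textrm{IN}$ escapes $\mathcal T$, pairing each case with one of the three flag realizations of \cref{pr:flag-realizations} exactly as the paper does, and the \emph{only if} direction propagates the hypothesis through the transmission gadgets via \cref{lem:transmission-gadget-behaviour}, forces $(a,b,c)$ inside $(d,b,c)$, invokes \cref{lem:flag-behaviour}, and ends with a metric collision argument calibrated to the chosen lengths.

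There is, however, one concrete misstep in your collision argument that would derail a literal execution of the plan. You assert that the four tall isosceles triangles of the flag cannot lie outside $(h,g,c)$ ``without overlapping $C_2$.'' By \cref{lem:flag-behaviour}(b) at least one of $l$ and $n$ must lie outside $(h,g,c)$, and the two escape routes are blocked by \emph{different} neighbours: in the paper, the edge $\overline{cn}$ (when $n$ is outside) crosses the out-transmission triangle $\triangle_\textrm{OUT}$ of $C_2$, while the edge $\overline{gl}$ (when $l$ is outside) crosses the out-transmission triangle of $T'''_3$, which belongs to $C_3$, not $C_2$. Omitting the $C_3$ half leaves the $l$-outside configuration unexcluded, i.e., half of the clause logic unproved. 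Relatedly, the obstruction in both cases must be the \emph{movable} transmission triangles $\triangle_\textrm{OUT}$ of $C_2$ and $T'''_3$ --- which protrude toward the flag only because \cref{prop:triangle_in-2,prop:triangle_in-3} force them out of their frame triangles when $\triangle^2_\textrm{IN}$ and $\triangle^3_\textrm{IN}$ lie inside $\mathcal T$ --- and not the frames of $C_2$ or $C_3$, which sit in the same fixed position in every realization. If your collision argument targeted the frames it would prove too much and contradict your own \emph{if} direction. Your focus on how the propagation ``pins down the placement of every frame triangle'' suggests this distinction is not yet in sharp focus: the frames are always pinned; what the hypothesis pins is the orientation of the transmission triangles.
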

	
	\begin{proof}
		Consider any feasible planar straight-line realization of the $(\alpha,\beta)$-clause gadget.
		Note that, for $i=1,2,3$, if the transmission triangle $\triangle^i_\textrm{IN}$ lies inside $\cal T$, then it lies inside the frame triangle $\triangle_s$ of the transmission gadget composing $C_i$ that contains the edge $e^i_\textrm{IN}$.
		By this observation and \cref{lem:transmission-gadget-behaviour}, we obtain the following properties:
		
		\begin{property}\label{prop:triangle_in-1}
			Consider the transmission triangle $(a,b,c)$ and the tall isosceles triangle $(d,b,c)$ of the flag gadget $F$.
			If the transmission triangle $\triangle^1_\textrm{IN}$ lies inside $\cal T$, then the triangle $(a,b,c)$ is drawn inside the triangle $(d,b,c)$.
			Conversely, if the triangle $(a,b,c)$ is drawn outside the triangle $(d,b,c)$, then the transmission triangle $\triangle^1_\textrm{IN}$ lies outside $\cal T$.
		\end{property}
		
		\begin{property}\label{prop:triangle_in-2}
			Consider the transmission triangle $\triangle_{\textrm{OUT}}$ and the frame triangle $\triangle_t$ of $C_2$.
			If the transmission triangle $\triangle^2_\textrm{IN}$ lies inside $\cal T$, then $\triangle_{\textrm{OUT}}$ lies outside $\triangle_t$.
			Conversely, if $\triangle_{\textrm{OUT}}$ lies inside $\triangle_t$, then the transmission triangle $\triangle^2_\textrm{IN}$ lies outside $\mathcal{T}$.
		\end{property}
		
		\begin{property}\label{prop:triangle_in-3}
			Consider the transmission triangle $\triangle_{\textrm{OUT}}$ and the frame triangle $\triangle_t$ of $T'''_3$.
			If the transmission triangle $\triangle^3_\textrm{IN}$ lies inside $\mathcal{T}$, then $\triangle_{\textrm{OUT}}$ is drawn outside $\triangle_t$.
			Conversely, if $\triangle_{\textrm{OUT}}$ is drawn inside $\triangle_t$, then the transmission triangle $\triangle^3_\textrm{IN}$ lies outside $\mathcal{T}$.
		\end{property}
		
		Suppose first that at least one of $\triangle^1_\textrm{IN}$, $\triangle^2_\textrm{IN}$, and $\triangle^3_\textrm{IN}$ lie outside $\cal T$.
		We prove that the $(\alpha,\beta)$-clause gadget admits a feasible planar straight-line realization.
		We have the following cases:
		\begin{itemize}
			\item
			
			The transmission triangle $\triangle^1_\textrm{IN}$ lies outside $\cal T$.
			
			In this case the transmission triangles of $T'_1$, $T''_1$, and $T'''_1$ can be drawn in such a way that, in the flag gadget $F$, the transmission triangle $(a,b,c)$ lies outside the triangle $(b,c,d)$.
			This allows the flag gadget to adopt the realization~(c) of~\cref{pr:flag-realizations} (see also \cref{fig:flag_gadget-c}).    
			Observe that, regardless of whether $\triangle^2_\textrm{IN}$ and $\triangle^3_\textrm{IN}$ lie inside $\cal T$, the transmission triangles of $C_2$, $T'_3$, $T''_3$, and $T'''_3$ can be drawn without crossings.
			
			\item
			
			The transmission triangle $\triangle^2_\textrm{IN}$ lies outside $\cal T$.
			
			In this case the transmission triangles of $C_2$ can be drawn in such a way that the transmission triangle $\triangle_\textrm{OUT}$ of $C_2$ lies inside the frame triangle $\triangle_t$ of $C_2$.
			This allows the flag gadget to adopt the realization~(b) of~\cref{pr:flag-realizations} (see also \cref{fig:flag_gadget-b}).
			Observe that, regardless of whether $\triangle^1_\textrm{IN}$ and $\triangle^3_\textrm{IN}$ lie inside $\cal T$, the transmission triangles of $T'_1$, $T''_1$, $T'''_1$, $T'_3$, $T''_3$, and $T'''_3$ can be drawn without crossings.
			
			\item
			
			The transmission  $\triangle^3_\textrm{IN}$ lies outside $\cal T$.
			
			In this case the transmission triangles of $T'_3$, $T''_3$, and $T'''_3$ in such a way that the transmission triangle $\triangle_\textrm{OUT}$ of $T'''_3$ lies inside the frame $\triangle_t$ of $T'''_3$.
			This allows the flag gadget to adopt the realization~(a) of~\cref{pr:flag-realizations} (see also \cref{fig:flag_gadget-a}).
			Observe that, regardless of whether $\triangle^1_\textrm{IN}$ and $\triangle^2_\textrm{IN}$ lie inside $\cal T$, the transmission triangles of $T'_1$, $T''_1$, $T'''_1$, and $C_2$ can be drawn without crossings.
			
		\end{itemize}
		
		\begin{figure}[tb!]
			\centering
			\subfloat[\label{fig:clause_gadget_hexagon-a}]
			{\includegraphics[page=1,width=0.45\textwidth]{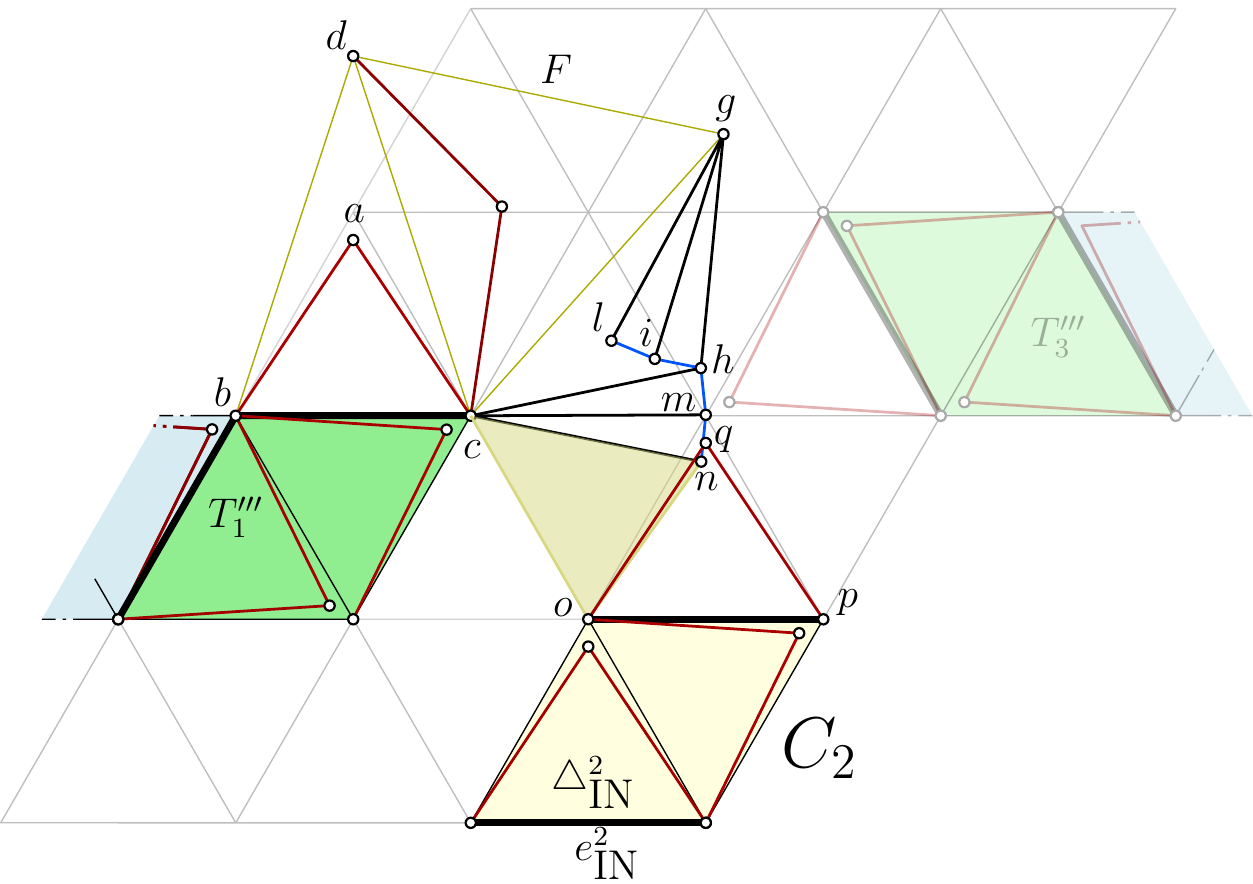}}
			\subfloat[\label{fig:clause_gadget_hexagon-b}]
			{\includegraphics[page=2,width=0.45\textwidth]{np-hardness__clause_gadget_hexagon}}
			
			\caption{Illustration for the proof of \cref{lem:clause-behaviour} when $\triangle^1_\textrm{IN}$, $\triangle^2_\textrm{IN}$, and $\triangle^3_\textrm{IN}$ lie all in the interior of $\cal T$.}
			\label{fig:clause_gadget_hexagon}
		\end{figure}
		
		Suppose now that all $\triangle^1_\textrm{IN}$, $\triangle^2_\textrm{IN}$, and $\triangle^3_\textrm{IN}$ lie inside $\cal T$.
		We prove that the $(\alpha,\beta)$-clause gadget does not admit a feasible planar straight-line realization.
		We analyze the possible planar straight-line realizations of the flag gadget $F$.
		By \cref{prop:triangle_in-1} and statement~(a) of \cref{lem:flag-behaviour}, the triangle $(h, g, c)$ is drawn outside the triangle $(g, d, c)$.
		On the other hand, by statement~(b) of \cref{lem:flag-behaviour}, we have that either (i) the vertex $n$ lies outside the triangle $(h, g, c)$ (refer to \cref{fig:clause_gadget_hexagon-a}), (ii) the vertex $l$ lies outside the triangle $(h, g, c)$ (refer to \cref{fig:clause_gadget_hexagon-b}), or (iii) both vertices $n$ and $l$ lie outside the triangle $(h, g, c)$.
		We show next that in all cases, there is an intersection between the edges two triangles of the $(\alpha,\beta)$-clause gadget.
		
		Consider first the case in which the vertex $n$ lies outside the triangle $(h, g, c)$; refer to \cref{fig:clause_gadget_hexagon-a}.
		The points $o,p,q$ are the vertices of the transmission triangle $\triangle_\textrm{OUT}$ of $C_2$.
		We prove that $\widehat{c o n} + \widehat{q o p} > 120^\circ$ and that the length of the segment $\overline{on}$ is smaller than $w_2$.
		These two statements imply that the segment $\overline{c n}$ intersects the segment $\overline{o q}$.
		To prove both statements we consider an auxiliary isosceles triangle with vertices $n$, $c$, $o$ (note that only the side $\overline{c n}$ of the triangle $(n,c,o)$ corresponds to an edge of $H_{\phi}$).
		We have that:
		\begin{itemize}
			
			\item
			
			The internal angle at $c$ is $\widehat{{n c o}} \simeq 48.74^\circ$.
			Namely, $\widehat{n c o}:= 240^\circ - \widehat{b c d} - \widehat{d c g} - \widehat{g c h} - 2\lambda$.
			By \cref{pr:flag-angles}, we obtain $\widehat{{n c o}} \simeq 240^\circ - 71.91^\circ - 60^\circ - 36.39^\circ - 2 \cdot 11.48^\circ = 48.74^\circ$.
			\vspace{0.4em}
			
			\item
			
			The internal angle at $o$ is $\widehat{c o n} \simeq (180^\circ - \widehat{n c o})/2 = 65.63^\circ$.
			
			\item
			
			The segments $\overline{c o}$ and $\overline{c n}$ have length $w_1$.
			
			\item
			
			The segment $\overline{o n}$ has length $2 w_1 \sin(\frac{\widehat{n c o}}{2}) \simeq 0.83$.
			
		\end{itemize}
		
		Since the triangle $(q,o,p)$ is a transmission triangle, we have that $\widehat{q o p} = \arccos(\frac{w_1}{2w_2}) = \arccos(\frac{1}{1.8}) = 56.25^\circ$, and the segment $\overline{oq}$ has length $w_2$.
		Hence, $\widehat{c o n} + \widehat{q o p} \simeq 65.63^\circ + 56.25^\circ = 121.88^\circ > 120^\circ$, and $\vert \overline{o q}\vert=w_2 = 0.9 > \vert \overline{o n} \vert \simeq 0.83$.
		This implies that $\overline{cn}$ intersects $\overline{oq}$, as claimed.
		
		Consider now the case in which the vertex $l$ lies outside the triangle $(h, g, c)$; refer to \cref{fig:clause_gadget_hexagon-b}.
		The points $r,t,s$ denote the vertices of the transmission triangle $\triangle_{\textrm{OUT}}$ of $T'''_3$.
		Let $u$ be the point at distance $w_1=1$ from $r$ such that the ray from $r$ through $u$ has slope $120^{\circ}$.
		We show that $\widehat{g r u} + \widehat{l r g} + \widehat{t r s} > 180^\circ$ and that the length of the segment $\overline{r l}$ is smaller than $w_2$.
		These two statements imply that the segment $\overline{gl}$ intersects the segment $\overline{rs}$.
		To prove both statements we consider two auxiliary isosceles triangles.
		The first triangle has vertices $c$, $g$, and $r$.
		We have that:
		\begin{itemize}
			\item The segment $\overline{cg}$ has length $w_4$.
			\item The segment $\overline{cr}$ has length $\vert \overline{cr} \vert = 2 \frac{\sqrt{3}}{2} = \sqrt{3}$.
			\item
			The internal angle at $c$ is $\widehat{g c r} \simeq 18.09^\circ$. Namely, $\widehat{b c r} = 150^\circ$.
			Further, $\widehat{b c g} = \widehat{bcd} + \widehat{dcg}$ and, by \cref{pr:flag-angles}, we have $\widehat{bcg} \simeq 71.91^\circ + 60^\circ = 131.91^\circ$, hence $\widehat{g c r} = \widehat{b c r} - \widehat{b c g} \simeq 18.09^\circ$.
			\vspace{0.4em}
			
			\item
			The segment $\overline{gr}$ has length $\vert \overline{gr} \vert \simeq 0.54$. Namely, by the law of cosines, we have $|\overline{gr}|^2 = |\overline{cg}|^2 + |\overline{cr}|^2 -2  |\overline{cg}|\cdot |\overline{cr}| \cdot \cos(\widehat{g c r})$.
			Since $|\overline{cg}| = w_4 = 1.61$ and $|\overline{cr}| = \sqrt{3}$, we obtain $|\overline{gr}|^2 \simeq 2.5921 + 3 - 2 \cdot 1.61 \cdot \sqrt{3} \cdot \cos(18.09^\circ) \simeq 0.29$, hence $|\overline{gr}| \simeq 0.54$.
			\vspace{0.4em}
			
			\item
			The internal angle at $r$ is $\widehat{crg} \simeq 68.25^\circ$. Namely, by the law of cosines, we have $\widehat{crg} = \arccos\left(\frac{|\overline{gr}|^2 +|\overline{cr}|^2 - |\overline{gc}|^2} {2|\overline{gr}|\cdot|\overline{cr}|}\right) \simeq \arccos\left(\frac{0.54^2+3-1.61^2}{2 \cdot 0.54 \cdot \sqrt{3}}\right) \simeq \arccos\left(\frac{0.6924}{1.8684}\right) \simeq \arccos(0.3706) \simeq 68.25^\circ$.
			\vspace{0.4em}
			
			\item
			The internal angle at $g$ is $\widehat{rgc} = 180^\circ - \widehat{crg} - \widehat{gcr} \simeq 180^\circ - 68.25^\circ - 18.09^\circ = 93.66^\circ$.
		\end{itemize}
		
		The second auxiliary triangle we consider has vertices $g$, $r$, and $l$. We have that:
		\begin{itemize}
			\item The segment $\overline{lg}$ has length $w_1=1$.
			\item
			The internal angle at $g$ is $\widehat{r g l} \simeq 34.31^\circ$. Namely, $\widehat{l g c} = \widehat{h g c} + 2 \lambda$. By \cref{pr:flag-angles}, we have $\widehat{l g c} \simeq 36.39^\circ + 2 \cdot 11.48^\circ = 59.35^\circ$.
			We obtain that $\widehat{rgl} = \widehat{rgc} - \widehat{lgc} \simeq 93.66^\circ - 59.35^\circ = 34.31^\circ$.
			\vspace{0.4em}
			\item
			The edge $\overline{l r}$ has length $\vert \overline{l r} \vert \simeq 0.63$. Namely, by the law of cosines, we have that $|\overline{lr}|^2 = |\overline{l g}|^2 + |\overline{rg}|^2 - 2 |\overline{l g}| \cdot |\overline{rg}| \cos(\widehat{rgl})$, and hence $|\overline{lr}| \simeq \sqrt{  1 + 0.54^2 - 2 \cdot 0.54 \cdot\cos(34.31^{\circ})} \simeq \sqrt{0.3995} \simeq 0.63$.
			\vspace{0.4em}
			
			\item
			The internal angle at $r$ is $\widehat{l r g} = 117.2^\circ$. Namely, by the law of cosines, we have $\widehat{lrg} = \arccos\left( \frac{|\overline{gr}|^2 +|\overline{lr}|^2 - |\overline{gl}|^2} {2|\overline{gr}|\cdot|\overline{lr}|}\right) \simeq$  $\arccos\left(\frac{0.54^2+0.63^2-1^2}{2 \cdot 0.54 \cdot 0.63}\right) = \arccos\left(\frac{-0.3115}{0.6804}\right) \simeq \arccos(-0.4578) \simeq 117.2^\circ$.
		\end{itemize}
		
		Let $v$ be the point at distance $w_1=1$ from $r$ such that the ray from $r$ through $v$ has slope $240^{\circ}$. Note that $\widehat{v r u} = \widehat{v r c} + \widehat{c r g} + \widehat{g r u}$, hence $\widehat{g r u} = \widehat{v r u} - \widehat{v r c} - \widehat{c r g} \simeq 120^\circ - 30^\circ - 68.25^\circ = 21.75^\circ$.
		On the other hand, since the triangle $(r,s,t)$ is a transmission triangle, we have $\widehat{t r s} \simeq 56.25^\circ$ and the segment $\overline{rs}$ has length $w_2$.
		Hence $\widehat{g r u} + \widehat{l r g} + \widehat{t r s} \simeq 21.75^\circ +  117.2^\circ + 56.25^\circ = 195.2^\circ > 180^\circ$, and $\vert \overline{r l} \vert \simeq 0.63 < w_2 = 0.9$.
		This implies that $\overline{gl}$ intersects $\overline{rs}$, as claimed, and concludes the proof of the lemma.
	\end{proof}
	
	\subsection{Proof of the reduction}
	\label{sec:np-hardness_combine_gadgets}
	
	We are now ready to prove the main result of this section.
	
	\paragraph{Proof of \cref{th:np-hard}.}
	
	We first prove that, starting from the boolean formula $\phi$, the incidence graph $G_\phi$ of $\phi$, and the monotone rectilinear representation $\Gamma_\phi$ of $G_\phi$, we can construct the $2$-tree $H_\phi$ in polynomial time.
	The first step of the construction is to obtain from $\Gamma_\phi$ the auxiliary drawing $\Gamma^*_\phi$ we described in \cref{sec:auxiliary_representation}.
	By \cref{lem:auxiliary-representation}, we can construct $\Gamma^*_\phi$ in polynomial time.
	The next step is to construct $H_\phi$ using $\Gamma^*_\phi$ as an auxiliary tool.
	We obtain $H_\phi$ by introducing a variable gadget, a clause gadget, and a transmission gadget for each variable, clause, and edge in $G_\phi$, respectively, and then combining these gadgets together by identifying their attachment edges.
	In particular, we exploit $\Gamma^*_\phi$ to (i) define the size $k$ of each $k$-transmission gadget that represents an edge of $G_\phi$, (ii) define the parameters $\alpha$ and $\beta$ of each $(\alpha,\beta)$-clause gadget, and (iii) select the appropriate attachment edges to combine the gadgets together.
	We then merge the frames of the variable gadgets that are consecutive in the left-to-right order of the rectangles representing variables in $\Gamma^*_\phi$.
	This is done by means of \emph{ladder gadgets} (the shaded blue triangles in \cref{fig:overview-reduction}), which are maximal outerpaths, each composed of a sequence of frame triangles. We have that $H_\phi$ is a $2$-tree by repeated applications of \cref{lem:two-2-trees-merge}, and the union of the frame triangles induces a maximal outerplanar graph.
	Moreover, from the detailed description of the gadgets construction of \cref{sec:np-hardness_gadgets}, it is not hard to see that $H_\phi$ can be constructed in polynomial time.
	
	We now prove that $H_\phi$ admits a planar straight-line realization \emph{if and only if} $\phi$ is satisfiable.
	Suppose first that $H_\phi$ admits a planar straight-line realization $\Gamma_H$.
	We show that $\phi$ is satisfiable.
	For each variable $v$ of $\phi$, consider the variable gadget $\mathcal V$ of $H_\phi$ modeling $v$, the truth-assignment triangle $\triangle_{\textrm{IN}}$ of $\mathcal V$, and the frame triangle $\triangle_{s}$ of the split gadget $S_{\delta_\phi}$ of $\mathcal V$; refer to \cref{fig:variable_gadget-schema,fig:variable_gadget}.
	We set $v=\texttt{True}$ if and only if $\triangle_{\textrm{IN}}$ is drawn inside $\triangle_{s}$.
	We next prove this truth assignment satisfies every clause $c$ of $\phi$.
	Assume that $c$ is a positive clause with variables $v_1$, $v_2$, and $v_3$, that is, $c = (v_1 \vee v_2 \vee v_3)$.
	The proof for the case in which $c$ is negative is symmetric.
	Let $\mathcal C$ denote the clause gadget modeling $c$ and, for $i=1,2,3$, let $\mathcal T_i$ denote the transmission gadget modeling the edge $(v_i,c)$ of $G_\phi$.
	From the construction of the clause gadget, the realization of $\mathcal C$ in $\Gamma_H$ is such that the frame of the components of $\mathcal C$ are bounded by a trapezoid $\mathcal T$.
	By \cref{lem:clause-behaviour}, at least one of the triangles $\triangle^1_{\textrm{IN}}$, $\triangle^2_{\textrm{IN}}$, and $\triangle^3_{\textrm{IN}}$ of $\mathcal C$ lies outside $\mathcal T$.
	We assume w.l.o.g. that it is actually $\triangle^1_{\textrm{IN}}$ the triangle that lies outside $\mathcal T$.
	Then $\triangle^1_{\textrm{IN}}$ lies inside the frame triangle $\triangle_t$ of the transmission gadget $\mathcal T_1$.
	Let $\mathcal V_1$ denote the variable gadget modeling the variable $v_1$.
	By \cref{lem:transmission-gadget-behaviour}, there exists an index $j \in \{1,2,\dots, \delta_\phi\}$ such that the transmission triangle $\triangle^j_\textrm{OUT}$ of $\mathcal V_1$ lies inside the frame triangle $\triangle^j_t$ of $\mathcal V_1$ (for such index $j$ the transmission triangle $\triangle^j_\textrm{OUT}$ is shared by $\mathcal V_1$ and $\mathcal T_1$).
	Thus, by \cref{lem:variable-gadget-behaviour}, the truth-assignment triangle $\triangle_{\textrm{IN}}$ of $\mathcal V_1$ is drawn inside the frame triangle $\triangle_s$ of the split gadget $S_{\delta_\phi}$ of $\mathcal V_1$.
	This implies in turn that $v_1$ is assigned the value $\texttt{True}$.
	Since this variable appears as a positive literal in $c$, we have that $c$ is satisfied.
	This shows that the constructed truth assignment satisfies all the clauses of $\phi$.
	
	Suppose now that $\phi$ is satisfiable. We show that $H_\phi$ admits a planar straight-line realization $\Gamma_H$.
	Let $\tau$ be a satisfying truth assignment for $\phi$.
	For each variable $v$ of $\phi$, let $\tau(v)$ denote the truth value of $v$ in $\tau$.
	Observe that, up to a rigid transformation, the union of all the frame triangles of $H_\phi$ admits a unique planar straight-line realization.
	We initialize $\Gamma_H$ to such a realization.
	Further, for each variable $v$, we adopt the configuration of \cref{fig:variable-b} if $\tau(v) = \texttt{True}$ and the configuration of \cref{fig:variable-c} if $\tau(v) = \texttt{False}$.
	For each transmission gadget modeling an edge of $G_{\phi}$ incident to a variable $v$ and to a positive clause $c$, we adopt the configuration of \cref{fig:transmission_gadget-a}\textcolor{blue}{(right)} if $\tau(v) = \texttt{True}$ and the configuration of \cref{fig:transmission_gadget-a}\textcolor{blue}{(left)} if $\tau(v) = \texttt{False}$; a symmetric choice is made if $c$ is negative.
	Since, for each clause $c$ of $\phi$, there exists at least one literal that is $\texttt{True}$, the triangle $\triangle^i_{\textrm{IN}}$, $i \in \{1,2,3\}$, associated with this literal is drawn outside the trapezoid $\cal T$ that bounds the frame of the components of $\mathcal C$ in $\Gamma_H$.
	Then, by \cref{lem:clause-behaviour}, the clause gadget modeling $c$ admits a feasible planar straight-line realization; such realizations are used to complete the planar straight-line realization $\Gamma_H$ of $H_{\phi}$. This concludes the proof of \cref{th:np-hard}.

	\section{A Linear-time Algorithm for 2-trees with Two Edge Lengths}
	\label{sec:few_lengths}
	
	In this section, we study the \FEPRshort problem for weighted $2$-trees in which each edge can only have one of at most two distinct lengths.
	We prove that, in this case, the \FEPRshort problem is linear-time solvable. We first solve the case in which all the edges are prescribed to have the same length; we remark that the \FEPRshort problem is NP-hard for general weighted planar graphs in which all the edges have the same length~\cite{DBLP:journals/dam/EadesW90}.
	
	\begin{theorem}\label{th:same-weight}
		Let $G=(V,E,\lambda)$ be an $n$-vertex weighted $2$-tree, where $\lambda: E \rightarrow \{ w \}$ with $w\in\mathbb{R}^+$.
		There exists an $O(n)$-time algorithm that tests whether $G$ admits a planar straight-line realization and, in the positive case, constructs such a realization.
	\end{theorem}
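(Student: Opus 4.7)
The plan hinges on the observation that no equilateral triangle can be drawn inside another in any straight-line realization of $G$. Suppose two $3$-cycles $\triangle_i$ and $\triangle_j$ share an edge in $G$ and $\triangle_i$ is drawn inside $\triangle_j$; then \cref{obs:internal_angles} requires the internal angles of $\triangle_i$ at the shared vertices to be strictly smaller than those of $\triangle_j$, which is impossible since both triangles are equilateral and hence all six relevant angles equal $60^\circ$. If instead $\triangle_i$ and $\triangle_j$ do not share an edge in $G$ but $\triangle_i$ is drawn inside $\triangle_j$, then by \cref{obs:containments} they must geometrically share a side of length $w$, whose two endpoints would be vertices of both $3$-cycles; but two $3$-cycles of a $2$-tree that share two vertices must also share the edge between those vertices, a contradiction. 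Hence, in every planar straight-line realization of $G$, any two $3$-cycles that share an edge in $G$ lie on opposite sides of the geometric segment representing that edge.

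First I would check in $O(n)$ time that every edge of $G$ belongs to at most two $3$-cycles. If some edge $(u,v)$ belongs to $k \geq 3$ triangles $(u,v,x_1),\dots,(u,v,x_k)$, then each $x_i$ must lie at one of the only two intersection points of the circles of radius $w$ centered at $u$ and $v$, so by pigeonhole two of the $x_i$ would coincide, violating the definition of drawing; I would output NO in this case. Otherwise, I would compute the decomposition tree $T$ of $G$ in $O(n)$ time, fix an arbitrary root $3$-cycle $c$, and place its three vertices as an equilateral triangle of side $w$ at canonical positions. Traversing $T$ in preorder, for each non-root $3$-cycle $(u,v,x)$---where $(u,v)$ is the edge shared with its parent and $x$ is the one vertex not yet placed (by construction of $T$)---I would place $x$ at the unique point forming an equilateral triangle of side $w$ with $u$ and $v$ on the side of line $uv$ opposite to the parent's third vertex. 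By the nesting impossibility above and the at-most-two-triangles-per-edge condition, this placement is forced; each step runs in $O(1)$ time, producing a straight-line realization $\Gamma$ of $G$ in $O(n)$ time.

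Finally I would verify $\Gamma$. All vertex positions lie on the triangular lattice of spacing $w$ within an $O(n) \times O(n)$ window of lattice coordinates, so I would convert each position to its integer lattice coordinates and use radix sort in $O(n)$ time to confirm that no two vertices coincide. I would then extract the rotation system $\mathcal R$ induced by $\Gamma$---sorting, at each vertex, the incident edges by slope---and invoke \cref{thm:straight-line_realization_planarity-rotation} to test in $O(n)$ time whether $\Gamma$ is a planar straight-line realization with rotation system $\mathcal R$. If the test succeeds, $\Gamma$ is the desired realization; otherwise, by a forced-uniqueness argument---inducting along the preorder traversal, any hypothetical planar straight-line realization of $G$, after rigidly moving it so that $c$ coincides with its canonical placement, agrees with $\Gamma$ vertex by vertex because the position of each newly added vertex is forced by the nesting impossibility---we conclude that no planar straight-line realization of $G$ exists. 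The main obstacle I expect is in making this forced-uniqueness argument watertight: one must verify that the only other rigid image of the canonical placement of $c$ is a reflection, which preserves planarity, so that testing a single drawing $\Gamma$ suffices to decide the instance.
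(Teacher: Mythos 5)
Your proposal is correct, but it takes a noticeably different route from the paper's. The paper argues that, since congruent equilateral triangles cannot nest, any planar realization of $G$ is an outerplanar drawing; it therefore tests outerplanarity combinatorially, computes the unique outerplane embedding of the (necessarily maximal outerplanar) graph, and hands the instance to the prescribed-embedding machinery of \cref{sec:prescribed_embedding}, which internally performs the same forced-placement construction you describe. You instead skip the outerplanarity test entirely: you rule out edges incident to three or more $3$-cycles by a direct two-intersection-points argument (the paper gets this for free from the $K_{2,3}$ obstruction to outerplanarity), build the single candidate drawing by the "opposite side of the shared edge" rule along the decomposition tree, and justify that one drawing suffices via the forced-uniqueness-up-to-reflection argument. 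What your approach buys is self-containedness and an explicit statement of the rigidity of the realization; what it costs is having to handle by hand the degeneracies the paper's black boxes absorb --- coincident vertices (your lattice/radix-sort check is indeed needed before invoking \cref{thm:straight-line_realization_planarity-rotation}, since a map with coincident images is not a drawing) and the orientation issue for the root triangle. One small point worth making explicit: extracting the rotation system by sorting incident edges by slope is only $O(n)$ overall because, once the coincidence check passes, every vertex has at most six neighbours (all neighbours lie among the six lattice points at distance $w$), so the per-vertex sort is constant time; as written, a reader could object that sorting costs $O(n\log n)$ in general.
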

	
	\begin{proof}
		Suppose that there exists a planar straight-line realization $\Gamma$ of $G$. Since all the edges of $G$ have length $w$, every $3$-cycle of $G$ is represented in $\Gamma$ by an equilateral triangle of side $w$, hence no triangle can be contained inside any other triangle in $\Gamma$. This, together with the fact that $G$ is a $2$-tree, implies that $\Gamma$ is an outerplanar drawing of $G$. Thus, if $G$ admits a planar straight-line realization, then it is a maximal outerplanar graph and, as such, it has a unique outerplane embedding~\cite{mw-opbe-90,s-cog-79}.
		
		Therefore, in order to test whether $G$ has a planar straight-line realization, we test whether it is an outerplanar graph; this can be done in $O(n)$ time~\cite{d-iroga-07,m-laarogmog-79,w-rolt-87}.
		If the test fails, we conclude that $G$ admits no planar straight-line realization.
		Otherwise, we construct in $O(n)$ time its unique outerplane embedding~\cite{d-iroga-07,m-laarogmog-79,mw-opbe-90,s-cog-79,w-rolt-87}.
		Finally, by means of \cref{thm:straight-line_realization_planarity}, we test in $O(n)$ time whether $G$ has a planar straight-line realization that respects the computed outerplane embedding.
		If the test fails, we conclude that $G$ admits no planar straight-line realization.
		Otherwise, \cref{thm:straight-line_realization_planarity} provides us with the desired planar straight-line realization of $G$.
	\end{proof}
	
	We now extend our study to weighted graphs in which each is assigned with one of two possible lengths. We have the following main theorem.
	
	\begin{theorem}\label{th:two-weights}
		Let $G=(V,E,\lambda)$ be an $n$-vertex weighted  $2$-tree, where $\lambda: E \rightarrow \{ w_1, w_2 \}$ with $w_1,w_2\in\mathbb{R}^+$.
		There exists an $O(n)$-time algorithm that tests whether $G$ admits a planar straight-line realization and, in the positive case, constructs such a realization.
	\end{theorem}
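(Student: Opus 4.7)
The plan is to reduce the problem to a constant number of applications of the fixed-rotation-system algorithm of \cref{th:fixed-rotation}: I will compute, in $O(n)$ time, a family of $O(1)$ candidate rotation systems for $G$ such that $G$ admits a planar straight-line realization if and only if at least one of them does. If $w_1 = w_2$, the result follows from \cref{th:same-weight}, so I assume without loss of generality that $w_1 < w_2$. The key advantage of the two-length setting is that every $3$-cycle of $G$ has one of only four types, $(w_1,w_1,w_1)$, $(w_1,w_1,w_2)$, $(w_1,w_2,w_2)$, or $(w_2,w_2,w_2)$, so there are only finitely many pairwise incidence configurations to classify.

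Using \cref{obs:containments}, \cref{obs:internal_angles}, and \cref{obs:containment_sharing_edges}, together with direct law-of-cosines computations whose outcomes depend on the ratio $w_2/w_1$, I would build a constant-size lookup table that, for every triple (type of inner $3$-cycle, type of outer $3$-cycle, length of shared edge), decides whether a nesting is geometrically feasible. Sample consequences: a $(w_2,w_2,w_2)$ triangle can never be nested inside any other $3$-cycle (its longest side already equals the global maximum and the shared edge would have to coincide with that longest side while preserving the strict angle inequality of \cref{obs:internal_angles}, which fails for congruent triangles); and a $(w_1,w_1,w_1)$ triangle can be nested only inside a $(w_1,w_2,w_2)$ triangle, and only via its $w_1$-edge.

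I would then traverse the decomposition tree $T$ of $G$ and, using this table, label each edge of $T$ with its set of feasible relative positions of the two incident triangles (opposite sides of the shared edge, inner inside outer, or outer inside inner). Most edges will have a uniquely forced position; the few remaining ambiguities are further restricted by sibling constraints such as \cref{lem:flat_in_equilateral}, which forbids two flat-isosceles children from being simultaneously nested inside the same equilateral parent when $w_2/w_1 < \sqrt{3}$, and analogous constraints for the other type combinations. Propagating these constraints bottom-up through $T$ in $O(n)$ time should yield $O(1)$ candidate rotation systems, each of which is tested in $O(n)$ time via \cref{th:fixed-rotation}; $G$ admits a realization iff some candidate succeeds.

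The main obstacle is bounding the number of candidate rotation systems by a constant: local ambiguities at different edges of $T$ could in principle interact and produce an exponential blow-up. I expect this to be controlled by the sibling lemmas of \cref{sec:preliminaries}, which sharply limit how many nestings can coexist inside a common parent $3$-cycle, so that once the relative position is fixed at a constant number of ``hub'' edges of $T$ the rest of the rotation system is forced. A secondary difficulty is ensuring that the lookup table is complete---that no exotic nesting slips through---which I would address by cross-checking each entry against the angle inequalities implied by \cref{obs:internal_angles}.
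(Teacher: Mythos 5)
There is a genuine gap, and it is exactly the obstacle you flag at the end: the number of candidate configurations is \emph{not} bounded by a constant, and no amount of bottom-up propagation through the decomposition tree will force it to be. Consider a long chain of big equilateral framework triangles, each carrying a flat isosceles leaf child that can feasibly be drawn either inside its parent or outside it (when $w_2/w_1$ is large enough that a single flat isosceles fits inside a big equilateral). These $\Theta(n)$ binary choices are largely independent, so there can be $2^{\Theta(n)}$ distinct valid rotation systems; reducing to $O(1)$ calls to \cref{th:fixed-rotation} is impossible. The paper resolves this differently: it first shows (\cref{lem:leaf-triangle}) that only \emph{leaf} triangles of the decomposition tree can be nested, strips them off to obtain a ``framework'' $G_F$ whose realization is forced (it must be outerplanar, hence unique), and then encodes the two possible placements of each remaining leaf triangle as a Boolean variable in a 2SAT formula, with one 2-clause per pairwise conflict. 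Satisfiability of 2SAT in linear time replaces your hoped-for constant-size enumeration.

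A second missing ingredient is algorithmic rather than combinatorial. Your propagation along the edges of $T$ can only capture constraints between triangles that are adjacent or siblings in the decomposition tree. But two leaf triangles that are combinatorially far apart can both be placed \emph{outside} the framework and cross each other (or cross a framework edge) because they happen to be geometrically close; these ``external'' and ``framework'' conflicts are invisible to any tree traversal. Detecting them naively costs $\Theta(n^2)$. The paper spends considerable effort building a grid-based proximity graph $H$ of linear size (\cref{lemma:H-properties}, \cref{le:H-construction}) and proving that each grid cell hosts only $O(1)$ leaf vertices when $r<2$, so that all such conflicts can be enumerated in $O(n)$ time. Your proposal's lookup-table classification of feasible nestings is sound and corresponds to \cref{prop:containments} and \cref{lem:leaf-triangle}, but without the 2SAT reduction and the linear-time geometric conflict detection the proof does not go through.
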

	
	In the remainder of the section, we prove \cref{th:two-weights}.
	Hereafter, we assume, w.l.o.g., that $w_1 < w_2$.
	Note that the realization of any $3$-cycle of $G$ is one of the following types of triangles (refer to \cref{fig:possible_triangles}): 
	\begin{enumerate}[(i)]
		\item an equilateral triangle of side $w_1$ (a \emph{small equilateral triangle}),
		\item an equilateral triangle of side $w_2$ (a \emph{big equilateral triangle}),
		\item an isosceles triangle with base $w_1$ and two sides of length $w_2$ (a \emph{tall isosceles triangle}), and
		\item an isosceles triangle with base $w_2$ and two sides of length $w_1$ (a \emph{flat isosceles triangle}).
	\end{enumerate}
	
	Any triangle of one of the types above is called an \emph{interesting triangle}.
	
	\begin{figure}[ht]
		\centering
		\includegraphics[scale=.9]{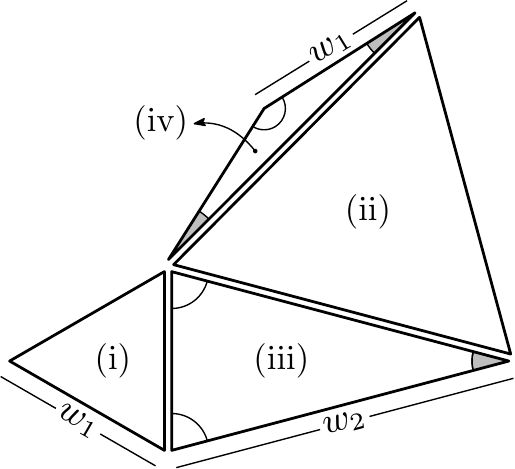}
		
		\caption{
			The four interesting triangles. Since $w_1 < w_2$, the angles in gray are always less than $60^{\circ}$, while the angles in white are always greater than $60^{\circ}$.
		}
		\label{fig:possible_triangles}
	\end{figure}
	
	While small equilateral, big equilateral and tall isosceles triangle exist regardless of the values of $w_1$ and $w_2$, flat isosceles triangles do not exist under the condition described in the following property.
	
	\begin{property} \label{pr:existence-flat}
		If $w_2\geq 2w_1$, then no $3$-cycle of $G$ is realized as a flat isosceles triangle.
	\end{property}
	
	\begin{proof}
		By assumption, every $3$-cycle of $G$ satisfies the triangle inequality, that is, the sum of the lengths prescribed to any two edges is larger than the length of the third edge. Hence, if $w_2 \geq 2w_1$, then no $3$-cycle of $G$ can have two edges with length $w_1$ and the third edge with length $w_2$.
	\end{proof}
	
	
	\subsection{Containment among triangles}
	\label{sec:few_lengths:triangles}
	
	The first central ingredient of our algorithm are the conditions satisfied by interesting triangles when they are drawn inside each other in a planar straight-line realization of $G$.
	In the following, we denote with $r$ the ratio $\frac{w_2}{w_1} > 1$. Regardless of the value of $r$, the following lemma holds true.
	
	\begin{lemma}\label{prop:containments}
		The interesting triangles satisfy the two properties below. 
		\begin{enumerate}[a)]
			\item \label{prop:containments:small-equilateral_flat-isosceles} No interesting triangle can be drawn inside a small equilateral or a flat isosceles triangle.
			\item \label{prop:containments:big-equilateral_tall-isosceles} Neither a big equilateral nor a tall isosceles triangle can be drawn inside any interesting triangle.
		\end{enumerate}
	\end{lemma}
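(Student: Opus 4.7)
The plan is to prove both (a) and (b) by case analysis over the ordered pairs $(\triangle_i,\triangle_j)$ of interesting triangles, where $\triangle_i$ is the triangle claimed not to lie inside $\triangle_j$. Since the longest side of the small equilateral is $w_1$ and of the other three interesting triangles is $w_2$, \cref{obs:containments} immediately discards every pair in which the longest side of $\triangle_i$ strictly exceeds that of $\triangle_j$. This dispatches the three subcases of (a) in which a triangle with longest side $w_2$ is placed inside the small equilateral, and similarly the two subcases of (b) in which the big equilateral or tall isosceles is placed inside the small equilateral.

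For pairs whose longest sides coincide, \cref{obs:containments} further forces $\triangle_i$ and $\triangle_j$ to share a side of that common length, and I then invoke \cref{obs:internal_angles} (or the specialized \cref{obs:containment_sharing_edges}) to compare the interior angles at the two endpoints of the shared edge. The required angular classification is elementary: both equilaterals have three $60^\circ$ angles; the tall isosceles has two base angles strictly greater than $60^\circ$ and one apex angle strictly smaller; the flat isosceles has two base angles strictly smaller than $60^\circ$ and one apex angle strictly greater. Inspecting each remaining subcase then verifies that at least one of the strict inequalities $\alpha_i<\alpha_j$ or $\beta_i<\beta_j$ fails. For example, a tall isosceles inside a flat isosceles would need to share the flat isosceles's base (its only $w_2$-side), whose incident angles are both below $60^\circ$, yet any $w_2$-side of the tall isosceles meets a vertex of angle above $60^\circ$; and two congruent flat isosceles sharing a $w_2$-side would have equal base angles, violating the strict inequality.

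The sole case not covered by the two steps above is that of a small equilateral drawn inside a flat isosceles: here the longest sides are unequal ($w_1<w_2$) and no shared edge is guaranteed, so this is the main technical obstacle. I plan to resolve it via a minimum-width comparison. The minimum width of a triangle equals its smallest altitude; a quick computation gives that the flat isosceles has altitudes $h=\sqrt{w_1^2-w_2^2/4}$ from the apex and $w_2 h/w_1>h$ from each base endpoint, so its minimum width equals $h$, and the assumption $w_2>w_1$ yields $h<w_1\sqrt{3}/2$. In contrast, all three altitudes of the small equilateral equal $w_1\sqrt{3}/2$, so the width of the small equilateral in every direction is at least $w_1\sqrt{3}/2$. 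For any rigid motion that would place the small equilateral inside the flat isosceles, the projection of the placed small equilateral onto the line perpendicular to the flat isosceles's base would therefore have length at least $w_1\sqrt{3}/2$, strictly exceeding $h$ which bounds the projection of the flat isosceles in the same direction. This contradiction completes the analysis.
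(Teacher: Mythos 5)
Your case analysis is correct and, for all but one case, follows the same route as the paper: dispatch the unequal-longest-side pairs via \cref{obs:containments}, then use the fact that equal longest sides force a shared side of that length and compare interior angles at its endpoints via \cref{obs:internal_angles} (equivalently, \cref{obs:containment_sharing_edges}); your angle classification of the four interesting triangles is accurate. The one place where you genuinely diverge is the case you correctly single out as the crux, a small equilateral inside a flat isosceles. The paper handles it (only in the no-shared-edge subcase) by dropping the altitude from the apex of the flat isosceles onto its base, splitting it into two right triangles each of which, since $w_2<2w_1$ by \cref{pr:existence-flat}, contains no pair of points at distance $\geq w_1$ other than its hypotenuse's endpoints, and then pigeonholing two vertices of the equilateral into one half. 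Your argument instead compares minimal widths: the flat isosceles has minimum width $\sqrt{w_1^2-w_2^2/4}<w_1\sqrt{3}/2$, while every directional width of the small equilateral is at least its altitude $w_1\sqrt{3}/2$, so containment fails by projecting onto the direction perpendicular to the base. Both are correct; yours is arguably cleaner, is insensitive to whether the two triangles share an edge (so it subsumes the shared-edge subcase that the paper routes through \cref{obs:containment_sharing_edges}), and does not need the $w_2<2w_1$ hypothesis, whereas the paper's pigeonhole stays entirely within the elementary distance-comparison toolkit it has already set up.
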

	
	\begin{proof}
		Let $\triangle_1$ and $\triangle_2$ be any two interesting triangles.
		We make a case by case analysis based on the types of interesting triangles of $\triangle_1$ and $\triangle_2$.
		In both statment (a) and (b) we have the following common case: If $\triangle_1$ is the same type of interesting triangle as $\triangle_2$, then $\triangle_1$ and $\triangle_2$ are congruent, hence neither triangle can be drawn inside the other.
		
		We now prove statement~(\ref{prop:containments:small-equilateral_flat-isosceles}).
		We suppose that $\triangle_2$ is either small equilateral or flat isosceles, and show that $\triangle_1$ cannot be drawn inside $\triangle_2$ regardless of the type of $\triangle_1$.
		We have the following cases:
		
		\begin{itemize}
			\item
			
			$\triangle_2$ is small equilateral and $\triangle_1$ is either big equilateral, tall isosceles, or flat isosceles.
			
			In this case $\triangle_1$ cannot be drawn inside $\triangle_2$ by \cref{obs:containments}.
			
			\item
			$\triangle_2$ is flat isosceles.
			\begin{itemize}
				\item
				
				$\triangle_1$ is big equilateral.
				
				If $\triangle_1$ and $\triangle_2$ do not share any edge, then $\triangle_1$ cannot be drawn inside $\triangle_2$ by \cref{obs:containments}.
				If instead $\triangle_1$ and $\triangle_2$ share an edge, then $\triangle_1$ cannot be drawn inside $\triangle_2$ by Property~\ref{obs:containment_sharing_edges:equilateral_in_flat}  of \cref{obs:containment_sharing_edges}.
				
				\item
				
				$\triangle_1$ is tall isosceles.
				
				If $\triangle_1$ and $\triangle_2$ do not share any edge, then $\triangle_1$ cannot be drawn inside $\triangle_2$ by \cref{obs:containments}.
				If instead $\triangle_1$ and $\triangle_2$ share an edge, then $\triangle_1$ cannot be drawn inside $\triangle_2$ by Property~\ref{obs:containment_sharing_edges:tall_in_flat}  of \cref{obs:containment_sharing_edges}.
				
				\item
				
				$\triangle_1$ is small equilateral.
				
				If $\triangle_1$ and $\triangle_2$ share an edge, then $\triangle_1$ cannot be drawn inside $\triangle_2$ by Property~\ref{obs:containment_sharing_edges:equilateral_in_flat} of \cref{obs:containment_sharing_edges}.
				Consider instead that $\triangle_1$ does not share an edge with $\triangle_2$.
				Let $a,b,c$ be the vertices of $\triangle_2$, where $\overline{b c}$ is the side of $\triangle_2$ with length $w_2$; refer to \cref{fig:small_equilateral_in_flat_isosceles}.
				Let $d$ be the orthogonal projection of the vertex $a$ on the segment $\overline{b c}$, let $R_1$ denote the closed region bounded by the triangle with vertices $a,c,d$, and let $R_2$ denote the closed region bounded by the triangle with vertices $a,b,d$.
				Since $\triangle_2$ is flat isosceles, by \cref{pr:existence-flat} we have $w_2<2w_1$, hence the lengths of the segments $\overline{bd}$, $\overline{cd}$, and $\overline{ad}$ are all smaller than $w_1$.
				This implies that the only pair of points inside $R_1$ (resp.\ inside $R_2$) at distance greater than or equal to $w_1$ is $(a,c)$ (resp.\ $(a,b)$).
				Suppose, for a contradiction, that $\triangle_1$ is contained inside $\triangle_2$.
				Then two vertices of $\triangle_1$ are contained in the same triangle $R_i$, say they are contained in $R_1$.
				Since at least one of such two vertices does not coincide with $a$ or $c$ (otherwise $\triangle_1$ and $\triangle_2$ would share an edge), the distance between such two vertices is smaller than $w_1$, while the length of the edge between them is at least $w_1$, a contradiction.
			\end{itemize}
		\end{itemize}
		
		\begin{figure}[ht]
			\centering%
			\includegraphics[scale=1]{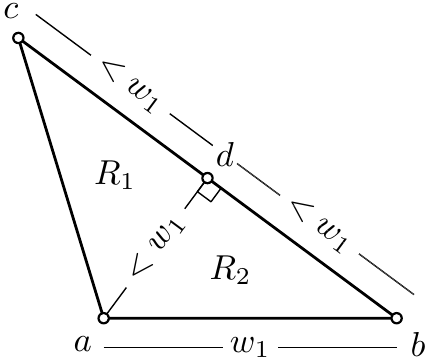}%
			\caption{Illustration for the proof of statement (\ref{prop:containments:small-equilateral_flat-isosceles}) of \cref{prop:containments}.}
			\label{fig:small_equilateral_in_flat_isosceles}
		\end{figure}
		
		We now prove statement~(\ref{prop:containments:big-equilateral_tall-isosceles}).
		We suppose that $\triangle_1$ is either big equilateral or tall isosceles, and prove that $\triangle_1$ cannot be drawn inside $\triangle_2$ regardless of the type of $\triangle_2$.  
		Note that, if $\triangle_2$ is small equilateral or flat isosceles, then $\triangle_1$ cannot be drawn inside $\triangle_2$ by statement (\ref{prop:containments:small-equilateral_flat-isosceles}) of the present lemma.
		We can hence assume that $\triangle_2$ is either a big equilateral triangle or a tall isosceles triangle.
		We have the following cases.
		
		\begin{itemize}    
			\item
			
			$\triangle_1$ and $\triangle_2$ are of different types of interesting triangles and do not share any edge.
			
			In this case $\triangle_1$ cannot be drawn inside $\triangle_2$ by \cref{obs:containments}.
			
			\item
			$\triangle_1$ is tall isosceles, $\triangle_2$ is big equilateral, and $\triangle_1$ and $\triangle_2$ share an edge.
			
			In this case $\triangle_1$ cannot be drawn inside $\triangle_2$ by Property \ref{obs:containment_sharing_edges:tall_in_equilateral} of \cref{obs:containment_sharing_edges}.
			
			\item
			
			$\triangle_1$ is big equilateral, $\triangle_2$ is tall isosceles, and $\triangle_1$ and $\triangle_2$ share an edge.
			
			In this case $\triangle_1$ cannot be drawn inside $\triangle_2$ by \cref{obs:internal_angles}.
			
			
		\end{itemize}
		This concludes the proof of the lemma.
	\end{proof}
	
	\remove{
		As we show next, a flat isosceles can be drawn inside a tall-isosceles, but only for certain values of $r$.
		
		\begin{lemma}\label{lem:flat_in_tall_few_edges}
			If a flat isosceles is drawn inside a tall isosceles, then they have as common side the base of the flat isosceles and $\frac{1+\sqrt{5}}{2} < r < 2$.
		\end{lemma}
		
		\begin{proof}  
			The fact that the flat isosceles is necessarily sharing its base with one of the shorter sides of the tall isosceles, derives from \cref{obs:containments}(\ref{obs:containments:sides}) and the fact that both flat and tall isosceles triangles have a side of length $w_2$. The upper bound $r < 2$ comes from the fact that a flat isosceles has a base of length $w_2$ and two sides of length $w_1$. We now prove the lower bound for $r$. By \cref{lem:flat_in_tall}(\ref{lem:flat_in_tall:flat_base}) we have that $A - 2 AB^2 + B^3 > 0$ with $A > \frac{1}{2}$ and $B>1$, where $A$ is the ratio between the lengths of the shortest side of the flat isosceles and the base of the tall isosceles, and $B$ is the ratio between the longest side and the base of the tall isosceles. Since $A = \frac{w_1}{w_1} = 1$ and $B = \frac{w_2}{w_1} = r$, we obtain the equation
			\begin{align}\label{eq:flat_in_tall}
				r^3-2r^2+1 &> \nonumber\\
				\left( r-1 \right)\left( r^2-r-1 \right) &> 0.
			\end{align}
			Since $r>1$, \cref{eq:flat_in_tall} is satisfied when $r^2-r-1>0$. Hence we obtain that
			\[
			r>\frac{1+\sqrt{5}}{2} \simeq 1.618.
			\]
		\end{proof}
	}
	
	
	
	
	Let $T$ denote the decomposition tree of $G$. Suppose that a planar straight-line realization $\Gamma$ of $G$ exists. The second central ingredient of our algorithm is the next lemma.
	
	\begin{lemma}\label{lem:leaf-triangle}
		Let $\triangle_1$ and $\triangle_2$ be two triangles realizing two different $3$-cycles of $G$ in $\Gamma$. Suppose that $\triangle_1$ is drawn inside $\triangle_2$. Then $\triangle_1$ is a leaf triangle of $T$ that shares a side with $\triangle_2$.
	\end{lemma}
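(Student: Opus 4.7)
The plan is to apply \cref{prop:containments} first and then bootstrap an induction on $|V(G)|$. Since $\triangle_1$ is drawn inside $\triangle_2$, statement~(b) of \cref{prop:containments} forces $\triangle_2$ to be big equilateral or tall isosceles, while statement~(a) forces $\triangle_1$ to be small equilateral or flat isosceles and, crucially, says that no triangle of $G$ can be drawn inside $\triangle_1$. These two type constraints drive the rest of the argument.

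For the shared-side part I would split on the type of $\triangle_1$. If $\triangle_1$ is flat isosceles, both triangles have longest side of length $w_2$ and \cref{obs:containments} immediately produces a common side of length $w_2$. If $\triangle_1$ is small equilateral, the shared side (if any) would have length $w_1$, and I would argue by contradiction: assume at most one vertex of $\triangle_1$ lies on $\triangle_2$, so at least two lie strictly in its interior. Biconnectivity of $G$ together with planarity forces some such interior vertex $u$ of $\triangle_1$ to have degree at least $3$ in $G$, with every neighbour of $u$ lying in the closed region of $\triangle_2$. Any $3$-cycle at $u$ through an extra edge is therefore itself a triangle drawn inside $\triangle_2$; applying the inductive form of the lemma to it forces $u$ to be its degree-$2$ leaf vertex, contradicting $\deg_G(u)\ge 3$.

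The leaf claim follows in the same style. Writing $\triangle_1=(a,b,c)$ with $a$ strictly inside $\triangle_2$ and $(b,c)$ shared with $\triangle_2$, a hypothetical child $\triangle_3$ of $\triangle_1$ in $T$ would share one of the interior sides $(a,b)$ or $(a,c)$. Since no triangle lies inside $\triangle_1$, the third vertex of $\triangle_3$ sits on the far side of that edge; planarity then confines it to the closed region of $\triangle_2$, so $\triangle_3$ itself is drawn inside $\triangle_2$. The inductive form of the lemma makes $\triangle_3$ share a side with $\triangle_2$ and be a leaf whose degree-$2$ vertex can only be $a$, contradicting the fact that $a$ already carries the three edges $(a,b)$, $(a,c)$ and the edge to the third vertex of $\triangle_3$.

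The main obstacle I expect is ensuring that every appeal to the inductive hypothesis is on a strictly smaller instance. I would induct on $|V(G)|$ and, in the inductive step, peel off a leaf $\triangle^{*}$ of $T$ distinct from $\triangle_1$ and $\triangle_2$; a short check shows that the degree-$2$ vertex of such a $\triangle^{*}$ lies outside $V(\triangle_1)\cup V(\triangle_2)$, so removing it leaves $\Gamma$ a planar straight-line realization of a strictly smaller $2$-tree in which the containment $\triangle_1\subset\triangle_2$ is preserved. The base case of two triangles sharing a side is immediate. A secondary subtlety is that ``leaf of $T$'' denotes tree degree one, so I would verify that any potential extra $T$-neighbour of $\triangle_1$ appears as a child of $\triangle_1$ across one of the interior sides $(a,b)$ or $(a,c)$ and not merely as a sibling of $\triangle_1$ sharing the outer side $(b,c)$ with $\triangle_2$, since the latter contributes to the $T$-degree of the parent of $\triangle_1$ rather than of $\triangle_1$ itself.
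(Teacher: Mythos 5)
Your opening move via \cref{prop:containments} matches the paper's, and two of your ingredients are sound: when $\triangle_1$ is flat isosceles, \cref{obs:containments} does immediately force a shared side of length $w_2$, and your observation that a $3$-cycle of a $2$-tree on at least four vertices contains at most one degree-$2$ vertex correctly produces the interior vertex $u$ with $\deg_G(u)\ge 3$. The gap is that both remaining steps --- ruling out a small equilateral $\triangle_1$ with no shared side, and ruling out a child of $\triangle_1$ in $T$ --- are closed by invoking ``the inductive form of the lemma'' on an auxiliary triangle $\triangle_3$ that is drawn inside $\triangle_2$ \emph{in the same graph $G$}. That is an appeal to the statement being proved, not to an inductive hypothesis, and the leaf-peeling device does not repair it. First, a leaf $\triangle^*$ of $T$ distinct from $\triangle_1$ and $\triangle_2$ need not exist: a maximal outerpath has exactly two leaf triangles, so if $T$ is a path with $\triangle_1$ and $\triangle_2$ at its two ends there is nothing to peel --- and non-adjacent $\triangle_1,\triangle_2$ with $\triangle_1$ inside $\triangle_2$ is precisely a configuration the lemma must exclude, not a degenerate case you may set aside. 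Second, even when peeling is possible, the pair on which you need the hypothesis is $(\triangle_3,\triangle_2)$, and $\triangle_3$ still lives in the current graph; if you instead shrink to a proper sub-$2$-tree containing $\triangle_3$ and $\triangle_2$, vertex degrees drop, so the conclusion ``the leaf $\triangle_3$ has a degree-$2$ private vertex'' no longer contradicts $\deg_G(u)\ge 3$, which is a degree computed in $G$. Third, the hypothesis applied after peeling yields that $\triangle_1$ is a leaf of the decomposition tree of $G-w^*$, which fails to transfer to $T$ exactly when the peeled triangle is a child of $\triangle_1$ --- the case your third paragraph is meant to handle, again by the circular appeal.

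What is missing is a non-inductive argument that nothing can be attached, inside $\triangle_2$, to a triangle of the first layer. The paper supplies exactly this: it classifies the triangles that share a side with $\triangle_2$ and lie inside it (flat isosceles when $\triangle_2$ is big equilateral; small equilateral or flat isosceles when $\triangle_2$ is tall isosceles), and then shows by explicit angle computations (angle sums at the shared vertex exceeding $60^\circ$ or $90^\circ$, plus the $\arccos(1/4)$ bound in the subcase $v=a$) that a further interesting triangle glued to a free side of such a first-layer triangle always crosses $\triangle_2$. Some quantitative geometric step of this kind is unavoidable; your degree-counting can streamline the bookkeeping only after that base fact is established independently.
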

	
	\begin{proof}
		By statement~(\ref{prop:containments:small-equilateral_flat-isosceles}) of \cref{prop:containments}, we have that $\triangle_2$ is either big equilateral or tall isosceles. We discuss the two cases separately.
		
		Suppose first that $\triangle_2$ is big equilateral.
		Consider any interesting triangle $\triangle_3$ that is drawn inside $\triangle_2$ in $\Gamma$ and shares a side with $\triangle_2$.
		Note that $\triangle_3$ is flat isosceles; namely, $\triangle_3$ cannot be tall isosceles or big equilateral by statement~(\ref{prop:containments:small-equilateral_flat-isosceles}) of \cref{prop:containments}, and it cannot be small equilateral since the sides of $\triangle_2$ have length $w_2$.
		Now suppose, for the sake of contradiction, that an interesting triangle $\triangle_4$ shares a side of length $w_1$ with $\triangle_3$.
		Let $v$ be the common vertex between $\triangle_2$, $\triangle_3$, and $\triangle_4$.
		If $\triangle_4$ is small equilateral or tall isosceles, as in \cref{fig:leaf-triangle:1} with $v=a$, then the sum of the angles at $v$ of $\triangle_3$ and $\triangle_4$ exceeds $60^{\circ}$.
		Hence $\triangle_4$ crosses $\triangle_2$, a contradiction.
		Further, if $\triangle_4$ is flat isosceles, as in \cref{fig:leaf-triangle:1} with $v=b$, then $\triangle_4$ crosses $\triangle_2$ by \cref{obs:containments}.
		It follows that the only interesting triangles inside  $\triangle_2$ are flat isosceles triangles that are leaves of $T$ and that share a side with $\triangle_2$.
		
		Suppose next that $\triangle_2$ is tall isosceles. The proof follows similar arguments as for the case in which $\triangle_2$ is big equilateral, although, in this case, we distinguish two subcases.
		Namely, an interesting triangle $\triangle_3$ that is drawn inside $\triangle_2$ and that shares a side with $\triangle_2$ can be either small equilateral (and then it shares a side of length $w_1$ with $\triangle_2$) as in \cref{fig:leaf-triangle:2}, or flat isosceles (and then it shares a side of length $w_2$ with $\triangle_2$) as in \cref{fig:leaf-triangle:3}.
		
		\begin{figure}[ht]
			\centering
			\subcaptionbox{\label{fig:leaf-triangle:1}}
			{\includegraphics[scale=.85]{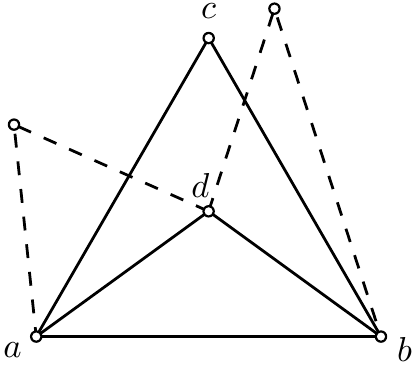}}
			\qquad
			\subcaptionbox{\label{fig:leaf-triangle:2}}
			{\includegraphics[scale=.85]{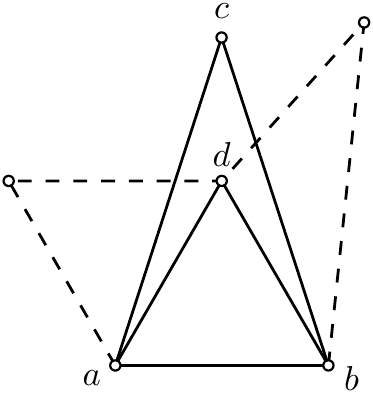}}
			
			\caption
			{
				Two illustrations for the proof of \cref{lem:leaf-triangle}. (\subref{fig:leaf-triangle:1}) The case in which $\triangle_2$ is a big equilateral triangle and $\triangle_3$ is a flat isosceles triangle. (\subref{fig:leaf-triangle:2}) The case in which $\triangle_2$ is a tall isosceles triangle and $\triangle_3$ is a small equilateral triangle.
				In both figures, the possible cases for $\triangle_4$ are shown with dashed lines.
			}
			\label{fig:leaf-triangle}
		\end{figure}
		
		Consider first the subcase in which $\triangle_3$ is small equilateral.
		Suppose, for the sake of contradiction, that an interesting triangle $\triangle_4$ shares a side of length $w_1$ with $\triangle_3$.
		Let $v$ be the common vertex between $\triangle_2$, $\triangle_3$, and $\triangle_4$.
		If $\triangle_4$ is small equilateral or tall isosceles, as in \cref{fig:leaf-triangle:2} with $v=a$, then the sum of the angles at $v$ of $\triangle_3$ and $\triangle_4$ exceeds $90^{\circ}$.
		Hence $\triangle_4$ crosses $\triangle_2$, a contradiction.
		On the other hand, if $\triangle_4$ is flat isosceles, as in \cref{fig:leaf-triangle:2} with $v=b$, then $\triangle_4$ crosses $\triangle_2$ by \cref{obs:containments}.
		
		Consider next the subcase in which $\triangle_3$ is flat isosceles.
		Suppose, for the sake of contradiction, that an interesting triangle $\triangle_4$ shares a side of length $w_1$ with $\triangle_3$.
		Let $v$ be the common vertex between $\triangle_2$, $\triangle_3$, and $\triangle_4$.
		Let $a$ and $b$ be the vertices incident to the base of $\triangle_2$, and let $c$ be the third vertex of $\triangle_2$; refer to \cref{fig:leaf-triangle:3}.
		The following cases arise depending of whether $v$ is equal to $c$ or $a$:
		\begin{itemize}
			\item
			
			Case 1: $v=c$. 
			If $\triangle_4$ is small equilateral or tall isosceles, then the angle at $v$ of $\triangle_4$ is larger than or equal to $60^{\circ}$, which is larger than the angle at $v$ of $\triangle_2$.
			Hence $\triangle_4$ crosses $\triangle_2$.
			If $\triangle_4$ is instead flat isosceles, then $\triangle_4$ crosses $\triangle_2$ by \cref{obs:containments}.
			In both case we have a contradiction.
			
			\item
			
			\begin{figure}[ht]
				\centering
				\subcaptionbox{\label{fig:leaf-triangle:3}}
				{\includegraphics[scale=.85]{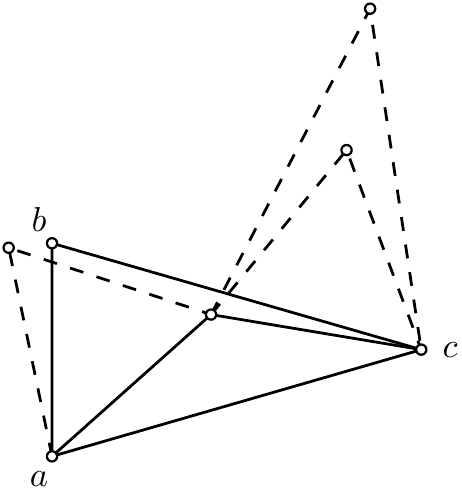}}
				\qquad
				\subcaptionbox{\label{fig:leaf-triangle:4}}
				{\includegraphics[scale=.85]{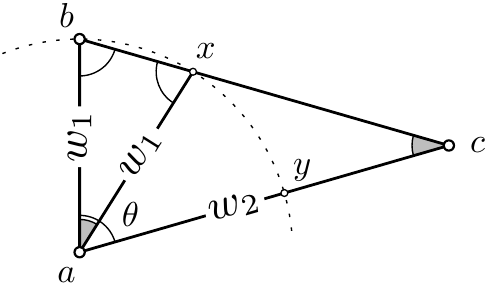}}
				
				\caption
				{
					Two illustrations for the proof of \cref{lem:leaf-triangle}, for the case in which $\triangle_2$ is a tall isosceles triangle and $\triangle_3$ is a flat isosceles triangle. (\subref{fig:leaf-triangle:3}) The case in which $v=c$ and $\triangle_4$ (in dashed lines) is flat isosceles or tall isosceles, and the case in which $v=a$ and $\triangle_4$ (in dashed lines) is small equilateral.
					(\subref{fig:leaf-triangle:4}) A focus on the case in which $v=a$ and $\triangle_4$ is small equilateral.
				}
				\label{fig:leaf-triangle_2}
			\end{figure}
			
			Case 2: $v=a$.
			If $\triangle_4$ is flat isosceles or tall isosceles, then $\triangle_4$ crosses $\triangle_2$ by \cref{obs:containments}, a contradiction.
			Consider that $\triangle_4$ is instead small equilateral, as in \cref{fig:leaf-triangle:3}.
			Let $\angle z$ denote the interior angle of $\triangle_2$ at vertex $z$, for $z \in \{a,b,c\}$.
			Let $C$ be the circle with center on $a$ and radius $w_1$.
			The circle cuts the interior of the segment $\overline{a c}$ in a point $y$ and it cuts the interior of the segment $\overline{b c}$ in a point $x$, since the length of $\overline{a c}$ is $w_2>w_1$ and since $\angle b<90^{\circ}$; see \cref{fig:leaf-triangle:4}.
			If $\triangle_4$ is drawn inside $\triangle_2$, the vertices of $\triangle_4$ that are different from $a$ lie on the circular arc of $C$ that connects $x$ and $y$ through the interior of $\triangle_2$.
			Furthermore, since $\triangle_4$ is equilateral, the internal angle at $a$ of the triangle with vertices $a,c,x$ should be greater than $60^{\circ}$.
			Let $\theta$ denote such angle.
			We show that $\theta$ is actually less than $60^{\circ}$, which is a contradiction.
			Note that $\triangle_2$ and the triangle with vertices $a,b,x$ are similar to each other, hence $\theta = \angle a - \angle c$.
			Since $\angle c + 2\angle a = 180^{\circ}$ we have that
			\begin{equation}
				\label{eq:theta}
				\theta = 3\angle a - 180^{\circ}.
			\end{equation}
			On the other hand, we have that $\cos(\angle a) = \frac{1}{2} \frac{w_1}{w_2}$ and, since $\triangle_1$ exists only if $\frac{w_1}{w_2}>\frac{1}{2}$, then $\cos(\angle a) > \frac{1}{4}$. Hence
			\begin{equation}
				\label{eq:beta}
				\angle a < \arccos(\frac{1}{4}) \simeq 75.52^{\circ},
			\end{equation}
			and from \cref{eq:theta,eq:beta} we obtain that $\theta = 3\angle a-180^{\circ} \simeq 46.56^{\text{o}} < 60^{\text{o}}$.
		\end{itemize}
		
		It follows that the only triangles inside  $\triangle_2$ are flat isosceles triangles and small equilateral triangles that are leaves of $T$ and that share a side with $\triangle_2$.
		
		This concludes the proof of the lemma.   \end{proof}

	\subsection{Conflicts between leaf triangles}
	\label{sec:conflicts}
	
	Hereafter, we assume that the decomposition tree $T$ of $G$ is rooted at any $3$-cycle whose sum of the edge lengths is maximum.
	Hence, the triangle realizing such a $3$-cycle cannot be drawn inside any other triangle in a planar straight-line realization of $G$.
	
	The \emph{framework} of $G$ is the subgraph $G_F \subseteq G$ obtained as follows: For each leaf triangle $\triangle_i$ that can be drawn inside a triangle $\triangle_j$, which is either its parent or one of its siblings, we remove from $G$ the vertex $v$ that $\triangle_i$ does not share with $\triangle_j$, along with the two edges incident to $v$.
	Note that $G_F$ is a $2$-tree.
	Further, by \cref{lem:leaf-triangle}, in any planar straight-line realization $\Gamma$ of $G$, the triangles we removed from $G$ in order to define $G_F$ are the only triangles that can be drawn inside other triangles representing $3$-cycles of $G$.
	It follows that the restriction of $\Gamma$ to $G_F$ is an outerplanar drawing.
	Hence, we start by testing if $G_F$ is outerplanar, which can be done in linear time~\cite{d-iroga-07,m-laarogmog-79,w-rolt-87}; in the negative case, we reject the instance.
	In the positive case, we test whether $G_F$ admits a planar straight-line realization respecting its unique outerplane embedding~$\cal E$.
	Both the computation of~$\cal E$ and the test can be performed in linear time using the algorithms in~\cite{d-iroga-07,m-laarogmog-79,w-rolt-87} and \cref{thm:straight-line_realization_planarity}, respectively.
	If the test is negative we reject the instance, otherwise we have a planar straight-line realization $\Gamma_F$ of $G_F$ with embedding~$\cal E$.
	
	Let $L_{\triangle}$ denote the set of triangles that were removed from $G$ to obtain $G_F$.
	The third central ingredient of our algorithm is a characterization of the conditions under which $\Gamma_F$ can be extended to a planar straight-line realization of $G$, by drawing in $\Gamma_F$ all the triangles of $L_{\triangle}$.
	
	Let $\triangle = (a,b,c)$ be a triangle of $L_{\triangle}$ such that $e=(a,b)$ is the unique edge shared by $\triangle$ and $G_F$.
	Suppose we want to add a drawing of $\triangle$ to $\Gamma_F$.
	If $e$ is incident to two internal faces of $\Gamma_F$, then we call \emph{internal embedding} to each of the two possible drawings of $\triangle$, regardless of whether $a,b,c$ appear in clockwise or counter-clockwise order when traversing $\triangle$.
	Consider that $e$ is instead incident to one internal face and to the outer face of $\Gamma_F$.
	Suppose that $a$ precedes $b$ in a counter-clockwise traversal of the outer face of $\Gamma_F$.
	If $\triangle$ is drawn such that $a,b,c$ appear in clockwise order when traversing $\triangle$, then we call such drawing of $\triangle$ an \emph{outer embedding}.
	Otherwise, we call such drawing of $\triangle$ an \emph{internal embedding}.
	
	
	
	
	
	
	Consider a straight-line realization $\Gamma$ of $G$ that \emph{extends} $\Gamma_F$; that is, whose restriction to $G_F$ is $\Gamma_F$.
	Suppose that $\Gamma$ is not planar.
	We now classify the possible crossings involving triangles in $L_\triangle$.
	
	First, suppose there is a triangle $\triangle \in L_\triangle$ whose drawing in $\Gamma$ crosses an edge of $\Gamma_F$.
	If the drawing of $\triangle$ is an outer embedding, then we say it induces a \emph{framework conflict}.
	If the drawing of $\triangle$ is instead an internal embedding, then we say it induces an \emph{overlapping conflict}.
	Second, consider two triangles $\triangle_i$ and $\triangle_j$ of $L_{\triangle}$.
	If the drawings of $\triangle_i$ and $\triangle_j$ in $\Gamma$ are both outer embeddings, neither of them induces a framework conflict, and they cross each other, then we say that such drawings induce an \emph{external conflict}.
	Similarly, if the drawings of $\triangle_i$ and $\triangle_j$ in $\Gamma$ are both internal embeddings, neither of them induces an overlapping conflict, and they cross each other, then we say that such drawings induce an \emph{internal conflict}.
	In both cases, we say that the drawings of $\triangle_i$ and $\triangle_j$ \emph{conflict with each other}.

	We have the following.
	
	\begin{lemma}\label{lem:internal_conflicts}
		Let $\triangle_i$ and $\triangle_j$ be two triangles in $L_{\triangle}$, and $\Gamma$ be a straight-line realization of $G$ that extends $\Gamma_F$.
		Suppose that there exists a triangle $\triangle \in \Gamma_F$ such that both $\triangle_i$ and $\triangle_j$ are drawn inside $\triangle$ in $\Gamma$.
		The drawings of $\triangle_i$ and $\triangle_j$ induce an internal conflict if at least one of the following statements is true:
		\begin{enumerate}[(a)]
			
			\item \label{lem:internal_conflicts:same_edge} $\triangle_i$ and $\triangle_j$ share an edge.
			
			\item \label{lem:internal_conflicts:tall_isosceles} $\sqrt{3} < r \leq 2 \cos(15^{\circ})\simeq 1.93$ and $\triangle$ is a tall isosceles triangle.
			
			\item \label{lem:internal_conflicts:all_conflicts} $1 < r \leq \sqrt{3}$.
			
		\end{enumerate}
	\end{lemma}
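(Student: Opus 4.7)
My plan is to handle the three statements separately: statement (a) reduces to a simple embedding argument, while (b) and (c) follow from comparing the sum of the angles that $\triangle_i$ and $\triangle_j$ make at a shared vertex with the angle of $\triangle$ at that vertex. For (a), I would first observe that two leaves of $T$ sharing an edge must be siblings with a common parent $\triangle_p$, and their shared edge $e$ is the edge each has in common with $\triangle_p$. By \cref{lem:leaf-triangle}, each of $\triangle_i, \triangle_j$ shares a side with $\triangle$; since their only edge in $G_F$ is $e$, this shared side must be $e$ itself. Then both third vertices lie strictly inside $\triangle$ on the side of $e$ containing $\triangle$'s interior, and two triangles sharing an edge with both apices on the same side of that edge overlap in a region adjacent to the edge. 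As both drawings are internal embeddings inside $\triangle \in \Gamma_F$, neither crosses any edge of $\Gamma_F$, so the crossing is an internal conflict.

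For (b) and (c), I would start by noting that \cref{prop:containments}(a) forces $\triangle$ to be either big equilateral or tall isosceles, and that by \cref{lem:leaf-triangle} each of $\triangle_i, \triangle_j$ shares a side with $\triangle$. If the two leaves share the same side of $\triangle$, case (a) already applies; otherwise they share distinct sides of $\triangle$ and meet at a common vertex $v$, and I would argue that the sum of their angles at $v$ exceeds the angle of $\triangle$ at $v$, forcing overlap. For (c) with $\triangle$ big equilateral, the only admissible leaves (by the proof of \cref{lem:leaf-triangle}) are flat isosceles sharing their base with a side of $\triangle$, and \cref{lem:flat_in_equilateral} with $\ell_1=w_2$, $\ell_2=w_1$ directly yields the overlap, since non-overlap would require $r>\sqrt 3$, contradicting $r \le \sqrt 3$.

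For $\triangle$ tall isosceles (covering case (b) and the remaining subcase of (c)), the admissible leaves by \cref{lem:leaf-triangle} are either flat isosceles sharing their base with one of the two $w_2$-sides of $\triangle$, or small equilateral sharing one of their sides with the unique $w_1$-base of $\triangle$. Two small equilaterals reduce to case (a), leaving two sub-cases: (i) two flat isosceles meeting at the apex $v$ of $\triangle$, and (ii) one flat isosceles and one small equilateral meeting at a base vertex $v$. In (i), the apex angle is $\alpha_v = \arccos(1 - 1/(2r^2))$ and each flat isosceles contributes the base angle $\beta = \arccos(r/2)$, so non-overlap demands $2\beta \le \alpha_v$, which reduces via the identity $\sqrt{2+\sqrt 3} = 2\cos 15^\circ$ to $r > 2\cos 15^\circ$. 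In (ii), the base angle of $\triangle$ is $\gamma = \arccos(1/(2r))$, and the non-overlap condition $60^\circ + \beta \le \gamma$ yields, after squaring and simplifying, the same bound $r > 2\cos 15^\circ$. Since $r \le 2\cos 15^\circ$ in both (b) and (c), non-overlap fails and an overlap ensues in every sub-case. The main obstacle will be verifying that sub-cases (i) and (ii) share this precise threshold; the identity $\sqrt{2+\sqrt 3} = 2\cos 15^\circ$ is what makes these two a priori distinct bounds coincide.
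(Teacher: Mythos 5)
Your proposal is correct and follows essentially the same route as the paper: statement (a) via the observation that the three triangles share one edge together with \cref{prop:containments}(a) to exclude mutual containment, and statements (b) and (c) via the angular condition at the shared vertex of $\triangle$, with the thresholds $\sqrt{3}$ (from \cref{lem:flat_in_equilateral}) and $2\cos 15^\circ$. In fact your write-up makes explicit the angle computations that the paper dismisses as ``a simple geometric argument,'' and your verification that sub-cases (i) and (ii) share the threshold $r=\sqrt{2+\sqrt 3}=2\cos 15^\circ$ is exactly the content of the paper's case distinction.
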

	
	\begin{proof}
		By statement~(\ref{prop:containments:small-equilateral_flat-isosceles}) of \cref{prop:containments} and since each of $\triangle_i$ and $\triangle_j$ is drawn inside $\triangle$, the triangle $\triangle$ is either tall isosceles or big equilateral.
		Furthermore, by statement~(\ref{prop:containments:big-equilateral_tall-isosceles}) of \cref{prop:containments}, each of $\triangle_i$ and $\triangle_j$ is either small equilateral or flat isosceles.
		By \cref{lem:leaf-triangle}, the triangles $\triangle_i$ and $\triangle$ share an edge and so do $\triangle_j$ and $\triangle$.
		
		Consider first the case in which $\triangle_i$ shares an edge with $\triangle_j$, as described in statement (\ref{lem:internal_conflicts:same_edge}).
		Note that, since $G$ is a $2$-tree, the triangles $\triangle_i$, $\triangle_j$, and $\triangle$ actually share the same edge.
		By statement~(\ref{prop:containments:small-equilateral_flat-isosceles}) of \cref{prop:containments}, $\triangle_i$ and $\triangle_j$ cannot be drawn inside each other regardless of their types.
		Since $\triangle_i$ and $\triangle_j$ are drawn inside the same triangle and they cannot be drawn inside each other, their drawings induce an internal conflict inside $\triangle$.
		This proves statement (\ref{lem:internal_conflicts:same_edge}).
		
		Consider now the case in which $\triangle_i$ and $\triangle_j$ do not share an edge.
		The triangles $\triangle_i$ and $\triangle_j$ induce an internal conflict inside $\triangle$ depending on the types of $\triangle_i$, $\triangle_j$, $\triangle$, and the value of $r$.
		Note that $r$ is necessarily less than $2$ since, otherwise, by \cref{pr:existence-flat} both $\triangle_i$ and $\triangle_j$ would be small equilateral triangles and they could not share each a different side of $\triangle$, since $\triangle$ has at most one side of length $w_1$.  
		By a simple geometric argument we can deduce that two triangles are drawn inside $\triangle$ without inducing internal conflicts in the following cases (refer to \cref{fig:internal_conflicts}):
		
		\begin{enumerate}[(i)]
			
			\item If $2 \cos(15^{\circ}) \simeq 1.93 < r < 2$ and $\triangle$ is tall isosceles, then two flat isosceles triangles and one small equilateral triangle can be drawn inside $\triangle$ without inducing internal conflicts.
			
			\item If $\sqrt{3} \simeq 1.73 < r < 2$ and $\triangle$ is big equilateral, then three flat isosceles can be drawn inside $\triangle$ without inducing internal conflicts.
			
		\end{enumerate}
		
		The limit values $r=2 \cos(15^{\circ})$ and $r=\sqrt 3$ split the interval $1 < r <2$ into the following three intervals.
		First, if $2 \cos(15^{\circ}) < r < 2$, then the cases (i) and (ii) are both possible.
		Hence, $\triangle_i$ and $\triangle_j$ can be drawn inside $\triangle$ without inducing internal conflicts, regardless of their types.
		Second, if $\sqrt{3} < r \leq 2 \cos(15^{\circ})$ then the case (i) is no longer possible.
		Hence, as described in statement (\ref{lem:internal_conflicts:tall_isosceles}), $\triangle_i$ and $\triangle_j$ induce an internal conflict inside $\triangle$ if $\triangle$ is a tall isosceles triangle.
		Finally, if $1 < r \leq \sqrt{3}$, then none of the cases is possible.
		Hence, as described in statement (\ref{lem:internal_conflicts:all_conflicts}), $\triangle_i$ and $\triangle_j$ induce an internal conflict inside $\triangle$ regardless of their types.
	\end{proof}
	
	\begin{figure}[ht]
		\centering%
		\includegraphics[page=1]{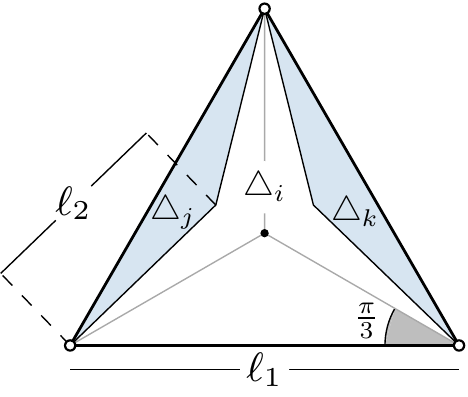}
		\qquad%
		\includegraphics[page=2]{two_distances_type_four_conflict}
		\qquad%
		\includegraphics[page=3]{two_distances_type_four_conflict}
		
		\caption{Illustration for the proof of \cref{lem:internal_conflicts}.
			The vertices that $\triangle_i$ and $\triangle_j$ do not share with $\triangle$ lie on the circumcenter of $\triangle$. On the left, the case in which $\triangle$ is tall isosceles (the segments between the circumcenter of $\triangle$ and its vertices all have the same length $l$). On the right, the case in which $\triangle$ is big equilateral.}
		\label{fig:internal_conflicts}
	\end{figure}
	
	From \cref{lem:internal_conflicts} we obtain the following necessary condition for the existence of a  planar straight-line realization of $G$ extending $\Gamma_F$.
	
	\begin{lemma}\label{lem:consistent_leafs}
		If there exists a planar straight-line realization of $G$, then no subset of three triangles of $L_{\triangle}$ share a side.
	\end{lemma}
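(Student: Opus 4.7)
The plan is to proceed by contradiction: assume $G$ has a planar straight-line realization $\Gamma$ and that three triangles $\triangle_1, \triangle_2, \triangle_3 \in L_{\triangle}$ share a common edge $e=(u,w)$. I will derive a contradiction by analyzing how these three triangles are drawn around $e$ in $\Gamma$, combining the pigeonhole principle with the containment restrictions from \cref{prop:containments}.

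The first preparatory step is to observe that every triangle in $L_{\triangle}$ is either small equilateral or flat isosceles. Indeed, by definition $L_{\triangle}$ contains only triangles that can be drawn inside some other interesting triangle (a parent or sibling), and statement~(\ref{prop:containments:big-equilateral_tall-isosceles}) of \cref{prop:containments} forbids big equilateral and tall isosceles triangles from being drawn inside any interesting triangle. Thus each of $\triangle_1, \triangle_2, \triangle_3$ is either small equilateral or flat isosceles.

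Next, in the realization $\Gamma$, each $\triangle_i$ is a nondegenerate triangle with $e$ as a side, so its third vertex $v_i$ lies in one of the two open half-planes determined by the line through $e$. By pigeonhole, at least two of $v_1, v_2, v_3$ lie in the same half-plane; say, without loss of generality, $v_1$ and $v_2$ do. I then invoke the following elementary geometric dichotomy: if two triangles share an edge $e$ and their apices lie on the same side of the line through $e$, then comparing the two interior angles at $u$ and the two interior angles at $w$ shows that either one apex lies strictly inside the other triangle (so one triangle is drawn inside the other, in the sense of the paper's definition), or exactly one pair of non-shared sides crosses. Since $\Gamma$ is planar, the crossing option is excluded, so one of $\triangle_1, \triangle_2$ is drawn inside the other; say $\triangle_1$ is drawn inside $\triangle_2$.

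Finally, since $\triangle_2 \in L_{\triangle}$, we established above that $\triangle_2$ is small equilateral or flat isosceles. However, statement~(\ref{prop:containments:small-equilateral_flat-isosceles}) of \cref{prop:containments} asserts that no interesting triangle can be drawn inside a small equilateral or flat isosceles triangle. This contradicts $\triangle_1$ being drawn inside $\triangle_2$ and completes the proof. The only nontrivial step is the ``inside-or-cross'' dichotomy for two triangles sharing an edge and lying on the same side of it, but this is a standard short angle-comparison argument at the two endpoints of $e$, so I do not foresee a real obstacle.
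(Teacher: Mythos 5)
Your proposal is correct and follows essentially the same route as the paper's proof: place two of the three triangles on the same side of the shared edge by pigeonhole, note that planarity forces one to be drawn inside the other, and derive a contradiction from statement~(a) of \cref{prop:containments}. The only difference is that you spell out two steps the paper leaves implicit (that triangles of $L_{\triangle}$ are small equilateral or flat isosceles, via statement~(b), and the inside-or-cross dichotomy), which is a reasonable amount of added detail rather than a new idea.
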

	
	\begin{proof}
		Suppose, for a contradiction, that there exists a planar straight-line realization $\Gamma$ of $G$ and that three triangles of $L_{\triangle}$ share an edge $e$. Consider two of such triangles that are on the same side of $e$ in $\Gamma$ and denote them by $\triangle_i$ and $\triangle_j$. If $\triangle_i$ and $\triangle_j$ do not cross each other in $\Gamma$, then they are contained one inside the other. However, this is not possible by statement~(\ref{prop:containments:small-equilateral_flat-isosceles}) of \cref{prop:containments}.
	\end{proof}
	
	If $L_{\triangle}$ satisfies the condition stated in \cref{lem:consistent_leafs}, we say that $L_{\triangle}$ is \emph{consistent}.
	
	\begin{lemma}\label{lem:internal_conflict_bound}
		If $L_{\triangle}$ is consistent, then there are $O(1)$ pairs of triangles of $L_{\triangle}$ whose drawings induce an internal conflict inside a triangle $\triangle\in\Gamma_F$.
	\end{lemma}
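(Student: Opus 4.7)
The plan is to bound the number of triangles of $L_{\triangle}$ that can possibly be drawn inside a single triangle $\triangle \in \Gamma_F$, and then observe that the number of conflicting pairs is at most quadratic in this bound. Since an internal conflict, by definition, involves two triangles of $L_{\triangle}$ that are both drawn inside some common region of $\Gamma_F$, it suffices to focus on the triangles of $L_{\triangle}$ that have a chance to be placed inside $\triangle$.

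First, I would apply \cref{lem:leaf-triangle} to the hypothetical drawing of any $\triangle' \in L_{\triangle}$ inside $\triangle$: since $\triangle'$ is a leaf of the decomposition tree that is drawn inside $\triangle$, the two triangles must share a side. Hence every candidate $\triangle' \in L_{\triangle}$ drawn inside $\triangle$ shares one of the three sides of $\triangle$. Next, I would use the consistency hypothesis: by \cref{lem:consistent_leafs} (and the very definition of consistency), no three triangles of $L_{\triangle}$ share a single edge, so for each of the three sides of $\triangle$, at most two triangles of $L_{\triangle}$ can share it.

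Combining these two observations, at most $3 \cdot 2 = 6$ triangles of $L_{\triangle}$ can possibly be drawn inside $\triangle$ in any straight-line realization that extends $\Gamma_F$. The number of pairs that could therefore induce an internal conflict inside $\triangle$ is at most $\binom{6}{2} = 15$, which is $O(1)$.

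The argument is essentially combinatorial and should be straightforward; there is no real obstacle, only the need to cite the correct prior results. The only subtlety is to be precise that ``drawn inside $\triangle$'' is a necessary condition for the two triangles to participate in an internal conflict inside $\triangle$, and that \cref{lem:leaf-triangle} together with the consistency condition jointly cap the number of such candidates by six.
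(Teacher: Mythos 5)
Your proof is correct and follows essentially the same route as the paper: it invokes \cref{lem:leaf-triangle} to restrict candidates to triangles sharing a side with $\triangle$, uses consistency (via \cref{lem:consistent_leafs}) to cap these at two per side, and concludes with the same $\binom{6}{2}=15$ bound.
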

	
	\begin{proof}
		By \cref{lem:leaf-triangle}, all the triangles of $L_{\triangle}$ that can be drawn inside $\triangle$ (including those inducing an internal conflict) share an edge with $\triangle$.
		By \cref{lem:consistent_leafs} and the assumed consistency of $L_{\triangle}$, there are at most $6$ of such triangles.
		Hence, the drawings of at most $\binom{6}{2} = 15$ pairs of triangles can induce an internal conflict inside $\triangle$.
	\end{proof}

	\subsection{The algorithm}
	
	We are now ready to describe the algorithm to solve the FEPR problem in $O(n)$ time when the edges of $G$ can only have one of two lengths.
	The algorithm consists of two stages.
	In the first stage, we compute the framework $G_F$ of $G$, and test whether $G_F$ admits a planar straight-line realization $\Gamma_F$.
	Such a realization is constructed in the positive case, and the input instance is rejected otherwise.
	In the second stage, we test whether $\Gamma_F$ can be extended to a planar straight-line realization $\Gamma$ of $G$, by drawing the triangles of $L_{\triangle}$ without inducing conflicts.
	In the positive case, we construct and report $\Gamma$ as the desired realization of $G$.
	As before, the input instance is rejected in the negative case.
	We describe the algorithm in detail next.
	
	\subsubsection{Computing a planar straight-line realization $\Gamma_F$ of the framework of $G$.}
	
	The first stage of the algorithm is formed by the following three steps.
	
	\paragraph*{Step A. Computing the framework $G_F$ of $G$ and the set $L_\triangle$ of leaf triangles.}
	
	To compute the framework of $G$ we first choose, among all the triangles of $G$, a triangle $\triangle_O$ with maximum sum of the length of its sides.
	Note that $\triangle_O$ cannot be drawn inside any other triangle of $G$, hence it does not belong to $L_\triangle$.
	We then compute the decomposition tree $T$ of $G$ rooted at $\triangle_O$.
	Note that, if $G$ admits a planar straight-line realization, then the degree of $T$ is bounded by a constant.
	Indeed, each edge of $G$ can be incident to at most two interesting triangles of the same type (and hence each $3$-cycle of $G$ can be adjacent to at most $24$ distinct $3$-cycles).
	We test whether the degree of $T$ is bounded by $24$ in $O(n)$ time.
	We reject the instance in the negative case.
	In the positive case we traverse $T$ looking for leaf triangles that can be drawn inside their parent or sibling triangles.
	By \cref{lem:leaf-triangle}, these are the only triangles in $G$ that can be drawn inside any other triangle.
	We identify such triangles by means of \cref{obs:internal_angles}.
	Since the degree of $T$ is in $O(1)$, each triangle has a constant number of sibling triangles, hence we can compute both $G_F$ and $L_{\triangle}$ in $O(n)$ time.
	
	\paragraph*{Step B. Testing if $L_\triangle$ is consistent.}
	
	The next step is to test if $L_\triangle$ is consistent.
	By \cref{lem:consistent_leafs}, this is a necessary condition for the existence of a planar straight-line realization of $G$.
	This condition can be trivially tested in $O(n)$ time by looking at the triangles of $L_\triangle$ sharing an edge with a parent or a sibling triangle in $T$.
	The input instance is rejected in the negative case.
	
	\paragraph*{Step C. Constructing a planar straight-line realization $\Gamma_F$ of $G_F$.}
	
	The final step of this stage is to compute a planar straight-line realization of $G_F$, if it exists.
	We first test whether $G_F$ is outerplanar. If this is not the case we reject the instance. Otherwise, we construct its unique outerplane embedding in $O(n)$ time~\cite{d-iroga-07,m-laarogmog-79,w-rolt-87}.
	Then, by means of \cref{thm:fixed-embedding} we test in $O(n)$ time if $G_F$ admits a planar straight-line realization with the computed embedding.
	We construct $\Gamma_F$ in the positive case, and reject the input instance otherwise.
	
	\subsubsection{Extending $\Gamma_F$ to a planar straight-line realization of $G$.}
	
	In the second stage the algorithm distinguishes two cases, namely, the one in which $r\geq 2$ and the one in which $r<2$. In the former case, the algorithm is completed by means of a single step, called D.1, while in the latter case the algorithm is completed by means of four steps, called D.2, E, F, and G.
	
	\smallskip
	\noindent \fbox{\bf Case $\mathbf r \geq 2$.}
	Note that, by \cref{pr:existence-flat}, in this case there are no flat isosceles triangles.
	Hence, every leaf triangle $\triangle$ in $L_\triangle$ is a small equilateral triangle such that the parent or a sibling of $\triangle$ is a tall isosceles triangle.
	We act as described in the following step.
	
	\paragraph*{Step D.1.}
	
	Consider each edge $e$ of $G_F$ incident to a triangle in $L_\triangle$. If $e$ is incident to two triangles in $L_\triangle$, we draw them in $\Gamma_F$ on opposite sides of $e$. If $e$ is incident to one triangle $\triangle\in L_\triangle$, then we draw $\triangle$ inside any tall isosceles triangle incident to $e$ (recall that such a tall isosceles triangle is the parent or a sibling of $\triangle$).
	At the end of this process we obtain a straight-line realization $\Gamma$ of $G$, and a plane embedding $\mathcal E$ of~$G$.
	
	\begin{lemma}\label{le:correctness-large-r}
		$G$ admits a planar straight-line realization if and only if $\Gamma$ is planar.
	\end{lemma}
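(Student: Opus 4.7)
The proof will have two directions. The easy direction is $\Leftarrow$: if $\Gamma$ is planar, then by construction each edge is drawn as a segment of its prescribed length, so $\Gamma$ is itself a planar straight-line realization of $G$. The nontrivial direction is $\Rightarrow$, for which I would assume that some planar straight-line realization $\Gamma^*$ of $G$ exists and argue that the drawing $\Gamma$ produced by Step~D.1 is planar by showing that $\Gamma$ and $\Gamma^*$ are geometrically equivalent up to reflection, up to relabeling the two congruent leaf triangles that sit on opposite sides of a shared edge, and up to the choice of which of two tall isosceles triangles hosts a single leaf.

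I would proceed in four steps. First, I would invoke \cref{lem:leaf-triangle} to conclude that in $\Gamma^*$ only triangles of $L_\triangle$ can be drawn inside another triangle, so the restriction of $\Gamma^*$ to $G_F$ is an outerplane drawing; since $G_F$ is biconnected outerplanar it has a unique outerplane embedding, so this restriction coincides with $\Gamma_F$ up to a reflection that I absorb by reflecting $\Gamma^*$. Second, I would combine $r \geq 2$ with \cref{pr:existence-flat} and \cref{prop:containments} to conclude that every triangle in $L_\triangle$ is small equilateral (big equilateral and tall isosceles cannot lie inside any interesting triangle; no flat isosceles exists). Third, I would observe that each leaf $\triangle\in L_\triangle$ can share a side only along the edge $e$ it shares with its parent in $T$ (its other two edges contain the degree-$2$ leaf vertex), so in $\Gamma^*$ the apex of $\triangle$ must sit at one of the two mirror-image points with respect to $e$: the interior of a tall isosceles triangle of $G_F$ incident to $e$, or the outer face (when $e$ is a boundary edge). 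Fourth, I would case-split on whether $e$ carries one or two leaves of $L_\triangle$ and check, in each case, that the placement made by Step~D.1 realizes one of the admissible configurations of $\Gamma^*$: if $e$ has two leaves, both $\Gamma$ and $\Gamma^*$ put one on each side of $e$ with congruent small equilateral apexes, so the multiset of leaf apex positions agrees; if $e$ has a single leaf, Step~D.1 puts it inside a tall isosceles $\triangle_f$ incident to $e$, and since a small equilateral fits inside any tall isosceles with which it shares the $w_1$-base (the height $\sqrt{w_2^2-w_1^2/4}\ge w_1\sqrt{3}/2$ when $w_2\ge 2w_1$), this placement introduces no crossing with $G_F$ itself. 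Since any crossing of $\Gamma$ must involve at least one leaf triangle (as $\Gamma_F$ is planar) and depends only on the positions of the leaf apexes, which coincide as a set with those of $\Gamma^*$, the planarity of $\Gamma^*$ transfers to $\Gamma$.

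The main obstacle I foresee is controlling the ambiguous choice Step~D.1 makes when a single leaf lies on an internal edge $e$ shared by two tall isosceles triangles of $G_F$: the algorithm picks one arbitrarily, while $\Gamma^*$ might place the leaf in the other. To handle this, I would argue that the unchosen tall isosceles $\triangle^*_f$ cannot host any other leaf in $\Gamma$, because a small equilateral can be inscribed in a tall isosceles only by sharing its unique $w_1$ side, so only leaves incident to $e$ are candidates, and the only such leaf has already been placed in the other container by the algorithm. The analogous point for a single-leaf on a boundary edge and for two leaves on an internal edge whose ``opposite side'' is a non-tall-isosceles triangle is automatic: if the opposite-side container cannot geometrically accommodate the leaf then $\Gamma$ exhibits a crossing there, but this same crossing would occur in any planar realization, contradicting the existence of $\Gamma^*$. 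This closes the argument.
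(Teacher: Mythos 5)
Your proof is correct and takes essentially the same route as the paper: the forward direction rests on the same two facts the paper uses, namely that a tall isosceles triangle has a unique side of length $w_1$, so an internally embedded small equilateral leaf can never conflict with anything else, and that two leaves sharing an edge must sit on opposite sides of it in every planar realization, so any crossing forced on the outer-embedded one would already be present in the hypothesized realization $\Gamma^*$. The only blemish is your summary claim that the leaf apex positions of $\Gamma$ coincide as a set with those of $\Gamma^*$ --- this fails for single leaves, which Step~D.1 may place in a different container (or internally where $\Gamma^*$ places them externally) --- but your case analysis and closing paragraph already establish planarity without relying on it.
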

	
	\begin{proof}
		Obviously, if $\Gamma$ is planar, then it is a planar straight-line realization of $G$.
		
		We prove that, if $G$ admits a planar straight-line realization, then $\Gamma$ is planar. Consider an edge $e \in G_F$.  Since $L_\triangle$ is consistent, the edge $e$ is shared by
		a set $L^e_\triangle \subseteq L_\triangle$ of
		at most two leaf triangles.
		Suppose that $e$ is an internal edge of $\Gamma_F$. Then, in $\Gamma$, each small equilateral triangle $L^e_\triangle$ is drawn as an internal embedding inside a tall isosceles triangle $\triangle^* \in G_F$.
		Note that no other triangle is drawn inside $\triangle^*$, since a tall isosceles triangle has a single side of length $w_1$.
		Hence, no triangle of $L^e_\triangle$ creates a crossing in $\Gamma$.
		
		Suppose now that $e$ is incident to the outer face of $\Gamma_F$.
		If $L^e_\triangle$ consists of a single triangle, then the triangle is drawn as an internal embedding and, as above, it does not create crossings in $\Gamma$.
		If $L^e_\triangle$ consists instead of two triangles, then least one of them, say $\triangle_s$, is drawn as an outer embedding. However, the two triangles in $L^e_\triangle$ are drawn on different sides of $e$ in any planar straight-line realization of $G$, hence if $\triangle_s$ creates a crossing, it does so in every straight-line realization of $G$. Since a planar straight-line realization of $G$ exists by hypothesis, it follows that $\triangle_s$ does not create crossings in $\Gamma$, and hence $\Gamma$ is planar.
	\end{proof}
	
	Observe that $\Gamma$ and $\mathcal E$ are constructed in linear time.
	We test whether $\Gamma$ is a planar straight-line realization of $G$ with embedding $\mathcal E$ in $O(n)$ time by means of \cref{thm:straight-line_realization_planarity}.
	By~\cref{le:correctness-large-r}, in the negative case we conclude that $G$ admits no planar straight-line realization, otherwise $\Gamma$ is the desired planar straight-line realization~of~$G$.
	
	\medskip
	\noindent \fbox{\bf Case $\mathbf r < 2$.}
	The algorithm in this case is more complicated, since there might be flat isosceles triangles in $G$, which might or might not need to be embedded inside their adjacent large equilateral or tall isosceles triangles in $\Gamma_F$.
	Moreover, more than a single leaf triangle might be drawn inside the same interesting triangle of $\Gamma_F$.
	
	An example of a planar straight-line realization of a weighted $2$-tree for which $r < 2$ is shown in \cref{fig:example}.
	The values of $w_1$ and $w_2$ were chosen so that $\sqrt{3} \simeq 1.73 < r \leq 2 \cos(15^{\circ}) \simeq 1.93$.
	By \cref{lem:internal_conflicts:tall_isosceles} of \cref{lem:internal_conflicts}, a pair of flat isosceles triangles can be drawn inside a big equilateral triangle without crossing each other, but any pair of interesting triangles (either two flat isosceles or a flat isosceles and a small equilateral) cross with each other if they are drawn inside a tall isosceles triangle.
	The triangles of $\Gamma_F$ are drawn with thick lines, and the edges of the leaf triangles of $L_{\triangle}$ that are not part of $\Gamma_F$ are drawn with thin lines.
	Observe that any type of interesting triangle can belong to the framework, whereas the triangles of $L_{\triangle}$ are either small equilateral or flat isosceles triangles.
	If a leaf triangle is small equilateral, then it is adjacent to (at least) an interesting triangle of the frame that is tall isosceles.
	If it is instead a flat isosceles, then it is adjacent to (at least) an interesting triangle of the frame that is either tall isosceles or big equilateral.
	
	From \cref{lem:leaf-triangle}, a leaf triangle can be drawn inside its parent triangle, a sibling triangle, or both.
	In our example, $\triangle_2$ can be drawn both inside its parent triangle $\triangle_1$ and inside the sibling triangle $\triangle_3$.
	On the other hand, $\triangle_{18}$ can only be drawn inside its parent triangle $\triangle_{17}$, and $\triangle_{21}$ can only be drawn inside its sibling triangle $\triangle_{22}$.
	Consider the triangle $\triangle_{27}$.
	It can only be drawn inside its parent triangle $\triangle_3$ since it is congruent to its sibling triangle $\triangle_8$. Hence it would induce an overlapping conflict if it were drawn on the same side as $\triangle_8$.
	Consider now the interesting triangles drawn with dashed lines.
	If $\triangle_9$ was drawn inside $\triangle_{17}$, then $\triangle_9$ and $\triangle_{18}$ would cross each other.
	On the other hand, if the drawing of $\triangle_{16}$ is an outer embedding, then it would induce a framework conflict.
	Finally, if the drawings of both $\triangle_{13}$ and $\triangle_6$ are outer embeddings, then they would induce an external conflict.
	\begin{figure}[bt!]
		\centering
		
		\includegraphics[page=1,scale=.9]{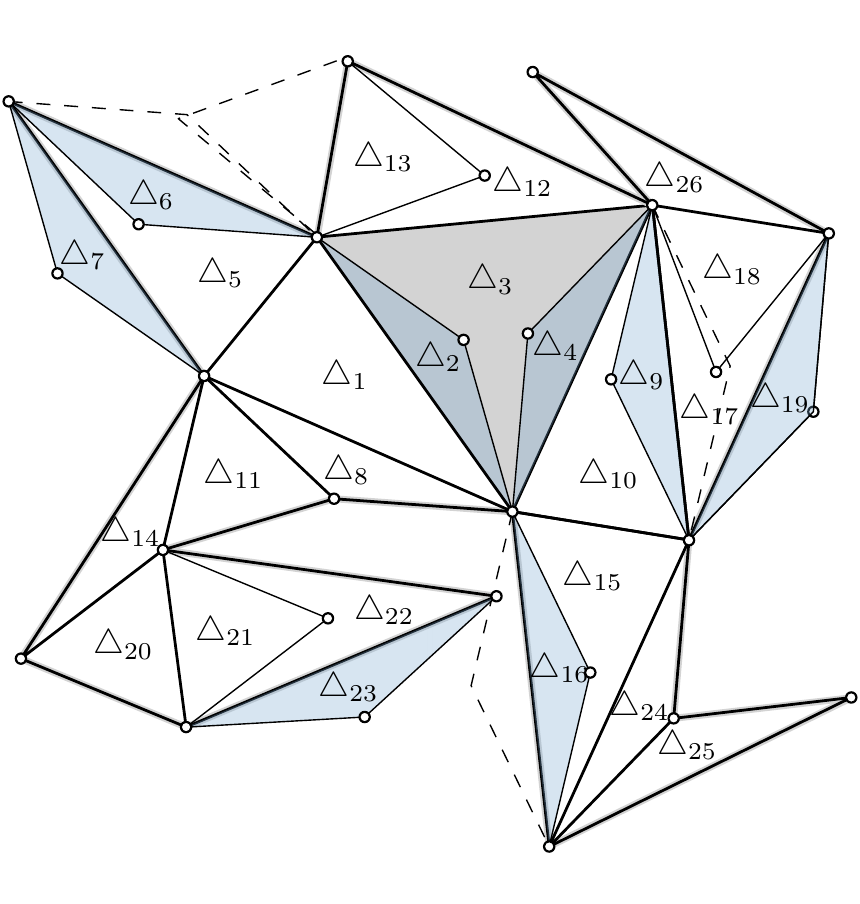}
		\quad%
		\includegraphics[page=2,scale=.9]{two_distances_external_conflicts}
		
		\caption{
			A planar straight-line realization of a weighted $2$-tree for $\sqrt{3} < r \leq 2\cos(15^{\circ})$.
			On the left, the planar straight-line realization.
			Triangles in the framework are drawn with thick lines. On the right, the decomposition tree rooted at $\triangle_1$.}
		\label{fig:example}
	\end{figure}
	
	To deal with this case we proceed as follows.
	First, we compute the pairs of triangles whose drawings may induce a conflict.
	Then we construct a 2SAT formula $\phi$ that contains a Boolean variable for each triangle $\triangle \in L_\triangle$.
	The value of the variable is associated to a choice of the two possible drawings of $\triangle$: recall that an edge of $\triangle$ is part of $G_F$, while the vertex of $\triangle$ not incident to such an edge might be placed on either side of it. For each triangle $\triangle$ that induces a framework or an overlapping conflict in one of its two drawings, there is a clause in $\phi$ that is \texttt{True} if and only if the drawing of the triangle is not the one inducing a conflict. Similarly, for each pair of triangles whose drawings induce an internal or an external conflict, there is a clause in $\phi$ that is \texttt{True} if and only if the drawings of the triangles are not both the ones inducing a conflict.
	If $\phi$ is satisfiable, then there is a planar straight-line realization of $G$ which can be obtained by embedding the triangles of $L_\triangle$ according to the values of the variables of $\phi$.
	We reject the instance otherwise.
	We describe the algorithm in detail next.
	
	\paragraph*{Step D.2. Detecting internal conflicts.}
	
	We start by computing all the pairs of triangles of $L_{\triangle}$ that induce an internal conflict.
	This is done by traversing $T$ while processing each triangle $\triangle \in \Gamma_F$ as follows.
	We first obtain the set $L^{\prime}_{\triangle} \subseteq L_{\triangle}$ of leaf triangles that can be drawn inside $\triangle$. By~\cref{lem:leaf-triangle}, these triangles share an edge with $\triangle$, so they are children or siblings of $\triangle$.
	We then decide which pairs of leaf triangles of $L^{\prime}_{\triangle}$ induce an internal conflict inside $\triangle$, by means of \cref{lem:internal_conflicts}.
	Since $L_{\triangle}$ is consistent, by \cref{lem:internal_conflict_bound} there are $O(1)$ triangles in $L^{\prime}_{\triangle}$.
	Hence, we process each triangle in $O(1)$ time, and the traversal takes $O(n)$ time.
	
	\paragraph*{Step E. Detecting overlapping conflicts.} Next, for each triangle $\triangle \in L_{\triangle}$, we detect whether any of its two drawings induces an overlapping conflict. Let $\triangle^*$ be the parent of $\triangle$ in $T$ and let $e$ be the edge shared by $\triangle$ and $\triangle^*$. Note that, since $\triangle \in L_{\triangle}$, one of the two drawings of $\triangle$ lies inside $\triangle^*$ and hence does not induce an overlapping conflict. We first check whether the other drawing of $\triangle$ is an internal or an outer embedding (in the latter case, we can conclude that $\triangle$ does not induce an overlapping conflict). This is equivalent to checking whether there is a triangle $\triangle_s$ in $\Gamma_F$ on the other side of $e$ with respect to $\triangle^*$. Note that, if such a triangle $\triangle_s$ exists, then it is a sibling of $\triangle$ in $T$, given that it shares the edge $e$ with $\triangle$ and that it is not the parent $\triangle^*$ of $\triangle$. Hence, for each of the $O(1)$ siblings of $\triangle$ in $T$, we check in $O(1)$ time whether it is a triangle in $\Gamma_F$ and whether it is incident to $e$. If no sibling of $\triangle$ in $T$ satisfies both checks, then we conclude that no drawing of $\triangle$ induces an overlapping conflict. Otherwise, there is a sibling of $\triangle$ in $T$ that satisfies both checks; then this sibling is the desired triangle $\triangle_s$ and hence the drawing of $\triangle$ outside $\triangle^*$ is an internal embedding of $\triangle$. Then we also check in $O(1)$ time whether $\triangle_s$ and the drawing of $\triangle$ outside $\triangle^*$ cross each other. In the positive case, we conclude that the drawing of $\triangle$ outside $\triangle^*$ induces an overlapping conflict, while in the negative case we conclude that it does not. Since we process each triangle $\triangle \in L_{\triangle}$ in $O(1)$ time, this step takes $O(n)$ time.      
	
	\paragraph*{Step F. Detecting framework and external conflicts.}
	
	Detecting framework and external conflicts in linear time is more challenging than overlapping and internal conflicts. Indeed, an outer embedding of a triangle $\triangle\in L_{\triangle}$ might cross a triangle of $\Gamma_F$ or the outer embedding of a triangle in $L_{\triangle}$ that is far, in terms of graph-theoretic distance, from $\triangle$. Hence, the trivial solution would detect framework and external conflicts by comparing each triangle $\triangle\in L_{\triangle}$ with every triangle of $\Gamma_F$ and with every triangle in $L_{\triangle}$. However, this would result in a quadratic running time. In order to achieve linear running time, we decompose the plane into cells and then derive several combinatorial properties of such a decomposition, including: (i) only a linear number of cells can host conflicts and such cells can be found efficiently; (ii) conflicts can only occur between triangles in a single cell or in adjacent cells; and (iii) each cell can host only a constant number of outer embeddings of triangles from $L_{\triangle}$. The last property is only true because of the assumption $r<2$. We now describe the details of Step F.
	
	Let $L^\prime_{\triangle}$ be the subset of  $L_\triangle$ composed of those triangles that are incident to external edges of $\Gamma_F$. Only the outer embeddings of the triangles in $L^\prime_{\triangle}$ can induce framework and external conflicts. 
	We translate the Cartesian axes so that the bottom-left corner of the bounding box of $\Gamma_F$ lies on the origin, and give an outer embedding to every triangle in $L^\prime_{\triangle}$.
	This results in a possibly non-planar straight-line realization $\Gamma^\prime_F$ of the graph $G^\prime_F:=G_F \cup L^\prime_{\triangle}$.
	Our strategy to detect external and framework conflicts is based on the construction of a linear-size ``proximity'' graph $H$ defined on the vertices of $G^\prime_F$.
	We show that, if two triangles of $L^\prime_\triangle$ induce an external conflict, then their vertices are associated either with the same node or with adjacent nodes of $H$.
	Similarly, if a triangle $\triangle \in L^\prime_\triangle$ induces a framework conflict by intersecting an edge $e$ of $G^\prime_F$, then the vertices of $\triangle$ and the end-vertices of $e$ are associated either with the same node or with adjacent nodes of $H$.
	We then detect external and framework conflicts by a linear-time traversal of $H$.
	
	In the following, we say that a vertex of $G_F$ is a \emph{framework vertex} and a vertex of $L^\prime_{\triangle}$ that is not contained in $G_F$ is a \emph{leaf vertex}.
	For each leaf vertex $v$, we denote with $\triangle(v)$ the outer embedding of the triangle in $L^\prime_\triangle$ whose leaf vertex is $v$.
	We define the graph $H$ as follows.
	Consider a square grid whose cells have side length $3w_2$ covering the plane; see \cref{fig:external_conflicts_example}.
	We assign a label $l(v) = (i,j)$ to every vertex $v \in V(G^\prime_F)$, where $i$ and $j$ respectively denote the row and the column of one of the (at most four) grid cells that contain $v$ in their interior or on their boundary.
	More precisely, the vertex $v$ gets the label $(\lfloor\frac{x(v)}{3w_2}\rfloor, \lfloor\frac{y(v)}{3w_2}\rfloor)$, where $(x(v),y(v))$ denotes the coordinates of $v$ in $\Gamma'_F$.
	The graph $H$ has a node for each label assigned to at least one vertex of $G'_F$, and two nodes $(i,j)$ and $(i^\prime,j^\prime)$ are connected if and only if $\vert i-i' \vert \leq 1$ and $\vert j-j' \vert \leq 1$.
	Note that $H$ has at most $n$ nodes and maximum degree 8, hence it has $O(n)$ edges.
	We say that a vertex of $G'_F$ is \emph{associated with} a node $\mu$ of $H$ if $\mu$ represents the cell $(i,j)$ and $l(v)=(i,j)$.
	
	\begin{figure}[ht]
		\centering
		
		\includegraphics[page=1,scale=1.19]{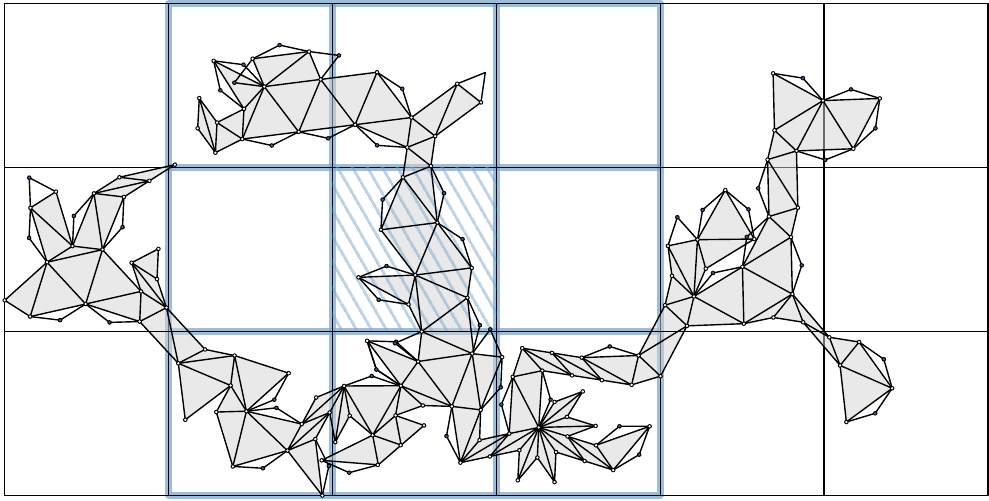}
		\hfil
		\includegraphics[page=2,scale=1.19]{two_distances_external_conflict_graph}
		
		\caption {
			On the left, the straight-line realization $\Gamma^\prime_F$ of the graph $G^\prime_F:=G_F \cup L^\prime_{\triangle}$, where each triangle in $L'_{\triangle}$  has an outer embedding.
			The triangles of $G_F$ are shown in gray, and the outer embeddings of the triangles in $L^\prime_\triangle$ are shown in white.
			Framework vertices are shown in white and leaf vertices are shown in blue.
			The grid cell filled with dashed lines is zoomed-in on the right.
			All the triangles in $L^\prime_{\triangle}$ contained in this cell can induce external or framework conflicts only with the triangles contained in the nine highlighted cells.
		}
		\label{fig:external_conflicts_example}
	\end{figure}
	
	\begin{lemma} \label{lemma:H-properties}
		The graph $H$ satisfies the following properties:
		\begin{enumerate}[(a)]
			\item \label{lemma:H-properties:cells}
			
			For each node $\mu$ of $H$, there are $O(1)$ leaf vertices associated with $\mu$.
			
			\item \label{lemma:H-properties:external}
			
			Let $\triangle_1,\triangle_2$ be two triangles in $L'_\triangle$.
			For $i=1,2$, let $\mu_i$ denote a node of $H$ such that at least one vertex of $\triangle_i$ is associated with $\mu$.
			If the outer embeddings of $\triangle_1$ and $\triangle_2$ induce an external conflict, then either $\mu_1=\mu_2$ or $(\mu_1,\mu_2)\in E(H)$.
			
			\item \label{lemma:H-properties:framework}
			
			Let $\triangle$ be a triangle in $L'_\triangle$, and $e$ be an edge of $G_F$ incident to the outer face of $\Gamma_F$.
			Let $\mu$ and $\nu$ denote two nodes of $H$ such that at least one vertex of $\triangle$ is associated with $\mu$ and at least one end-vertex of $e$ is associated with $\nu$.
			If the outer embedding of $\triangle$ induces a framework conflict by intersecting $e$, then either $\mu_1=\mu_2$ or $(\mu_1,\mu_2)\in E(H)$.
			
		\end{enumerate}
	\end{lemma}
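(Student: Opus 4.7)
The plan is to combine the diameter bound $w_2$ on any interesting triangle with the cell side-length $3w_2$, invoking $r<2$ only in part (a). For parts (b) and (c), I would argue by a direct distance computation. If the outer embeddings of $\triangle_1$ and $\triangle_2$ induce an external conflict, they share a point $p$; since each triangle has diameter at most $w_2$, every vertex of $\triangle_1$ and every vertex of $\triangle_2$ lie within $2w_2$ of one another. Two points separated by less than $3w_2$ in both coordinates fall into cells whose row and column indices differ by at most one, which is precisely the edge relation in $H$; this yields $\mu_1=\mu_2$ or $(\mu_1,\mu_2)\in E(H)$. For (c), I would apply the same computation to a vertex of $\triangle$ and an endpoint of $e$, using that $e$ is an edge of $G_F$ and hence has length at most $w_2$.

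For part (a), the main obstacle, I plan to reduce the count of leaf vertices in a cell $\mu$ to a count of internal faces of $\Gamma_F$ near $\mu$, and then close with an area-packing argument. Each leaf vertex $v$ is the apex of a unique triangle of $L'_\triangle$ whose base is an external edge $e_v$ of $G_F$; the consistency of $L_\triangle$ caps the number of leaves sharing a base at two. Since $v$ lies within $w_2$ of $e_v$, the base is contained in a window $W_\mu$ of side $5w_2$ enclosing $\mu$, and the unique internal face of $\Gamma_F$ incident to $e_v$ is contained in a slightly larger window $W'_\mu$ of side $7w_2$. Because $G_F$ is a 2-tree whose planar realization $\Gamma_F$ is outerplanar, an edge count shows $G_F$ is maximal outerplanar, so every internal face of $\Gamma_F$ is an interior-disjoint interesting triangle of area at least $c(r)\cdot w_1^2$ for some positive constant $c(r)$ depending on $r<2$. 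The hypothesis $r<2$ also forces $w_2<2w_1$, so $W'_\mu$ has area $O(w_1^2)$, and at most $O(1)$ such triangles fit inside $W'_\mu$; this bounds the number of relevant external edges, and therefore the number of leaf vertices associated with $\mu$, by $O(1)$.

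The hardest step will be justifying the area lower bound in (a): a flat-isosceles triangle has area $\tfrac{w_2}{2}\sqrt{w_1^2-(w_2/2)^2}$, which tends to zero as $w_2\to 2w_1$, so the strict inequality $r<2$ is exactly what guarantees a positive $c(r)$. This matches the remark in the preamble that property~(iii) relies on $r<2$. I would also need to verify that $G_F$, obtained by deleting selected degree-two vertices from $G$, is indeed maximal outerplanar; this follows because any outerplanar 2-tree on at least three vertices attains the maximum of $2n-3$ edges allowed to outerplanar graphs, and the unique outerplane embedding then has each internal face bounded by a $3$-cycle.
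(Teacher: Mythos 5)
Your treatment of parts (b) and (c) is the paper's own argument (a shared intersection point, triangle diameter at most $w_2$, edge length at most $w_2$, triangle inequality giving $2w_2<3w_2$, hence floor indices differing by at most one), so those are fine. The problem is part (a). You reduce the count to the number of internal faces of $\Gamma_F$ lying in a window of side $O(w_2)$ and then pack by area, but the area lower bound you invoke is the minimum over \emph{all} interesting triangles, namely the flat isosceles one, whose area $\tfrac{w_2}{2}\sqrt{w_1^2-w_2^2/4}$ tends to $0$ as $r\to 2^-$. Your bound is therefore $O(1/c(r))$ with $c(r)\to 0$, not an absolute constant — and the quantity you are actually bounding is genuinely unbounded over the family of admissible inputs, since arbitrarily many thin flat isosceles faces (e.g., a folded strip of them) can be packed into a window of side $O(w_2)$ as $r\to 2^-$. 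The lemma's $O(1)$, and the $O(n)$ running time of {\sc ConflictFinder} that depends on it, require a constant independent of $w_1$ and $w_2$, so ``$r<2$ guarantees a positive $c(r)$'' does not close this step.

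The repair is to pack a different, larger family of triangles, which is what the paper does. For a leaf vertex $v$ with base edge $e_v$, the container triangle $\triangle_j$ into which $\triangle(v)$ can be drawn shares exactly the edge $e_v$ with $\triangle(v)$; by statement (a) of \cref{prop:containments} it is not small equilateral or flat isosceles, hence it is big equilateral or tall isosceles and has area greater than $\sqrt{3}\,w_1^2/4$ \emph{independently of $r$}. (Moreover $\triangle_j$ survives into $G_F$, and since $e_v$ is external it is the unique $3$-cycle of $G_F$ through $e_v$, i.e., exactly the internal face you wanted to count — so the small-area faces never enter the count.) Every point of $\triangle_j$ is within $2w_2$ of $v$, so all these containers lie in the $3\times 3$ block of cells around $\mu$, of total area $81w_2^2<324w_1^2$; they are interior-disjoint, which gives an absolute bound of $O(r^2)\subseteq O(1)$ containers and hence $O(1)$ leaf vertices by the consistency of $L_\triangle$. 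Note where $r<2$ is really used: to bound the window area in units of $w_1^2$, not to bound triangle areas from below. Your detour through the maximal outerplanarity of $G_F$ is correct but unnecessary once you count containers rather than arbitrary internal faces.
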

	
	\begin{proof}
		Consider any leaf vertex $v$ associated with a node $\mu$ of $H$.
		Let $(i,j)$ be the cell corresponding to $\mu$ and $\triangle$ be the parent triangle of the triangle in $L'_\triangle$ that has $v$ as a vertex.
		By statement~(\ref{prop:containments:big-equilateral_tall-isosceles}) of \cref{prop:containments}, we have that $\triangle$ is either a big equilateral or a tall isosceles triangle. Further, $\triangle$ is contained in the union of the nine cells $(i',j')$ for $|i-i'| \leq 1$ and $|j-j'| \leq 1$ (see the highlighted cells in \cref{fig:external_conflicts_example}); this is because the Euclidean distance between $v$ and every point of $\triangle$ is smaller than $2w_2$, while the side length of a grid cell is $3w_2$.   Observe that the area of a big equilateral or tall isosceles triangle is larger than the area of a small equilateral triangle, which is equal to $w_1^2\sqrt 3/4$.
		Moreover, any two parent triangles are interior-disjoint in $\Gamma'_F$.
		Since the area of the union of the nine cells is $81w_2^2$, then at most $81w_2^2\frac{4}{w_1^2\sqrt 3} = \frac{324}{\sqrt 3} \left(\frac{w_2}{w_1}\right)^2 \in O(r^2)$ parent triangles are contained in the nine cells, and $O(r^2)$ leaf vertices are associated with $\mu$.
		Since $r < 2$, we have $O(r^2) \in O(1)$.
		This proves Property~(\ref{lemma:H-properties:cells}).
		
		To prove Property~(\ref{lemma:H-properties:external}), suppose, for the sake of contradiction, that the outer embeddings of $\triangle_1$ and $\triangle_2$ induce an external conflict, with $\mu_1 \neq \mu_2$ and $(\mu_1,\mu_2) \notin E(H)$.
		Since $\triangle_1$ and $\triangle_2$ induce an external conflict, then there is at least a point $x$ in $\triangle_1 \cap \triangle_2$.
		Note that $x$ is at distance smaller than or equal to $w_2$ from the vertices of $\triangle_1$ and $\triangle_2$, which implies that every vertex of $\triangle_1$ is at distance smaller than or equal to $2w_2$ from any vertex of $\triangle_2$.
		Since there is a distance larger than or equal to $3w_2$ between two vertices of $G'_F$ associated to non-adjacent nodes of $H$, then either $\mu_1 = \mu_2$ or $(\mu_2,\mu_2) \in E(H)$, which is a contradiction.
		
		Finally, Property~(\ref{lemma:H-properties:framework}) can be proved by similar arguments to those we used to prove Property~(\ref{lemma:H-properties:external}).
	\end{proof}
	
	In order to detect external and framework conflicts, we define each node $\mu$ of $H$ representing the cell $(i,j)$ as a record storing (i) a label $l(\mu)= (i,j)$ and (ii) a list $L(\mu)$ that contains the vertices of $G'_F$ associated with $\mu$.
	We next describe an efficient algorithm to construct $H$.
	
	\begin{lemma} \label{le:H-construction}
		The graph $H$ can be constructed in $O(n)$ time.
	\end{lemma}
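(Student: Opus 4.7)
The plan is to build $H$ in three steps: (i) compute the label $l(v)\in\mathbb{Z}^2$ of every vertex $v\in V(G'_F)$; (ii) group the vertices of $G'_F$ by label to obtain the nodes of $H$ together with their lists $L(\mu)$; and (iii) insert, for each node of $H$, the edges of $H$ that join it to its grid-neighbors.

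Each label is computed in $O(1)$ time from the coordinates of the corresponding vertex, so step~(i) takes $O(n)$ total time. For step~(ii), recall that $G_F$ is a $2$-tree with $O(n)$ vertices whose edges, together with the two additional edges incident to every leaf vertex of $L'_\triangle$, have length at most $w_2$; since the diameter of $\Gamma'_F$ is bounded by the length of a longest path in $G'_F$ times $w_2$, the bounding box of $\Gamma'_F$ has width and height $O(nw_2)$, and the two coordinates of every label are nonnegative integers in a range $[0,N]$ with $N=O(n)$. We can thus radix-sort the vertices lexicographically by label using two bucket-sort passes over integers in $[0,N]$ in $O(n)$ time. The vertices sharing a common label form a contiguous block in the sorted sequence, and each such block becomes a node $\mu$ of $H$ whose list $L(\mu)$ collects the vertices of the block.

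For step~(iii), we use an auxiliary lookup table $A$ that, for every label $(i,j)\in[0,N]^2$, stores either a pointer to the node of $H$ with that label or the null value. Because $A$ has $\Theta(N^2)=\Theta(n^2)$ virtual cells but only $O(n)$ of them are ever touched, we implement $A$ using the standard lazy-initialization technique for sparse arrays, which supports $O(1)$-time initialization, insertion, and lookup while using only $O(n)$ actual space. We populate $A$ by iterating over the nodes of $H$ in $O(n)$ total time; then, for each node $\mu$ of $H$ with label $(i,j)$, we query the (at most eight) entries $A[i+\delta_i,j+\delta_j]$ with $(\delta_i,\delta_j)\in\{-1,0,1\}^2\setminus\{(0,0)\}$ and, for every non-null entry found, we insert the corresponding edge of $H$ in $\mu$'s adjacency list. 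Each node is processed in $O(1)$ time, so step~(iii) also takes $O(n)$ time.

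The main obstacle in the plan is supporting $O(1)$-time queries and updates on a sparse table whose virtual size is $\Theta(n^2)$ without incurring an $\Omega(n^2)$ initialization cost; this is resolved by the lazy-initialization trick, which is standard in the real RAM model. Summing over the three steps shows that $H$, together with the labels and lists stored at its nodes and with its adjacency structure, is constructed in $O(n)$ total time.
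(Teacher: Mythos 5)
Your proposal is correct, and its first two steps (computing the labels and grouping vertices by label) coincide with the paper's construction of $V(H)$: both bound the label range by $O(n)$ using the connectivity of $G'_F$ and the fact that every edge has length at most $w_2$, and then apply a counting/radix sort so that vertices with equal labels become contiguous. Where you genuinely diverge is the construction of $E(H)$. The paper never uses random access by label: it partitions the candidate edges into four classes (horizontal, vertical, and the two diagonal grid adjacencies), defines for each class a total order of the nodes of $H$ computable by counting sort, and then adds an edge exactly when two nodes that are \emph{consecutive} in one of these orders are grid-adjacent in the corresponding direction. You instead query, for each node, the eight neighboring cells of a label-indexed lookup table $A$, made affordable by lazy initialization. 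Both run in $O(n)$ time on the real RAM. The trade-off is that your route is conceptually simpler (one uniform neighbor query instead of four bespoke orders), but it reserves $\Theta(n^2)$ words of uninitialized address space for $A$; the statement that this uses ``only $O(n)$ actual space'' is slightly optimistic, since the lazy-initialization trick avoids the initialization \emph{time} but still requires the virtual array to be addressable. The paper's four-sort scheme stays within $O(n)$ space throughout and relies on nothing beyond counting sort, which is presumably why the authors chose it; your version would be a perfectly acceptable alternative if one is willing to accept the sparse-table machinery.
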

	
	\begin{proof}
		We first show how to construct the vertex set $V(H)$ of $H$.
		
		Consider a total order $\pi$ of the vertices of $G'_F$ such that, for any two vertices $u$ and $v$ with $l(u)=(i_u,j_u)$ and $l(v)=(i_v,j_v)$, it holds that $u$ precedes $v$ in the order if (i) $i_u < i_v$ or (ii) $i_u = i_v$ and $j_u < j_v$; if $i_u = i_v$ and $j_u =j_v$, then $u$ and $v$ are in any relative order in $\pi$.   Since $G'_F$ is connected and any edge of $G'_F$ has length at most $w_2$, then $i,j \leq \frac{w_2 n}{3w_2} = \frac{1}{3} n$ for any label $(i,j)$.
		Hence, we can compute~$\pi$ in $O(n)$ time by means of counting sort.
		
		We process the vertices of $G'_F$ in the order in which they appear in $\pi$.
		Let $s$ be the first vertex in the order~$\pi$. We create a node $\nu$ of $H$, set $l(\nu) = l(s)$, and add $s$ to $L(\nu)$.
		Let now~$v$ be the currently considered vertex of $\pi$ and let us denote by $\mu$ the last created node of $H$.
		If $l(v) = l(\mu)$, then we add $v$ to $L(\mu)$; otherwise, we create a node $\mu'$ of $H$, set $l(\mu') = l(v)$, and add $v$ to $L(\mu')$.
		Note that, if two vertices of $G^\prime_F$ have the same label, then they are consecutive in $\pi$.
		Therefore, the above procedure creates a new node of $H$ for each label of a vertex of $G'_F$, and vertices with the same label $(i,j)$ are all placed in the list $L(\mu)$ of the node $\mu$ such that $l(\mu)=(i,j)$. The procedure can clearly be performed in overall $O(n)$ time.
		
		We now show how to construct the edge set $E(H)$ of $H$.
		
		We define a partition of $E(H)$ into four subsets $E_{-}, E_{|}, E_{/}, E_{\backslash}$; refer to \cref{fig:external_conflicts_graph_example}. Consider any edge $(\mu,\nu)\in E(H)$, where we denote $l(\mu)=(i_\mu,j_\mu)$ and $l(\nu)=(i_\nu,j_\nu)$. Then $(\mu,\nu)$ belongs to $E_{-}$, $E_{|}$, $E_{/}$, or $E_{\backslash}$, depending on whether $(\mu,\nu)$ satisfies Property $P_{-}$, $P_{|}$, $P_{/}$, or $P_{\backslash}$, respectively, where:
		
		\begin{description}
			\item[\bf $P_{-}$:] $i_\mu=i_\nu$ and $|j_\mu-j_\nu|=1$,
			\item[\bf $P_{|}$:] $j_\mu=j_\nu$ and $|i_\mu-i_\nu|=1$,
			\item[\bf $P_{/}$:] either $i_\mu-i_\nu = j_\mu-j_\nu = 1$ or $i_\mu-i_\nu= j_\mu-j_\nu= -1$, and
			\item[\bf $P_{\backslash}$:] either $i_\mu-i_\nu= 1$ and $j_\mu-j_\nu= -1$ or $i_\mu-i_\nu= -1$ and $j_\mu-j_\nu= 1$.
		\end{description}

		\begin{figure}[t!]
			\centering
			\includegraphics[page=3,scale=1.45]{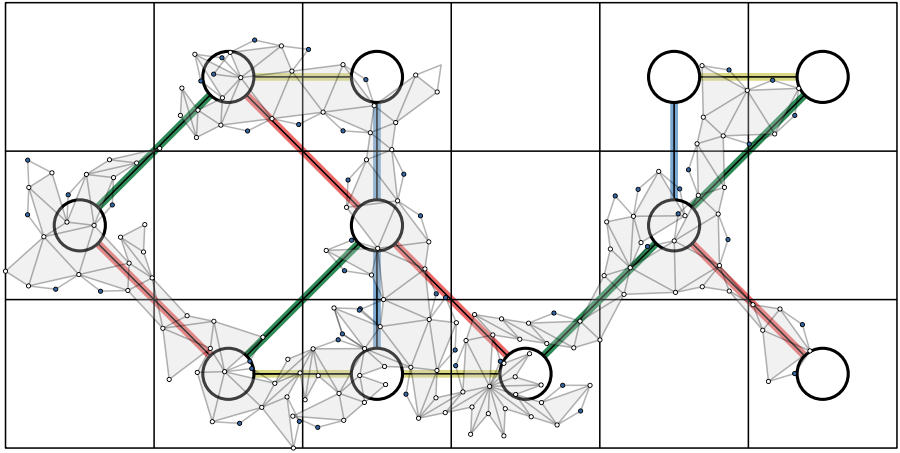}
			\caption{
				Construction of the set of edges $E(H)$.
				The edges on the sets $E_{-}, E_{|}, E_{/}$, and $E_{\backslash}$ are shown in yellow, blue, green, and red, respectively.
			}
			\label{fig:external_conflicts_graph_example}
		\end{figure}
		
		In order to construct the sets $E_{-}, E_{|}, E_{/}$, and $E_{\backslash}$, we define four total orders $\pi_{-}$, $\pi_{|}$, $\pi_{/}$, and $\pi_{\backslash}$ of the nodes of $H$, respectively.
		For any two nodes $\mu$ and $\nu$ of $H$, it holds that $\mu$ precedes $\nu$ in $\pi_{-}$, $\pi_{|}$, $\pi_{/}$, and $\pi_{\backslash}$  if and only if the following conditions hold, respectively:  
		
		\begin{description}
			\item[\bf $C_{-}$:] either $i_\mu<i_\nu$ or $i_\mu=i_\nu$  and $j_\mu < j_\nu$, 
			\item[\bf $C_{|}$:] either $j_\mu<j_\nu$ or $j_\mu=j_\nu$  and $i_\mu < i_\nu$, 
			\item[\bf $C_{/}$:] either $i_\mu-j_\mu < i_\nu-j_\nu$ or $i_\mu-j_\mu = i_\nu-j_\nu$ and $i_\mu<i_\nu$, or
			\item[\bf $C_{\backslash}$:] either $i_\mu+j_\mu < i_\nu+j_\nu$ or $i_\mu+j_\mu = i_\nu+j_\nu$ and $i_\mu<i_\nu$.
		\end{description}
		
		We construct the set $E_i$ with  $i \in \{-,|,/,\backslash\}$ as follows.
		First, we compute the total order $\pi_i$ of the nodes of $H$ for which condition $C_i$ holds for any pair $(\mu,\nu)$ of nodes such that $\mu$ precedes $\nu$ in $\pi_i$.
		Similarly as in the construction of $\pi$, the order $\pi_i$ can be computed in $O(n)$ time by means of counting sort.
		Then we process the nodes of $V(H)$ in the order $\pi_i$ and add a new edge $(\mu,\nu)$ to $E_i$ whenever condition {\bf $P_{i}$} holds for two nodes $\mu$ and $\nu$ such that $\mu$ immediately precedes $\nu$ in $\pi_i$.
		Since, by construction, nodes that are adjacent in $E_i$ are consecutive in~$\pi_i$, it follows that the above procedure correctly computes each set $E_i$. Clearly, the procedure can be performed in overall $O(n)$ time.
		This concludes the construction~of~$H$.
	\end{proof}
	
	We now describe an algorithm, which we call {\sc ConflictFinder}, to detect external and framework conflicts.
	Broadly speaking, the algorithm performs two linear-time traversals of the edges of $H$.
	In the first traversal, we detect external conflicts.
	When processing the edge $(\mu,\nu) \in E(H)$ we detect intersections between the triangles of $L^\prime_\triangle$ whose leaf vertices are contained in $L(\mu) \cup L(\nu)$.
	In the second traversal, we detect framework conflicts.
	In this case, when processing the edge $(\mu,\nu) \in E(H)$ we detect intersections between the triangles of $L^\prime_\triangle$ and the edges of $G^\prime_F$ whose leaf and framework end-vertices, respectively, are contained in $L(\mu) \cup L(\nu)$.
	
	The algorithm consists of the following steps:
	
	
	
	
	
	
	\begin{enumerate}[\bf Step 1:]
		
		\item
		
		Compute the drawing $\Gamma^\prime_F$ of $G'_F$ and the auxiliary graph $H$ as described above.
		
		\item
		
		For every framework vertex $v$, create the list $L_E(v)$ of the two edges of $\Gamma_F$ incident to both $v$ and the outer face of $\Gamma_F$.
		
		\item
		
		For every node $\mu \in V(H)$, create the lists $L_\ell(\mu)$ and $L_F(\mu)$ of leaf and framework vertices contained in $L(\mu)$, respectively.
		
		\item
		
		Detect which triangles of $L^\prime_\triangle$ induce external conflicts as follows.
		Let $\mathcal{E}_\ell = \{ (u,v) \in L_\ell(\mu) \times \{ L_\ell(\mu) \cup L_\ell(\nu) \} \, \vert \, (\mu,\nu) \in E(H) \}$.
		For every pair of vertices $(u,v) \in \mathcal{E}_\ell$, test if $\triangle(u)$ and $\triangle(v)$ intersect each other.
		In the positive case, the triangles $\triangle(u)$ and $\triangle(v)$ induce an external conflict.
		
		\item
		
		Detect which triangles of $L^\prime_\triangle$ induce framework conflicts as follows.
		Let $\mathcal{E}_F = \{ (u,v) \in L_F(\mu) \times \{ L_\ell(\mu) \cup L_\ell(\nu) \} \, \vert \, (\mu,\nu) \in E(H) \}$.
		For every pair of vertices $(u,v) \in \mathcal{E}_F$, test if there is an edge $e \in L_E(u)$ such that $e$ intersects $\triangle(v)$.
		In the positive case, the triangle $\triangle(v)$ induces a framework conflict with the edge $e$.
		
	\end{enumerate}
	
	The correctness and complexity of the algorithm are proved in the following lemma.
	
	\begin{lemma}\label{lem:conflictfinder}
		The algorithm {\sc ConflictFinder} detects all the external and framework conflicts in $O(n)$ time.
	\end{lemma}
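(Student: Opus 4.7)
The plan is to establish correctness of \textsc{ConflictFinder} as a direct consequence of \cref{lemma:H-properties}(b)--(c), and to bound the running time by combining the linear-time construction of $H$ (\cref{le:H-construction}) with the cell-density bound of \cref{lemma:H-properties}(a).

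For correctness I would first handle external conflicts. Let $\triangle_1,\triangle_2\in L'_\triangle$ be triangles whose outer embeddings cross one another, and let $u_1,u_2$ be their leaf vertices. Since each $u_i$ is a vertex of $\triangle_i$, taking $\mu_i$ to be the node of $H$ associated with $u_i$ makes \cref{lemma:H-properties}(b) applicable, so either $\mu_1=\mu_2$ or $(\mu_1,\mu_2)\in E(H)$. In both cases $(u_1,u_2)\in\mathcal{E}_\ell$, so Step~4 tests $\triangle(u_1)$ against $\triangle(u_2)$ and finds the crossing. A symmetric argument based on \cref{lemma:H-properties}(c) shows that if some outer embedding $\triangle\in L'_\triangle$ with leaf vertex $v$ crosses an external edge $e\in E(G_F)$ with endpoint $u$, then the nodes associated with $u$ and $v$ are equal or adjacent in $H$, hence $(u,v)\in\mathcal{E}_F$, $e\in L_E(u)$, and Step~5 catches the crossing. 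Conversely, Steps~4 and~5 only flag actual geometric intersections, so no spurious conflict is ever reported.

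For complexity, Step~1 takes $O(n)$ time by \cref{le:H-construction} and the linear size of $G'_F$. Steps~2 and~3 scan each vertex of $G'_F$ a constant number of times (each framework vertex is incident to at most two external edges, and each vertex belongs to exactly one list $L(\mu)$) and thus run in $O(n)$. For Step~4, \cref{lemma:H-properties}(a) gives $|L_\ell(\mu)|=O(1)$ for every $\mu$; since $H$ has maximum degree~$8$, each edge contributes $O(1)$ pairs and $|\mathcal{E}_\ell|=O(n)$, while each triangle-triangle intersection test is $O(1)$. For Step~5 we do \emph{not} have an $O(1)$ bound on $|L_F(\mu)|$, so I would charge the work differently: each edge $(\mu,\nu)\in E(H)$ contributes $|L_F(\mu)|\cdot(|L_\ell(\mu)|+|L_\ell(\nu)|)=O(|L_F(\mu)|)$ pairs, giving a total of
\[
\sum_{\mu\in V(H)} \deg_H(\mu)\,|L_F(\mu)| \;=\; O\!\Bigl(\sum_{\mu\in V(H)}|L_F(\mu)|\Bigr) \;=\; O(n),
\]
since each framework vertex belongs to exactly one list $L_F(\cdot)$ and $\deg_H(\mu)\le 8$. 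Each pair $(u,v)$ is processed in $O(1)$ time because $|L_E(u)|\le 2$ and an edge-triangle intersection test is $O(1)$.

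The main subtlety I expect in a fully formal write-up is that $\mathcal{E}_\ell$ and $\mathcal{E}_F$, as defined, enumerate only pairs coming from an \emph{edge} $(\mu,\nu)\in E(H)$, so for an isolated node $\mu$ the same-cell pairs are missed. This is easily patched by adding, for each node $\mu$, an explicit sweep over the $O(|L_\ell(\mu)|^2 + |L_F(\mu)|\cdot|L_\ell(\mu)|)$ pairs confined to $\mu$; summing over all nodes this sweep contributes only $O(n)$ extra work by the same accounting as above, and combined with the arguments in the previous paragraph the overall bound of $O(n)$ follows.
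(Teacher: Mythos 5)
Your proof is correct and follows essentially the same route as the paper: correctness via \cref{lemma:H-properties}(b)--(c), and the $O(n)$ bound by combining \cref{le:H-construction} with the $O(1)$ cell-density bound on leaf vertices for Step~4 and a per-framework-vertex charging argument for Step~5. Your remark that $\mathcal{E}_\ell$ and $\mathcal{E}_F$, as literally defined, miss same-cell pairs at isolated nodes of $H$ is a valid and easily patched observation about the formal definitions that the paper's own proof glosses over.
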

	
	\begin{proof}
		
		We first prove the correctness of the algorithm.
		Consider first the detection of external conflicts.
		Let $u$ and $v$ be two leaf vertices associated respectively to two nodes $\mu$ and $\nu$.
		Remember that $\triangle(u)$ and $\triangle(v)$ induce an external conflict if and only if they intersect each other.
		Then, by Property~(\ref{lemma:H-properties:external}) of \cref{lemma:H-properties}, we have that either $\mu = \nu$ or $(\mu, \nu) \in E(H)$.
		Since in Step~4 the algorithm processes each pair of leaf triangles whose leaf vertices are associated with the same node of $H$ or with adjacent nodes of $H$, it follows that all the external conflicts are being detected.
		
		Consider now the detection of framework conflicts.
		Let $u$ be a framework vertex associated to a node $\mu$, let $v$ be a leaf vertex associated to a node $\nu$, and $e$ be an edge in $L_E(u)$.
		Remember that $\triangle(v)$ induces a framework conflict with $e$ if and only if $\triangle(v)$ and $e$ intersect each other.
		Then, by Property~(\ref{lemma:H-properties:framework}) of \cref{lemma:H-properties}, we have that either $\mu = \nu$ or $(\mu, \nu) \in E(H)$.
		Since in Step~5 the algorithm processes each pair composed of a framework and a leaf vertex that are associated either with the same node of $H$ or with adjacent nodes of $H$, it follows that all the framework conflicts are being detected.
		
		We now bound the time complexity of the algorithm. This is done by bounding the time complexity of each of the Steps~1--5.
		\begin{itemize}
			\item Since there are $O(n)$ triangles in $L^\prime_\triangle$, we have that $\Gamma_F$ can be extended to $\Gamma^\prime_F$ in $O(n)$ time.
			Furthermore, the graph $H$ can be computed in $O(n)$ time by \cref{le:H-construction}.
			Hence Step~1 takes $O(n)$ time.
			\item Step~2 takes $O(n)$ time since the boundary of the outer face of $\Gamma_F$ consists of $O(n)$ line segments.
			\item Step~3 takes also $O(n)$ time, since $H$ has at most $n$ nodes, and every vertex $v$ of $G^\prime_F$ is contained in the list $L(\mu)$ of a single node $\mu$ of $H$.
			\item The time complexity of Step~4 is bounded by $O(\vert \mathcal{E}_\ell \vert)$. Note that $H$ contains $O(n)$ edges; moreover, for every node $\mu$ of $H$, the list $L_\ell(\mu)$ contains $O(1)$ leaf vertices, by Property~(\ref{lemma:H-properties:cells}) of \cref{lemma:H-properties}.
			Hence $\mathcal{E}_\ell$ contains $O(n)$ pairs of vertices and Step~4 takes $O(n)$ time.
			\item Finally, the time complexity of Step~5 is bounded by $O(\vert \mathcal{E}_F \vert)$.
			In this case, even though $\Theta(n)$ framework vertices might be associated with the same node of $H$, each framework vertex appears in $O(1)$ pairs of $\mathcal{E}_F$; namely, if a framework vertex $u$ is associated with a node $\mu$ of $H$, then it only appears in the pairs $(u,v)$ of $\mathcal{E}_F$ such that the leaf vertex $v$ is associated either with $\mu$ or with one of the (at most $8$) neighbors of $\mu$ in $H$. By Property~(\ref{lemma:H-properties:cells}) of \cref{lemma:H-properties}, there are $O(1)$ leaf vertices associated with the same node of $H$, hence $\mathcal{E}_F$ contains $O(n)$ pairs of vertices and Step~5 takes $O(n)$ time.
		\end{itemize}

		This concludes the proof of the time complexity and hence the proof of the lemma.
	\end{proof}
	
	\paragraph*{Step G. Extending $\Gamma_F$ to a planar straight-line realization of $G$.}
	
	The final step is to draw the triangles of $L_{\triangle}$ in $\Gamma_F$ without inducing conflicts, so that the resulting drawing remains planar.
	We decide if this is possible by means of a 2SAT formula.
	Each leaf triangle $\triangle_i \in L_{\triangle}$ can have two  embeddings (which differ from each other for the position of the leaf vertex of $\triangle_i$).
	We arbitrarily associate each embedding with a distinct truth value of a suitable Boolean variable, and construct the clauses of the formula as follows:
	\begin{itemize}
		\item If a drawing of $\triangle_i$ induces an overlapping conflict, then we introduce a clause $(\neg\ell_i)$, where $\ell_i$ is the literal that is \texttt{True} when $\triangle_i$ has the drawing that induces an overlapping conflict.
		\item If $\triangle_i$ induces a framework conflict, we introduce a clause $(\neg\ell_i)$, where $\ell_i$ is the literal that is \texttt{True} when $\triangle_i$ has an outer embedding.
		
		\item If two leaf triangles $\triangle_i,\triangle_j \in L_{\triangle}$ induce an internal conflict inside a triangle $\triangle\in\Gamma_F$, we introduce a clause $(\neg\ell_i \vee \neg\ell_j)$, where $\ell_i$ (resp. $\ell_j$) is the literal that is \texttt{True} when $\triangle_i$ (resp. $\triangle_j$) is drawn inside $\triangle$.
		
		\item If two leaf triangles $\triangle_i,\triangle_j \in L_{\triangle}$ induce an external conflict, we introduce a clause $(\neg\ell_i \vee \neg\ell_j)$, where $\ell_i$ (resp. $\ell_j$) is the literal that is \texttt{True} when $\triangle_i$ (resp. $\triangle_j$) has an outer embedding.

	\end{itemize}
	
	Let $\phi$ be the 2SAT formula produced by the clauses introduced as described above.
	Clearly, the clauses of $\phi$ describe all and only the overlapping, framework, internal, and external conflicts defined in \cref{sec:conflicts}.
	Moreover, since there are $O(n)$ triangles in $L_{\triangle}$, the number of framework and overlapping conflicts is in $O(n)$. Further, since $H$ has $O(n)$ vertices and edges, by \cref{lem:internal_conflict_bound,lemma:H-properties} the number of internal and external conflicts is in $O(n)$, respectively. It follows that $\phi$ contains $O(n)$ clauses. The following lemma shows that a solution to $\phi$ corresponds to a planar straight-line realization of $G$.
	
	\begin{lemma}\label{lem:2sat_formula}
		The formula $\phi$ is satisfiable if and only if $\Gamma_F$ can be extended to a planar straight-line realization of $G$.
		Also, if $\phi$ is satisfiable, then such a planar straight-line realization can be constructed in $O(n)$ time.
	\end{lemma}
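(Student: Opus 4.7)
The plan is to prove the two directions of the biconditional separately, exploiting the fact that the clauses of $\phi$ were designed to encode a conflict classification that is exhaustive for any extension of $\Gamma_F$. Recall that $\Gamma_F$ is already a planar straight-line realization of $G_F$ and, by the definition of the framework, every triangle of $G$ that is missing from $G_F$ is a leaf triangle of $L_\triangle$ that admits exactly two candidate placements (one on each side of the edge it shares with $G_F$). Hence any straight-line realization $\Gamma$ of $G$ extending $\Gamma_F$ corresponds bijectively to a truth assignment to the Boolean variables of $\phi$ (the variable associated with $\triangle_i$ records which of its two placements is used in $\Gamma$). Throughout, we use this correspondence implicitly.

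For the forward direction, I would start from a satisfying assignment of $\phi$, use the correspondence above to produce the straight-line realization $\Gamma$ of $G$, and argue that $\Gamma$ is planar. Since $\Gamma_F$ is planar by construction, any crossing in $\Gamma$ must involve at least one triangle of $L_\triangle$. I would then perform a case analysis on the possible crossings, matching each case to one of the four conflict types: (i) a triangle of $L_\triangle$ crossing an edge of $\Gamma_F$ while drawn outside its adjacent face of $\Gamma_F$ is a framework conflict and is forbidden by the corresponding unit clause; (ii) a triangle of $L_\triangle$ crossing an edge of $\Gamma_F$ while drawn inside its adjacent face of $\Gamma_F$ is an overlapping conflict and is likewise forbidden; (iii) two triangles of $L_\triangle$ both drawn inside the same triangle of $\Gamma_F$ and crossing each other is an internal conflict and is forbidden by the corresponding binary clause; and (iv) two triangles of $L_\triangle$ both drawn outside $\Gamma_F$ and crossing each other is an external conflict and is again forbidden. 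Here, \cref{lem:leaf-triangle} is crucial to rule out that a triangle of $L_\triangle$ is drawn inside a triangle of $\Gamma_F$ other than its parent or one of its siblings, so every crossing indeed falls into one of the four classes enumerated in~\cref{sec:conflicts}. In particular, Steps D.2, E, and F of the algorithm together enumerate all such conflicts, so each produced clause translates into a satisfied constraint on $\Gamma$, giving planarity.

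For the converse, I would start from a planar straight-line realization $\Gamma$ of $G$ extending $\Gamma_F$ and define the truth assignment of $\phi$ directly from $\Gamma$: for each leaf triangle $\triangle_i \in L_\triangle$, I read off on which side of the shared edge with $G_F$ the leaf vertex is placed, and assign the literal $\ell_i$ accordingly. Each clause of $\phi$ forbids a specific combination of local choices that, by the arguments in~\cref{sec:conflicts} and by the proofs of correctness of Steps~D.2, E, and F, would necessarily create a crossing in any straight-line realization with those choices. Since $\Gamma$ is planar, none of these forbidden combinations can occur, hence every clause of $\phi$ is satisfied.

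For the complexity statement, I would recall that $\phi$ has $O(n)$ variables (one per triangle of $L_\triangle$) and $O(n)$ clauses, as argued right before the lemma statement using \cref{lem:internal_conflict_bound,lemma:H-properties}. A satisfying assignment, if one exists, can therefore be computed in $O(n)$ time by Aspvall--Plass--Tarjan's linear-time 2SAT algorithm. From such an assignment, the realization $\Gamma$ is obtained by placing each leaf triangle on the prescribed side of its attachment edge in $O(1)$ time per triangle, for an overall $O(n)$-time construction. The main obstacle is the exhaustiveness argument in the forward direction: I must be careful that the four conflict types truly cover every possible crossing in an extension of $\Gamma_F$, which relies essentially on~\cref{lem:leaf-triangle} (restricting which triangle of $\Gamma_F$ a leaf triangle can sit inside) and on the fact that the consistency check of Step~B, together with~\cref{pr:existence-flat}, bounds the local combinatorics around each shared edge.
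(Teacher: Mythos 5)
Your proposal is correct and follows essentially the same route as the paper's proof: both directions use the bijection between truth assignments and placements of the leaf triangles, with planarity following from the exhaustiveness of the four conflict types (which you make slightly more explicit via \cref{lem:leaf-triangle}), and the $O(n)$ construction follows from the linear number of clauses and constant-time placement per triangle. No gaps.
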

	\begin{proof}
		We prove the two directions separately.
		
		\vspace{0.5em}%
		\noindent%
		$(\Longrightarrow)$ Assume that $\phi$ is satisfiable.
		We show that $\Gamma_F$ can be extended to a planar straight-line realization of $G$.
		Consider a truth assignment for the Boolean variables that satisfies $\phi$.
		The variables of $\phi$ are in one to one correspondence with the triangles of $L_{\triangle}$.
		We draw in $\Gamma_F$ each triangle $\triangle_i \in L_{\triangle}$ with the embedding associated with the truth value of the variable corresponding to $\triangle_i$ in the satisfying assignment for $\phi$. This can clearly be done in $O(1)$ time per triangle in $L_{\triangle}$, hence in overall $O(n)$ time.
		Since each clause of $\phi$ is satisfied, it follows that $\triangle_i$ is drawn in $\Gamma_F$ without inducing conflicts. Since this is true for every $\triangle_i \in L_{\triangle}$, it follows that the resulting straight-line realization of $G$ is planar.
		Furthermore, such a realization can be constructed in $O(n)$ time since there are $O(n)$ clauses in $\phi$.
		
		\vspace{0.5em}%
		\noindent%
		$(\Longleftarrow)$ Let $\Gamma$ be a planar straight-line realization of $G$.
		We show that $\phi$ is satisfiable.
		The embedding in $\Gamma$ of each triangle $\triangle_i \in L_{\triangle}$ identifies a truth value for the corresponding variable in $\phi$, and thus, a truth assignment for $\phi$.
		Such an assignment satisfies $\phi$ since an unsatisfied clause would imply that we have either an overlapping, or a framework, or an internal, or an external conflict, which in turn implies that the boundary of some triangle $\triangle_i \in L_{\triangle}$ crosses the boundary of a triangle of $\Gamma_F$ or the boundary of another triangle of $L_{\triangle}$, contradicting the fact that $\Gamma$ is planar.
		This concludes the proof.
	\end{proof}
	
	By the discussion in Steps D.2, E, and F, we have that $\phi$ can be constructed in $O(n)$ time. Further, as argued above, $\phi$ contains $O(n)$ clauses. We can therefore test whether $\phi$ is satisfiable in $O(n)$ time~\cite{tarjan_1979}.
	If $\phi$ is not satisfiable, we reject the instance.
	Otherwise, we extend $\Gamma_F$ to a planar straight-line realization of $G$ in $O(n)$ time as described in the proof of \cref{lem:2sat_formula}. This concludes the proof of \cref{th:two-weights}.
	
	\section{Maximal Outerplanar Graphs}
	\label{sec:outerplanar}
	
	In this section we study the \FEPRshort problem for weighted outerplanar
	$2$-trees, that is, for weighted maximal outerplanar graphs. While we could not establish the computational complexity of the \FEPRshort problem for general maximal outerplanar graphs, we can prove the following two theorems. 
	
	\begin{theorem}\label{th:outerpaths}
		Let $G$ be an $n$-vertex weighted maximal outerpath. There exists an
		$O(n)$-time algorithm that tests whether $G$ admits a planar
		straight-line realization and, in the positive case, constructs such
		a realization.
	\end{theorem}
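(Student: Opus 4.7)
I would process the triangles of $G$ in the order given by its decomposition path $T = (\triangle_1,\dots,\triangle_m)$, building a straight-line realization incrementally. The central geometric observation is that, once $\triangle_1,\dots,\triangle_i$ have been placed and the shared edge $\epsilon_i = \triangle_i \cap \triangle_{i+1}$ is fixed in the plane, the new vertex $w_{i+1}$ of $\triangle_{i+1}$ has only two candidate positions (reflections across $\epsilon_i$). Exactly one of these positions places $w_{i+1}$ on the side of $\epsilon_i$ opposite to the interior of $\triangle_i$ (the \emph{outside} placement, a hinge at $\epsilon_i$); the other places it on the same side as the interior of $\triangle_i$ (the \emph{inside} placement). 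By \cref{obs:internal_angles}, the inside placement yields $\triangle_{i+1}\subseteq\triangle_i$ only when the two interior angles of $\triangle_{i+1}$ at the endpoints of $\epsilon_i$ are strictly smaller than the corresponding angles of $\triangle_i$; otherwise the inside placement necessarily produces a crossing with the other two edges of $\triangle_i$ and is infeasible.

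The algorithm proceeds in three phases. First, a preprocessing phase computes $T$ in $O(n)$ time together with, for each $\triangle_i$, its three interior angles from the prescribed edge lengths (via the law of cosines); this lets us decide, in constant time per triangle, whether the inside placement is even geometrically admissible. Second, an incremental phase fixes the position of $\triangle_1$ arbitrarily, and for $i=1,\dots,m-1$ chooses between the outside and inside placement of $\triangle_{i+1}$ using only a constant-size summary of the current realization -- specifically, the portion of the outer boundary of $\triangle_1\cup\dots\cup\triangle_i$ that is locally exposed at $\epsilon_i$, together with a pointer to the innermost already-drawn triangle that presently contains the new endpoint of $\epsilon_i$. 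Third, a verification phase tests in $O(n)$ time, via \cref{thm:straight-line_realization_planarity}, whether the produced realization is planar and respects the derived embedding, returning it if so and rejecting the instance otherwise.

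The main obstacle is the correctness of the local decisions: proving that at each step the choice between outside and inside can be made using $O(1)$ information without ever blocking a globally feasible solution. I would establish this via a structural normalization lemma: any planar straight-line realization of a weighted maximal outerpath can be reshaped so that $\triangle_{i+1}$ is placed outside $\triangle_i$ whenever the outside placement is not obstructed by an edge of the current outer boundary local to $\epsilon_i$, and is placed inside $\triangle_i$ otherwise. Intuitively, because the decomposition tree is a path, once $\triangle_{i+1}$ is placed outside, none of the triangles $\triangle_1,\dots,\triangle_{i-1}$ lying deeper in some enclosing triangle of $\triangle_{i+1}$ can later become relevant -- the only potentially interfering triangles are those bordering the outer boundary in a constant-size neighborhood of $\epsilon_i$. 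This local interference can be tracked by a pointer that is updated in $O(1)$ time whenever we enter or leave an enclosing triangle.

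Putting the three phases together yields an $O(n)$ running time. I expect the hard technical step to be the normalization lemma, and specifically ruling out pathological outerpaths in which the ``correct'' inside/outside choice at some $\triangle_{i+1}$ depends on a triangle $\triangle_j$ with $j$ arbitrarily larger than $i$; the fact that the dual graph is a path (rather than a general tree or cycle) is what I would exploit to forbid such non-local dependencies.
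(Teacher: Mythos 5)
Your high-level architecture (sweep along the dual path, two candidate placements per step, a greedy local rule justified by a replacement lemma, one planarity test at the end via \cref{thm:straight-line_realization_planarity}) matches the paper's, but two of your concrete choices are wrong, and the "normalization lemma" you defer to is false as stated. First, your analysis of the same-side placement is incomplete: when the apex of $\triangle_{i+1}$ lies on the same side of $\epsilon_i$ as the interior of $\triangle_i$, the non-crossing outcomes are \emph{two}, namely $\triangle_{i+1}\subseteq\triangle_i$ \emph{or} $\triangle_i\subseteq\triangle_{i+1}$ (both angles of $\triangle_{i+1}$ at $\epsilon_i$ larger); you declare everything outside the first case a crossing. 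The second case is the important one: it is the placement the paper's algorithm actively \emph{prefers}, because when the entire partial drawing fits inside the new triangle the footprint of the realization becomes exactly that rigid triangle, which is contained in the footprint of \emph{every} realization and is therefore provably a safe (``$e$-smallest'') choice. Your rule "place outside whenever locally unobstructed" makes the opposite choice and can block a feasible instance. Concretely: let $\triangle_1,\triangle_2,\triangle_3$ share a vertex $v$ common to $\epsilon_1$ and $\epsilon_2$, with angles at $v$ of about $110^{\circ}$, $120^{\circ}$, $140^{\circ}$ respectively, with $\triangle_1$ fitting inside $\triangle_2$. The only planar realization tucks $\triangle_1$ inside $\triangle_2$ and puts $\triangle_3$ outside ($120^{\circ}+140^{\circ}<360^{\circ}$); your rule places $\triangle_1$ outside $\triangle_2$ (nothing obstructs it at that moment), and then $\triangle_3$ cannot be placed ($110^{\circ}+120^{\circ}+140^{\circ}>360^{\circ}$ outside, and $140^{\circ}>120^{\circ}$ rules out inside). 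So the correct greedy invariant is not "outside unless obstructed" but "minimize the interior", and proving that the minimizing choice is without loss of generality is exactly the paper's \cref{le:optimal-replacement}.

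Second, two ingredients you omit are load-bearing. (i) The paper first splits the outerpath at a triangle $c_x$ of maximum perimeter and processes each half \emph{toward} $c_x$; this is what guarantees that every partial drawing must keep the next attachment edge on its outer face (an ``$e_i$-outer realization''), which in turn is what makes "new triangle nested inside old" always discardable and the case analysis finite. Starting from an arbitrary end $\triangle_1$, as you do, this invariant fails. (ii) Your claimed $O(1)$-size summary (boundary edges local to $\epsilon_i$ plus one containment pointer) is not enough to decide whether a placement collides with the drawing built so far: the colliding edge can be far from $\epsilon_i$ along the boundary. The paper resolves this not by answering that question locally but by \emph{never answering it during the sweep}: it maintains the full convex-hull boundary with Melkman's algorithm (amortized $O(1)$ per step, but not constant-size state) plus a single ``way-out'' vertex $\xi_i$, uses the hull only for the $O(1)$ test "does everything fit inside the new triangle", and defers all crossing detection to one final $O(n)$ planarity test. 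Without the hull machinery and the deferred test, your step either makes an unjustified decision or cannot run in constant time.
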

	
	\begin{theorem}\label{th:outerpillars}
		Let $G$ be an $n$-vertex weighted maximal outerpillar. There exists
		an $O(n^3)$-time algorithm that tests whether $G$ admits a planar
		straight-line realization and, in the positive case, constructs such
		a realization.
	\end{theorem}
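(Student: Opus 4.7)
My plan is to solve the problem by dynamic programming along the spine of the caterpillar that is the dual tree of $G$. Let $\triangle_1, \triangle_2, \ldots, \triangle_k$ denote the spine triangles in order, with $k \leq n - 2$, and for each $i$ let $L_i$ denote the set of leaf triangles of the dual tree attached to $\triangle_i$. Note that the spine triangles alone induce a maximal outerpath subgraph of $G$, so the machinery behind \cref{th:outerpaths} gives us a handle on how the spine can be realized in a straight-line manner.

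The first step is to characterize, for each spine triangle $\triangle_i$ and each leaf $\triangle^* \in L_i$ attached to $\triangle_i$ via an edge $e$, the two candidate placements of $\triangle^*$ given the position of $\triangle_i$: the one that places $\triangle^*$ on the same side of $e$ as $\triangle_i$'s interior, forcing $\triangle^*$ to be drawn inside $\triangle_i$ and requiring a geometric fit quantifiable via \cref{obs:containments,obs:internal_angles,obs:containment_sharing_edges}, and the one on the opposite side, which places $\triangle^*$ outside the spine and may extend the outer frontier. A leaf drawn inside $\triangle_i$ is trapped there and cannot conflict with any later placement; a leaf drawn outside may conflict with future spine triangles or future outside leaves.

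The DP would then proceed spine-by-spine. At step $i$, the state captures the current position of $\triangle_i$ (equivalently, the reflection choice across the edge shared with $\triangle_{i-1}$) together with a concise description of the ``active outer frontier'' — the portion of the outer boundary of the partial realization that remains exposed to potential future placements. Exploiting the near-linear layout imposed by the spine, I would argue that the active frontier can be summarized by an index identifying the earliest previously placed spine triangle whose outside region is still reachable by a future outside leaf, giving $O(n)$ summaries per spine step and $O(n^2)$ states overall. Each transition from step $i$ to step $i+1$ enumerates the constant number of positional choices for $\triangle_{i+1}$ together with, independently for each $\triangle^* \in L_{i+1}$, its inside-or-outside choice, checked against the active frontier in $O(n)$ time; this totals $O(n^3)$, and a standard backtracking pass reconstructs an explicit realization.

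The main obstacle will be to rigorously justify that the outer frontier admits an $O(n)$-size summary, which reduces to showing that an outside leaf can only obstruct future placements through a bounded range of previously placed spine triangles along the spine. This locality relies critically on the spine being a path, and on the fact that the edge-length constraints limit how far a leaf can extend outward. Once this locality is established, the per-step bookkeeping reduces to a constant number of conflict checks — in the spirit of the conflict analysis of \cref{sec:conflicts} — against the current frontier, which gives the claimed cubic running time.
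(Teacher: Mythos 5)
There is a genuine gap, and it sits exactly where you flag it: the claim that the ``active outer frontier'' of a partial realization can be summarized by a single index into the spine (the earliest spine triangle whose outside is still reachable), yielding $O(n)$ states per step. Nothing in the geometry forces an outside leaf attached to spine triangle $\triangle_j$ to interact only with a bounded or contiguous range of later spine triangles: the spine can wind, and a leaf protruding outward at step $j$ can collide with a spine triangle or an outside leaf placed arbitrarily many steps later. You would also need an exchange/domination argument showing that two partial realizations with the same summary are interchangeable for all future decisions; without that, the DP is not correct even if the state count were right. Neither piece is supplied, and the second is the real heart of the theorem.

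The paper takes a different route that sidesteps frontier summaries entirely. It roots the construction at a maximum-perimeter triangle $c^*$ (so that $c_x$ is never nested in anything, splitting $G$ into two outerpillars $G_1,G_2$ handled independently), and for each prefix $G^i_1$ it maintains an \emph{$e_i$-optimal set} $\mathcal R^i_1$ of explicit $e_i$-outer realizations — not summaries — whose sufficiency is guaranteed by the replacement lemma (\cref{le:optimal-replacement}): any realization of the whole graph can have its restriction to $G^i_1$ swapped for an $e_i$-smaller one from the set. The size bound $|\mathcal R^i_1|\le i+1$ comes from a growth argument, not locality: at each step, either the whole prefix together with $d_i$ fits inside $c_i$ (collapsing the set to one realization), or at most one new realization is added in the ``prefix inside $c_i$, $d_i$ outside'' case plus one extension per existing realization in the ``prefix outside $c_i$'' case. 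Each candidate extension is then checked by the $O(n)$-time planarity test of \cref{thm:straight-line_realization_planarity}, giving $O(n^2)$ per step and $O(n^3)$ overall. If you want to salvage your DP, the lesson is that the bounded ``state space'' should be a set of concrete dominating realizations whose cardinality you control combinatorially, with correctness via \cref{le:optimal-replacement}, rather than a geometric frontier summary whose adequacy is hard to certify.
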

	
	\newenvironment{proofA}{\noindent{\em Proof of \cref{th:outerpaths}.}}{\hspace*{\fill}${\small\qed}$\vspace{2mm}}
	\newenvironment{proofB}{\noindent{\em Proof of \cref{th:outerpillars}.}}{\hspace*{\fill}${\small\qed}$\vspace{2mm}}
	
	\begin{figure}[htb]
		\centering
		\subcaptionbox{}
		{\includegraphics[scale=1]{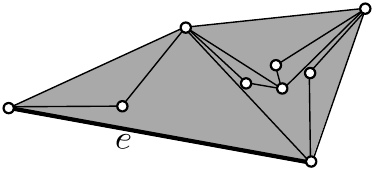}}%
		\hfil
		\subcaptionbox{}
		{\includegraphics[scale=1]{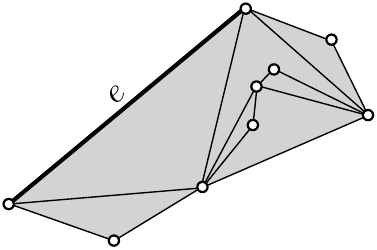}}%
		\hfil
		\subcaptionbox{}
		{\includegraphics[scale=1]{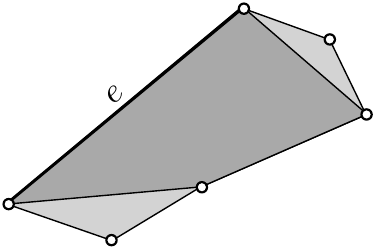}}%
		\caption{(a) and (b) show two $e$-outer realizations $\Gamma$ and $\Gamma'$ of a weighted $2$-tree. The gray regions show $\mathcal I_{\Gamma}$ and $\mathcal I_{\Gamma'}$. We have that $\Gamma$ is $e$-smaller than $\Gamma'$. Indeed, as shown in (c), $\Gamma$ can be rotated,
			translated, and reflected so that the representation of $e$
			coincides with the one in $\Gamma'$ and so that its interior is a
			subset of the interior of $\Gamma'$.}
		\label{fig:outerpath-smaller}
	\end{figure}
	
	Before proving~\cref{th:outerpaths,th:outerpillars}, we establish some
	notation and definitions that are common to both proofs. Let $G$ be a
	weighted $2$-tree. For a planar straight-line realization $\Gamma$ of
	$G$, the \emph{interior} of $\Gamma$, denoted by
	$\mathcal I_{\Gamma}$, is the complement of the outer face of
	$\Gamma$, i.e., the union of the closure of the internal faces of
	$\Gamma$. Let $e$ be an edge of $G$. An \emph{$e$-outer realization}
	of $G$ is a planar straight-line realization of $G$ such that $e$ is
	incident to its outer face. Given two $e$-outer realizations $\Gamma$
	and $\Gamma'$ of $G$, we say that $\Gamma$ is \emph{$e$-smaller than}
	$\Gamma'$ if there exists an $e$-outer realization $\Gamma''$ that can
	be obtained by a rigid transformation of $\Gamma$ such that the
	end-vertices of $e$ lie at the same points in $\Gamma''$ and in
	$\Gamma'$ and such that
	$\mathcal I_{\Gamma''}\subseteq \mathcal I_{\Gamma'}$ (see \cref{fig:outerpath-smaller}); that is
	$\Gamma$ is $e$-smaller than $\Gamma'$ if it can be rotated,
	translated, and possibly reflected so that the representation of $e$
	coincides with the one in $\Gamma'$ and so that its interior is a
	subset of the interior of $\Gamma'$. A set $\mathcal R$ of $e$-outer
	realizations of $G$ is \emph{$e$-optimal} if, for every $e$-outer
	realization $\Gamma'$ of $G$, there exists an $e$-outer realization
	$\Gamma$ of $G$ in $\mathcal R$ such that $\Gamma$ is $e$-smaller than
	$\Gamma'$. An $e$-outer realization $\Gamma$ of $G$ is
	\emph{$e$-optimal} if the set $\{\Gamma\}$ is $e$-optimal. 
	
	The following lemma will be very useful; see~\cref{fig:outer-replacement}. Roughly speaking, it asserts that, in order to construct a planar straight-line realization of a weighted $2$-tree, for certain subgraphs it is not a loss of generality to use realizations that are ``as $e$-small as possible''. 
	
	\begin{lemma} \label{le:optimal-replacement} Let $G$ be a weighted
		$2$-tree that admits a planar straight-line realization
		$\mathcal G$. Let $H$ and $K$ be two subgraphs of $G$ that satisfy
		the following properties:
		\begin{itemize}
			\item $H$ and $K$ are weighted $2$-trees;
			\item $H\cup K = G$ (that is, every vertex and edge of $G$ is in $H$
			or in $K$);
			\item $H\cap K = e$ (that is, $H$ and $K$ share an edge $e$, the
			end-vertices of $e$, and no other edge or vertex);
			\item the restriction $\mathcal H$ of $\mathcal G$ to $H$ is an
			$e$-outer realization of $H$; and
			\item no vertex of $K$ lies inside an internal face of $\mathcal H$ in
			$\mathcal G$.
		\end{itemize}
		Let $\mathcal H'$ be any $e$-outer realization of $H$ that is
		$e$-smaller than $\mathcal H$. Then it is possible to replace
		$\mathcal H$ with an $e$-outer realization of $H$ that can be obtained
		by a rigid transformation of $\mathcal H'$ in such a way that the
		resulting straight-line realization of $G$ is planar.
	\end{lemma}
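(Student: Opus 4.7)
The plan is to exploit the definition of ``$e$-smaller'' to produce the desired rigid transformation and then verify that the resulting substitution is planar. Specifically, the assumption that $\mathcal H'$ is $e$-smaller than $\mathcal H$ furnishes an $e$-outer realization $\mathcal H''$ of $H$, obtained from $\mathcal H'$ by a rigid transformation (possibly involving a reflection), such that the end-vertices of $e$ are placed at the same two points in $\mathcal H''$ and in $\mathcal H$ and such that $\mathcal I_{\mathcal H''}\subseteq \mathcal I_{\mathcal H}$. I would then construct a straight-line drawing $\mathcal G'$ of $G$ by keeping the restriction of $\mathcal G$ to $K$ unchanged and substituting $\mathcal H''$ for $\mathcal H$; this is consistent because $H$ and $K$ share only the edge $e$ and its two end-vertices, which are drawn identically in both pieces. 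The remaining task is to prove that $\mathcal G'$ is planar. Since the restrictions of $\mathcal G'$ to $H$ and to $K$ are both planar drawings (the former by construction, the latter inherited from the planar drawing $\mathcal G$), any crossing in $\mathcal G'$ must involve an edge of $H$ drawn in $\mathcal H''$ and an edge of $K\setminus\{e\}$ drawn as in $\mathcal G$.

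The heart of the proof is the geometric claim that the interior of every such edge $\epsilon\in K\setminus\{e\}$, as drawn in $\mathcal G$, lies entirely in the open outer face $f_{\text{out}}(\mathcal H)$ of $\mathcal H$. Indeed, each end-vertex of $\epsilon$ is a vertex of $K$, and since $H\cap K=e$, it is either an end-vertex of $e$ (lying on the boundary of $f_{\text{out}}(\mathcal H)$) or, by the lemma's hypothesis on vertices of $K$, a point of $f_{\text{out}}(\mathcal H)$. Simplicity of $G$ excludes $\epsilon$ from sharing both end-vertices with $e$, so at least one end-vertex of $\epsilon$ lies in $f_{\text{out}}(\mathcal H)$. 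Planarity of $\mathcal G$ forces the interior of $\epsilon$ to avoid every edge of $H$, hence to lie in a single face of $\mathcal H$, which must then be $f_{\text{out}}(\mathcal H)$. The inclusion $\mathcal I_{\mathcal H''}\subseteq \mathcal I_{\mathcal H}$ gives $f_{\text{out}}(\mathcal H)\subseteq f_{\text{out}}(\mathcal H'')$, so the interior of $\epsilon$ also lies in $f_{\text{out}}(\mathcal H'')$. On the other hand, every edge of $\mathcal H''$ is contained in $\mathcal I_{\mathcal H''}$: because $H$ is biconnected and has at least three vertices, each edge of $\mathcal H''$ bounds at least one internal face and is therefore contained in the closure of that face. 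Hence no edge of $\mathcal H''$ can intersect the interior of $\epsilon$, ruling out the last possible source of crossings in $\mathcal G'$.

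The main obstacle is precisely this geometric claim, in which one must rule out that the interior of an edge of $K\setminus\{e\}$ strays into an internal face of $\mathcal H$. All three ingredients play a role: the planarity of $\mathcal G$ confines the interior of $\epsilon$ to a single face of $\mathcal H$; the assumption that no vertex of $K$ lies inside an internal face of $\mathcal H$ places at least one end-vertex of $\epsilon$ in the open outer face of $\mathcal H$; and the simplicity of $G$ forbids $\epsilon$ from being a second copy of $e$ routed through an internal face. Once this containment is established, the rest of the argument reduces to a routine comparison of the outer faces of $\mathcal H$ and $\mathcal H''$ via the inclusion $\mathcal I_{\mathcal H''}\subseteq \mathcal I_{\mathcal H}$.
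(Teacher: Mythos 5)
Your proof is correct and follows essentially the same route as the paper's: both arguments use the hypothesis on the vertices of $K$ together with the planarity of $\mathcal G$ to separate $K\setminus e$ from the interior of $\mathcal H$, and then combine the containment $\mathcal I_{\mathcal H''}\subseteq\mathcal I_{\mathcal H}$ with the biconnectivity of $H$ to conclude that the substituted drawing cannot cross $\mathcal K$. The only difference is presentational — the paper packages the separation as ``$\mathcal H$ lies entirely in one face of $\mathcal K$'' while you argue edge-by-edge that each edge of $K\setminus\{e\}$ lies in the outer face of $\mathcal H$ — and your version spells out the details somewhat more explicitly.
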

	
	\begin{figure}[tb]
		\centering
		\subcaptionbox{}
		{\includegraphics[scale=1]{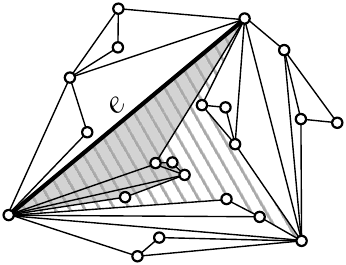}}%
		\hfil
		\subcaptionbox{}
		{\includegraphics[scale=1]{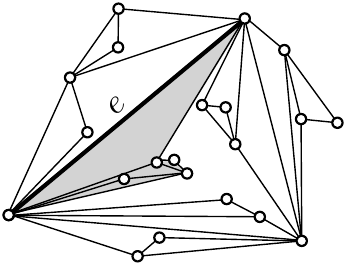}}%
		\caption{(a) The drawing $\mathcal G$; the set $\mathcal I_{\mathcal H}$ is shaded gray. The region with dark gray strips is the face $f$ of $\mathcal K$. (b) The drawing obtained from $\mathcal G$ by replacing $\mathcal H$ with $\mathcal H''$; the set $\mathcal I_{\mathcal H''}$ is gray.}
		\label{fig:outer-replacement}
	\end{figure}
	
	\begin{proof}
		Let $\mathcal K$ be the restriction of $\mathcal G$ to $K$. Assume
		that $H$ has more than two vertices, as otherwise $H$ contains a
		single edge, hence $\mathcal H'$ can be obtained by a rigid
		transformation of $\mathcal H$ and the statement is trivial. By
		assumption, no vertex of $K$ lies in $\mathcal I_{\mathcal H}$;
		further, $H$ is biconnected, by Property (P\ref{pr:biconnected}) of a $2$-tree. It follows
		that there is a face $f$ of $\mathcal K$ that contains $\mathcal H$
		entirely, except for the edge $e$ and its end-vertices, which are
		shared by $H$ and $K$. By assumption, there exists an $e$-outer
		realization $\mathcal H''$ of $H$ that can be obtained by a rigid
		transformation of $\mathcal H'$ such that
		$\mathcal I_{\mathcal H''}\subseteq \mathcal I_{\mathcal H}$, hence
		the part of $\mathcal H''$ different from the edge $e$ does not
		intersect $\mathcal K$; the edge $e$ does not intersect $\mathcal K$
		either, as it is part of it and $\mathcal K$ is planar. Hence, the
		straight-line realization of $G$ resulting from the replacement of
		$\mathcal H$ with $\mathcal H''$ in $\mathcal G$ is planar.
	\end{proof}

	We are now ready to prove~\cref{th:outerpaths,th:outerpillars}.
	
	\medskip
	\begin{proofA}
		Let $G=(V,E,\lambda)$ be an $n$-vertex weighted maximal outerpath
		(as the one in~\cref{fig:outerpath-structure}) and let $\mathcal O$
		be its outerplane embedding, which can be found in $O(n)$
		time~\cite{d-iroga-07,m-laarogmog-79,w-rolt-87}. Since $G$ is an
		outerpath, its dual tree is a path, which we denote by
		$P:=(p_1,\dots,p_k)$, where $k=n-2$; note that $P$ can be easily
		recovered from $\mathcal O$ in $O(n)$ time. Each node $p_i$ of $P$
		corresponds to an internal face of $\mathcal O$; we denote by $c_i$
		the $3$-cycle of $G$ bounding such a face; each cycle $c_i$ has a unique (planar)
		straight-line realization (up to a rigid transformation), which we
		denote by $\mathcal C_i$. Further, for $i=1,\dots,k-1$, let $e_i$ be
		the edge of $G$ that is dual to the edge $(p_i,p_{i+1})$ of $P$;
		note that $e_i$ is an internal edge of $\mathcal O$.
		
		\begin{figure}[ht]
			\centering
			\includegraphics[scale=1]{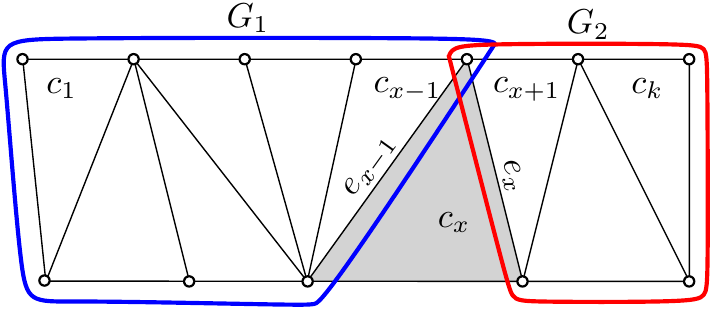}%
			\caption{The outerplane embedding of an outerpath. The gray face is the one bounded by $c_x$. The graphs $G_1$ and $G_2$ are enclosed inside fat curves.}
			\label{fig:outerpath-structure}
		\end{figure}
		
		Let $x\in\{1,\dots,k\}$ be an index such that the sum of the
		lengths of the edges of $c_{x}$ is maximum. Let $G_1$ be the
		subgraph of $G$ composed of the $3$-cycles $c_1,c_2,\dots,c_{x-1}$
		and let $G_2$ be the subgraph of $G$ composed of the $3$-cycles
		$c_{x+1},c_{x+2},\dots,c_k$. One of these graphs might be undefined
		if $x=1$ or $x=k$; however, in the following, we assume that $1<x<k$
		as the arguments for the cases $x=1$ and $x=k$ are analogous and
		actually simpler. Note that $G_1$ and $G_2$ are maximal outerpaths.
		
		Assume that a planar straight-line realization $\Gamma$ of $G$
		exists; for $i=1,2$, let $\Gamma_i$ be the restriction of $\Gamma$
		to $G_i$. Since the sum of the lengths of the edges of $c_x$ is
		maximum among all the $3$-cycles of $G$, it follows that $c_x$ does
		not lie inside any $3$-cycle of $G$ in $\Gamma$; hence, $\Gamma_1$
		is an $e_{x-1}$-outer realization of $G_1$ and $\Gamma_2$ is an
		$e_x$-outer realization of $G_2$.
		
		A key ingredient of our algorithm is the fact that, if there exists an $e_{x-1}$-outer realization of $G_1$ (resp.\ an $e_x$-outer realization of $G_2$), then there exists an {\em $e_{x-1}$-optimal} realization of $G_1$ (resp.\ an {\em $e_x$-optimal} realization of $G_2$).  Indeed, in the following, we show an $O(n)$-time algorithm that either (i)
		concludes that $G_1$ admits no $e_{x-1}$-outer realization or that $G_2$ admits no $e_x$-outer realization, and thus $G$ admits no planar straight-line realization, or (ii) constructs an
		$e_{x-1}$-optimal realization $\Gamma_1$ of $G_1$ and its plane
		embedding $\mathcal E(\Gamma_1)$, and an $e_{x}$-optimal realization
		$\Gamma_2$ of $G_2$ and its plane embedding $\mathcal E(\Gamma_2)$.
		
		\begin{figure}[b!]
			\centering
			{\includegraphics[width=\textwidth]{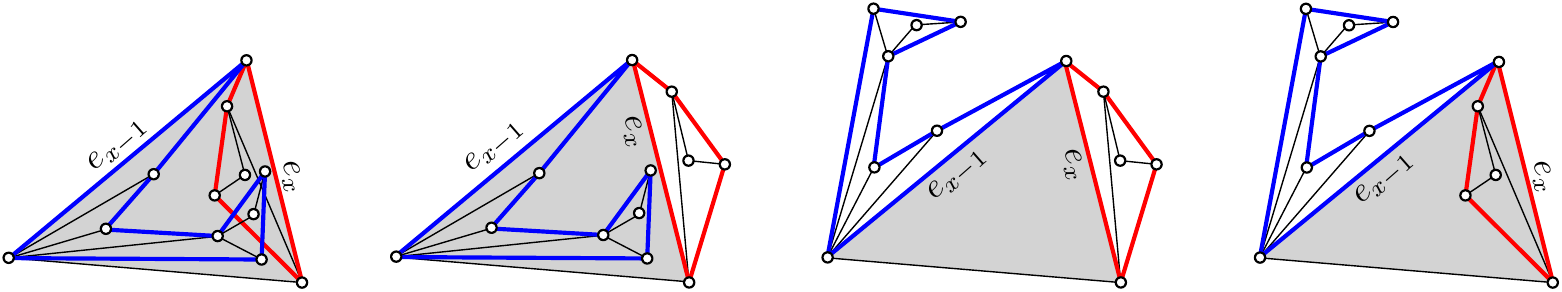}}%
			
			\caption{The four different ways to combine $\Gamma_1$ and
				$\Gamma_2$ with $\mathcal C_x$. In this example, the first
				realization is not planar, while the other three are.}
			\label{fig:outerpath-drawing}
		\end{figure}
		
		In view of the argument above, the existence of such an algorithm
		suffices to conclude the proof of the theorem. Indeed, if the
		algorithm concludes that $G_1$ admits no $e_{x-1}$-outer realization
		or that $G_2$ admits no $e_x$-outer realization, then $G$ admits no
		planar straight-line realization. Otherwise, the algorithm
		constructs an $e_{x-1}$-optimal realization $\Gamma_1$ of $G_1$,
		together with its plane embedding $\mathcal E(\Gamma_1)$, and an
		$e_{x}$-optimal realization $\Gamma_2$ of $G_2$, together with its
		plane embedding $\mathcal E(\Gamma_2)$. Then, in order to test
		whether $G$ admits a planar straight-line realization, we consider
		the unique (up to a rigid transformation) planar straight-line
		realization $\mathcal C_x$ of $c_x$ and we try to combine it with
		$\Gamma_1$ and $\Gamma_2$ (see~\cref{fig:outerpath-drawing}). This
		combination can be done in four ways. Indeed, once $\Gamma_1$ and
		$\mathcal C_x$ coincide on the representation of $e_{x-1}$, it still
		remains to decide whether the vertex of $c_{x-1}$ not incident to
		$e_{x-1}$ and the vertex of $c_x$ not incident to $e_{x-1}$ lie on
		different sides or on the same side of the line through $e_{x-1}$,
		and similarly for $\Gamma_2$ and $\mathcal C_x$. These two
		independent choices result in four straight-line realizations of
		$G$. Each of these realizations, say $\Gamma$, can be constructed in
		$O(n)$ time; further, its plane embedding $\mathcal E(\Gamma)$ can
		also be constructed in $O(n)$ time from $\mathcal E(\Gamma_1)$ and
		$\mathcal E(\Gamma_2)$. Then it can be tested in $O(n)$ time whether $\Gamma$ is a planar straight-line realization respecting $\mathcal E(\Gamma)$, due to
		\cref{thm:straight-line_realization_planarity}. If (at least) one of
		the four realizations is planar, then it is the desired planar
		straight-line realization of $G$. Otherwise, we claim that there
		exists no planar straight-line realization of $G$. For a
		contradiction, assume that there exists such a realization, and
		denote it by $\Gamma$. Since $\Gamma_1$ is $e_{x-1}$-optimal, it is
		$e_{x-1}$-smaller than the restriction of $\Gamma$ to $G_1$; hence,
		we can apply \cref{le:optimal-replacement} with $H=G_1$,
		$K=G_2\cup c_x$, $\mathcal G=\Gamma$, $e=e_{x-1}$, and
		$\mathcal H'= \Gamma_1$, in order to obtain a planar straight-line
		realization $\Gamma'$ of $G$ whose restriction to $G_1$ is
		$\Gamma_1$. Further, since $\Gamma_2$ is $e_{x}$-optimal, it is
		$e_{x}$-smaller than the restriction of $\Gamma'$ to $G_2$; hence,
		we can apply \cref{le:optimal-replacement} with $H=G_2$,
		$K=G_1\cup c_x$, $\mathcal G=\Gamma'$, $e=e_x$, and
		$\mathcal H'= \Gamma_2$, in order to obtain a planar straight-line
		realization $\Gamma''$ of $G$ whose restriction to $G_1$ is
		$\Gamma_1$ and whose restriction to $G_2$ is $\Gamma_2$; however,
		this is one of the four straight-line realizations for which the
		planarity testing was negative, a contradiction.
		
		It remains show an $O(n)$-time algorithm that either concludes that
		$G_1$ admits no $e_{x-1}$-outer realization, or determines an
		$e_{x-1}$-optimal realization of $G_1$. A very similar application
		of the algorithm allows us to either conclude that $G_2$ admits no
		$e_{x}$-outer realization or to determine an $e_{x}$-optimal
		realization of $G_2$.
		
		For $i=1,\dots,x-1$, let $G^i_1$ be the subgraph of $G_1$ composed
		of the $3$-cycles $c_1,c_2,\dots,c_i$; note that $G^{x-1}_1=G_1$. A
		key observation for our algorithm is the following. Assume that an
		$e_{x-1}$-outer realization $\Gamma_1$ of $G_1$ exists; let
		$\Gamma^i_1$ be the restriction of $\Gamma_1$ to $G^i_1$. Then
		$\Gamma^i_1$ is an $e_i$-outer realization of $G^i_1$. This is
		trivial when $i=x-1$; further, when $i<x-1$, if $e_i$ were not
		incident to the outer face of $\Gamma^i_1$, then $e_{x-1}$ could not
		be incident to the outer face of $\Gamma_1$ without violating the
		planarity of $\Gamma_1$.
		
		Our algorithm works by induction on $i$ in order to either conclude
		that $G^i_1$ admits no $e_i$-outer realization (and thus $G_1$
		admits no $e_{x-1}$-outer realization), or to determine an
		$e_i$-optimal realization $\Gamma^i_1$ of $G^i_1$. This is trivial
		when $i=1$, namely $G^1_1$ is the $3$-cycle $c_1$ which has a unique
		planar straight-line realization $\Gamma^1_1:=\mathcal C_1$, up to a
		rigid transformation; then $\Gamma^1_1$ is indeed an $e_1$-optimal
		realization.
		
		Assume that we have an $e_{i-1}$-optimal realization
		$\Gamma^{i-1}_1$ of $G^{i-1}_1$, for some $i\in \{2,\dots,x-1\}$;
		indeed, if we already concluded that $G_1$ admits no $e_{x-1}$-outer
		realization, there is nothing else to do.
		
		Note that the cycles $c_{i-1}$ and $c_i$ share the edge $e_{i-1}$;
		let $v_{i-1}$ and $v_i$ be the vertices of $c_{i-1}$ and $c_i$,
		respectively, not incident to $e_{i-1}$. Any straight-line
		realization of $G^i_1$ is of one of two types: it is a
		\emph{different-side} realization if $v_{i-1}$ and $v_i$ lie on
		different sides of the line through $e_{i-1}$, while it is a
		\emph{same-side} realization if $v_{i-1}$ and $v_i$ lie on the same
		side of the line through $e_{i-1}$.
		
		In order to determine whether an $e_i$-outer realization of $G^i_1$
		exists, we combine $\Gamma^{i-1}_1$ with $\mathcal C_i$ in two ways,
		so to form a different-side realization and a same-side
		realization. Namely, $G^{i-1}_1$ and $c_i$ share the edge $e_{i-1}$;
		hence, once $\Gamma^{i-1}_1$ and $\mathcal C_i$ coincide on the
		representation of $e_{i-1}$, it still remains to decide whether
		$v_{i-1}$ and $v_i$ lie on different sides or on the same side of
		the line through $e_{i-1}$. This choice leads to two straight-line
		realizations of $G^i_1$, which we respectively denote by
		$\Gamma^{i,a}_1$ and $\Gamma^{i,b}_1$.
		
		We check each of $\Gamma^{i,a}_1$ and $\Gamma^{i,b}_1$ for planarity and discard each realization that is not planar. Further, if $\Gamma^{i,b}_1$ was not discarded, then the triangles $\mathcal C_{i-1}$ and $\mathcal C_i$ are one nested into the other one (except for the edge $e_{i-1}$, which is common to both triangles) in such a realization. If $\mathcal C_i$ is nested into $\mathcal C_{i-1}$, then we discard $\Gamma^{i,b}_1$, as it is not an $e_i$-outer
		realization of $G^i_1$.  We have now three cases:
		
		\begin{itemize}
			\item If we did not discard $\Gamma^{i,b}_1$, then we set
			$\Gamma^i_1=\Gamma^{i,b}_1$. Indeed, if we did not discard
			$\Gamma^{i,b}_1$, then $\Gamma^{i,b}_1$ is a planar straight-line
			realization of $G^i_1$ in which $\mathcal C_{i-1}$ is nested into
			$\mathcal C_i$; this implies that all of $\Gamma^{i-1}_1$ lies
			inside $\mathcal C_i$, hence $\mathcal I_{\Gamma^i_1}$ coincides
			with the closure of the interior of $\mathcal C_i$. It follows
			that $\Gamma^i_1$ is an $e_i$-optimal realization of $G^i_1$,
			given that in any planar straight-line realization of $G^i_1$ the
			cycle $c_i$ is represented by the triangle $\mathcal C_i$.
			
			\item If we discarded $\Gamma^{i,b}_1$ but we did not discard
			$\Gamma^{i,a}_1$, then we set $\Gamma^i_1=\Gamma^{i,a}_1$. We
			prove that $\Gamma^i_1$ is an $e_i$-optimal realization of
			$G^i_1$.
			
			First, we prove that there exists no same-side $e_i$-outer
			realization of $G^i_1$. This is obvious if $\Gamma^{i,b}_1$ was
			discarded because $\mathcal C_i$ is nested into $\mathcal C_{i-1}$, as the
			same would be true in any same-side realization of
			$G^i_1$. Suppose hence that $\Gamma^{i,b}_1$ was discarded because
			it is not planar and suppose, for a contradiction, that there
			exists a same-side $e_i$-outer realization $\Psi^i_1$ of
			$G^i_1$. Since $\Gamma^{i-1}_1$ is $e_{i-1}$-optimal, it is
			$e_{i-1}$-smaller than the restriction of $\Psi^i_1$ to
			$G^{i-1}_1$; hence, we can apply \cref{le:optimal-replacement}
			with $H=G^{i-1}_1$, $K=c_i$, $\mathcal G=\Psi^i_1$, $e=e_{i-1}$,
			and $\mathcal H'= \Gamma^{i-1}_1$, in order to obtain a planar
			straight-line realization $\Gamma^{i,*}_1$ of $G$ whose
			restriction to $G^{i-1}_1$ is $\Gamma^{i-1}_1$. Since $\Psi^i_1$
			is a same-side realization of $G^i_1$, we have that
			$\Gamma^{i,*}_1$ is also a same-side realization of
			$G^i_1$. Hence, $\Gamma^{i,*}_1$ is the same realization as
			$\Gamma^{i,b}_1$, a contradiction to the fact that
			$\Gamma^{i,b}_1$ is not planar.
			
			Second, since we did not discard $\Gamma^{i,a}_1$, we have that
			$\Gamma^i_1=\Gamma^{i,a}_1$ is planar. This, together with the
			fact that $\Gamma^i_1$ is a different-side realization, implies
			that $\Gamma^i_1$ is an $e_i$-outer realization. The proof that
			$\Gamma^i_1$ is an $e_i$-optimal realization of $G^i_1$ easily
			follows by induction. Namely, as proved above, any $e_i$-outer
			realization of $G^i_1$ is a different-side realization. Consider
			any such realization $\Psi^i_1$ and let $\Psi^{i-1}_1$ be the
			restriction of $\Psi^i_1$ to the vertices and edges of
			$G^{i-1}_1$. By induction, we have that $\Gamma^{i-1}_1$ is
			$e_{i-1}$-smaller than $\Psi^{i-1}_1$, hence $\Gamma^{i}_1$ is
			$e_{i}$-smaller than $\Psi^{i}_1$, given that the realization
			$\mathcal C_i$ of $c_i$ coincides in $\Gamma^{i}_1$ and
			$\Psi^{i}_1$.

			\item Finally, if we discarded both $\Gamma^{i,a}_1$ and
			$\Gamma^{i,b}_1$, we conclude that there exists no $e_i$-outer
			realization of $G^i_1$. Namely, the proof that there exists no
			same-side $e_i$-outer realization of $G^i_1$ is the same as for
			the case in which we discarded $\Gamma^{i,b}_1$ but we did not
			discard $\Gamma^{i,a}_1$. The proof that there exists no
			different-side $e_i$-outer realization of $G^i_1$ is also very similar to the proof for a same-side realization. Indeed, that
			we discarded $\Gamma^{i,a}_1$ was unequivocally due to the fact
			that it is not planar. Suppose, for a contradiction, that there
			exists a different-side $e_i$-outer realization $\Psi^i_1$ of
			$G^i_1$. Since $\Gamma^{i-1}_1$ is $e_{i-1}$-optimal, it is
			$e_{i-1}$-smaller than the restriction of $\Psi^i_1$ to
			$G^{i-1}_1$; hence, we can apply \cref{le:optimal-replacement}
			with $H=G^{i-1}_1$, $K=c_i$, $\mathcal G=\Psi^i_1$, $e=e_{i-1}$,
			and $\mathcal H'= \Gamma^{i-1}_1$, in order to obtain a planar
			straight-line realization $\Gamma^{i,*}_1$ of $G$ whose
			restriction to $G^{i-1}_1$ is $\Gamma^{i-1}_1$. Since $\Psi^i_1$
			is a different-side realization of $G^i_1$, we have that
			$\Gamma^{i,*}_1$ is also a different-side realization of
			$G^i_1$. Hence, $\Gamma^{i,*}_1$ is the same realization as
			$\Gamma^{i,a}_1$, a contradiction to the fact that
			$\Gamma^{i,a}_1$ is not~planar.
		\end{itemize}
		
		It remains to show how the described algorithm can be implemented to
		run in $O(n)$ time. Note that a naive implementation of our algorithm
		would take $O(n^2)$ time. Indeed, for each of the $O(n)$ inductive
		steps, the realizations $\Gamma^{i,a}_1$ and $\Gamma^{i,b}_1$,
		together with their plane embeddings $\mathcal E(\Gamma^{i,a}_1)$ and
		$\mathcal E(\Gamma^{i,b}_1)$, can be constructed in $O(1)$ time from
		$\Gamma^{i-1}_1$ and its plane embedding $\mathcal
		E(\Gamma^{i-1}_1)$. Further, the test on whether $\mathcal C_i$ is nested into
		$\mathcal C_{i-1}$ in $\Gamma^{i,b}_1$ can also be performed in $O(1)$
		time. Finally, by~\cref{thm:straight-line_realization_planarity}, it can be tested in $O(n)$ time whether $\Gamma^{i,a}_1$ and $\Gamma^{i,b}_1$ are planar straight-line realizations of $G^{i-1}_1$ respecting $\mathcal E(\Gamma^{i,a}_1)$ and
		$\mathcal E(\Gamma^{i,b}_1)$, respectively.
		
		In order to achieve $O(n)$ total running time, we do not test
		planarity at each inductive step. Rather, for $i=1,\dots,x-1$, we
		compute a ``candidate'' straight-line realization $\Gamma^i_1$ of
		$G^i_1$, and we only test for the planarity of the final realization
		$\Gamma^{x-1}_1$. ``Candidate'' here means $e_i$-optimal, however only in the case in which $\Gamma^i_1$ is planar. Indeed, $\Gamma^i_1$ might not be planar, in which case the candidate straight-line realization $\Gamma^i_1$ of $G^i_1$ is not required to satisfy any specific geometric constraint. 
		
		More formally, for $i=1,\dots,x-1$, we construct a straight-line realization $\Gamma^i_1$ of $G^i_1$, together with a plane embedding $\mathcal E(\Gamma^i_1)$, satisfying the following invariant: If an $e_i$-outer realization of $G^i_1$ exists, then $\Gamma^i_1$ is an
		$e_i$-optimal realization of $G^i_1$ and $\mathcal E(\Gamma^i_1)$ is the plane embedding of $G^i_1$ corresponding to $\Gamma^i_1$. We again remark that, if an $e_i$-outer realization of $G^i_1$ does not exist, then $\Gamma^i_1$ is non-planar or the edge $e_i$ is not incident to the outer face of $\Gamma^i_1$.
		
		In order to construct $\Gamma^i_1$, we maintain:
		\begin{itemize}
			\item the boundary $\mathcal B^i_1$ of the convex hull of $\Gamma^i_1$
			(that is, the sequence of vertices along $\mathcal B^i_1$);
			\item a Boolean value $\beta(e_i)$ such that
			$\beta(e_i)=\texttt{True}$ if and only if $e_i$ is an edge of
			$\mathcal B^i_1$; and
			\item a distinguished vertex $\xi_i$ of $\mathcal B^i_1$ (this
			information is only available if $\beta(e_i)=\texttt{False}$; the
			meaning of $\xi_i$ will be explained later).
		\end{itemize}
		Our algorithm only guarantees the correctness of the above information
		in the case in which $G^i_1$ admits an $e_i$-outer realization. For example, if $G^i_1$ does not admit an $e_i$-outer realization, $\mathcal B^i_1$ might not correctly represent the boundary of the convex hull of $\Gamma^i_1$.
		
		Before describing our algorithm, we introduce a problem and a known
		solution for it. Let $\mathcal Z$ be a planar straight-line drawing of
		a path $S=(z_1,z_2,\dots,z_m)$ and, for any $i\in \{1,\dots,m\}$, let
		$\mathcal Z_i$ be the restriction of $\mathcal Z$ to the subpath
		$Z_i=(z_1,z_2,\dots,z_i)$ of $Z$. Suppose that the locations of
		$z_1,z_2,\dots,z_m$ in $\mathcal Z$ are unveiled one at a time, in
		that order. For $i=1,\dots,m$, when the location of $z_i$ is unveiled,
		we want to compute the boundary of the convex hull of $\mathcal
		Z_i$. Melkman~\cite{m-occh-87} presented an algorithm to solve the
		problem (i.e., to compute the boundaries of the convex hulls of
		$\mathcal Z_1,\mathcal Z_2,\dots,\mathcal Z_m$) in total $O(m)$ time.
		
		This result allows us to compute the boundaries
		$\mathcal B^1_1,\mathcal B^2_1,\dots,\mathcal B^{x-1}_1$ of the convex
		hulls of $\Gamma^1_1,\Gamma^2_1,\dots,\Gamma^{x-1}_1$ in total $O(n)$
		time. Indeed, an easy inductive argument proves that $G^{x-1}_1$
		contains a path $Z$ spanning all its vertices and such that, for
		$i=1,\dots,x-1$, the vertices of $G^i_1$ induce a subpath $Z_i$ of
		$Z$. For $i=2,\dots,x-1$, our algorithm constructs $\Gamma^i_1$ from
		$\Gamma^{i-1}_1$ by assigning a position to the vertex $v_i$ of
		$\mathcal C_i$ not incident to $e_{i-1}$; once we have done that, the
		algorithm by Melkman~\cite{m-occh-87} can be employed in order to
		compute $\mathcal B^i_1$ from $\mathcal B^{i-1}_1$ (more precisely, we
		run Melkman's algorithm~\cite{m-occh-87} in parallel with our
		algorithm and feed it the locations of the vertices of $Z$ in the
		order our algorithm computes them). Note that, if $G^i_1$ admits an $e_i$-outer realization, then $\mathcal B^i_1$ is correctly computed, otherwise
		$\mathcal B^i_1$ might not correctly represent the boundary of the
		convex hull of $\Gamma^i_1$. We will only sketch what Melkman's algorithm~\cite{m-occh-87} does in order to compute $\mathcal B^1_1,\mathcal B^2_1,\dots,\mathcal B^{x-1}_1$.
		
		The base case takes $O(1)$ time, as $G^1_1$ has a unique (planar)
		straight-line realization $\Gamma^1_1=\mathcal C_1$, up to a rigid
		transformation; further, $\mathcal E(\Gamma^1_1)$ is easily computed
		from $\Gamma^1_1$, and we have $\mathcal B^1_1=\mathcal C_1$,
		$\beta(e_1)=\texttt{True}$, and $\xi_1=\texttt{Null}$. Now assume
		that, for some $i\in \{2,\dots,x-1\}$, we have a candidate straight-line
		realization $\Gamma^{i-1}_1$ of $G^{i-1}_1$, together with a plane
		embedding $\mathcal E(\Gamma^{i-1}_1)$, satisfying the invariant, the
		boundary $\mathcal B^{i-1}_1$ of the convex hull of $\Gamma^{i-1}_1$,
		the Boolean value $\beta(e_{i-1})$, and the vertex $\xi_{i-1}$. We are
		going to distinguish some cases; common to all the cases is the
		observation that, if $\Gamma^{i-1}_1$ is not planar, then it is not
		important how our algorithm computes $\Gamma^{i}_1$,
		$\mathcal E(\Gamma^{i}_1)$, $\mathcal B^{i}_1$, $\beta(e_i)$, and
		$\xi_i$, because, by the invariant, $G^{i-1}_1$ admits no
		$e_{i-1}$-outer realization and hence $G^{i}_1$ admits no
		$e_{i}$-outer realization.
		
		{\em Case 1:} $\beta(e_{i-1})=\texttt{True}$. We check whether
		$\Gamma^{i-1}_1$ and $\mathcal C_i$ can be combined in order to form a
		same-side $e_i$-outer realization of $G^i_1$ (assuming that
		$\Gamma^{i-1}_1$ is planar). Since we know $\mathcal B^{i-1}_1$, this
		check only requires $O(1)$ time. Indeed, once $\mathcal C_i$ is
		embedded in such a way that $v_{i-1}$ and $v_i$ lie on the same side
		of the line through $e_{i-1}$, it suffices to compare the slopes of
		the edges of $\mathcal C_i$ different from $e_{i-1}$ with the slopes
		of the segments of $\mathcal B^{i-1}_1$ incident to the end-vertices
		of $e_{i-1}$ and different from $e_{i-1}$;
		see~\cref{fig:outerpath-linear-contained}.
		\begin{itemize}
			\item {\em Case 1.1:} If the check is successful, we let
			$\Gamma^{i}_1$ be the resulting straight-line realization of
			$G^i_1$; further, the plane embedding $\mathcal E(\Gamma^{i}_1)$ can
			be computed in $O(1)$ time from $\mathcal E(\Gamma^{i-1}_1)$ by
			inserting the edges of $c_i$ incident to $v_i$ next to $e_{i-1}$ in the
			cyclic order of the edges incident to the end-vertices of $e_{i-1}$
			and by letting the outer face of $\mathcal E(\Gamma^{i}_1)$ be
			bounded by the cycle $c_i$. Finally, we let
			$\mathcal B^{i}_1=\mathcal C_i$, $\beta(e_i)=\texttt{True}$, and
			$\xi_i=\texttt{Null}$. Note that, if $\Gamma^{i-1}_1$ is planar,
			then $\Gamma^{i}_1$ is an $e_i$-optimal realization of $G^i_1$ (this
			is the realization which was earlier called~$\Gamma^{i,b}_1$).
			\item {\em Case 1.2:} If the check is unsuccessful, we let
			$\Gamma^{i}_1$ be the straight-line realization obtained from
			$\Gamma^{i-1}_1$ by embedding $\mathcal C_i$ in such a way that
			$v_{i-1}$ and $v_i$ lie on different sides of the line through
			$e_{i-1}$. Thus, $\Gamma^{i}_1$ can be constructed in $O(1)$ time
			from $\Gamma^{i-1}_1$. The plane embedding
			$\mathcal E(\Gamma^{i}_1)$ can also be computed in $O(1)$ time from
			$\mathcal E(\Gamma^{i-1}_1)$; indeed, the set of clockwise orders of the edges incident to each vertex in
			$\mathcal E(\Gamma^{i}_1)$ is computed as in Case~1.1, while the
			boundary of the outer face of $\mathcal E(\Gamma^{i}_1)$ is computed
			from the one of $\mathcal E(\Gamma^{i-1}_1)$ by letting the two
			edges of $c_i$ incident to $v_i$ replace the edge $e_{i-1}$. Note that, if
			$\Gamma^{i-1}_1$ is planar, then $\Gamma^{i}_1$ is an $e_i$-optimal
			realization of $G^i_1$; this is the realization which was earlier
			called~$\Gamma^{i,a}_1$.
			
			\begin{figure}[ht]
				\centering
				\subcaptionbox{\label{fig:outerpath-linear-contained}}
				{\includegraphics[scale=1]{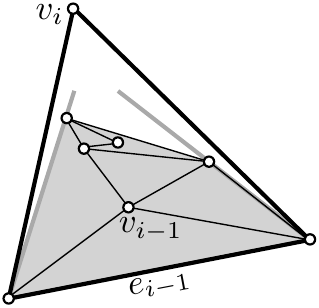}}%
				\hfil
				\subcaptionbox{\label{fig:outerpath-linear-hull}}
				{\includegraphics[scale=1]{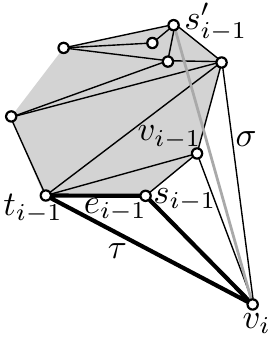}}%
				\hfil
				\subcaptionbox{\label{fig:outerpath-linear-way-out}}
				{\includegraphics[scale=1]{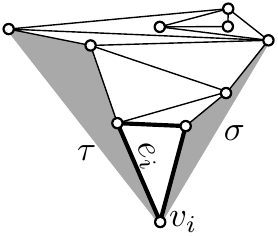}}%
				\hfil
				\subcaptionbox{\label{fig:outerpath-linear-out}}
				{\includegraphics[scale=1]{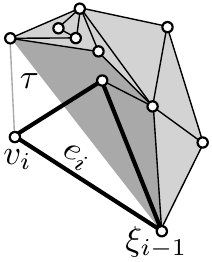}}%
				\hfil
				\subcaptionbox{\label{fig:outerpath-linear-in}}
				{\includegraphics[scale=1]{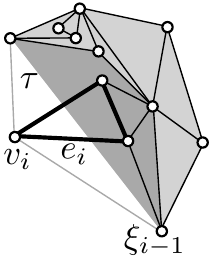}}%
				\caption{Embedding $\mathcal C_i$ (thick black) together with
					$\Gamma^{i-1}_1$. (a) In order to establish whether
					$\Gamma^{i-1}_1$ (whose convex hull is gray) fits inside
					$\mathcal C_i$, the slopes of the edges of $\mathcal C_i$
					incident to $v_i$ are compared with the slopes of the segments
					of $\mathcal B^{i-1}_1$ incident to the end-vertices of
					$e_{i-1}$ and different from $e_{i-1}$ (lines through these
					segments are dark gray). (b) If $\mathcal C_i$ is embedded so
					that $v_{i-1}$ and $v_i$ lie on different sides of the line
					through $e_{i-1}$, then $\mathcal B^{i}_1$ has to be computed;
					Melkman's algorithm~\cite{m-occh-87} examines the segments from
					$v_i$ to the vertices that follow $s_{i-1}$ in counter-clockwise
					order along $\mathcal B^{i-1}_1$, until it finds a segment (dark
					gray) from $v_i$ to a vertex $s'_{i-1}$ that keeps the
					previously encountered segment $\sigma$ to the right. (c) If
					$e_i$ lies in the interior of the convex hull of $\Gamma^i_1$,
					except possibly for its end-vertices, then the segments $\sigma$
					and $\tau$ of $\mathcal B^{i}_1$ incident to $\xi_i=v_i$
					represent the only ``way out'' of $\mathcal B^{i}_1$ for a
					planar straight-line realization of $G^{x-1}_1$; in particular,
					in any $e_{x-1}$-outer realization of $G^{x-1}_1$, the vertices
					and edges that are not in $G^i_1$ and that lie inside
					$\mathcal B^{i}_1$ can only lie in the dark gray regions
					incident to $\sigma$ and $\tau$. (d) Illustration for the case
					in which $v_i$ lies outside the convex hull of $\Gamma^{i-1}_1$
					and $\xi_{i-1}$ is the end-vertex of $e_i$ different from
					$v_i$. (e) Illustration for the case in which $v_i$ lies outside
					the convex hull of $\Gamma^{i-1}_1$ and $\xi_{i-1}$ is not the
					end-vertex of $e_i$ different from $v_i$.}
				\label{fig:outer-linear}
			\end{figure}
			
			In order to compute $\mathcal B^{i}_1$ from $\mathcal B^{i-1}_1$,
			Melkman's algorithm~\cite{m-occh-87} works as follows; refer
			to~\cref{fig:outerpath-linear-hull}. Let $s_{i-1}$ and $t_{i-1}$ be
			the end-vertices of $e_{i-1}$ and assume that $s_{i-1}$ immediately
			precedes $t_{i-1}$ in clockwise order along $\mathcal
			B^{i-1}_1$. Since $v_i$ is a vertex of $\mathcal B^{i}_1$, the
			algorithm needs to find the vertices of $\mathcal B^{i-1}_1$ that
			follow $v_i$ in clockwise and counter-clockwise order along
			$\mathcal B^{i}_1$. In order to find the vertex that follows $v_i$ in
			counter-clockwise order along $\mathcal B^{i}_1$ (the computation of
			the other vertex is done similarly), the algorithm computes the
			segments from $v_i$ to the vertices of $\mathcal B^{i-1}_1$, starting
			from $s_{i-1}$ and proceeding in counter-clockwise order along
			$\mathcal B^{i-1}_1$. This computation stops when a segment from $v_i$
			to a vertex $s'_{i-1}$ of $\mathcal B^{i-1}_1$ is considered such that
			the oriented line that passes first through $v_i$ and then through
			$s'_{i-1}$ keeps the previously encountered segment $\sigma$ to the
			right. When that happens, the end-vertex of $\sigma$ different from
			$v_i$ is the vertex that follows $v_i$ in counter-clockwise order
			along $\mathcal B^{i}_1$. This computation does not, in general, take
			$O(1)$ time, however it takes linear time in the number of vertices of
			$\mathcal B^{i-1}_1$ that are not in $\mathcal B^{i}_1$, hence it
			takes $O(n)$ time over all the convex hull computations.
			
			Let $\sigma$ and $\tau$ be the segments of $\mathcal B^{i}_1$ incident
			to $v_i$. In order to compute $\beta(e_i)$ and $\xi_i$, we check
			whether $e_i$ coincides with one of $\sigma$ and $\tau$. In the
			positive case, we let $\beta(e_i)=\texttt{True}$ and
			$\xi_i=\texttt{Null}$. In the negative case, we let
			$\beta(e_i)=\texttt{False}$ and $\xi_i=v_i$. We are now ready to
			explain the meaning of the information conveyed in $\xi_i$; refer
			to~\cref{fig:outerpath-linear-way-out}. If $e_i$ lies in the interior
			of the convex hull of $\Gamma^i_1$, except possibly for one of its
			end-vertices, then the segments $\sigma$ and $\tau$ of
			$\mathcal B^{i}_1$ incident to $\xi_i$ represent the only ``way out''
			of $\mathcal B^{i}_1$ for a planar straight-line realization of
			$G^{x-1}_1$. Formally, let $j$ be the smallest index such that $j>i$
			and such that $v_j$ lies outside $\mathcal B^{i}_1$ in a planar
			straight-line realization of $G^{j}_1$, if any such an index exists;
			then (at least) one of the two edges incident to $v_j$ in $G^{j}_1$
			crosses $\sigma$ or $\tau$. This very same property is exploited in
			Melkman's algorithm~\cite{m-occh-87}, and comes from the fact that, by
			the connectivity of $G^i_1$, any path that starts at $v_i$, that does
			not share any vertex with $G^i_1$ other than $v_i$, and that first
			intersects the interior of the convex hull of $\Gamma^i_1$ and then
			crosses $\mathcal B^{i}_1$ in a segment different from $\sigma$ and
			$\tau$ crosses $\Gamma^i_1$ as well.
		\end{itemize}  
		
		{\em Case 2:} $\beta(e_{i-1})=\texttt{False}$. In this case
		$\Gamma^{i-1}_1$ lies on both sides of the line through $e_{i-1}$,
		hence $\Gamma^{i-1}_1$ and $\mathcal C_i$ cannot be combined in order
		to form a same-side $e_i$-outer realization of $G^i_1$, even if
		$\Gamma^{i-1}_1$ is planar. Thus, as in Case~1.2, we let
		$\Gamma^{i}_1$ be the straight-line realization obtained from
		$\Gamma^{i-1}_1$ by embedding $\mathcal C_i$ in such a way that
		$v_{i-1}$ and $v_i$ lie on different sides of the line through
		$e_{i-1}$; this is again the realization which was earlier called
		$\Gamma^{i,a}_1$. The plane embedding $\mathcal E(\Gamma^{i}_1)$ is
		computed in $O(1)$ time from $\mathcal E(\Gamma^{i-1}_1)$ as in
		Case~1.2. Then $\Gamma^{i}_1$ satisfies the invariant; indeed, in the
		description of the algorithm, it was proved that, if $\Gamma^{i,a}_1$
		is planar, then it is $e_i$-optimal, otherwise $G^i_1$ admits no
		$e_i$-outer realization. Further, $\Gamma^{i}_1$ can be constructed in
		$O(1)$ time from $\Gamma^{i-1}_1$. Note that, in this case, it might
		happen that $\Gamma^{i}_1$ is not planar even if $\Gamma^{i-1}_1$ is
		planar; however, we do not check the planarity of $\Gamma^{i}_1$ at
		this time.
		
		In order to compute $\mathcal B^{i}_1$, $\beta(e_{i})$, and $\xi_{i}$,
		we exploit the information stored in $\xi_{i-1}$. As explained in
		Case~1.2, if $\Gamma^{i-1}_1$ is planar, the value $\xi_{i-1}$
		represents a vertex on the boundary $\mathcal B^{i-1}_1$ of the convex
		hull of $\Gamma^{i-1}_1$ such that $v_i$ lies outside
		$\mathcal B^{i-1}_1$ only if the (at least one) edge $e$ of $c_i$
		incident to $v_i$ and not incident to $\xi_{i-1}$ crosses $\sigma$ or
		$\tau$, where $\sigma$ and $\tau$ are the segments incident to
		$\xi_{i-1}$ in $\mathcal B^{i-1}_1$. Hence, we (as in Melkman's
		algorithm~\cite{m-occh-87}) check in $O(1)$ time whether $e$ crosses
		$\sigma$ or $\tau$.
		\begin{itemize}
			\item In the positive case, $v_i$ lies outside the convex hull of
			$\Gamma^{i-1}_1$; then $\mathcal B^{i}_1$ is constructed as in
			Case~1.2, where the segment between $\sigma$ and $\tau$ that crosses
			$e$ plays the role of $e_{i-1}$. If $\xi_{i-1}$ is the end-vertex of
			$e_i$ different from $v_i$, as in~\cref{fig:outerpath-linear-out},
			then we set $\beta(e_{i})=\texttt{True}$ and $\xi_i=\texttt{Null}$
			(in this case $G^i_1$ ``made it out'' of $\mathcal B^{i-1}_1$),
			otherwise, as in~\cref{fig:outerpath-linear-in}, we set
			$\beta(e_{i})=\texttt{False}$ and $\xi_i=v_i$.
			\item In the negative case, $v_i$ lies inside the convex hull of
			$\Gamma^{i-1}_1$ or $\Gamma^i_1$ is not planar (or both). Hence, we
			set $\mathcal B^{i}_1=\mathcal B^{i-1}_1$,
			$\beta(e_{i})=\texttt{False}$, and $\xi_{i}=\xi_{i-1}$.
		\end{itemize}
		
		This concludes the description of the inductive case of our
		algorithm. Each inductive step takes $O(1)$ time, except for the
		computation of the boundaries
		$\mathcal B^{1}_1,\mathcal B^{2}_1,\dots,\mathcal B^{x-1}_1$, which
		however takes total $O(n)$ time as explained above. Hence, the overall
		running time of the algorithm is in $O(n)$.
	\end{proofA}
	
	\begin{figure}[ht]
		\centering
		\includegraphics[scale=1]{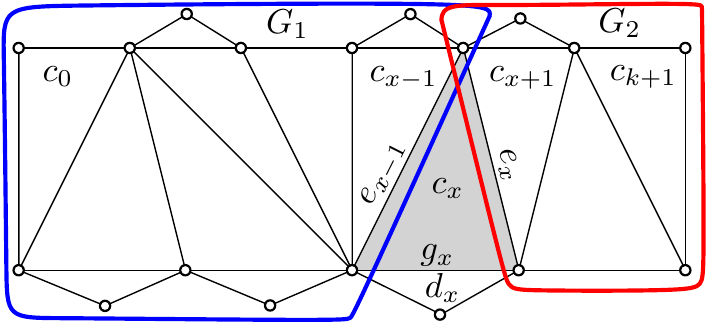}%
		\caption{The outerplane embedding of an outerpillar. The gray face is the one bounded by $c_x$. The graphs $G_1$ and $G_2$ are enclosed inside fat curves.}
		\label{fig:outerpillar-structure}
	\end{figure}
	
	\medskip
	\begin{proofB}
		The proof uses several ideas from the proof
		of~\cref{th:outerpaths}. We hence omit details of arguments that
		have been already detailed in that proof.
		
		Let $G=(V,E,\lambda)$ be an $n$-vertex weighted maximal outerpillar
		(as the one in~\cref{fig:outerpillar-structure}) and let
		$\mathcal O$ be its outerplane embedding, which can be found in
		$O(n)$ time~\cite{d-iroga-07,m-laarogmog-79,w-rolt-87}. Since $G$ is
		an outerpillar, its dual tree is a caterpillar $T$, whose spine we
		denote by $P:=(p_1,\dots,p_k)$; note that $T$ can be easily
		recovered from $\mathcal O$ in $O(n)$ time. We arbitrarily select
		two leaves of $T$, respectively adjacent to $p_1$ and $p_k$, and
		denote them by $p_0$ and $p_{k+1}$; we further denote by
		$\overline P$ the path $(p_0,p_1,\dots,p_k,p_{k+1})$. Each node
		$p_i$ of $\overline P$ corresponds to an internal face of
		$\mathcal O$; we denote by $c_i$ the $3$-cycle of $G$ bounding such
		a face. Further, each node $p_i$ in $\overline P$ is adjacent to at
		most one leaf not in $\overline P$ (the satisfaction of this
		property was the reason to augment $P$ to $\overline{P}$), which
		corresponds to an internal face of $\mathcal O$; we denote by $d_i$
		the $3$-cycle of $G$ bounding such a face. Each cycle $c_i$ has a unique (planar)
		straight-line realization (up to a rigid transformation), which we
		denote by $\mathcal C_i$, and each cycle $d_i$ has a unique (planar)
		straight-line realization (up to a rigid transformation), which we
		denote by $\mathcal D_i$. For $i=0,\dots,k$, let $e_i$ be the edge
		of $G$ that is dual to the edge $(p_i,p_{i+1})$ of $\overline P$;
		note that $e_i$ is an internal edge of $\mathcal O$. Further, for
		every index $i\in\{1,\dots,k\}$ such that $d_i$ exists, let $g_i$ be
		the edge shared by $c_i$ and $d_i$.
		
		Let $c^*$ be a $3$-cycle of $G$ whose edges have a sum of the
		lengths which is maximum among all the $3$-cycles of $G$. Let
		$x\in\{0,1,\dots,k+1\}$ be the index such that we have either
		$c^*=c_x$ or $c^*=d_x$. Let $G_1$ be the subgraph of $G$ composed of
		the $3$-cycles $c_i$ and of the $3$-cycles $d_i$ with
		$i=0,\dots,x-1$ (some of the $3$-cycles $d_i$ might not
		exist). Analogously, let $G_2$ be the subgraph of $G$ composed of
		the $3$-cycles $c_i$ and of the $3$-cycles $d_i$ with
		$i=x+1,\dots,k+1$ (again, some of the $3$-cycles $d_i$ might not
		exist). One of these graphs might be undefined if $x=0$ or $x=k+1$;
		however, in the following, we assume that $1\leq x\leq k$ as the
		arguments for the cases $x=0$ and $x=k+1$ are analogous and actually
		simpler. Note that $G_1$ and $G_2$ are maximal outerpillars. Also,
		we assume that the $3$-cycle $d_x$ exists, as the argument for the
		case in which it does not is simpler.
		
		Assume that a planar straight-line realization $\Gamma$ of $G$
		exists; for $i=1,2$, let $\Gamma_i$ be the restriction of $\Gamma$
		to $G_i$. If $c^*=c_x$, then $c_x$ does not lie inside any $3$-cycle
		of $G$ in $\Gamma$. If $c^*=d_x$, then $c_x$ might lie inside $d_x$
		in $\Gamma$, however it does not lie inside any other $3$-cycle of
		$G$ in $\Gamma$, as otherwise, by planarity, such a $3$-cycle would
		also contain $d_x$, which is not possible. Therefore, both if
		$c^*=c_x$ and if $c^*=d_x$, we have that $\Gamma_1$ is an
		$e_{x-1}$-outer realization of $G_1$ and that $\Gamma_2$ is an
		$e_x$-outer realization of $G_2$.
		
		We show an $O(n^3)$-time algorithm that either concludes that $G_1$
		admits no $e_{x-1}$-outer realization or constructs an
		$e_{x-1}$-optimal set $\mathcal R_1$ of $e_{x-1}$-outer realizations
		of $G_1$, where for each realization $\Gamma_1$ in $\mathcal R_1$
		the plane embedding $\mathcal E(\Gamma_1)$ is constructed as
		well. The same algorithm is then also used in order to either
		conclude that $G_2$ admits no $e_{x}$-outer realization or to
		construct an $e_{x}$-optimal set $\mathcal R_2$ of $e_{x}$-outer
		realizations of $G_2$, where for each realization $\Gamma_2$ in
		$\mathcal R_2$ the plane embedding $\mathcal E(\Gamma_2)$ is
		constructed as well.
		
		For $i=0,1,\dots,x-1$, let $G^i_1$ be the subgraph of $G_1$ composed
		of the $3$-cycles $c_0,c_1,\dots,c_i$ and of the $3$-cycles
		$d_0,d_1,\dots,d_i$ (some of these $3$-cycles might not exist); note
		that $G^{x-1}_1=G_1$. As in the proof of~\cref{th:outerpaths}, we
		use the observation that the restriction of any $e_{x-1}$-outer
		realization of $G_1$ to $G^i_1$ is an $e_i$-outer realization of
		$G^i_1$.
		
		Our algorithm works by induction on $i$ in order to either conclude
		that $G^i_1$ admits no $e_i$-outer realization (and thus $G_1$
		admits no $e_{x-1}$-outer realization), or to determine an
		$e_i$-optimal set $\mathcal R^i_1$ of $e_i$-outer realizations of
		$G^i_1$ with $|\mathcal R^i_1|\leq i+1$, where for each realization
		$\Gamma^i_1$ in $\mathcal R^i_1$ the plane 
		$\mathcal E(\Gamma^i_1)$ is constructed as well. This is trivial
		when $i=0$, namely $G^0_1$ is the $3$-cycle $c_0$ which has a unique
		planar straight-line realization $\Gamma^0_1:=\mathcal C_0$, up to a
		rigid transformation; then $\mathcal R^0_1:=\{\Gamma^0_1\}$ is
		indeed an $e_0$-optimal set of $e_0$-outer realizations of $G^0_1$
		with $|\mathcal R^0_1|\leq 1$ and $\mathcal E(\Gamma^0_1)$ is easily
		computed from $\Gamma^0_1$.
		
		Assume that we have an $e_{i-1}$-optimal set $\mathcal R^{i-1}_1$ of
		$e_{i-1}$-outer realizations of $G^{i-1}_1$ with
		$|\mathcal R^{i-1}_1|\leq i$, for some $i\in \{1,\dots,x-1\}$, and
		we have, for each realization $\Gamma^{i-1}_1$ in
		$\mathcal R^{i-1}_1$, the plane embedding
		$\mathcal E(\Gamma^{i-1}_1)$; indeed, if we
		already concluded that $G_1$ admits no $e_{x-1}$-outer realization,
		there is nothing else to do. We present an algorithm that either
		concludes that $G_1$ admits no $e_{x-1}$-outer realization or
		constructs the set $\mathcal R^i_1$ and, for each realization
		$\Gamma^{i}_1$ in $\mathcal R^i_1$, the plane embedding
		$\mathcal E(\Gamma^i_1)$. The algorithm consists of
		three steps.
		
		\begin{itemize}
			\item {\bf Step 1.} In the first step, we aim for the construction of an
			$e_{i}$-outer realization of $G^{i}_1$ in which both $G^{i-1}_1$
			and $d_i$ lie inside $c_i$ (except for the edges $e_{i-1}$ and
			$g_i$, which $c_i$ shares with $G^{i-1}_1$ and $d_i$,
			respectively), as the one
			in~\cref{fig:outerpillar-inside-inside}. This is done by checking
			whether any of the $O(n)$ realizations of $G^{i-1}_1$ in
			$\mathcal R^{i-1}_1$ fits inside $\mathcal C_i$ together with
			$\mathcal D_i$. More precisely, we act as follows.
			
			\begin{enumerate} [(A)]
				\item We test in $O(1)$ time whether $\mathcal C_{i-1}$
				lies inside $\mathcal C_i$ once they coincide on the
				representation of $e_{i-1}$ and once their vertices not incident
				to $e_{i-1}$ lie on the same side of the line through
				$e_{i-1}$. If the test is positive, we proceed with the first step of the
				algorithm, otherwise we conclude that no realization of $G^{i-1}_1$ fits
				inside $\mathcal C_i$, we terminate the first step of the
				algorithm and proceed to the second one.
				\item We test in $O(1)$ time whether $\mathcal D_i$ lies
				inside $\mathcal C_i$ once they coincide on the representation
				of $g_i$ and once their vertices not incident to $g_i$ lie on
				the same side of the line through $g_i$; if $d_i$ does not
				exist, this test is skipped. If the test is positive, we proceed
				with the first step of the
				algorithm, otherwise we conclude that $\mathcal D_i$ does
				not fit inside $\mathcal C_i$, we terminate the first step of
				the algorithm and proceed to the second one.
				\item For each realization $\Gamma^{i-1}_1$ of $G^{i-1}_1$ in $\mathcal R^{i-1}_1$, we construct a straight-line realization $\Gamma^i_1$ of $G^i_1$ in which $G^{i-1}_1$ is represented by $\Gamma^{i-1}_1$ and $\mathcal C_{i-1}$ and $\mathcal D_i$ are
				embedded inside $\mathcal C_i$. Such a realization can be
				obtained in $O(1)$ time by augmenting $\Gamma^{i-1}_1$ with a
				representation of the edges of $c_i$ and $d_i$ not in
				$G^{i-1}_1$. Further, the plane embedding
				$\mathcal E(\Gamma^i_1)$ can be constructed in
				$O(1)$ time by inserting the edges incident to $v_i$ next to
				$e_{i-1}$ in the cyclic order of the edges incident to the
				end-vertices of $e_{i-1}$, by then inserting the edges of $d_i$
				different from $g_i$ next to $g_i$ in the cyclic order of the
				edges incident to the end-vertices of $g_i$, and by letting the
				outer face of $\mathcal E(\Gamma^i_1)$ be bounded by the cycle
				$c_i$. We then test in $O(n)$
				time, by means
				of~\cref{thm:straight-line_realization_planarity}, whether $\Gamma^i_1$ is a planar straight-line realization of $G^i_1$ respecting $\mathcal E(\Gamma^i_1)$.     \end{enumerate}

			\begin{figure}[tb!]
				\centering
				\subcaptionbox{\label{fig:outerpillar-inside-inside}}
				{\includegraphics[scale=1]{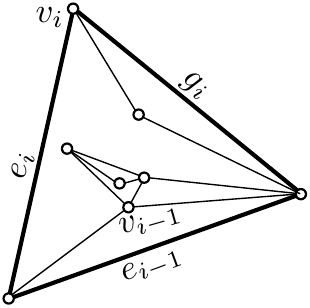}}%
				\hfil
				\subcaptionbox{\label{fig:outerpillar-inside-outside}}
				{\includegraphics[scale=1]{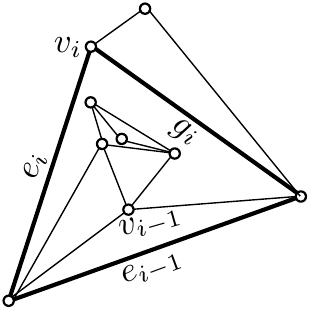}}%
				\hfil
				\subcaptionbox{\label{fig:outerpillar-outside}}
				{\includegraphics[scale=1]{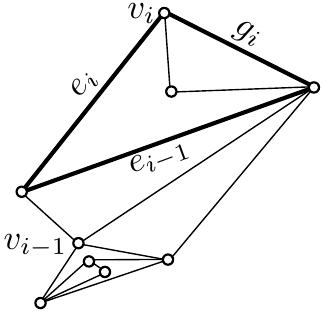}}%
				\caption{Three $e_{i}$-outer realizations of $G^{i}_1$. The
					cycle $c_i$ is represented by a thick line. (a) $G^{i-1}_1$
					and $d_i$ lie inside $c_i$; (b) $G^{i-1}_1$ lies inside $c_i$
					and $d_i$ lies outside $c_i$; (c) $G^{i-1}_1$ lies outside
					$c_i$.}
				\label{fig:outerpillar-types}
			\end{figure}
			
			The total running time of the first step of the algorithm is hence in $O(n^2)$. If
			(at least) one of these tests succeeds in constructing a
			realization $\Gamma^i_1$, then we complete the induction with
			$\mathcal R^i_1=\{\Gamma^i_1\}$. If none of the tests succeeds, we
			proceed to the second step of the algorithm.
			\item {\bf Step 2.} In the second step, we aim for the construction of an
			$e_{i}$-outer realization of $G^{i}_1$ in which $G^{i-1}_1$ lies
			inside $c_i$ (except for the edge $e_{i-1}$, which is shared by
			$G^{i-1}_1$ and $c_i$) and $d_i$ lies outside $c_i$ (except for
			the edge $g_i$, which is shared by $c_i$ and $d_i$), as
			in~\cref{fig:outerpillar-inside-outside}. If $d_i$ does not exist,
			then this step is actually ignored, given that the desired
			representation of $G^i_1$ would have been already constructed in
			the first step. As in the first step, we first check whether
			$\mathcal C_{i-1}$ lies inside $\mathcal C_i$ once they coincide
			on the representation of $e_{i-1}$ and once their vertices not
			incident to $e_{i-1}$ lie on the same side of the line through
			$e_{i-1}$. If the test is negative, we conclude that no
			realization of $G^{i-1}_1$ fits inside $\mathcal C_i$, we
			terminate the second step of the algorithm and proceed to the
			third one with $\mathcal R^i_1=\emptyset$. If the test is
			positive, we independently consider each of the $O(n)$
			realizations of $G^{i-1}_1$ in $\mathcal R^{i-1}_1$, say
			$\Gamma^{i-1}_1$. We augment $\Gamma^{i-1}_1$ to a realization
			$\Gamma^i_1$ of $G^{i}_1$ (and compute the plane embedding
			$\mathcal E(\Gamma^{i}_1)$ from the plane embedding
			$\mathcal E(\Gamma^{i-1}_1)$) in $O(1)$ time similarly as in the
			first step, except that $\mathcal D_i$ is now embedded outside
			$\mathcal C_i$ and hence the outer face of
			$\mathcal E(\Gamma^{i}_1)$ is now delimited by the edges of $c_i$
			and $d_i$ different from $g_i$. By means
			of~\cref{thm:straight-line_realization_planarity}, we test in $O(n)$ time whether
			$\Gamma^i_1$ is a planar straight-line realization of $G^i_1$ respecting $\mathcal E(\Gamma^{i}_1)$. Hence, the total running time of the
			second step is in $O(n^2)$. If (at least) one of these tests shows
			that a realization $\Gamma^i_1$ of $G^i_1$ is planar, we conclude
			this step by initializing $\mathcal R^i_1=\{\Gamma^i_1\}$. If none
			of the tests succeeds, then we initialize
			$\mathcal R^i_1=\emptyset$.  In both cases, we proceed to the
			third step of the algorithm.
			\item {\bf Step 3.} In the third step, we aim for the construction of (possibly
			multiple) $e_{i}$-outer realizations of $G^{i}_1$ in which
			$G^{i-1}_1$ lies outside $c_i$ (except for the edge $e_{i-1}$,
			which is shared by $G^{i-1}_1$ and $c_i$), as
			in~\cref{fig:outerpillar-outside}. This is done as
			follows. Preliminarly, we construct a realization of $c_i\cup d_i$
			(if $d_i$ exists). In order to do that, we check in $O(1)$ time
			whether $\mathcal D_i$ fits inside $\mathcal C_i$; in the positive
			case, we place it inside $\mathcal C_i$, otherwise we place it
			outside $\mathcal C_i$. We now independently consider each
			$e_{i-1}$-outer realization $\Gamma^{i-1}_1$ of $G^{i-1}_1$ in
			$\mathcal R^{i-1}_1$ and we put it together with the constructed
			representation of $c_i\cup d_i$ (if $d_i$ does not exist, we just
			put $\Gamma^{i-1}_1$ together with $\mathcal C_i$), so that the
			vertices of $c_{i-1}$ and $c_i$ not incident to $e_{i-1}$ lie on
			different sides of the line through $e_{i-1}$. This can be done in
			$O(1)$ time, as in the first and second steps, by augmenting
			$\Gamma^{i-1}_1$ with a representation of the edges of $c_i$ and
			$d_i$ not in $G^{i-1}_1$. The plane embedding of the constructed
			representation of $G^i_1$ can also be constructed in $O(1)$ time
			from the plane embedding $\mathcal E(\Gamma^{i-1}_1)$, similarly as in the first and second steps. We
			now check in $O(n)$ time whether the constructed representation of $G^i_1$ is  a planar straight-line realization of $G^i_1$ respecting the constructed plane embedding, by means
			of~\cref{thm:straight-line_realization_planarity}; in the positive case, we add it to $\mathcal R^i_1$. Hence, the total running time of
			the third step is in $O(n^2)$. After the third step, if
			$\mathcal R^i_1=\emptyset$, we conclude that $G_1$ admits no
			$e_{x-1}$-outer realization.
		\end{itemize}
		
		This concludes the description of the algorithm. We now prove its
		correctness.
		
		First, we prove that $|\mathcal R^i_1|\leq i+1$. This is obvious if
		the first step succeeded in the construction of a realization of
		$G^i_1$, as in that case we have $|\mathcal
		R^i_1|=1<i+1$. Otherwise, at most one realization of $G^i_1$ was
		added to $\mathcal R^i_1$ in the second step; further, during the
		third step, at most one realization of $G^i_1$ was added to
		$\mathcal R^i_1$ for each realization in $\mathcal R^{i-1}_1$. Since
		$|\mathcal R^{i-1}_1|\leq i$, by the inductive hypothesis, it
		follows that $|\mathcal R^i_1|\leq i+1$.
		
		Second, we have that each realization in $\mathcal R^i_1$ is planar,
		as ensured by a test performed on it before the insertion into
		$\mathcal R^i_1$. Further, each realization in $\mathcal R^i_1$ is
		$e_i$-outer. Namely, if the vertices of $c_{i-1}$ and $c_i$ that are
		not incident to $e_{i-1}$ lie on the same side of the line through
		$e_{i-1}$, we test whether $\mathcal C_{i-1}$ lies inside
		$\mathcal C_i$; further, if the vertices of $d_i$ and $c_i$ that are
		not incident to $g_i$ lie on the same side of the line through
		$g_i$, we test whether $\mathcal D_i$ lies inside $\mathcal C_i$. A
		realization is added to $\mathcal R^i_1$ only if both tests succeed,
		which ensures that $\mathcal C_i$ does not lie inside
		$\mathcal C_{i-1}$ or $\mathcal D_i$ in any realization in
		$\mathcal R^i_1$. A containment of $\mathcal C_i$ inside any other
		$3$-cycle of $G^i_1$ would contradict the planarity of the
		corresponding realization.
		
		Finally, we prove that $\mathcal R^i_1$ is $e_i$-optimal. This can
		be done very similarly as in the proof
		of~\cref{th:outerpaths}. Namely, consider any $e_i$-outer
		realization $\Psi^i_1$ of $G^i_1$ and let $\Psi^{i-1}_1$ be the
		restriction of $\Psi^i_1$ to $G^{i-1}_1$. Since $\mathcal R^{i-1}_1$
		is $e_{i-1}$-optimal, there exists an $e_{i-1}$-outer realization
		$\Lambda^{i-1}_1$ of $G^{i-1}_1$ in $\mathcal R^{i-1}_1$ such that
		$\Lambda^{i-1}_1$ is $e_{i-1}$-smaller than $\Psi^{i-1}_1$. Hence,
		we can apply \cref{le:optimal-replacement} with $H=G^{i-1}_1$,
		$K=c_i\cup d_i$ (or $K=c_i$ if $d_i$ does not exist),
		$\mathcal G=\Psi^i_1$, $e=e_{i-1}$, and
		$\mathcal H'= \Lambda^{i-1}_1$, in order to obtain an $e_i$-outer
		realization $\Lambda^i_1$ of $G^i_1$ whose restriction to
		$G^{i-1}_1$ can be obtained by a rigid transformation of
		$\Lambda^{i-1}_1$ and such that $\mathcal C_{i-1}$ lies inside
		$\mathcal C_{i}$ in $\Lambda^{i}_1$ if and only if
		$\mathcal C_{i-1}$ lies inside $\mathcal C_{i}$ in $\Psi^i_1$. It
		only remains to prove that $\Lambda^i_1$, or an $e_i$-outer
		realization that is $e_i$-smaller than $\Lambda^i_1$, actually
		belongs to $\mathcal R^i_1$.
		
		First, if $\mathcal R^i_1$ contains a single realization
		$\Gamma^i_1$ added during the first step, then the outer face of
		$\Gamma^i_1$ is bounded by $\mathcal C_i$; since the set
		$\mathcal I_{\Lambda^{i}_1}$ contains at least the closure of the
		interior of $\mathcal C_i$, we have that $\Gamma^i_1$ is
		$e_i$-smaller than $\Lambda^{i}_1$. We can hence assume that
		$\Lambda^{i-1}_1$ lies outside $\mathcal C_i$ or that $\mathcal D_i$
		lies outside $\mathcal C_i$ (or both). Indeed, if both
		$\Lambda^{i-1}_1$ and $\mathcal D_i$ lied inside $\mathcal C_i$,
		then the first step of the algorithm would have constructed
		$\Lambda^i_1$, and hence $\mathcal R^i_1$ would contain a single
		realization $\Gamma^i_1$ added during the first step (note that
		$\Gamma^i_1$ would not necessarily be the same as $\Lambda^i_1$,
		however the outer faces of both $\Gamma^i_1$ and $\Lambda^i_1$ would
		be delimited by $\mathcal C_i$).
		
		Second, consider the case in which $\Lambda^{i-1}_1$ lies inside
		$\mathcal C_i$ and $\mathcal D_i$ lies outside $\mathcal C_i$ in
		$\Lambda^i_1$. This implies that the second step of the algorithm
		constructs $\Lambda^i_1$, and hence $\mathcal R^i_1$ contains a
		realization $\Gamma^i_1$ added during the second step of the
		algorithm. The outer face of $\Gamma^i_1$ is bounded by the edges of
		$\mathcal C_i$ and $\mathcal D_i$ different from $g_i$, same as
		$\Lambda^i_1$, hence $\Gamma^i_1$ is $e_i$-smaller
		than~$\Lambda^{i}_1$.
		
		Third, consider the case in which $\Lambda^{i-1}_1$ lies outside
		$\mathcal C_i$. If $\mathcal D_i$ lies inside $\mathcal C_i$ in
		$\Lambda^{i}_1$, or if $\mathcal D_i$ lies outside $\mathcal C_i$ in
		$\Lambda^{i}_1$ and does not fit inside $\mathcal C_i$, then the
		third step of the algorithm constructs $\Lambda^{i}_1$ and adds it
		to $\mathcal R^i_1$. If $\mathcal D_i$ lies outside $\mathcal C_i$
		in $\Lambda^{i}_1$ and fits inside $\mathcal C_i$, then the third
		step of the algorithm constructs a realization $\Gamma^i_1$ which
		coincides with $\Lambda^{i}_1$ when restricted to
		$G^{i-1}_1\cup c_i$, but in which $\mathcal D_i$ lies inside
		$\mathcal C_i$. Then we have that $\Gamma^i_1$ is $e_i$-smaller than
		$\Lambda^{i}_1$. This concludes the proof of the $e_i$-optimality of
		$\mathcal R^i_1$ and of the correctness of the inductive step.
		
		The inductive step is hence performed in $O(n^2)$ time. Since there
		are $O(n)$ inductive steps, our algorithm takes $O(n^3)$ time in
		order to either conclude that $G_1$ admits no $e_{x-1}$-outer
		realization or to construct an $e_{x-1}$-optimal set $\mathcal R_1$
		with $|\mathcal R_1|\in O(n)$ of $e_{x-1}$-outer realizations of
		$G_1$ together with the plane embedding $\mathcal E(\Gamma_1)$ of
		each realization $\Gamma_1$ in $\mathcal R_1$; the algorithm also
		takes $O(n^3)$ time in order to either conclude that $G_2$ admits no
		$e_{x}$-outer realization or to construct an $e_{x}$-optimal set
		$\mathcal R_2$ with $|\mathcal R_2|\in O(n)$ of $e_{x}$-outer
		realizations of $G_2$ together with the plane embedding
		$\mathcal E(\Gamma_2)$ of each realization $\Gamma_2$ in
		$\mathcal R_2$.
		
		As argued above, if $G_1$ admits no $e_{x-1}$-outer realization or
		$G_2$ admits no $e_{x}$-outer realization, then $G$ admits no planar
		straight-line realization. Assume hence that $\mathcal R_1$ and
		$\mathcal R_2$ are both non-empty. We independently consider each
		pair $(\Gamma_1,\Gamma_2)$ such that $\Gamma_1 \in \mathcal R_1$ and
		$\Gamma_2 \in \mathcal R_2$; since $|\mathcal R_1|$ and
		$|\mathcal R_2|$ are both in $O(n)$, there are $O(n^2)$ such
		pairs. We construct eight straight-line realizations of $G$, each
		with the corresponding plane embedding, according to the following
		choices:
		\begin{itemize}
			\item the vertices of $c_{x-1}$ and $c_{x}$ not incident to
			$e_{x-1}$ might lie on the same side or on different sides of the
			line through $e_{x-1}$;
			\item the vertices of $c_{x+1}$ and $c_{x}$ not incident to $e_{x}$
			might lie on the same side or on different sides of the line
			through $e_{x}$; and
			\item the vertices of $d_x$ and $c_{x}$ not incident to $g_x$ might
			lie on the same side or on different sides of the line through
			$g_x$.
		\end{itemize}
		We test in   $O(n)$ time whether each of these straight-line realizations is planar and respects the corresponding plane embedding, by means
		of~\cref{thm:straight-line_realization_planarity}. Hence, all these
		tests can be performed in overall $O(n^3)$ time.  If (at least) one
		of these realizations is planar, then it is the desired planar
		straight-line realization of $G$. Otherwise, there exists no planar
		straight-line realization of $G$; this can be proved as in the proof
		of~\cref{th:outerpaths}. Namely, if a planar straight-line
		realization $\Gamma$ of $G$ exists, then the restriction of $\Gamma$
		to $G_1$ and the restriction of $\Gamma$ to $G_2$ can be replaced by
		straight-line realizations obtained by rigid transformations of
		suitable realizations in $\mathcal R_1$ and $\mathcal R_2$,
		respectively. By~\cref{le:optimal-replacement}, this results in a
		planar straight-line realization of $G$ which, by construction, is
		one of the straight-line realizations for which the planarity
		testing was negative. This contradiction concludes the proof of the
		theorem.
	\end{proofB}
	
	
	
	\section{2-Trees with Short Longest Path} \label{sec:longest_path}
	
	In this section, we show that the \FEPRshort problem is polynomial-time solvable for weighted $2$-trees whose longest path has bounded length, where the \emph{length} of a path is the number of edges in it. Our algorithm actually runs in polynomial time in a more general case: if the height of the SPQ-tree of the $2$-tree is bounded. We start this section by introducing the necessary definitions and notation.
	
	The notion of \emph{SPQ-tree} is a specialization of the well-known notion of \emph{SPQR-tree}~\cite{dt-opl-96,gm-lti-00} for \emph{series-parallel graphs}~\cite{dett-gd-19,vtl-rspd-82}. A series-parallel graph is any graph that can be obtained by means of the following recursive construction (see~\cref{fig:spq-tree}). 
	
	\begin{figure}[htb]
		\centering
		{\includegraphics[width=.7\textwidth]{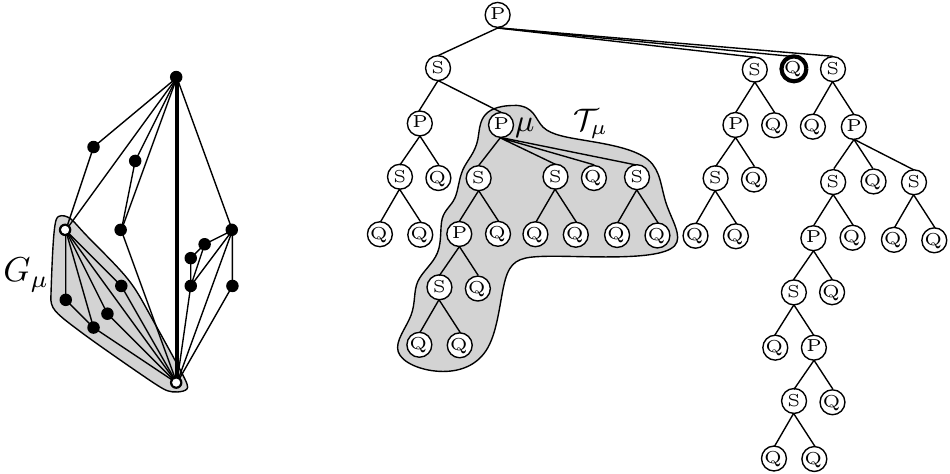}}%
		\caption{A $2$-tree $G$ and its SPQ-tree $\mathcal T$. The gray regions show $G_{\mu}$ and $\mathcal T_{\mu}$, for a node $\mu$ of $\mathcal T$. The poles of $\mu$ are represented by empty disks in $G$. The fat edge corresponds to the Q-node that is a child of the root of $\mathcal T$. The height of $\mathcal T$ is $8$, the one of $\mathcal T_{\mu}$ is $4$.}
		\label{fig:spq-tree}
	\end{figure}
	
	\begin{itemize}
		\item An edge $(u,v)$ is a series-parallel graph with two poles $u$ and $v$; its SPQ-tree $\mathcal T$ is rooted tree with a single \emph{Q-node}. We say that $(u,v)$ \emph{corresponds} to the Q-node (and vice versa).
		\item Let $G_1,\dots,G_k$ be series-parallel graphs, where $G_i$ has poles $u_i$ and $v_i$, for $i=1,\dots,k$; also, let $\mathcal T_1,\dots, \mathcal T_k$ be their SPQ-trees. The graph obtained by identifying the vertices $u_1,\dots,u_k$ into a single vertex $u$ and the vertices $v_1,\dots,v_k$ into a single vertex $v$ is a series-parallel graph $G$ with poles $u$ and $v$. The SPQ-tree $\mathcal T$ of $G$ is the tree whose root is a \emph{P-node} which is the parent of the roots of $\mathcal T_1,\dots, \mathcal T_k$. We say that $G$ \emph{corresponds} to $\mathcal T$.
		\item Let $G_1,\dots,G_k$ be series-parallel graphs, where $G_i$ has poles $u_i$ and $v_i$, for $i=1,\dots,k$; also, let $\mathcal T_1,\dots, \mathcal T_k$ be their SPQ-trees. The graph obtained by identifying the vertex $v_i$ with the vertex $u_{i+1}$, for $i=1,\dots,k-1$, is a series-parallel graph $G$ with poles $u_1$ and $v_k$. The SPQ-tree $\mathcal T$ of $G$ is the tree whose root is an \emph{S-node} which is the parent of the roots of $\mathcal T_1,\dots, \mathcal T_k$. We say that $G$ \emph{corresponds} to $\mathcal T$.
	\end{itemize}
	
	Every $2$-tree $G$ is a series-parallel graph (see, e.g.,~\cite[Chapter 5]{bbs-gcs-99}) such that:
	\begin{enumerate}[(C1)]
		\item The root of the SPQ-tree $\mathcal T$ of $G$ is a P-node.
		\item Each P-node has at least two children; one of them is a Q-node and the others are S-nodes. 
		\item Each S-node has exactly two children, each of which is either a Q-node or a P-node.
	\end{enumerate}

	Moreover, if $e$ is a prescribed edge of $G$, then the SPQ-tree of $G$ can be constructed in such a way that the Q-node corresponding to $e$ is a child of the root $\rho$ of the SPQ-tree (that is, $e$ is the edge between the poles of $\rho$). We will use this property later.
	
	Let $G$ be a $2$-tree, $\mathcal T$ be the SPQ-tree of $G$, and $\mu$ be a node of $\mathcal T$. We denote by $u_{\mu}$ and $v_{\mu}$ the poles of $\mu$, by $\mathcal T_{\mu}$ the subtree of $\mathcal T$ rooted at $\mu$, and by $G_\mu$ the subgraph of $G$ corresponding to $\mathcal T_{\mu}$. The \emph{height} of $\mathcal T$ is the maximum number of edges in any root-to-leaf path in $\mathcal T$. By Conditions (C1)--(C2), the height of $\mathcal T$ is even and positive. 
	
	
	%
	
	We can upper bound the height of an SPQ-tree of a $2$-tree $G$ in terms of the length of the longest path of $G$, as proved in the following lemma. 
	
	\begin{lemma} \label{le:height-diameter}
		Let $G$ be a $2$-tree, let $\ell$ be the length of the longest path of $G$, let $\mathcal T$ be the SPQ-tree of $G$, and let $h$ be the height of $\mathcal T$. Then we have $h\leq 2 \ell-2$.
	\end{lemma}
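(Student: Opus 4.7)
For every non-Q node $\mu$ of $\mathcal{T}$, let $p(\mu)$ denote the length of a longest $u_\mu$--$v_\mu$ path in $G_\mu$. The following recurrences are immediate from the SPQ-tree definition: if $\mu$ is an S-node with children $\mu_1,\mu_2$, then $p(\mu)=p(\mu_1)+p(\mu_2)$, because both poles of $\mu$ are separated in $G_\mu$ by the shared vertex $v_{\mu_1}=u_{\mu_2}$; if $\mu$ is a P-node, then $p(\mu)=\max_i p(\mu_i)$, because any simple $u_\mu$--$v_\mu$ path in $G_\mu$ is confined to the subgraph of exactly one child. In particular $p(\rho)\le \ell$ for the root $\rho$ of $\mathcal{T}$, so it will suffice to show $h\le 2\,p(\rho)-2$.

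My plan is to prove the asymmetric pair of inequalities
\[
h(\mu)\le 2\,p(\mu)-2 \text{ if $\mu$ is a P-node,} \qquad h(\mu)\le 2\,p(\mu)-3 \text{ if $\mu$ is an S-node,}
\]
by induction on $h(\mu)$. Applying the first inequality to the root (which is a P-node by condition~(C1)) together with $p(\rho)\le \ell$ then yields the statement. The main subtlety lies in finding the correct invariant: a symmetric bound $h(\mu)\le 2\,p(\mu)-2$ at every non-Q node fails at P-nodes, since at a P-node the recurrence gives $h(\mu)=1+\max_i h(S_i)$ while $p(\mu)=\max_i p(S_i)$ carries no additive term, so a uniform bound would at best yield $2\,p(\mu)-1$. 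Strengthening the S-bound by one unit absorbs this loss.

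The base cases are a direct check: an S-node with two Q-children has $h=1$ and $p=2$, and the smallest P-node (a triangle) has $h=2$ and $p=2$. For the inductive step at a P-node~$\mu$, condition~(C2) guarantees the existence of an S-child, and picking an S-child $S^*$ of maximum height yields $h(\mu)=1+h(S^*)\le 1+(2\,p(S^*)-3)=2\,p(S^*)-2\le 2\,p(\mu)-2$. For the inductive step at an S-node~$\nu$ with children $\nu_1,\nu_2$, one splits into three sub-cases according to whether each child is a Q-node or a P-node, and in each case one combines the inductive bound $h(\nu_i)\le 2\,p(\nu_i)-2$ (for P-children) or the trivial values $h(\nu_i)=0$ and $p(\nu_i)=1$ (for Q-children) with $p(\nu)=p(\nu_1)+p(\nu_2)$. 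The extra slack needed to conclude $h(\nu)\le 2\,p(\nu)-3$ is supplied by the observation that $p\ge 2$ at every non-Q node, which makes each sub-case a short arithmetic check. I do not foresee further obstacles beyond pinning down the asymmetric invariant.
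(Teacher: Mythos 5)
Your proposal is correct, and it is essentially the paper's argument in a different wrapper: the paper's key claim — that every P-node $\mu$ admits a pole-to-pole path of length $\tfrac{h_\mu}{2}+1$ in $G_\mu$ — is exactly your P-node invariant $h(\mu)\le 2\,p(\mu)-2$, and both proofs induct over the SPQ-tree by trading one S--P level pair for one additional path edge. The only cosmetic difference is that the paper exhibits an explicit witness path (so it never needs the P-node recurrence $p(\mu)=\max_i p(\mu_i)$ or the named S-node invariant $h\le 2p-3$, which it handles implicitly inside the P-node step), whereas you run the bookkeeping as an inequality induction on the longest pole-to-pole path; both are sound.
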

	
	\begin{proof}
		We prove the following claim. For every P-node $\mu$ in $\mathcal T$, there exists a path $P_{\mu}$ in $G_{\mu}$ that connects the poles $u$ and $v$ of $\mu$ and whose length is $\frac{h_{\mu}}{2}+1$, where $h_{\mu}$ is the height of $\mathcal T_{\mu}$. The claim implies the lemma by choosing $\mu$ to be the root of $\mathcal T$; this can be done since, by Condition~(C1), the root of $\mathcal T$ is a P-node.
		
		We prove the claim by induction on $h_{\mu}$. Recall that  $h_{\mu}$ is even and positive. Let $\nu$ be a child of $\mu$ such that the height of $\mathcal T_{\nu}$ is $h_{\mu}-1$; by Condition~(C2) and since $h_{\mu}\geq 2$, we have that $\nu$ is an S-node with poles $u$ and $v$. By Condition~(C3), we have that $\nu$ has two children $\nu_1$ and $\nu_2$, each of which is either a P-node or a Q-node.  Assume, w.l.o.g.\ up to a relabeling, that $u$ is a pole of $\nu_1$, that $v$ is a pole of $\nu_2$, and that $w$ is the pole shared by $\nu_1$ and $\nu_2$. 
		
		In the base case, we have $h_{\mu}=2$. Then $\nu_1$ and $\nu_2$ are both Q-nodes, as otherwise the height of $\mathcal T$ would be greater than $2$. It follows that $G$ contains the path $P_{\mu}=(u,w,v)$. 
		
		In the inductive case, we have $h_{\mu}>2$. Assume that the height of $\mathcal T_{\nu_1}$ is $h_{\mu}-2$, as the case in which the height of $\mathcal T_{\nu_2}$ is $h_{\mu}-2$ is analogous. Since $h_{\mu}>2$, we have that $\nu_1$ is a P-node. By induction, $G_{\nu_1}$ contains a path $\mathcal P_{\nu_1}$ with length $\frac{h_{\mu}-2}{2}+1$ between $u$ and $w$. Further, $G_{\nu_2}$ contains the edge $(w,v)$; this is obvious if $\nu_2$ is a Q-node and comes from Condition~(C2) otherwise. The path $\mathcal P_{\nu_1}\cup (w,v)$ is the desired path $\mathcal P_{\mu}$ with length $\frac{h_{\mu}}{2}+1$ between $u$ and $v$. This completes the induction and hence the proof of the lemma.
	\end{proof}
	
	\newcommand{\algoritmAny}{n^{O(2^{h^*})}}
	\newcommand{\algoritmPath}{n^{O(4^{\ell})}}
	
	The main result of this section is the following.
	
	\begin{theorem}\label{th:spq-tree}
		Let $G$ an $n$-vertex weighted $2$-tree, let $e^*$ be an edge of $G$ with maximum length, let $\mathcal T$ be the SPQ-tree of $G$ such that the Q-node corresponding to $e^*$ is a child of the root of $\mathcal T$, and let $h^*$ be the height of $\mathcal T$. There exists an $\algoritmAny$-time algorithm that tests whether $G$ admits a planar straight-line realization and, in the positive case, constructs such a realization.
	\end{theorem}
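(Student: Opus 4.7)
The plan is to solve the problem by a dynamic programming algorithm that processes the nodes of $\mathcal T$ in post-order. For each node $\mu$ of $\mathcal T$ we shall compute a set $\mathcal S(\mu)$ of \emph{canonical realizations} of $G_\mu$: each such realization is a planar straight-line drawing of $G_\mu$ in which the two poles $u_\mu, v_\mu$ are placed at standard positions, together with its plane embedding. The set $\mathcal S(\mu)$ will be defined so that, for every planar straight-line realization $\Gamma$ of $G$, the restriction of $\Gamma$ to $G_\mu$ coincides, after the rigid motion aligning the poles of $\mu$ to their standard positions, with some element of $\mathcal S(\mu)$.

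The base case corresponds to Q-nodes: $\mathcal S(\mu)$ consists of a single realization, the edge represented as a segment of length $\lambda(e)$. For an S-node $\mu$ with children $\nu_1,\nu_2$ sharing a pole $w$, each candidate element of $\mathcal S(\mu)$ is obtained by gluing a realization of $\mathcal S(\nu_1)$ with one of $\mathcal S(\nu_2)$ at $w$, yielding at most $2\,|\mathcal S(\nu_1)|\cdot |\mathcal S(\nu_2)|$ combinations (the factor $2$ accounts for reflecting the second child), and retaining the planar ones by means of~\cref{thm:straight-line_realization_planarity}. For a P-node $\mu$ with children $\nu_1,\dots,\nu_k$, where $\nu_1$ is the Q-node corresponding to the edge between $u_\mu$ and $v_\mu$ of length $d$, we first restrict each $\mathcal S(\nu_i)$ with $i\geq 2$ to realizations having pole distance exactly $d$. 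Then we combine these restricted sets by distributing the children across the two sides of the segment realizing the Q-node's edge, imposing a nesting order on each side (since by planarity the subgraphs $G_{\nu_i}$ placed on the same side of $(u_\mu,v_\mu)$ are nested around the common poles), and testing the resulting drawing for planarity again via~\cref{thm:straight-line_realization_planarity}.

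The crucial estimate is $|\mathcal S(\mu)| \leq n^{c\cdot 2^{h_\mu}}$ for a suitable constant $c$, where $h_\mu$ is the height of $\mathcal T_\mu$. This is proved by induction: at an S-node one has $|\mathcal S(\mu)|\leq O(|\mathcal S(\nu_1)|\cdot |\mathcal S(\nu_2)|) \leq n^{c\cdot 2^{h_\mu}}$ because both children have height at most $h_\mu-1$; at a P-node, the distance constraint imposed by the Q-node child restricts each other child's realization set to a polynomial-size slice, and the nesting imposed by planarity forces that, per pair of consecutive nested children on a side, only a polynomial number of realization pairs survive, so that the multiplicative blow-up obeys the same doubling-in-the-exponent recurrence. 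Correctness follows from the completeness of $\mathcal S(\mu)$: every planar realization of $G$ projects to an element of $\mathcal S(\mu)$ at each $\mu$, and conversely every element of $\mathcal S(\rho)$ for the root $\rho$ is a valid planar straight-line realization of $G$. Hence we conclude by reporting any realization in $\mathcal S(\rho)$, or declaring infeasibility if $\mathcal S(\rho)=\emptyset$.

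The hard part will be the planarity analysis and the size bound at P-nodes with many children. Specifically, one must show that only polynomially many non-redundant nested arrangements of children survive per pair of consecutive nesting positions; the doubling-in-the-exponent bound fails if one naively enumerates a product over all children. Here the choice of $e^*$ as the edge of maximum length, placed as the Q-node child of the root, is essential: it guarantees that at the top level $e^*$ cannot lie strictly inside any triangle realizing a $3$-cycle of $G$, and this geometric rigidity propagates through the SPQ-tree in a way that limits the genuinely distinct ways in which child realizations can nest. Combining this structural argument with the inductive doubling at each level of $\mathcal T$ yields the claimed $\algoritmAny$ running time.
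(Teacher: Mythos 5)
Your overall architecture---a bottom-up dynamic program over the SPQ-tree that stores, for each node $\mu$, a set of candidate realizations of $G_\mu$ with the poles pinned, and whose size obeys a doubling-in-the-exponent recurrence---is the same as the paper's. However, the step you yourself flag as ``the hard part'' (bounding the number of surviving arrangements at a P-node with many children) is exactly where the proof's substance lies, and the mechanism you gesture at is not the one that works. The rigidity coming from $e^*$ being a longest edge does \emph{not} ``propagate through the SPQ-tree''; in the paper it is used only once, to show that the restriction of any planar realization to $G_\rho=G$ is pole-external at the root (\cref{obs:e-external}); for every non-root P-node the pole-externality of the restriction follows from biconnectivity and the Jordan curve theorem, not from $e^*$. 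The argument that actually tames the P-node combinatorics is local and different: the children $\nu_1,\dots,\nu_k$ of a P-node $\mu$ are sorted by $\lambda(\nu_i)$, the length of the two-edge path $u_\mu\,w_{\nu_i}\,v_\mu$, and a length comparison shows that in any pole-external realization the subgraph $G_{\nu_i}$ must lie in the outer face of the union of $G_{\nu_1},\dots,G_{\nu_{i-1}}$ (\cref{le:external-components}). This forces the nesting order outright, so the algorithm never enumerates ``distributions of children across two sides with a nesting order on each side''---it simply adds each child externally, in one of two reflections. Without this ordered-by-length insertion your enumeration over nested arrangements has no polynomial bound.

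Two further gaps. First, your completeness requirement for $\mathcal S(\mu)$ (every restriction of every planar realization of $G$ must literally appear in $\mathcal S(\mu)$ after a rigid motion) is too strong to be compatible with any useful size bound; the paper instead keeps one representative per \emph{$u_\mu v_\mu$-equivalence class} (same outer-face boundary after rotation and translation) and proves via a replacement argument (\cref{le:length-correctness}, in the spirit of \cref{le:optimal-replacement}) that interiors can be swapped freely, so only the outer boundary matters. You specify no deduplication criterion, and your size bound cannot hold without one. Second, the counting itself: the bound $|\mathcal R^i_\mu|\le 2n\cdot f(h-2)^4$ in \cref{le:how-many-optimal} rests on the observation that in a pole-external realization of $G^i_\mu$ exactly two of the children's subgraphs ($G_{\nu_i}$ and one $G_{\nu_j}$) contribute edges to the outer face, so the equivalence class is determined by the index $j$, a reflection bit, and four recursively-stored realizations of the grandchildren $\sigma_i,\tau_i,\sigma_j,\tau_j$. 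Your claim that the distance constraint at the Q-node child ``restricts each other child's realization set to a polynomial-size slice'' plays no role in the correct count and is itself unjustified. In short: same skeleton, but the three load-bearing lemmas (forced nesting by length order, outer-face equivalence with replacement, and the two-children-on-the-boundary count) are all absent.
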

	
	Due to~\cref{le:height-diameter},~\cref{th:spq-tree} implies the following.
	
	\begin{corollary}
		Let $G$ an $n$-vertex weighted $2$-tree and let $\ell$ be length of the longest path of $G$. There exists an $\algoritmPath$-time algorithm that tests whether $G$ admits a planar straight-line realization and, in the positive case, constructs such a realization.
	\end{corollary}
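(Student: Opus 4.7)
The plan is to run a bottom-up dynamic programming algorithm on $\mathcal T$. Because $e^*$ has maximum length and corresponds to the Q-node child of the root, \cref{obs:containments} implies that in any planar straight-line realization of $G$ the segment drawing $e^*$ cannot lie inside any $3$-cycle of $G$; thus $e^*$ may be fixed as a horizontal segment with endpoints $u^*, v^*$ at prescribed positions, and the remaining components attached to the root P-node hang off $u^*$ and $v^*$ on either side of the line through $e^*$.

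For every node $\mu$ of $\mathcal T$ with poles $u_\mu, v_\mu$, I would maintain a collection $\mathcal R(\mu)$ of candidate planar straight-line realizations of $G_\mu$, normalized so that $u_\mu$ sits at the origin and $v_\mu$ on the positive $x$-axis. The invariant is that $\mathcal R(\mu)$ is \emph{sufficient}: every restriction to $G_\mu$ of a planar straight-line realization of $G$ coincides, up to a rigid transformation, with some member of $\mathcal R(\mu)$. The base case at a Q-node is trivial -- a singleton holding the unique segment of the prescribed length. At an S-node $\nu$ with children $\nu_1, \nu_2$ sharing pole $w$, I would combine every pair $(\Gamma_1,\Gamma_2) \in \mathcal R(\nu_1) \times \mathcal R(\nu_2)$ by gluing $\Gamma_1$ and $\Gamma_2$ at $w$ with a rotation angle for $\Gamma_2$ around $w$; although the angle is a continuous parameter, the arrangement formed by the straight-line features of the two sub-realizations partitions the unit circle into $n^{O(1)}$ combinatorially equivalent arcs, so it suffices to pick one representative per arc and apply~\cref{thm:straight-line_realization_planarity} to discard non-planar outcomes. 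At a P-node $\mu$ with children $\nu_1, \dots, \nu_k$, I would group each $\mathcal R(\nu_i)$ by pole-distance $d = |u_\mu v_\mu|$ and intersect the feasible values across children; for each common value of $d$ I would then enumerate the cyclic orderings and side choices of the sub-realizations around the poles, testing each resulting joint realization with~\cref{thm:straight-line_realization_planarity}.

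The main obstacle is to prove $|\mathcal R(\mu)| \leq n^{c \cdot 2^{h_\mu}}$, where $h_\mu$ denotes the height of $\mathcal T_\mu$ and $c$ is an absolute constant. At an S-node this is immediate from the product structure: two children of height at most $h_\mu - 1$ contribute $n^{c \cdot 2^{h_\mu - 1}} \cdot n^{c \cdot 2^{h_\mu - 1}} \cdot n^{O(1)} = n^{c \cdot 2^{h_\mu}}$. At a P-node the challenge is that the naive count of cyclic orderings of $k$ children is $(k-1)!$, which is super-exponential in $k$; the key observation I would rely on is that the rigidity of each sub-realization (together with the matching pole distance) restricts the number of combinatorially distinct joint placements to $n^{O(1)}$ per tuple of sub-realizations, and since a P-node sits one height-$2$ level above its grandchildren, the doubling budget $2^{h_\mu} = 4 \cdot 2^{h_\mu - 2}$ absorbs this factor. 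Once this bound is established, the total work at each node is polynomial in $|\mathcal R(\mu)|$ times a polynomial in $n$ (dominated by the planarity tests via~\cref{thm:straight-line_realization_planarity}), and the algorithm reports success if and only if $\mathcal R(\text{root})$ is non-empty, in which case any member of $\mathcal R(\text{root})$ is a valid planar straight-line realization of $G$.
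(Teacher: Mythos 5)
There are genuine gaps in your proposal, concentrated at the P-nodes and the S-nodes of the SPQ-tree.

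First, your invariant for $\mathcal R(\mu)$ --- that \emph{every} restriction to $G_\mu$ of a planar realization of $G$ coincides, up to a rigid transformation, with some member of $\mathcal R(\mu)$ --- is too strong to be compatible with the bound $|\mathcal R(\mu)|\le n^{c\cdot 2^{h_\mu}}$. A P-node $\mu$ may have $k=\Theta(n)$ children, and the subgraphs $G_{\nu_2},\dots,G_{\nu_k}$ can independently be reflected or internally reconfigured while $G_\mu$ stays planar; the number of geometrically distinct restrictions is therefore a product over all $k$ children, i.e.\ $n^{\Omega(k)}$ already at height $2$, and no ``$n^{O(1)}$ joint placements per tuple'' argument can repair this, because the number of tuples is itself exponential in $k$. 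The paper escapes this by storing realizations only up to $u_\mu v_\mu$-equivalence (equality of the outer face boundary after rotation and translation): only the outer boundary of $G^i_\mu$ matters for extending the drawing, that boundary is contributed by exactly two of the children ($G_{\nu_i}$ and one earlier $G_{\nu_j}$, by \cref{le:external-components}), and hence a minimal complete set has size at most $2n$ times the fourth power of the grandchildren's set sizes, independently of $k$ (\cref{le:how-many-optimal}). Relatedly, the $(k-1)!$ cyclic orderings you worry about never arise: ordering the children by the total length $\lambda(\nu_i)$ of the path $(u_\mu,w_{\nu_i},v_\mu)$ forces the nesting order, so each child is inserted incrementally with only two possible placements.

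Second, the continuous rotation angle at an S-node does not exist for a $2$-tree, and your discretization of it is unsound. By the structure of the SPQ-tree of a $2$-tree, each child $\sigma_i,\tau_i$ of an S-node contains the edge from $w_{\nu_i}$ to its pole, and the parent P-node contains the edge $(u_\mu,v_\mu)$ via its Q-node child; hence the triangle $(u_\mu,w_{\nu_i},v_\mu)$ is rigid and $w_{\nu_i}$ has exactly two placements once $u_\mu$ and $v_\mu$ are fixed. If one instead treats the angle as free and keeps one representative per combinatorial arc, the stored representatives will generically realize the wrong pole distance $|u_\mu v_\mu|$, and the exact-equality constraint imposed at the P-node (the prescribed length of $(u_\mu,v_\mu)$) will be unsatisfiable by every stored representative even when a realization of $G$ exists; intersecting finite sets of ``feasible values of $d$'' across children does not recover the lost angle. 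Finally, the corollary is stated in terms of the longest-path length $\ell$, so you still need the bound $h\le 2\ell-2$ of \cref{le:height-diameter} to convert a height-parameterized running time into $n^{O(4^{\ell})}$; this step is absent from your argument.
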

	
	In the remainder of the section, we prove~\cref{th:spq-tree}. Let $H$ be a $2$-tree and let $(u,v)$ be an edge of $H$. A \emph{$uv$-external} realization of $H$ is a planar straight-line realization of $H$ such that both $u$ and $v$ are incident to the outer face; note that the edge $(u,v)$ is not necessarily incident to the outer face of a $uv$-external realization of $H$. Given two $uv$-external realizations $\Gamma$ and $\Gamma'$ of $H$, we say that $\Gamma$ is \emph{$uv$-equivalent} to $\Gamma'$ if $\Gamma$ can be rotated and translated (note that reflections are not allowed) 
	so that the representation of $(u,v)$ coincides with the one in $\Gamma'$ and so that the boundary of the outer face of $\Gamma$ coincides with the boundary of the outer face of $\Gamma'$. Given a set $\mathcal R$ of $uv$-external realizations of $H$, we say that $\mathcal R$ is \emph{$uv$-complete} if, for every $uv$-external realization $\Gamma'$ of $H$, there exists a $uv$-external realization $\Gamma$ of $H$ in $\mathcal R$ such that $\Gamma$ is $uv$-equivalent to $\Gamma'$. Finally, a $uv$-complete set $\mathcal R$ of $uv$-external realizations of $H$ is \emph{minimal} if no realization in $\mathcal R$ is $uv$-equivalent to a distinct realization in $\mathcal R$.
	
	
	As a first step of our algorithm, we compute the SPQ-tree $\mathcal T$ of $G$ such that the Q-node corresponding to $e^*$ is a child of the root $\rho$ of $\mathcal T$; this can be done in $O(n)$ time~\cite{dt-opl-96,gm-lti-00,vtl-rspd-82}. One key point of our algorithm is that, for every P-node $\mu$ of $\mathcal T$, we only need to look for $u_{\mu}v_{\mu}$-external realizations of $G_{\mu}$ (recall that $u_{\mu}$ and $v_{\mu}$ are the poles of $\mu$). This is formalized in the following.
	
	\begin{lemma} \label{obs:e-external}
		Let $\Gamma$ be any planar straight-line realization of $G$. For any P-node $\mu$ of $\mathcal T$, the restriction $\Gamma_{\mu}$ of $\Gamma$ to $G_{\mu}$ is a $u_{\mu}v_{\mu}$-external realization of $G_{\mu}$.
	\end{lemma}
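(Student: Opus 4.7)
The plan is to prove the lemma by contradiction, combining chordality of 2-trees with the maximality of $\lambda(e^*)$ and Observation~\ref{obs:containments}. Suppose, WLOG, that $u_\mu$ is not on the outer face of $\Gamma_\mu$, so $u_\mu$ lies strictly inside some cycle of $G_\mu$ drawn in $\Gamma_\mu$. A short check shows that $G_\mu$ is an induced subgraph of $G$: condition (C2) for the P-node $\mu$ guarantees that the edge $(u_\mu,v_\mu)$ is contributed by its Q-node child and hence belongs to $E(G_\mu)$, and any other edge of $G$ between vertices of $V(G_\mu)$ necessarily sits in $\mu$'s subtree. Since 2-trees are chordal (a direct induction on the 2-tree construction) and chordality is closed under induced subgraphs, $G_\mu$ is chordal. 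Iteratively cutting the enclosing cycle along chords then produces a triangle $T=(a,b,c)$ of $G_\mu$ with $u_\mu$ strictly in its interior.

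Next, I would use planarity and the separation-pair structure to show that all of $\Gamma$ lies in the closed region bounded by $T$. The edge $(u_\mu,v_\mu)\in E(G_\mu)$ cannot cross $T$, so $v_\mu$ is inside or on $T$; and since $\{u_\mu,v_\mu\}$ separates $V(G_\mu)\setminus\{u_\mu,v_\mu\}$ from $V(G)\setminus V(G_\mu)$ in $G$, the remainder of $G$ is attached at $u_\mu,v_\mu$ and is therefore also drawn inside $T$. In particular $e^*=(u_\rho,v_\rho)$ is drawn inside $T$, yielding
\[
\lambda(e^*)=|u_\rho v_\rho|\;\le\;\operatorname{diam}(T)\;\le\;\max\{|ab|,|bc|,|ca|\}.
\]
Since every edge of $G$ has length at most $\lambda(e^*)$, this chain is tight and $T$ has some side of length exactly $\lambda(e^*)$.

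I would then apply Observation~\ref{obs:containments} to a triangle $T^*$ of $G$ containing $e^*$: its longest side is $e^*$, matching the longest side of $T$, and $T^*$ lies in the closed region of $T$ (if $V(T^*)\subseteq V(T)$ then $T^*=T$ and the conclusion below is immediate; otherwise $T^*$ has a vertex strictly interior to $T$ and the observation applies). The observation forces $T^*$ and $T$ to share a side of length $\lambda(e^*)$; assuming $e^*$ is the unique edge of maximum length, this shared side is $e^*$ itself, so $e^*\in E(T)\subseteq E(G_\mu)$. But the Q-node for $e^*$ is a child of $\rho$ in $\mathcal T$, so $e^*\in E(G_\mu)$ forces $\mu=\rho$: for a non-root P-node $\mu$ this is already a contradiction, while for $\mu=\rho$ we have $u_\mu=u_\rho\in V(e^*)\subseteq V(T)$, contradicting the assumption that $u_\mu$ is strictly inside~$T$.

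The main obstacle I expect is the case where $\lambda(e^*)$ is attained by several edges of $G$, so that the side shared by $T^*$ and $T$ need not literally be $e^*$. In that situation the argument must be iterated on the triangle containing the newly-identified shared side, chasing the $\lambda(e^*)$-length edges through the 2-tree until either $e^*$ itself is forced into $E(G_\mu)$ or a vertex of $V(T)$ is forced to coincide with $u_\mu$.
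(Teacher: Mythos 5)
Your route is genuinely different from the paper's and most of it is sound, but the closing step has a real gap. After deducing that the longest side of $T$ has length exactly $\lambda(e^*)$, you invoke \cref{obs:containments} on a triangle $T^*$ containing $e^*$ and then need the side shared by $T^*$ and $T$ to be $e^*$ itself. That only follows if $e^*$ is the \emph{unique} edge of maximum length, an assumption you are not entitled to: the theorem fixes \emph{an} edge of maximum length, and ties are entirely possible (the one- and two-length instances studied elsewhere in the paper consist of nothing but ties). Your proposed repair --- ``chasing the $\lambda(e^*)$-length edges through the $2$-tree'' --- is not a proof: once the shared side turns out to be some other maximum-length edge $f\in E(T)$, you have learned nothing new (you already knew $T$ has a side of length $\lambda(e^*)$), and there is no monotone quantity in sight that would make such an iteration terminate.

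The good news is that your own setup closes the argument without \cref{obs:containments} and without any uniqueness assumption, because at least one endpoint of $e^*$ lands \emph{strictly} inside $T$. If $\mu=\rho$, that endpoint is $u_\rho=u_\mu$. If $\mu\neq\rho$, at most one of $u_\rho,v_\rho$ lies in $V(G_\mu)$ (otherwise, since $G_\mu$ is induced, $e^*$ would be an edge of $G_\mu$, forcing $\mu=\rho$ because the Q-node of $e^*$ is a child of $\rho$); the other pole lies in $V(G)\setminus V(G_\mu)$ and is therefore strictly interior to $T$ by your containment claim, while its partner stays in the closed triangle because $e^*$ cannot cross the sides of $T$. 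Now, for two points of a closed triangle with at least one of them interior, the distance is \emph{strictly} smaller than the longest side: extend the ray from one point through the interior one until it exits $T$. Hence $\lambda(e^*)=|u_\rho v_\rho|<\max\{|ab|,|bc|,|ca|\}\le\lambda(e^*)$, a contradiction. Two smaller points also deserve a word: the chord used in the chordality reduction may be drawn \emph{outside} the enclosing cycle, so you need a short parity-of-crossings argument to see that $u_\mu$ still falls strictly inside one of the two shorter cycles; and ``the remainder of $G$ is attached at $u_\mu,v_\mu$ and is therefore drawn inside $T$'' is insufficient when $v_\mu$ is a vertex of $T$ --- you need a path from the external vertex to $u_\mu$ that avoids $V(G_\mu)\setminus\{u_\mu\}$ (biconnectivity provides one) and then the Jordan-curve argument.

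For comparison, the paper splits on $\mu=\rho$ versus $\mu\neq\rho$: the root case is the same metric observation ($e^*$ cannot fit inside any triangle of $\Gamma$), while the non-root case is purely topological --- $u_\mu$ would lie inside a cycle of $G_\mu$ containing at most one of $u_\rho,v_\rho$, and a path from $u_\mu$ to the other root pole avoiding $G_\mu$ would have to cross that cycle. The paper thus never reduces to a triangle; your approach buys a uniform metric argument for both cases at the cost of the chordal reduction and the global containment claim.
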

	
	\begin{proof}
		Consider first the case in which $\mu=\rho$. Then $\Gamma_{\mu}$ coincides with $\Gamma$. If $\Gamma$ were not a $u_{\mu}v_{\mu}$-external realization of $G_{\mu}$, then $e^*$ would lie inside a $3$-cycle $c$ of $G$ in $\Gamma$. However, this would contradict the planarity of $\Gamma$. Namely, since $e^*$ is an edge with maximum length, the length of $e^*$ would be larger than the one of each side of the triangle representing $c$ in $\Gamma$.
		
		Consider now the case in which $\mu\neq \rho$. Suppose, for a contradiction, that (at least) one of $u_{\mu}$ and $v_{\mu}$, say $u_{\mu}$, is not incident to the outer face of $\Gamma_{\mu}$. Hence, $u_{\mu}$ lies inside a cycle $c$ of $G_{\mu}$ in $\Gamma_{\mu}$ and in $\Gamma$. Furthermore, at most one of $u_{\rho}$ and $v_{\rho}$, say $u_{\rho}$, is a vertex of $c$ (if both $u_{\rho}$ and $v_{\rho}$ would belong to $G_{\mu}$, we would have $\mu=\rho$). Since $u_{\rho}$ and $v_{\rho}$ are incident to the outer face of $\Gamma$, we have that $v_{\rho}$ lies outside $c$ in $\Gamma$. Since $G$ is biconnected, it contains a path that connects $u_{\mu}$ with $v_{\rho}$ and that does not contain vertices of $G_{\mu}$ other than $u_{\mu}$. By the Jordan curve theorem, such a path crosses $c$ in $\Gamma$, a contradiction.
	\end{proof}
	
	%
	%
	
	For each P-node $\mu$ of $\mathcal T$, we construct an order $\nu_1,\dots,\nu_k$ of the children of $\mu$ as follows. First, we let $\nu_1$ be the only Q-node that is a child of $\mu$; this node exists by Condition~(C2). We denote by $\lambda(\nu_1)$ the length of the edge of $G$ corresponding to such a Q-node. Every other child $\nu_i$ of $\mu$ is an S-node. Let $\sigma_i$ and $\tau_i$ be the two children of $\nu_i$; by Condition~(C3), each of them is either a Q-node or a P-node. We denote by $w_{\nu_i}$ the pole that $\sigma_i$ and $\tau_i$ share (that is, their pole different from $u_{\mu}$ and $v_{\mu}$); also, we can assume, w.l.o.g.\ up to a relabeling of $\sigma_i$ with $\tau_i$, that $u_{\mu}$ is a pole of $\sigma_i$ and $v_{\mu}$ is a pole of $\tau_i$. Then the edges $(u_{\mu},w_{\nu_i})$ and $(v_{\mu},w_{\nu_i})$ belong to $G_{\nu_i}$, as each of $\sigma_i$ and $\tau_i$ is either a Q-node (and then it obviously contains the edge between its poles) or a P-node (and then it contains the edge between its poles by Condition~(C2)). Let $\lambda(\nu_i)$ denote the sum of the lengths of the edges $(u_{\mu},w_{\nu_i})$ and $(v_{\mu},w_{\nu_i})$. Then the order of the children of $\mu$ is established so that $\lambda(\nu_1)\leq \lambda(\nu_2)\leq \dots \leq \lambda(\nu_k)$. Clearly, such orders can be found for all the P-nodes of $\mathcal T$ in overall $O(n\log n)$ time. For a P-node $\mu$ of $\mathcal T$ with children $\nu_1,\dots,\nu_k$ and for any $i\in \{1,\dots,k\}$, we denote by $G^i_{\mu}$ the subgraph of $G_{\mu}$ which is the union of the graphs $G_{\nu_1},\dots,G_{\nu_i}$. Observe that $G^1_{\mu}$ is the edge $(u_{\mu},v_{\mu})$, while $G^k_{\mu}$ coincides with $G_{\mu}$. 
	
	Our algorithm either concludes that $G$ admits no planar straight-line realization or constructs the following sets:
	
	\begin{itemize}
		\item For each Q-node or P-node $\mu$, the algorithm constructs a non-empty minimal $u_{\mu}v_{\mu}$-complete set $\mathcal R_{\mu}$ of $u_{\mu}v_{\mu}$-external realizations of $G_{\mu}$. For each realization $\Gamma_{\mu}$ in $\mathcal R_{\mu}$, the algorithm also constructs the plane embedding $\mathcal E(\Gamma_{\mu})$. 
		\item For each P-node $\mu$ with children $\nu_1,\dots,\nu_k$ and for each $i=1,\dots,k$, the algorithm constructs a non-empty minimal $u_{\mu}v_{\mu}$-complete set $\mathcal R^i_{\mu}$ of $u_{\mu}v_{\mu}$-external realizations of $G^i_{\mu}$. For each realization $\Gamma^i_{\mu}$ in $\mathcal R^i_{\mu}$, the algorithm also constructs the plane embedding $\mathcal E(\Gamma^i_{\mu})$. 
	\end{itemize}
	
	In order to construct such sets, the algorithm bottom-up visits all the Q-nodes and all the P-nodes of $\mathcal T$.

	
	When a Q-node $\mu$ is visited, the set $\mathcal R_{\mu}$ is defined as $\{\Gamma_{\mu}\}$, where $\Gamma_{\mu}$ is the unique (up to rotation and translation) straight-line realization of the edge of $G$ corresponding to $\mu$. Obviously, $\mathcal R_{\mu}$ is a minimal $u_{\mu}v_{\mu}$-complete set of $u_{\mu}v_{\mu}$-external realizations of $G_{\mu}$.
	
	Suppose now that a P-node $\mu$ of $\mathcal T$ with children $\nu_1,\dots,\nu_k$ is considered. For $i=2,\dots,k$, since we already visited the children $\sigma_i$ and $\tau_i$ of $\nu_i$, we have the sets  $\mathcal R_{\sigma_i}$ and $\mathcal R_{\tau_i}$ (indeed, if we concluded that $G$ admits no planar straight-line realization, there is nothing else to do). Our algorithm processes the children $\nu_1,\dots,\nu_k$ of $\mu$ in this order. For $i=1,\dots,k$, after visiting $\nu_i$, the algorithm either concludes that $G$ admits no planar straight-line realization or constructs the set $\mathcal R^i_{\mu}$. In the latter case, for each realization $\Gamma^i_{\mu}$ in $\mathcal R^i_{\mu}$, the algorithm also constructs the plane embedding $\mathcal E(\Gamma^i_{\mu})$. If the algorithm constructs the set $\mathcal R^k_{\mu}$, then we set $\mathcal R_{\mu}=\mathcal R^k_{\mu}$. 
	
	We now describe how the algorithm processes the children $\nu_1,\dots,\nu_k$ of a P-node $\mu$. The algorithm starts by defining $\mathcal R^1_{\mu}=\mathcal R_{\nu_1}$ (recall that $\nu_1$ is a Q-node). Then, for $i=2,\dots,k$, it constructs the set $\mathcal R^i_{\mu}$ by combining the realizations of $G^{i-1}_{\mu}$ in $\mathcal R^{i-1}_{\mu}$ with the realizations of $G_{\sigma_i}$ in $\mathcal R_{\sigma_i}$ and with the realizations of $G_{\tau_i}$ in $\mathcal R_{\tau_i}$. This is done as follows.

	We initialize $\mathcal R^{i}_{\mu}=\emptyset$. Then, for every triple of realizations $\Gamma^{i-1}_{\mu} \in \mathcal R^{i-1}_{\mu}$, $\Gamma_{\sigma_i} \in \mathcal R_{\sigma_i}$, and $\Gamma_{\tau_i} \in \mathcal R_{\tau_i}$, we construct two straight-line realizations $\Phi^i_{\mu}$ and $\Psi^i_{\mu}$ of $G^i_{\mu}$ (and two plane embeddings $\mathcal E(\Phi^i_{\mu})$ and $\mathcal E(\Psi^i_{\mu})$ of $G^i_{\mu}$) as follows. Recall that the poles of $\sigma_i$ are $u_{\mu}$ and $w_{\nu_i}$, while the ones of $\tau_i$ are $v_{\mu}$ and $w_{\nu_i}$. Refer to~\cref{fig:spq-construction,fig:spq-embeddings}.
	
	\begin{figure}[tb!]
		\centering
		\includegraphics[scale=1]{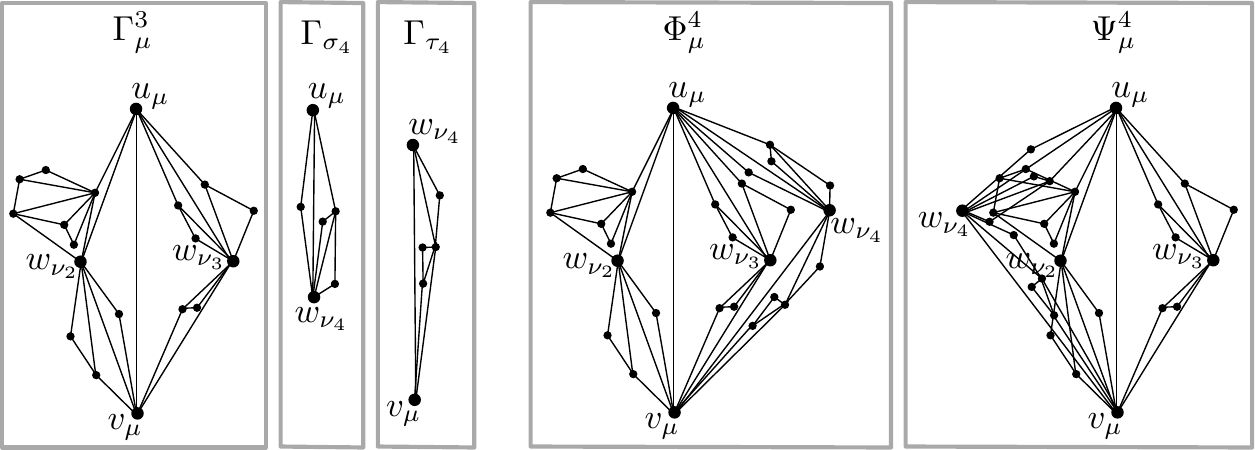}
		\caption{Construction of the realizations $\Phi^i_{\mu}$ and $\Psi^i_{\mu}$ of $G^i_{\mu}$ from the realizations $\Gamma^{i-1}_{\mu} \in \mathcal R^{i-1}_{\mu}$, $\Gamma_{\sigma_i} \in \mathcal R_{\sigma_i}$, and $\Gamma_{\tau_i} \in \mathcal R_{\tau_i}$. In this example, $i=4$; further, $\Phi^i_{\mu}$ is planar and $\Psi^i_{\mu}$ is non-planar.}
		\label{fig:spq-construction}
	\end{figure}
	
	\begin{figure}[tb!]
		\centering
		\includegraphics[scale=1]{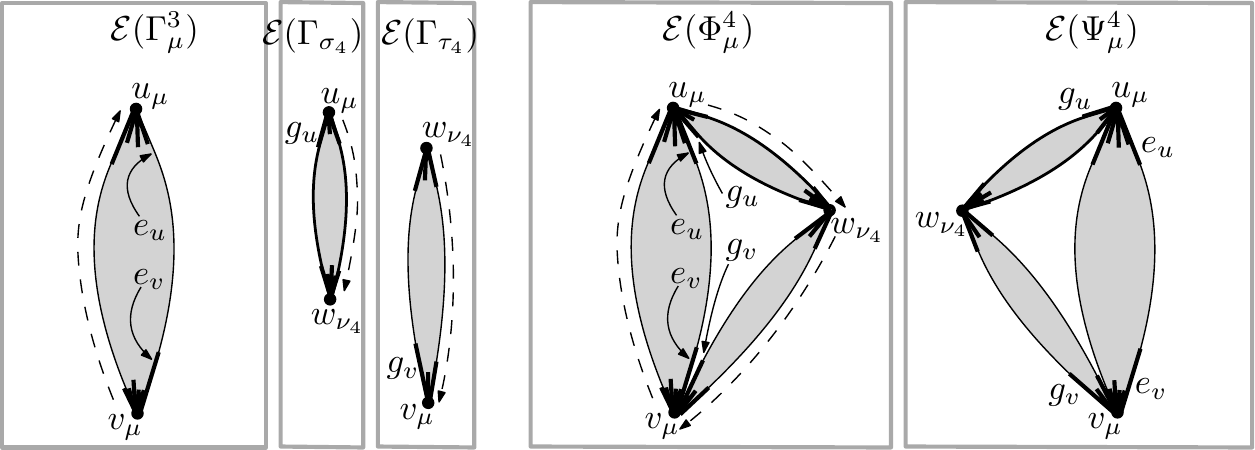}
		\caption{Construction of the plane embeddings $\mathcal E(\Phi^i_{\mu})$ and $\mathcal E(\Psi^i_{\mu})$ of $G^i_{\mu}$ from the plane embeddings $\mathcal E(\Gamma^{i-1}_{\mu})$, $\mathcal E(\Gamma_{\sigma_i})$, and $\mathcal E(\Gamma_{\tau_i})$. The illustration is coherent with the realizations of~\cref{fig:spq-construction}.}
		\label{fig:spq-embeddings}
	\end{figure}
	
	\begin{itemize}
		\item First, we initialize both $\Phi^i_{\mu}$ and $\Psi^i_{\mu}$ to $\Gamma^{i-1}_{\mu}$. 
		\item Second, we place the vertex $w_{\nu_i}$ in $\Phi^i_{\mu}$ and $\Psi^i_{\mu}$ with the two placements that realize the prescribed lengths of the edges $(u_{\mu},w_{\nu_i})$ and $(v_{\mu},w_{\nu_i})$; in particular, the placement of $w_{\nu_i}$ in $\Phi^i_{\mu}$ is such that the clockwise order of the vertices of the cycle $(u_{\mu},v_{\mu},w_{\nu_i})$ is $u_{\mu},w_{\nu_i},v_{\mu}$, while the placement of $w_{\nu_i}$ in $\Psi^i_{\mu}$ is such that the clockwise order of the vertices of the cycle $(u_{\mu},v_{\mu},w_{\nu_i})$ is $u_{\mu},v_{\mu},w_{\nu_i}$. 
		\item Third, we insert $\Gamma_{\sigma_i}$ in $\Phi^i_{\mu}$ so that the representation of the edge $(u_{\mu},w_{\nu_i})$ in $\Gamma_{\sigma_i}$ coincides with the one in $\Phi^i_{\mu}$. We also insert $\Gamma_{\tau_i}$ in $\Phi^i_{\mu}$ so that the representation of the edge $(v_{\mu},w_{\nu_i})$ in $\Gamma_{\tau_i}$ coincides with the one in $\Phi^i_{\mu}$. This concludes the construction of $\Phi^i_{\mu}$. The insertion of $\Gamma_{\sigma_i}$ and $\Gamma_{\tau_i}$ in $\Psi^i_{\mu}$ in order to conclude the construction of $\Psi^i_{\mu}$ is done in the same way.
		\item Finally, we construct a plane embedding $\mathcal E(\Phi^i_{\mu})$ as follows:
		
		\begin{itemize}
			\item We glue the lists of the edges incident to $u_{\mu}$ in $\mathcal E(\Gamma^{i-1}_{\mu})$ and $\mathcal E(\Gamma_{\sigma_i})$, so that, in the counter-clockwise order of the edges incident to $u_{\mu}$, the edge $e_{u}$ that enters $u_{\mu}$ when traversing in counter-clockwise order the boundary of the outer face of $\mathcal E(\Gamma^{i-1}_{\mu})$  comes immediately before the edge $g_{u}$ that enters $u_{\mu}$ when traversing in clockwise order the boundary of the outer face of $\mathcal E(\Gamma_{\sigma_i})$. 
			\item We glue the lists of the edges incident to $v_{\mu}$ in $\mathcal E(\Gamma^{i-1}_{\mu})$ and $\mathcal E(\Gamma_{\tau_i})$, so that, in the clockwise order of the edges incident to $v_{\mu}$, the edge $e_{v}$ that enters $v_{\mu}$ when traversing in clockwise order the boundary of the outer face of $\Gamma^{i-1}_{\mu}$ comes immediately before the edge $g_{v}$ that enters $v_{\mu}$ when traversing in counter-clockwise order the boundary of the outer face of $\Gamma_{\tau_i}$.
			\item We let the cycle bounding the outer face of $\mathcal E(\Phi^i_{\mu})$ be composed of the path that goes from $v_{\mu}$ to $u_{\mu}$ in clockwise direction along the outer face of $\mathcal E(\Gamma^{i-1}_{\mu})$, of the path that goes from $u_{\mu}$ to $w_{\nu_i}$ in clockwise direction along the outer face of $\mathcal E(\Gamma_{\sigma_i})$, and of the path that goes from $w_{\nu_i}$ to $v_{\mu}$ in clockwise direction along the outer face of $\mathcal E(\Gamma_{\tau_i})$.
		\end{itemize}
		We remark that $\mathcal E(\Phi^i_{\mu})$ is not necessarily the plane embedding of $\Phi^i_{\mu}$, in fact $\Phi^i_{\mu}$ might even be non-planar. 
		
		A plane embedding $\mathcal E(\Psi^i_{\mu})$ is constructed analogously. 
	\end{itemize}
	
	The straight-line realizations $\Phi^i_{\mu}$ and $\Psi^i_{\mu}$, together with the plane embeddings $\mathcal E(\Phi^i_{\mu})$ and $\mathcal E(\Psi^i_{\mu})$, are hence constructed in $O(n)$ time. 
	
	We remark that the plane embeddings $\mathcal E(\Phi^i_{\mu})$ and  $\mathcal E(\Psi^i_{\mu})$ both place $G_{\nu_i}$ ``externally'' with respect to $G^{i-1}_{\mu}$. This is a main ingredient for bounding the running time of our algorithm. It is also not a loss of generality, because of the ordering that has been established for the children of $\mu$, as proved in the following lemma.
	
	\begin{lemma} \label{le:external-components}
		Consider a $u_{\mu}v_{\mu}$-external realization $\Gamma^i_{\mu}$ of $G^i_{\mu}$. Then the restriction of $\Gamma^i_{\mu}$ to $G_{\nu_i}$ lies in the outer face of the restriction of $\Gamma^i_{\mu}$ to $G^{i-1}_{\mu}$. 
	\end{lemma}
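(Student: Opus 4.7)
The plan is to argue by contradiction. Suppose $\Gamma^i_\mu$ is a $u_\mu v_\mu$-external realization of $G^i_\mu$ in which the drawing of $G_{\nu_i}$ does not lie in the outer face of the restriction $\Gamma^{i-1}_\mu$ of $\Gamma^i_\mu$ to $G^{i-1}_\mu$.

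I would first show that the drawing of $G_{\nu_i}$ minus its two poles is contained in the closure of a single face $f$ of $\Gamma^{i-1}_\mu$, with both $u_\mu$ and $v_\mu$ incident to $\partial f$. This rests on three observations: (i) for $i \geq 2$ the subgraphs $G_{\nu_i}$ and $G^{i-1}_\mu$ share only the poles $u_\mu,v_\mu$ (in particular, the edge $(u_\mu,v_\mu)$ belongs only to $G_{\nu_1}$); (ii) $G_{\nu_i}-\{u_\mu,v_\mu\}$ is connected, because in the S-node $\nu_i$ the apex $w_{\nu_i}$ connects the two halves $G_{\sigma_i}-u_\mu$ and $G_{\tau_i}-v_\mu$; and (iii) planarity prevents any edge of $G_{\nu_i}$ from crossing an edge of $G^{i-1}_\mu$. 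The $u_\mu v_\mu$-externality of $\Gamma^i_\mu$ further implies that $u_\mu$ and $v_\mu$ lie on the outer face of $\Gamma^{i-1}_\mu$, by an escape-route argument analogous to the one in \cref{obs:e-external}.

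Under the contradictory assumption that $f$ is internal, its boundary is a cycle through $u_\mu$ and $v_\mu$ that decomposes into two $u_\mu v_\mu$-paths $Q\subseteq G_{\nu_j}$ and $Q'\subseteq G_{\nu_{j'}}$ for some indices $j,j'<i$; this follows from the standard structure of planar embeddings of parallel compositions around two shared poles, where any face with both poles on its boundary is sandwiched between two of the parallel subgraphs. I would then derive a geometric contradiction using the ordering $\lambda(\nu_j),\lambda(\nu_{j'})\le\lambda(\nu_i)$ together with the convexity of the function $g(w):=|u_\mu w|+|v_\mu w|$, whose level sets are confocal ellipses with foci $u_\mu,v_\mu$. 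By the prescribed lengths of the edges $(u_\mu,w_{\nu_i})$ and $(v_\mu,w_{\nu_i})$ we have $g(w_{\nu_i})=\lambda(\nu_i)$, while convexity forces the maximum of $g$ on $\bar f$ to be attained on $\partial f$. On each $u_\mu v_\mu$-path, the triangle inequality yields $g(w)\le L(Q)$, where $L(Q)$ is the total length of the path. Together with the fact that the shortest $u_\mu v_\mu$-path in any $G_{\nu_l}$ has length exactly $\lambda(\nu_l)$ (a short induction on the SPQ-subtree of $G_{\nu_l}$ using the hypothesis that every 3-cycle of $G$ satisfies the triangle inequality), the base case in which $Q$ and $Q'$ are the direct apex paths through $w_{\nu_j}$ and $w_{\nu_{j'}}$ yields the contradiction: strict convexity of $g$ on the interior of the region bounded by these paths forces $g(w_{\nu_i})<\max(\lambda(\nu_j),\lambda(\nu_{j'}))\le\lambda(\nu_i)$.

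The principal obstacle is the general case, where $Q$ or $Q'$ traverses additional vertices coming from nested P-node subgraphs of $G_{\nu_j}$ or $G_{\nu_{j'}}$. At such a vertex, the naive bound $g(w)\le L(Q)$ may exceed $\lambda(\nu_j)$ or $\lambda(\nu_{j'})$, so the above convexity estimate is not directly tight enough. I expect the resolution to come from a structural induction on the SPQ-subtrees of $G_{\nu_j}$ and $G_{\nu_{j'}}$ that, for whichever outer boundary path actually bounds $f$ in the given embedding, shows that the closed region enclosed by it together with the segment $u_\mu v_\mu$ on the appropriate side is contained in the closed confocal region of radius $\max(\lambda(\nu_j),\lambda(\nu_{j'}))$; combining this containment with the ordering would force $w_{\nu_i}$ to coincide with a boundary apex, contradicting planarity of $\Gamma^i_\mu$.
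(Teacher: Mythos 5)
Your overall strategy---contradiction, the ordering $\lambda(\nu_1)\leq\dots\leq\lambda(\nu_k)$, and the convexity of $g(w)=|u_\mu w|+|v_\mu w|$ with its confocal-ellipse level sets---is the same as the paper's, and your ``base case'' is essentially the paper's entire argument. But the gap you flag is real, and the fix you sketch does not work. The containment you hope to establish by structural induction---that the region enclosed by the boundary path of $f$ together with the segment $u_\mu v_\mu$ lies in the confocal region of radius $\max(\lambda(\nu_j),\lambda(\nu_{j'}))$---is false in general: a child $G_{\nu_j}$ may place a vertex $z$ of a nested subgraph $G_{\sigma_j}$ far from the segment $u_\mu v_\mu$ while the prescribed lengths $|u_\mu z|$ and $|z w_{\nu_j}|$ keep $z$ legal, so that $g(z)>\lambda(\nu_j)$ (and possibly $>\lambda(\nu_{j'})$) even though $z$ lies on the boundary of the face $f$. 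Such a ``pocket'' of large $g$-value inside $\overline f$ can coexist with a perfectly valid realization of $G^{i-1}_\mu$, so bounding $g$ over all of $\overline f$ is hopeless; the lemma must instead constrain where $w_{\nu_i}$ can sit, not what $f$ looks like.

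The paper avoids the face boundary altogether. Using the $u_\mu v_\mu$-externality, it draws a Jordan arc $\mathcal C_\mu$ from $u_\mu$ to $v_\mu$ through the outer face of $\Gamma^i_\mu$; the closed curve $\mathcal C_\mu\cup(u_\mu,w_{\nu_i},v_\mu)$ must then have some entire $G_{\nu_j}-\{u_\mu,v_\mu\}$ with $2\le j<i$ on the side opposite to the edge $(u_\mu,v_\mu)$. Here one uses that each $G_{\nu_l}-\{u_\mu,v_\mu\}$ is connected and cannot cross the curve, and that if no such $j$ existed then $f$ would be the outer face. A theta-graph argument on the three internally disjoint $u_\mu v_\mu$-paths $(u_\mu,v_\mu)$, $(u_\mu,w_{\nu_i},v_\mu)$, and $(u_\mu,w_{\nu_j},v_\mu)$ then forces $(u_\mu,w_{\nu_i},v_\mu)$ to lie inside the triangle on the $3$-cycle $(u_\mu,w_{\nu_j},v_\mu)$---a short cycle that always exists by Condition~(C2), no matter how complicated the actual boundary of $f$ is. Your convexity estimate, applied to this triangle rather than to $\overline f$, immediately yields $\lambda(\nu_i)=g(w_{\nu_i})<\lambda(\nu_j)\leq\lambda(\nu_i)$, the desired contradiction. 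In short: replace ``the face that hosts $G_{\nu_i}$'' by ``the triangle through $w_{\nu_j}$ that must surround the path $(u_\mu,w_{\nu_i},v_\mu)$'', and the obstacle you identified disappears.
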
 
	\begin{proof}
		Suppose, for a contradiction, that the restriction of $\Gamma^i_{\mu}$ to $G_{\nu_i}$ lies inside an internal face $f$ of the restriction of $\Gamma^i_{\mu}$ to $G^{i-1}_{\mu}$. In particular, the path $(u_{\mu},w_{\nu_i},v_{\mu})$ lies inside $f$ in $\Gamma^i_{\mu}$. Since $\Gamma^i_{\mu}$ is a $u_{\mu}v_{\mu}$-external realization of $G^i_{\mu}$, both $u_{\mu}$ and $v_{\mu}$ are incident to the outer face of $\Gamma^i_{\mu}$, hence it is possible to draw a Jordan arc $\mathcal C_{\mu}$ connecting $u_{\mu}$ and $v_{\mu}$ and lying in the outer face of $\Gamma^i_{\mu}$, except at its end-points. Now $\mathcal C_{\mu}$ and the path $(u_{\mu},w_{\nu_i},v_{\mu})$ form a closed curve which contains the edge $(u_{\mu},v_{\mu})$ on one side and a graph $G_{\nu_j}$ on the other side (as otherwise $f$ would be the outer face of $\Gamma^i_{\mu}$), for some $2\leq j<i$. However, this implies that the path $(u_{\mu},w_{\nu_i},v_{\mu})$ lies inside the triangle representing the $3$-cycle $(u_{\mu},w_{\nu_j},v_{\mu})$ in $\Gamma^i_{\mu}$. This contradicts the planarity of $\Gamma^i_{\mu}$; indeed, we have $\ell(\nu_i)>\ell(\nu_j)>\ell(\nu_1)$, hence the length of the path $(u_{\mu},w_{\nu_i},v_{\mu})$ is larger than the lengths of the path $(u_{\mu},w_{\nu_j},v_{\mu})$ and of the edge $(u_{\mu},v_{\mu})$ in $\Gamma^i_{\mu}$. 
	\end{proof}
	
	%
	%
	
	We now resume the description of how the algorithm constructs $\mathcal R^i_{\mu}$. By means of~\cref{thm:straight-line_realization_planarity}, we test in $O(n)$ time whether $\Phi^i_{\mu}$ is a planar straight-line realization respecting $\mathcal E(\Phi^i_{\mu})$. By the construction of $\mathcal E(\Phi^i_{\mu})$, the test is positive if and only if $\Phi^i_{\mu}$ is a $u_{\mu}v_{\mu}$-external realization of $G^i_{\mu}$. If the test is negative, then we discard $\Phi^i_{\mu}$. 
	Otherwise, for every realization $\Lambda^i_{\mu}$ in $\mathcal R^{i}_{\mu}$, we test whether $\Phi^i_{\mu}$ is $u_{\mu}v_{\mu}$-equivalent to $\Lambda^i_{\mu}$ and, if it is, we discard $\Phi^i_{\mu}$. This can be done in $O(n)$ time by rotating and translating $\Phi^i_{\mu}$ so that the representation of the edge $(u_{\mu},v_{\mu})$ coincides with the one in $\Lambda^i_{\mu}$ and by then checking whether the vertices along on the boundary of the outer face of $\Phi^i_{\mu}$ have the same coordinates as the vertices along on the boundary of the outer face of $\Lambda^i_{\mu}$. If we did not discard $\Phi^i_{\mu}$, then we insert it into $\mathcal R^{i}_{\mu}$. 
	
	We process $\Psi^i_{\mu}$ analogously to $\Phi^i_{\mu}$.

	After we have processed every triple of drawings $\Gamma^{i-1}_{\mu} \in \mathcal R^{i-1}_{\mu}$, $\Gamma_{\sigma_i} \in \mathcal R_{\sigma_i}$, and $\Gamma_{\tau_i} \in \mathcal R_{\tau_i}$, if the set $\mathcal R^{i}_{\mu}$ is empty, then we report that $G$ admits no planar straight-line realization, otherwise $\mathcal R^{i}_{\mu}$ is a minimal $u_{\mu}v_{\mu}$-complete set of $u_{\mu}v_{\mu}$-external realizations of $G^i_{\mu}$. The correctness of these conclusions is proved in the following lemma.
	
	\begin{lemma} \label{le:length-correctness}
		If $\mathcal R^{i}_{\mu}$ is empty, then $G$ admits no planar straight-line realization. Otherwise, $\mathcal R^{i}_{\mu}$ is a minimal $u_{\mu}v_{\mu}$-complete set of $u_{\mu}v_{\mu}$-external realizations of $G^i_{\mu}$. 
	\end{lemma}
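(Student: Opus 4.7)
The plan is to establish three properties of $\mathcal R^i_\mu$ in sequence: (i) every element is a $u_\mu v_\mu$-external realization of $G^i_\mu$ and is annotated with a correct plane embedding; (ii) the set is minimal, i.e., no two elements are $u_\mu v_\mu$-equivalent; and (iii) the set is $u_\mu v_\mu$-complete, provided the inputs $\mathcal R^{i-1}_\mu$, $\mathcal R_{\sigma_i}$, $\mathcal R_{\tau_i}$ are themselves non-empty minimal complete sets (otherwise the algorithm has already declared $G$ unrealizable, by induction on the bottom-up order of the SPQ-tree visit). Property~(i) is an immediate consequence of the planarity test via~\cref{thm:straight-line_realization_planarity} together with the way $\mathcal E(\Phi^i_\mu)$ and $\mathcal E(\Psi^i_\mu)$ are assembled (so that the prescribed outer face boundary contains both $u_\mu$ and $v_\mu$); property~(ii) is enforced directly by the pruning step that compares each candidate against every previously stored realization.

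The main work, and main obstacle, is property~(iii). I will proceed by induction on $i$ (with the induction on nodes of $\mathcal T$ already in place for the values $\mathcal R_{\sigma_i}$, $\mathcal R_{\tau_i}$). Let $\Gamma^i_\mu$ be an arbitrary $u_\mu v_\mu$-external realization of $G^i_\mu$; I aim to exhibit a triple $(\Gamma^{i-1}_\mu, \Gamma_{\sigma_i}, \Gamma_{\tau_i})$ stored by the algorithm whose assembly into $\Phi^i_\mu$ or $\Psi^i_\mu$ reproduces $\Gamma^i_\mu$ up to $u_\mu v_\mu$-equivalence. The first step is to show that the restrictions of $\Gamma^i_\mu$ to $G^{i-1}_\mu$, $G_{\sigma_i}$, $G_{\tau_i}$ are, respectively, a $u_\mu v_\mu$-external realization of $G^{i-1}_\mu$, a $u_\mu w_{\nu_i}$-external realization of $G_{\sigma_i}$, and a $v_\mu w_{\nu_i}$-external realization of $G_{\tau_i}$. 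The observation that removing vertices only enlarges the outer face immediately handles incidence of $u_\mu$ and $v_\mu$ with the outer face of each restriction; combined with \cref{le:external-components}, which places $G_{\nu_i}$ entirely in the outer face of the restriction to $G^{i-1}_\mu$, this also yields external realizations for $G^{i-1}_\mu$ and for $G_{\sigma_i}\cup G_{\tau_i}$. For the pole $w_{\nu_i}$, I will argue that any cycle of $G_{\sigma_i}$ separating $w_{\nu_i}$ from the outer face would have to be crossed by the edge $(v_\mu,w_{\nu_i})\subset G_{\tau_i}$, since $v_\mu$ is on the outer face of $\Gamma^i_\mu$ and hence outside that cycle, contradicting planarity; the same argument handles $G_{\tau_i}$.

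With these three sub-realizations identified, the inductive hypothesis produces $\Gamma^{i-1}_\mu\in\mathcal R^{i-1}_\mu$, $\Gamma_{\sigma_i}\in\mathcal R_{\sigma_i}$, $\Gamma_{\tau_i}\in\mathcal R_{\tau_i}$ that are $uv$-equivalent to them. When the algorithm processes this triple it generates both $\Phi^i_\mu$ and $\Psi^i_\mu$; depending on which side of the line through $(u_\mu,v_\mu)$ the vertex $w_{\nu_i}$ lies in $\Gamma^i_\mu$, exactly one of these has the correct clockwise order of the $3$-cycle $(u_\mu,v_\mu,w_{\nu_i})$. The prescribed rigid motion aligning the edges $(u_\mu,w_{\nu_i})$ and $(v_\mu,w_{\nu_i})$ places $\Gamma_{\sigma_i}$ and $\Gamma_{\tau_i}$ into $\Phi^i_\mu$ (resp.\ $\Psi^i_\mu$) in exactly one way consistent with the outer face boundary prescribed by $\mathcal E(\Phi^i_\mu)$ (resp.\ $\mathcal E(\Psi^i_\mu)$). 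The delicate point is verifying that the apparent reflection ambiguity in the edge alignment is in fact absent: since both orientations of $G_{\sigma_i}$ are separately recorded as distinct elements of $\mathcal R_{\sigma_i}$ (and likewise for $G_{\tau_i}$), and since $\Phi^i_\mu$ and $\Psi^i_\mu$ together cover both sides of $(u_\mu,v_\mu)$, every reflection class is exhausted. Concatenation of the outer face paths of $\mathcal E(\Gamma^{i-1}_\mu)$, $\mathcal E(\Gamma_{\sigma_i})$, and $\mathcal E(\Gamma_{\tau_i})$ according to the construction then makes the outer face boundary of the resulting candidate coincide with that of $\Gamma^i_\mu$, yielding $u_\mu v_\mu$-equivalence.

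Finally, the first statement of the lemma follows from property~(iii): if $\mathcal R^i_\mu=\emptyset$, then $G^i_\mu$ admits no $u_\mu v_\mu$-external realization. Were $G$ to admit a planar straight-line realization $\Gamma$,~\cref{obs:e-external} would imply that the restriction of $\Gamma$ to $G_\mu$ is $u_\mu v_\mu$-external; further restricting to $G^i_\mu\subseteq G_\mu$ preserves incidence of $u_\mu$ and $v_\mu$ with the outer face, producing the forbidden external realization of $G^i_\mu$ and yielding the desired contradiction.
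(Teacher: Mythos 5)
Your proposal is correct and follows essentially the same route as the paper's proof: decompose an arbitrary $u_{\mu}v_{\mu}$-external realization of $G^i_{\mu}$ into its restrictions to $G^{i-1}_{\mu}$, $G_{\sigma_i}$, and $G_{\tau_i}$, invoke the completeness of the stored sets to find equivalent substitutes, and observe that the algorithm's reassembly (one of $\Phi^i_{\mu}$, $\Psi^i_{\mu}$, for the right triple) reproduces the given realization up to $u_{\mu}v_{\mu}$-equivalence, with the non-emptiness case and minimality handled exactly as in the paper. The only difference is presentational — you argue completeness directly where the paper argues by contradiction, and you spell out why $w_{\nu_i}$ lies on the outer face of the restrictions to $G_{\sigma_i}$ and $G_{\tau_i}$, a point the paper leaves implicit.
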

	
	\begin{proof}
		The main ingredient of the proof is the following claim: If $G^i_{\mu}$ admits a $u_{\mu}v_{\mu}$-external realization $\Lambda^i_{\mu}$, then $\mathcal R^{i}_{\mu}$ contains a $u_{\mu}v_{\mu}$-external realization that is $u_{\mu}v_{\mu}$-equivalent to $\Lambda^i_{\mu}$. 
		
		We prove the claim. For a contradiction, suppose that $G^i_{\mu}$ admits a $u_{\mu}v_{\mu}$-external realization $\Lambda^i_{\mu}$ and that $\mathcal R^{i}_{\mu}$ does not contain any $u_{\mu}v_{\mu}$-external realization that is $u_{\mu}v_{\mu}$-equivalent to $\Lambda^i_{\mu}$. We are going to replace parts of $\Lambda^i_{\mu}$ with realizations in $\mathcal R^{i-1}_{\mu}$, $\mathcal R_{\sigma_i}$, and $\mathcal R_{\tau_i}$, so to obtain a $u_{\mu}v_{\mu}$-external realization of $G^i_{\mu}$ that is constructed by the algorithm and that should hence have been inserted in $\mathcal R^{i}_{\mu}$ (unless a $u_{\mu}v_{\mu}$-equivalent $u_{\mu}v_{\mu}$-external realization of $G^i_{\mu}$ is already in $\mathcal R^{i}_{\mu}$), from which a contradiction follows. 
		
		Let $\Lambda^{i-1}_{\mu}$,  $\Lambda_{\sigma_i}$, and $\Lambda_{\tau_i}$ be the restrictions of $\Lambda^i_{\mu}$ to $G^{i-1}_{\mu}$,  $G_{\sigma_i}$, and $G_{\tau_i}$, respectively. By~\cref{le:external-components}, we have that $\Lambda_{\sigma_i}$ and $\Lambda_{\tau_i}$ lie in the outer face of $\Lambda^{i-1}_{\mu}$; furthermore, the fact that $\Lambda^i_{\mu}$ is a $u_{\mu}v_{\mu}$-external realization directly implies that $\Lambda_{\sigma_i}$ and $\Lambda^{i-1}_{\mu}$ lie in the outer face of $\Lambda_{\tau_i}$ and that $\Lambda_{\tau_i}$ and $\Lambda^{i-1}_{\mu}$ lie in the outer face of $\Lambda_{\sigma_i}$.
		
		Since $\mathcal R^{i-1}_{\mu}$ is a $u_{\mu}v_{\mu}$-complete set of $u_{\mu}v_{\mu}$-external realizations of $G^{i-1}_{\mu}$, it follows that there exists a $u_{\mu}v_{\mu}$-external realization $\Gamma^{i-1}_{\mu}\in \mathcal R^{i-1}_{\mu}$ of $G^{i-1}_{\mu}$ that is $u_{\mu}v_{\mu}$-equivalent to $\Lambda^{i-1}_{\mu}$. Analogously, there exists a $u_{\mu}w_{\nu_i}$-external realization $\Gamma_{\sigma_i}\in \mathcal R_{\sigma_i}$ of $G_{\sigma_i}$ that is $u_{\mu}w_{\nu_i}$-equivalent to $\Lambda_{\sigma_i}$ and there exists a $v_{\mu}w_{\nu_i}$-external realization $\Gamma_{\tau_i}\in \mathcal R_{\tau_i}$ of $G_{\tau_i}$ that is $v_{\mu}w_{\nu_i}$-equivalent to $\Lambda_{\tau_i}$. 
		
		We can now replace $\Lambda^{i-1}_{\mu}$ with $\Gamma^{i-1}_{\mu}$ in $\Lambda^i_{\mu}$, after rotating and translating $\Gamma^{i-1}_{\mu}$ so that the representation of the edge $(u_{\mu},v_{\mu})$ coincides with the one in $\Lambda^i_{\mu}$. Since the boundary of the outer face of $\Gamma^{i-1}_{\mu}$ coincides with the boundary of the outer face of $\Lambda^{i-1}_{\mu}$ and since $\Lambda_{\sigma_i}$ and $\Lambda_{\tau_i}$ lie in the outer face of $\Lambda^{i-1}_{\mu}$, it follows that after the replacement $\Lambda^i_{\mu}$ remains a $u_{\mu}v_{\mu}$-external realization of $G^i_{\mu}$ (in particular, it remains planar) that is $u_{\mu}v_{\mu}$-equivalent to $\Lambda^i_{\mu}$. We can analogously replace $\Lambda_{\sigma_i}$ with $\Gamma_{\sigma_i}$ and $\Lambda_{\tau_i}$ with $\Gamma_{\tau_i}$, obtaining a $u_{\mu}v_{\mu}$-external realization of $G_{\mu}$ that is constructed by the algorithm and that is $u_{\mu}v_{\mu}$-equivalent to $\Lambda^i_{\mu}$. This implies that $\mathcal R^{i}_{\mu}$ contains a $u_{\mu}v_{\mu}$-external realization that is $u_{\mu}v_{\mu}$-equivalent to $\Lambda^i_{\mu}$, a contradiction which proves the claim.
		
		The claim directly implies that, if $\mathcal R^{i}_{\mu}$ is empty, then $G^i_{\mu}$ admits no $u_{\mu}v_{\mu}$-external realization, hence $G_{\mu}$ admits no $u_{\mu}v_{\mu}$-external realization (given that $G^i_{\mu}$ is a subgraph of $G_{\mu}$), and finally that $G$ admits no planar straight-line realization, by~\cref{obs:e-external}. 
		
		Suppose now that $\mathcal R^{i}_{\mu}$ is non-empty. We prove that $\mathcal R^{i}_{\mu}$ is a minimal $u_{\mu}v_{\mu}$-complete set of $u_{\mu}v_{\mu}$-external realizations of $G^i_{\mu}$. First, that every realization $\Gamma^i_{\mu}$ in $\mathcal R^{i}_{\mu}$ is $u_{\mu}v_{\mu}$-external directly descends from the $O(n)$-time test that is performed by means of~\cref{thm:straight-line_realization_planarity} on  $\Gamma^i_{\mu}$; indeed, $\Gamma^i_{\mu}$ is inserted into $\mathcal R^{i}_{\mu}$ only if the test reports that $\Gamma^i_{\mu}$ is a planar straight-line realization whose plane embedding is the one constructed by the algorithm, in which $u_{\mu}$ and $v_{\mu}$ are incident to the outer face. Second, that $\mathcal R^{i}_{\mu}$ is $u_{\mu}v_{\mu}$-complete is a direct consequence of the claim above. Finally, the minimality of $\mathcal R^{i}_{\mu}$ comes from the fact that a realization is inserted into $\mathcal R^{i}_{\mu}$ only if it is not $u_{\mu}v_{\mu}$-equivalent to any realization already in $\mathcal R^{i}_{\mu}$.
	\end{proof}
	
	It remains to determine the running time of our algorithm. The main ingredient we are going to use is the following combinatorial lemma.
	
	\newcommand{\sizeSet}{n^{\frac{2^{h}-1}{3}}\cdot 2^{\frac{2^{h}-1}{3}}}
	\begin{lemma} \label{le:how-many-optimal}
		Let $\mu$ be a Q-node or a P-node of $\mathcal T$, and let $h$ be the height of ${\mathcal T}_{\mu}$. We have that $|\mathcal R_{\mu}|\leq \sizeSet$. Furthermore, if $\mu$ is a P-node, let $\nu_1,\dots,\nu_k$ be the children of $\mu$. For $i=1,\dots,k$, we have that $|\mathcal R^i_{\mu}|\leq \sizeSet$.
	\end{lemma}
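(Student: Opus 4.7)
The plan is to prove both inequalities by strong induction on the height $h$ of $\mathcal{T}_\mu$, establishing the bound $T(h) := n^{(2^h-1)/3} \cdot 2^{(2^h-1)/3}$ for both $|\mathcal{R}_\mu|$ and $|\mathcal{R}^i_\mu|$. The key arithmetic identity to exploit is $(2^h-1)/3 = 1 + 4 \cdot (2^{h-2}-1)/3$, which converts $T$ into the recursion $T(h) = 2n \cdot T(h-2)^4$ with $T(0)=1$; the induction therefore reduces to proving $|\mathcal{R}_\mu| \leq 2n \cdot T(h-2)^4$ for any P-node $\mu$ of height $h \geq 2$, and the bound on $|\mathcal{R}^i_\mu|$ will follow from the same argument applied to $G^i_\mu$ (viewed as though $\mu$ had only its first $i$ children, with $\nu_i$ playing the role of $\nu_k$ below).

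For the base case $h=0$, $\mu$ is a Q-node, $G_\mu$ is a single edge, and $|\mathcal{R}_\mu| = 1 = T(0)$. For the inductive step let $\mu$ be a P-node of height $h \geq 2$ with children $\nu_1$ (Q-node) and $\nu_2,\dots,\nu_k$ (S-nodes), ordered so that $\lambda(\nu_1) \leq \cdots \leq \lambda(\nu_k)$. The core of the argument will be the structural claim that in every $u_\mu v_\mu$-external realization $\Gamma$ of $G_\mu$, one of the two $u_\mu v_\mu$-paths bounding the outer face lies entirely within $G_{\nu_k}$. To prove this, I will apply \cref{le:external-components} iteratively along $G^1_\mu \subset G^2_\mu \subset \cdots \subset G^k_\mu = G_\mu$: each $G_{\nu_i}$ is drawn in the outer face of the preceding subdrawing, so inserting $G_{\nu_i}$ replaces one side of the current outer face by a $u_\mu v_\mu$-path through $G_{\nu_i}$ while leaving the other side unchanged. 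Consequently, the final outer face of $G_\mu$ has one side produced by $\nu_k$ (the last insertion) and the other produced by the largest-index $\nu_l$ placed on the opposite side, or by the edge $(u_\mu,v_\mu)$ if no such $\nu_l$ exists.

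This permits a compact parametrization of the equivalence classes in $\mathcal{R}_\mu$: each class is determined by (i) which of the two sides of the outer face is the $\nu_k$-side (two options, since reflections are disallowed); (ii) an index $l \in \{1,\dots,k-1\}$ describing the opposite side, with $l=1$ encoding the case in which the opposite side is the edge; (iii) the realizations in $\mathcal{R}_{\sigma_k}$ and $\mathcal{R}_{\tau_k}$ that determine the $\nu_k$-side path; and, when $l \geq 2$, (iv) the realizations in $\mathcal{R}_{\sigma_l}$ and $\mathcal{R}_{\tau_l}$ that determine the opposite-side path. Since $\sigma_k,\tau_k,\sigma_l,\tau_l$ are all P- or Q-nodes whose subtrees have height at most $h-2$, the inductive hypothesis bounds each factor in (iii) and (iv) by $T(h-2)$. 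Summing across the at most $2(k-1)$ combinations of (i)--(ii), multiplying by $T(h-2)^4$ in the generic case (or $T(h-2)^2$ when $l=1$), and using $k-1 \leq n$ together with $T(h-2) \geq 1$, one obtains $|\mathcal{R}_\mu| \leq 2\,T(h-2)^2 + 2(k-2)\,T(h-2)^4 \leq 2n \cdot T(h-2)^4 = T(h)$.

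The main obstacle is the structural claim that $\nu_k$ always contributes a side of the outer face of $G_\mu$. Without it, the opposite-side parameter $l$ would be replaced by an arbitrary pair of distinct sibling indices, injecting a quadratic factor $\Theta(k^2) = \Theta(n^2)$ in place of the linear $2(k-1)$ above; the recurrence would then deteriorate to $T(h) = O(n^2) \cdot T(h-2)^4$, yielding only the weaker exponent $2(2^h-1)/3$. Proving the claim cleanly requires careful bookkeeping of which side of the current outer face is consumed at each iterated application of \cref{le:external-components}, and verifying that the side occupied by $\nu_k$ remains exposed after the very last insertion.
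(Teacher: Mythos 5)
Your proposal matches the paper's proof essentially verbatim: the same induction on $h$ with the recursion $f(h)=2n\cdot f(h-2)^4$ and $f(0)=1$, the same use of \cref{le:external-components} to force $G_{\nu_i}$ to bound one side of the outer face, and the same count of fewer than $2n$ choices for the opposite-side child and orientation times $f(h-2)^4$ choices for the four sub-realizations, with minimality of $\mathcal R^i_{\mu}$ closing the argument. The structural claim you flag as the main obstacle requires no iterative bookkeeping: a single application of \cref{le:external-components} at level $i$ shows that $G_{\nu_i}$ lies in the outer face of the restriction to $G^{i-1}_{\mu}$ and is therefore one of the exactly two children whose subgraphs touch the outer face, which is precisely how the paper dispatches it.
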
	
	
	\begin{proof}
		We prove the following claim. Let $f(k)$ be the function whose domain consists of the even natural numbers, that is recursively defined as follows: $f(k)=1$ if $k=0$, and $f(k)= 2n \cdot (f(k-2))^4$ if $k>0$. Then we claim that $|\mathcal R_{\mu}|\leq f(h)$ and $|\mathcal R^i_{\mu}|\leq f(h)$. The claim implies the lemma, as an easy inductive proof shows that the closed form of $f(k)$ is $f(k)=n^{\frac{2^{k}-1}{3}}\cdot 2^{\frac{2^{k}-1}{3}}$.
		
		The claim is proved by induction on $h$. If $h=0$, then $\mu$ is a Q-node, hence by construction we have that $|\mathcal R_{\mu}|=1=f(0)$. 
		
		Suppose now that $h>0$, hence $\mu$ is a P-node with children $\nu_1,\dots,\nu_k$. First, we have $\mathcal R^1_{\mu}=1<f(h)$. We now prove that $|\mathcal R^i_{\mu}|\leq \sizeSet$, for any $i\in \{2,\dots,k\}$; this suffices to complete the induction, since by construction $\mathcal R_{\mu}=\mathcal R^k_{\mu}$. Consider any realization $\Gamma^i_{\mu}$ in $\mathcal R^i_{\mu}$. Since $\Gamma^i_{\mu}$ is a $u_{\mu}v_{\mu}$-external realization of $G^i_{\mu}$ and since $i\geq 2$, it follows that there are exactly two graphs among $G_{\nu_1},\dots,G_{\nu_{i}}$ that contain edges incident to the outer face of $\Gamma^i_{\mu}$. By~\cref{le:external-components}, one of these graphs is $G_{\nu_i}$, while the other graph, say $G_{\nu_j}$, might be any graph among $G_{\nu_1},\dots,G_{\nu_{i-1}}$. When traversing the boundary of the outer face of $\Gamma^i_{\mu}$ in clockwise direction, the edge $e_{\nu_i}$ of $G_{\nu_i}$ incident to $u_{\mu}$ might appear right before or right after $u_{\mu}$. Altogether, there are less than $2n$ possibilities for the choice of the graph $G_{\nu_j}$ and for whether $e_{\nu_i}$ appears right before or right after $u_{\mu}$ in clockwise order along the boundary of the outer face of $\Gamma^i_{\mu}$. For each of these $2n$ possibilities, the interior of $\Gamma^i_{\mu}$ is determined by the choice of four realizations $\Gamma_{\sigma_i} \in \mathcal R_{\sigma_i}$, $\Gamma_{\tau_i} \in \mathcal R_{\tau_i}$, $\Gamma_{\sigma_j} \in \mathcal R_{\sigma_j}$, and $\Gamma_{\tau_j} \in \mathcal R_{\tau_j}$. Indeed, the restriction of $\Gamma^i_{\mu}$ to $G_{\sigma_i}$, $G_{\tau_i}$, $G_{\sigma_j}$, and $G_{\tau_j}$ is a realization in $\mathcal R_{\sigma_i}$, $\mathcal R_{\tau_i}$, $\mathcal R_{\sigma_j}$, and $\mathcal R_{\tau_j}$, respectively. Since the heights of $\mathcal T_{\sigma_i}$, $\mathcal T_{\tau_i}$, $\mathcal T_{\sigma_j}$, and $\mathcal T_{\tau_j}$ are each at most $h-2$, by induction each of $\mathcal R_{\sigma_i}$, $\mathcal R_{\tau_i}$, $\mathcal R_{\sigma_j}$, and $\mathcal R_{\tau_j}$ has size smaller than or equal to $f(h-2)$. Hence, the number of possible choices for the realizations $\Gamma_{\sigma_i} \in \mathcal R_{\sigma_i}$, $\Gamma_{\tau_i} \in \mathcal R_{\tau_i}$, $\Gamma_{\sigma_j} \in \mathcal R_{\sigma_j}$, and $\Gamma_{\tau_j} \in \mathcal R_{\tau_j}$ is smaller than or equal to $(f(h-2))^4$. Finally, the minimality of $\mathcal R^i_{\mu}$ implies that no two realizations in $\mathcal R^i_{\mu}$ coincide when restricted to $G_{\sigma_i}$, $G_{\tau_i}$, $G_{\sigma_j}$, and $G_{\tau_j}$. This proves that the number of realizations in $\mathcal R^i_{\mu}$ is smaller than or equal to $2n \cdot (f(h-2))^4$; this concludes the induction and hence the proof of the lemma. 
	\end{proof}
	
	We are now ready to determine the running time of our algorithm. Whenever a P-node $\mu$ with $k$ children $\nu_1,\dots,\nu_k$ is visited in the bottom-up traversal of $\mathcal T$, for $i=1,\dots,k$, the set $\mathcal R^i_{\mu}$ is constructed by considering every triple of drawings $\Gamma^{i-1}_{\mu} \in \mathcal R^{i-1}_{\mu}$, $\Gamma_{\sigma_i} \in \mathcal R_{\sigma_i}$, and $\Gamma_{\tau_i} \in \mathcal R_{\tau_i}$, where $\sigma_i$ and $\tau_i$ are the children of $\nu_i$. By~\cref{le:how-many-optimal}, the number of such triples is at most $\big(n^{\frac{2^{h^*}-1}{3}}\cdot 2^{\frac{2^{h^*}-1}{3}}\big)\big(n^{\frac{2^{(h^*-2)}-1}{3}}\cdot 2^{\frac{2^{(h^*-2)}-1}{3}}\big)^2=n^{(2^{(h^*-1)}-1)}\cdot 2^{(2^{(h^*-1)}-1)}$; indeed, the height of $\mathcal T$ is $h^*$, hence the height of $\mathcal T_\mu$ is smaller than or equal to $h^*$, and the heights of $\mathcal T_{\sigma_i}$ and $\mathcal T_{\tau_i}$ are smaller than or equal to $h^*-2$. For each of such triples, two straight-line realizations $\Phi^i_{\mu}$ and $\Psi^i_{\mu}$ are constructed in $O(n)$ time, together with the plane embeddings $\mathcal E(\Phi^i_{\mu})$ and $\mathcal E(\Psi^i_{\mu})$. By means of~\cref{thm:straight-line_realization_planarity}, it is tested in $O(n)$ time whether $\Phi^i_{\mu}$ and $\Psi^i_{\mu}$ are planar straight-line realizations with plane embeddings $\mathcal E(\Phi^i_{\mu})$ and $\mathcal E(\Psi^i_{\mu})$, respectively. For each realization $\Lambda^i_{\mu}$ of $G^i_{\mu}$ in $\mathcal R^i_{\mu}$, it is tested in $O(n)$ time whether $\Phi^i_{\mu}$ is $u_{\mu}v_{\mu}$-equivalent to $\Lambda^i_{\mu}$. By~\cref{le:how-many-optimal}, the set $\mathcal R^i_{\mu}$ contains at most $\big(n^{\frac{2^{h^*}-1}{3}}\cdot 2^{\frac{2^{h^*}-1}{3}}\big)$ realizations of $G^i_{\mu}$, hence these $O(n)$-time tests are performed $\big(n^{(2^{(h^*-1)}-1)}\cdot 2^{(2^{(h^*-1)}-1)}\big)\big(n^{\frac{2^{h^*}-1}{3}}\cdot 2^{\frac{2^{h^*}-1}{3}}\big)=\big(n^{\frac{5\cdot 2^{h^*-1}-4}{3}}\cdot 2^{\frac{5\cdot 2^{h^*-1}-4}{3}}\big)$ times, with a $\big(n^{\frac{5\cdot 2^{h^*-1}-1}{3}}\cdot 2^{\frac{5\cdot 2^{h^*-1}-4}{3}}\big)$ running time. Since the number of P-nodes in $\mathcal T$ is in $O(n)$, the running time of our algorithm is in $\big(n^{\frac{5\cdot 2^{h^*-1}+2}{3}}\cdot 2^{\frac{5\cdot 2^{h^*-1}-4}{3}}\big)\in n^{O(2^{h^*})}$. This concludes the proof of~\cref{th:spq-tree}. 
	
	
	\section{Conclusions and Open Problems}
	\label{sec:conclusions}
	
	We studied the problem of testing if a $2$-tree has a planar straight-line drawing with prescribed edge lengths, both when the number of distinct edge lengths is bounded and when it is unbounded.
	
	In the bounded setting, we proved that the problem is linear-time solvable if the number of prescribed distinct lengths is at most two, while it is \NP-hard if this number is four or more. Hence, an intriguing question is to determine the complexity of the problem for weighted graphs with three prescribed distinct lengths.
	
	In the unbounded setting, we showed that the problem is linear-time solvable if the embedding of the input graph is prescribed. Moreover, we studied weighted maximal outerplanar graphs, and we proved that the problem is linear-time solvable if their dual tree is a path, and cubic-time solvable if their dual tree is a caterpillar. 
	We leave open the problem about general maximal outerplanar graphs.
	
	Finally, for the general problem on weighted $2$-trees, we described a slice-wise polynomial-time algorithm, parameterized by the length of the longest path. A natural open problem is to determine whether the problem admits an FPT algorithm with respect to the same parameter.

	\bibliographystyle{splncs04}
	\bibliography{bibliography}

\end{document}